\newtheorem{Fact}{Fact}
\newtheorem{theorem}{Theorem}[section]
\newtheorem{lemma}[theorem]{Lemma}
\theoremstyle{definition}
\newtheorem{definition}[theorem]{Definition}
\newtheorem{remark}[theorem]{Remark}
\newtheorem{conjecture}[theorem]{Conjecture}
\newtheorem{claim}[theorem]{Claim}
\newcommand{\inbrace}[1]{\left \{ #1 \right \}}
\newcommand{\inparen}[1]{\left ( #1 \right )}
\newcommand{\fnk}{f_n^k}
\newcommand{\fnkquanbias}{\frac{k}{n}\cdot T^2}
\newcommand{\fnkclassbias}{\frac{k}{n}\cdot T}
\newcommand{\kl}{f_n^{k,l}}
\newcommand{\poly}{\mathrm{poly}}
\newcommand{\IAdvf}{\frac{l-k}{\sqrt{(n-k)l}}}
\newcommand{\IAdvfsqua}{\frac{(l-k)^2 }{(n-k)l}}
\newcommand{\diff}{\frac{l-k}{n}}
\renewcommand{\hat}{\widehat}
\newcommand{\B}{\{0,1\}}
\let\OldLambda\lambda
\let\lambda\relax
\DeclareMathOperator{\lambda}{\OldLambda}
\DeclareMathOperator{\FC}{\mathsf{FC}}
\DeclareMathOperator{\Q}{\mathsf{Q}}
\DeclareMathOperator{\bs}{\mathsf{bs}}
\DeclareMathOperator{\fbs}{\mathsf{fbs}}
\DeclareMathOperator{\adeg}{\mathsf{\widetilde{deg}}}
\let\deg\relax
\DeclareMathOperator{\deg}{\mathsf{deg}}
\title{On the Fine-Grained Query Complexity of Symmetric Functions}
\date{}
\author{Supartha Podder\thanks{supported by US National Science Foundation (award no 1954311).}\\ Department of Computer Science, 
Stony Brook University, New York, USA\\ \textit{supartha@cs.stonybrook.edu} 
\And
	Penghui Yao\thanks{supported by National Natural Science Foundation of China (Grant No. 62332009, 61972191) and Innovation Program for Quantum Science and Technology (Grant No. 2021ZD0302900).}\\ State Key Laboratory for Novel Software Technology, Nanjing University, Nanjing, China \\ Hefei National Laboratory, Hefei, China \\ \textit{phyao1985@gmail.com} 
\And
Zekun Ye\thanks{supported by National Natural Science Foundation of China (Grant No. 62332009, 61972191) and Innovation Program for Quantum Science and Technology (Grant No. 2021ZD0302900).}\\ State Key Laboratory for Novel Software Technology, Nanjing University, Nanjing, China\\ \textit{yezekun@smail.nju.edu.cn} 
}
\begin{document}

\maketitle

\begin{abstract}
Watrous conjectured that the randomized and quantum query complexities of symmetric functions are polynomially equivalent, which was resolved by Ambainis and Aaronson \cite{AA14}, and was later improved in \cite{Cha19, BCG+20}. This paper explores a fine-grained version of the Watrous conjecture, including the randomized and quantum algorithms with success probabilities arbitrarily close to $1/2$. Our contributions include the following:
\begin{enumerate}

    \item An analysis of the optimal success probability of quantum and randomized query algorithms of two fundamental partial symmetric Boolean functions given a fixed number of queries. We prove that for any quantum algorithm computing these two functions using $T$ queries, there exist randomized algorithms using $\mathsf{poly}(T)$ queries that achieve the same success probability as the quantum algorithm, even if the success probability is arbitrarily close to 1/2. These two classes of functions are instrumental in analyzing general symmetric functions.

    \item We establish that for any total symmetric Boolean function $f$, if a quantum algorithm uses $T$ queries to compute $f$ with success probability $1/2+\beta$, then there exists a randomized algorithm using $O(T^2)$ queries to compute $f$ with success probability $1/2+\Omega\inparen{\delta\beta^2}$ on a $1-\delta$ fraction of inputs, where $\beta,\delta$ can be arbitrarily small positive values. As a corollary, we prove a randomized version of Aaronson-Ambainis Conjecture \cite{AA14} for total symmetric Boolean functions in the regime where the success probability of algorithms can be arbitrarily close to 1/2. 

    \item  We present tight polynomial equivalences for several fundamental complexity measures of partial symmetric Boolean functions. Specifically, we first prove that for certain partial symmetric Boolean functions, quantum query complexity is at most quadratic in approximate degree for any error arbitrarily close to 1/2. Next, we show exact quantum query complexity is at most quadratic in degree. Additionally, we give the tight bounds of several complexity measures, indicating their polynomial equivalence. Conversely, we exhibit an exponential separation between randomized and exact quantum query complexity for certain partial symmetric Boolean functions.

\end{enumerate}
\end{abstract}

\newpage

\tableofcontents

\newpage

\section{Introduction}\label{sec:intro}

Exploring quantum advantages is a key problem in quantum computing. A lot of research work has revolved around analyzing and characterizing quantum advantages, such as \cite{BGKT19,GS20,CCHL21,Kallaugher21,YZ22}. 
Query complexity is a complexity model commonly used to describe quantum advantages, since it enables the establishment of both nontrivial upper and lower bounds. 
A comprehensive survey on the query complexity can be found in \cite{buhrman2002complexity}. A series of works \cite{Aaronson2018problem, Tal2020Towards,Bansal2021Forrelation,Sherstov2021An} has shown that for partial functions, quantum query complexity could be exponentially smaller (or even less) than the randomized query complexity, while for total functions, they are always polynomially related \cite{ABK+21}. Although the query complexity model has demonstrated the powerful ability of a quantum computer to solve certain ``structured'' problems more efficiently than a classical computer, such as Simon's problem \cite{simon1994power} and integer factorization problem \cite{shor1994algorithms}, there only exist at most quadratic quantum speedups for some ``unstructured'' problems, such as black-box search problems \cite{Grover96}. Thus, one natural question is to explore how much structure is needed for significant quantum speedups \cite{AA14}. 

Watrous conjectured that the randomized query complexity and quantum query complexity of partial symmetric functions are polynomial equivalent~\cite{AA14}. Later, Aaronson and Ambainis \cite{AA14} initiated the study of quantum speedup on the quantum query complexity of symmetric functions, showing that partial functions invariant under full symmetry do not exhibit super-polynomial quantum speedups, which resolves the Watrous conjecture. Their result was later improved by Chailloux~\cite{Cha19}, who achieved a tighter bound and removed a technical dependence of output symmetry.
Recently \cite{BCG+20} performed a systematic analysis of functions symmetric under other group actions and characterized when super-polynomial quantum speedups are achievable. 
However, all these results work in the bounded error regime and do not explicitly consider arbitrary small biases.

 In this paper, we propose and investigate a fine-grained version of the Watrous conjecture concerning the quantum and randomized query complexities of symmetric functions with an arbitrary error close to 1/2. Before stating the conjecture, we need to introduce two notions which are essential to this paper. For any Boolean function $f$ and $T>0$, let the classical $T$-bias $\delta_C(f,T)$ be the optimal success probability of $T$-query randomized algorithms minus $1/2$ over all possible inputs. The quantum $T$-bias $\delta_Q(f,T)$ is defined for quantum algorithms analogously.

\begin{conjecture}[Fine-grained Watrous conjecture]
    There exists a constant $c\geq 2$ satisfying that for any partial symmetric function $f$ and any $T>0$, $\delta_C(f, c\cdot T)\geq\frac{\delta_Q(f, T)}{\mathsf{poly}(T)}$.
\end{conjecture}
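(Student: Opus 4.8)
The plan is to reduce the conjecture, through symmetrization, to a sharp fine-grained understanding of a short list of canonical partial symmetric functions, and then to lift the resulting bounds back to an arbitrary symmetric $f$ via Hamming-weight estimation. \textbf{Step 1: symmetrize and localize the bottleneck.} Since $f$ is invariant under $S_n$, its value on $x$ depends only on $|x|$, so $f$ is specified by a set $D\subseteq\{0,\dots,n\}$ of defined weights together with a labeling $f\colon D\to\{0,1\}$. Averaging the acceptance probability of a $T$-query quantum algorithm over a uniformly random permutation of the input produces a univariate real polynomial $p$ of degree at most $2T$ with $0\le p(j)\le 1$ on the whole grid $\{0,\dots,n\}$ and $\bigl|p(j)-f(j)\bigr|\le\tfrac12-\delta_Q(f,T)$ for $j\in D$. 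In particular $p-\tfrac12$ has the same sign as $f(j)-\tfrac12$ on $D$, so (having degree $2T$) $f$ changes value at most $2T$ times along $D$ whenever $\delta_Q(f,T)>0$; restricting the quantum algorithm to the promise that $|x|$ lies in a pair of adjacent oppositely labeled weights $\{k,l\}\subseteq D$ still computes $f$ there with $T$ queries and bias $\ge\delta_Q(f,T)$, so $\delta_Q(f_n^{k,l},T)\ge\delta_Q(f,T)$ for each of these $O(T)$ bottleneck pairs, and likewise for the one-sided Grover-type instances $\fnk$, $\pfnk$ when $k$ or $l$ is an endpoint.

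\textbf{Step 2: fine-grained bias of the canonical functions.} The heart of the argument is to pin down $\delta_Q$ and $\delta_C$ of $\fnk$ and $\kl$ for \emph{all} $T$, not just near the bounded-error query complexity. For the Grover-type family this is clean: $\delta_Q(\fnk,T)\asymp\min\bigl(\tfrac12,\;\fnkquanbias\bigr)$ versus $\delta_C(\fnk,T)\asymp\min\bigl(\tfrac12,\;\fnkclassbias\bigr)$ — the quantum acceptance amplitude grows like $T\sqrt{k/n}$ whereas the classical hitting probability grows like $Tk/n$ — so $\delta_C(\fnk,c\cdot T)\gtrsim\delta_Q(\fnk,T)/T$, exactly the shape the conjecture predicts. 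For $\kl$ one needs the two-sided estimate: the upper bound on $\delta_Q$ should come from the negative-weight adversary method (whose value here is $\IAdvf$) and from a bias-sensitive sharpening of the Paturi / Nayak--Wu degree estimates applied to the polynomial $p$ of Step 1, while the lower bound comes from truncating amplitude estimation (quantum) and Hamming-weight sampling (classical) to $T$ queries. The resulting $\delta_Q(\kl,T)$ interpolates between a Grover-type quadratic growth in $T$ near an endpoint of the grid and a phase-estimation-type linear growth deep inside it, and in every regime the matching $\delta_C(\kl,T)$ trails it only by the quadratic-in-$T$ query blow-up that the $\poly(T)$ slack absorbs (its bounded-error randomized complexity being of order $\Advfsqua$, the square of the quantum one).

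\textbf{Step 3: lift back to general $f$.} The natural classical algorithm for a symmetric function draws $c\cdot T$ random coordinates, estimates $|x|$ from the sample, rounds to the nearest element of $D$, and outputs $f$ of it. On an input of weight $w$ its error is governed, up to constants, by how well a $c\cdot T$-sample estimate separates $w$ from the nearest $w'\in D$ with $f(w')\ne f(w)$ — that is, by $\delta_C\bigl(f_n^{w',w},c\cdot T\bigr)$ — and is $1/2$ when $w$ is far from every value change of $f$, since there are only $O(T)$ of them. Combining this with Step 1 ($\delta_Q(f_n^{w',w},T)\ge\delta_Q(f,T)$) and Step 2 ($\delta_C(f_n^{w',w},c\cdot T)\ge\delta_Q(f_n^{w',w},T)/\poly(T)$) yields bias $\ge\delta_Q(f,T)/\poly(T)$ on every input of $D$, which is the conjectured bound. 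Chailloux's output-symmetry-free refinement of the Aaronson--Ambainis reduction is the natural scaffold for carrying this out uniformly over arbitrary $D$ and for controlling the interaction between the sample estimate's error bars and the $O(T)$ transition windows.

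\textbf{Main obstacle.} I expect Step 2 in the small-bias regime to be where almost all the difficulty lies. For bounded error the adversary lower bound and the counting/estimation upper bound are tight and classical, but when $T$ is tiny the target bias is itself $o(1)$, and one needs \emph{simultaneously} (i) the $\delta_Q$ upper bound to degrade with precisely the right power of $T$ — linear or quadratic according to how close $k,l$ are to an endpoint — all the way down to the trivial scale, which is a delicate extremal estimate for degree-$2T$ polynomials that must remain bounded on the \emph{entire} grid while separating two specified nearby weights, and (ii) matching randomized algorithms whose bias decays at the same rate while spending only a polynomial-in-$T$ query overhead. A secondary difficulty is making Step 3 fully rigorous for an arbitrary domain $D$: a general partial symmetric $f$ can oscillate across all $O(T)$ of its transition windows at once, and one must rule out that a quantum algorithm exploits them jointly to beat every classical strategy by more than a polynomial factor — this is precisely the point at which the clean two-weight reduction has to be shown not to leak, and it is what keeps the conjecture, as opposed to its canonical special cases, open.
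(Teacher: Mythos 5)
This statement is the paper's open conjecture, not a theorem: the authors explicitly leave the fine-grained Watrous conjecture unresolved (see the conclusion), proving only special cases --- tight $T$-bias characterizations for $\fnk$ and $\kl$ (Theorems \ref{th:fnk1}--\ref{th:kl_class}), and, for \emph{total} symmetric functions, an average-case statement with an $O(T^2)$ (not $c\cdot T$) query blow-up and bias $\Omega(\delta\beta^2)$ on only a $1-\delta$ fraction of inputs (Theorem \ref{th:relation_Q_R}). So there is no proof in the paper to compare yours against, and what you have written is a research plan rather than a proof. Your Steps 1 and 2 are essentially sound and largely coincide with what the paper actually establishes: symmetrization gives a degree-$2T$ polynomial bounded on the whole grid, hence at most $2T$ value changes of $f$ and the subfunction inequality $\delta_Q(f_n^{k,l},T)\ge\delta_Q(f,T)$; and one can check that the paper's bounds for $\kl$ (even though the classical upper and lower bounds in Theorem \ref{th:kl_class} do not match) already imply the conjectured inequality for the two-weight functions themselves, with $\poly(T)=O(T^{3/2})$.

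The genuine gap is Step 3, and it is exactly the open content of the conjecture. Knowing $\delta_C(f_n^{k,l},cT)\ge\delta_Q(f_n^{k,l},T)/\poly(T)$ for every transition pair does not lift to $\delta_C(f,cT)$: the classical bias is a worst-case-over-inputs quantity, and an algorithm must beat $1/2$ on \emph{every} weight in $D$ simultaneously, whereas the two-weight reductions only certify that each pair is individually easy. Parity illustrates the failure of any ``min over pairs'' lifting: every adjacent pair $\{k,k+1\}$ has classical $cT$-bias $\Omega(\sqrt{T}/n)$ by Theorem \ref{th:kl_class}, yet $\delta_C(\Parity,cT)=0$ for $cT<n$ (the conjecture survives only because $\delta_Q(\Parity,T)$ is also $0$ there, which your sampling-based analysis does not see). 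Your proposed estimator --- sample $cT$ coordinates, round $|x|$ to the nearest element of $D$ --- gives no bias at all on inputs whose weight is far from every transition relative to the $\Theta(n/\sqrt{cT})$ resolution of the sample, and when $\delta_Q(f,T)$ is subconstant you cannot argue those inputs are ``easy'' for free; moreover with up to $2T$ transition windows the rounding errors toward wrongly-labeled weights can accumulate. You acknowledge this in your final paragraph, but that acknowledgment is an admission that the argument is incomplete, not a resolution: as written, the proposal proves the conjecture only for the canonical functions $\fnk$ and $\kl$, which is where the paper also stops.
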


The reason for $c\geq 2$ is that the $n$-bit parity function can be exactly computed with $n/2$ quantum queries, while any randomized algorithm with less $n$ queries succeeds with probability $1/2$. It is also not hard to see the fine-grained Watrous conjecture implies that quantum and randomized query complexities of symmetric functions are polynomially related. Indeed, if $\delta_Q(f, T)$ is lower bounded by some constant, then $\delta_C(f, c\cdot T)\geq\Omega(\frac{1}{\mathsf{poly}(T)})$, which implies the randomized query complexity of $f$ is $\mathsf{poly}(T)$ by error reduction.

\subsection{Our Motivation and Contribution}
To study the fine-grained Watrous conjecture, we start with the following two fundamental symmetric Boolean functions, which are also essential to analyze general symmetric functions:
\begin{equation*}
\fnk(x) = 
\begin{cases}
0, & \text{ if }|x| \in \{0,n\},\\
1, & \text{ if }|x| \in \{k,n-k\}, \\
\end{cases}
\end{equation*}
where $1 \le k \le n/2$ and
\begin{equation*}
\kl(x) = 
\begin{cases}
0, & \text{ if }|x| = k, \\
1, & \text{ if }|x| = l,
\end{cases}
%(k < l)
\end{equation*}
where $k < l$.
The famous Deutsch-Jozsa problem \cite{DJ92} and the decision version of the unstructured search problem \cite{Grover96} can be interpreted as special cases of these two functions. Fefferman and Kimmel considered the subset-sized checking problem \cite{fefferman_et_al:LIPIcs:2018:9604, goldwasser1986private}, where the $n$-bit input string is promised to have either $\sqrt{n}$ or $0.99\sqrt{n}$ many marked items, and the goal is to decide which case. Their result together with \cite{goldwasser1986private} proved that it can be served as an oracle separation between $\mathsf{AM}$ and $\mathsf{QCMA}$. 
On a similar flavour, \cite{aaronson2019quantum} introduced the approximate counting problem, where the $n$-bit input has either $\leq w$ or $\geq 2w$ many marked items.  
Our function $\kl(\cdot)$ can be seen as a symmetric variant of these two problems\footnote{Our problem can also be seen as a Gap-Threshold function. Threshold function is defined as $f^k_n(x)=1$ iff $|x|\geq k$.}. We study the tradeoff between the number of queries and the optimal success probabilities in quantum and randomized settings for both functions with errors close to 1/2. 

We further consider the relation between various complexity measures of partial symmetric Boolean functions. For total Boolean functions, it has been proved that several fundamental complexity measures are polynomial equivalent (See Table 1 in \cite{ABK+21}). Moreover, the tight bounds on several fundamental complexity measures of total symmetric Boolean functions have also been obtained \cite{buhrman2002complexity}. However, the result for partial symmetric Boolean functions has not been fully characterized.

Our contribution is as follows. For convenience, if $f$ is a \textit{symmetric} Boolean function, we denote $f(k) = f(x)$ for any $|x| = k$. Additionally, we say an $n$-bit Boolean function $f:D\rightarrow \B$ is \textit{even} if $f(|x|) = f(n-|x|)$ for any $x\in D$, where $D \subseteq \B^n$.
\begin{enumerate}
    \item For both randomized setting and quantum setting, we characterize the optimal success probability of algorithms given the number of queries for the function $\fnk$ (\Cref{th:fnk1}) and $\kl$ (\Cref{th:kl_quantum,th:kl_class}). As a corollary, we show for any $T$-query quantum algorithm to compute $\fnk$ and $\kl$, there exist classical randomized algorithms using $\mathsf{poly}(T)$ queries to simulate the success probability of the quantum algorithm (\Cref{cor:fnk,cor:kl}). Additionally, we characterize the exact quantum query complexity of $\fnk$ (\Cref{th:fnk2}).

    \item We establish a relation between the number of queries and the bias of quantum and randomized algorithms to compute total symmetric Boolean functions, where the bias of the algorithms can be arbitrarily small (\Cref{th:relation_Q_R}). As a corollary, we prove a weak version of Conjecture \ref{conjecture:AA}:
the acceptance probability of a quantum
query algorithm to compute a total symmetric Boolean function can be approximated by a randomized algorithm with only a polynomial increase in the number of queries, where the bias of quantum algorithms can be arbitrarily small (Corollary \ref{AA_conjecture}).

    \item We investigate the relation between different complexity measures of partial symmetric Boolean functions. 
Specifically, Theorem \ref{th:poly_Q_adeg} shows the relation between the quantum query complexity and the approximate degree of even  partial symmetric Boolean functions for arbitrarily small bias\footnote{\Cref{th:poly_Q_adeg} is also a new result for total symmetric Boolean functions.}.
Theorem \ref{th:polyrelation1} shows exact quantum query complexity and degree are quadratically related.  
\Cref{th:polyrelation2} presents tight bounds of block sensitivity, fractional block sensitivity, quantum query complexity, and approximate degree, where quantum query complexity and approximate degree are in a bounded-error setting. \Cref{th:poly_Q_bs} shows block sensitivity is an upper bound of quantum query complexity. Since it has been known that $Q(f) \ge \Omega(\sqrt{\bs(f)})$ for any (possibly partial) Boolean function $f$ \cite{Ben2021polynomial}, quantum query complexity and block sensitivity of partial symmetric Boolean functions are polynomially related.
On the converse, Theorem \ref{th:exprelation} shows
an exponential gap between the exact quantum query complexity and randomized query complexity for some partial symmetric Boolean functions, which is different from total symmetric functions.
\end{enumerate}

\begin{conjecture}[Aaronson-Ambainis Conjecture \cite{AA14}]\label{conjecture:AA}
The acceptance probability of a $T$-query quantum algorithm to compute a Boolean function can be approximated by a deterministic algorithm using $\poly(T,1/\epsilon,1/\delta)$ queries within an additive error $\epsilon$ on a $1-\delta$ fraction of inputs. 
\end{conjecture}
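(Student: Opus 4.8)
The plan is to prove the conjecture in full by the standard two-stage route: reduce it to a purely analytic statement about bounded low-degree polynomials, and then attack that statement. First I would invoke the polynomial method of Beals--Buhrman--Cleve--Mosca--de Wolf: the acceptance probability $p(x)$ of a $T$-query quantum algorithm on input $x\in\B^n$ is a real multilinear polynomial with $\deg p\le 2T$ and $0\le p(x)\le 1$ for every $x$. So it suffices to show: every $p\colon\B^n\to[0,1]$ with $\deg p\le d$ can be approximated to additive error $\eps$ on a $1-\delta$ fraction of the uniform distribution by a deterministic decision tree of depth $\poly(d,1/\eps,1/\delta)$, and then set $d=2T$. (The fully symmetric instance of the conjecture is already subsumed by the bias-tradeoff results of this paper, e.g.\ \Cref{th:relation_Q_R}, so the real content is the unrestricted, deterministic statement for arbitrary Boolean functions.)

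The decision tree I would use is the natural greedy one. Maintain a partial assignment $\rho$ and let $p|_\rho$ denote the corresponding restriction, viewed as a function of the unfixed coordinates under the uniform distribution. If $\mathrm{Var}(p|_\rho)<\eps^2\delta$, halt and output the real number $\E[p|_\rho]$; otherwise query the coordinate $i$ maximizing $\mathrm{Inf}_i(p|_\rho)$ and recurse on both values of $x_i$. For correctness at the leaves: by Markov's inequality, on a $1-\delta$ fraction of leaves $p|_\rho$ deviates from its mean by more than $\eps$ with probability at most $\delta$ over the remaining coordinates, so the output is $\eps$-accurate on a $1-O(\delta)$ fraction of inputs (rescaling $\delta$). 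For the depth bound I would track the potential $\Phi_t=\E_\rho[\mathrm{Var}(p|_\rho)]$, where $\rho$ ranges over the depth-$t$ leaves weighted by their induced probabilities; this is non-increasing, since querying the most influential variable $i$ drops the conditional variance by exactly $\mathrm{Inf}_i(p|_\rho)$. The routine (and purely convexity-based) bookkeeping is to argue that while $\Phi_t\ge\eps^2\delta$ each step shrinks $\Phi_t$ by a multiplicative factor $1-\poly(\eps^2\delta/d)$, hence $\Phi_t$ falls below $\eps^2\delta$ within $\poly(d,1/\eps,1/\delta)$ steps.

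This reduces the whole conjecture to the analytic heart of the matter: for every $q\colon\B^n\to[0,1]$ with $\deg q\le d$, one has $\max_i\mathrm{Inf}_i(q)\ge\poly\!\inparen{\mathrm{Var}(q)/d}$. Fourier-analytically $\mathrm{Var}(q)=\sum_{S\ne\emptyset}\hat q(S)^2$ and $\sum_i\mathrm{Inf}_i(q)=\sum_S|S|\,\hat q(S)^2\le d\cdot\mathrm{Var}(q)$, so the trivial averaging bound $\max_i\mathrm{Inf}_i(q)\ge\mathrm{Var}(q)/(dn)$ carries a fatal dependence on $n$; the entire task is to remove $n$ using only the boundedness $q\in[0,1]$. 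The approach I would pursue is a random-restriction/induction scheme in the spirit of the Dinur--Friedgut--Kindler--O'Donnell--Safra argument (which succeeds for polynomials of bounded spectral $\ell_1$ norm): restrict $q$ to a random set of about $d^2/\eps$ coordinates, use hypercontractivity and the level-$k$ inequalities to show that with constant probability the restriction is either $O(\eps)$-close to a constant or exposes a coordinate of influence $\Omega(\poly(\mathrm{Var}(q)/d))$, and iterate while charging the variance down.

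The step I expect to be the genuine obstacle is exactly this last inequality: it is precisely the still-open ``low-degree polynomial'' form of the Aaronson--Ambainis conjecture, and every known proof handles only restricted classes (bounded $\ell_1$ Fourier mass, block-multilinear or otherwise structured polynomials). Removing the dimension dependence for an \emph{arbitrary} bounded degree-$d$ polynomial---in particular controlling the case where a random restriction leaves the polynomial with small but non-negligible variance and yet no single large-influence coordinate---is where a genuinely new idea is required, and that is where I would concentrate the effort; the decision-tree wrapper and the potential-function accounting above are, by comparison, routine.
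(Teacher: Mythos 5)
The statement you are asked to prove is stated in the paper as \Cref{conjecture:AA} precisely because it is an \emph{open conjecture}: the paper gives no proof of it and does not claim one. What the paper actually establishes is only a weak, randomized variant for \emph{total symmetric} Boolean functions (\Cref{AA_conjecture}), obtained via \Cref{th:relation_Q_R} by building an explicit sampling estimator from the Kravchuk-polynomial expansion of the approximating symmetric polynomial. Your proposal, by contrast, aims at the full conjecture for arbitrary Boolean functions, and it reduces it --- via the polynomial method and a greedy most-influential-variable decision tree --- to the ``influence'' form of the Aaronson--Ambainis conjecture: that every bounded degree-$d$ polynomial $q:\B^n\to[0,1]$ has a coordinate with influence at least $\poly(\mathrm{Var}(q)/d)$. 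That reduction is essentially the one already present in \cite{AA14}, and the target inequality remains open; as you yourself concede in your final paragraph, you do not supply the missing idea. So this is not a proof: it is a (correct, known) reduction to an unproved statement, and the genuine gap is exactly the analytic inequality you flag.

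Two smaller points. First, your potential-function bookkeeping contains a technical slip: querying coordinate $i$ decreases the expected conditional variance by the squared degree-one Fourier weight $\hat{q}(\{i\})^2$ of the current restriction, not by the full influence $\mathrm{Inf}_i(q|_\rho)=\sum_{S\ni i}\hat{q}(S)^2$; the standard accounting has to relate these two quantities (e.g.\ through the degree bound), and this is where the $\poly(d)$ losses enter. Second, if your goal were only the version of the conjecture that this paper actually proves something about, the right comparison point is the Kravchuk-based randomized estimator of \Cref{th:relation_Q_R}, which sidesteps influences entirely by exploiting symmetry and the orthogonality of Kravchuk polynomials; your route does not recover even that special case without first resolving the open influence inequality.
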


\begin{restatable}{theorem}{optfnk}
\label{th:fnk1}
For $T>0$, the quantum $T$-bias and  classical $T$-bias of $\fnk$ are   
\begin{equation*}
\delta_Q\inparen{\fnk, T}= 
\begin{cases}
\Theta(\fnkquanbias),&  \text{ if } T \le  \sqrt{n/k},\\
\Theta\inparen{1},&\text{ if } T > \sqrt{n/k},
\end{cases}
\end{equation*}
\begin{equation*}
\delta_C\inparen{\fnk, T}= 
\begin{cases}
0, & \text{ if }T=1, \\
\Theta(\fnkclassbias), & \text{ if }2 \le T \le n/k, \\
\Theta(1), & \text{ if }T > n/k.
\end{cases}
\end{equation*}
\end{restatable}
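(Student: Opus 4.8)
The plan is to analyze the function $\fnk$ by reducing it to the decision version of Grover search. Observe that on its promise, $\fnk(x)=1$ exactly when $|x|\in\{k,n-k\}$ and $\fnk(x)=0$ when $|x|\in\{0,n\}$; so distinguishing the two cases is the same as detecting whether the input string (or its complement) is all-zero or has Hamming weight exactly $k$. By symmetry we may first condition on whether the $1$-weight is ``small'' ($\le n/2$) — a single query lets a classical algorithm learn one bit, which is essentially free in the relevant regimes — so the heart of the matter is the gapped search problem: distinguish $|x|=0$ from $|x|=k$ among $n$ bits.

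\textbf{Quantum bias.} For the upper bound I would take the optimal amplitude-amplification / quantum counting algorithm: after $T$ Grover iterations on $n$ bits with $k$ marked items, the amplitude on the ``found a marked item'' subspace is $\sin\big((2T+1)\theta\big)$ with $\sin\theta=\sqrt{k/n}$, so for $T\le\sqrt{n/k}$ the success probability is $1/2+\Theta((2T+1)\theta)^2 \wedge \dots$ — more precisely one gets a distinguishing bias of order $(T^2 k/n)$ as long as $T\lesssim\sqrt{n/k}$, and $\Theta(1)$ once $T\gtrsim\sqrt{n/k}$. For the matching lower bound the cleanest route is the polynomial method: a $T$-query quantum algorithm's acceptance probability is a degree-$2T$ univariate polynomial $p$ in $|x|$, and we need $|p(0)-p(k)|=2\delta_Q$ while $0\le p\le 1$ on $\{0,1,\dots,n\}$ (using evenness to also control $p$ near $n$). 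A Markov-brothers-type inequality for polynomials bounded on a grid then forces $\delta_Q = O(T^2 k/n)$ in the stated range; alternatively one can cite the known tight bound for quantum counting / the collision-type arguments already developed for symmetric functions. I expect to lean on the polynomial method here since the grid $\{0,\dots,n\}$ and degree $2T$ make the Markov inequality argument very clean.

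\textbf{Classical bias.} For the randomized side, a $T$-query algorithm may as well query $T$ distinct uniformly random coordinates (symmetry + Yao). If $|x|=0$ it always sees all zeros; if $|x|=k$ it sees at least one $1$ with probability $1-\binom{n-k}{T}/\binom{n}{T} = \Theta(Tk/n)$ for $T\le n/k$, and $\Theta(1)$ for $T>n/k$; with $T=1$ and $k<n$ this probability can be $0$ only when — actually it is $k/n>0$, so the $T=1$ case bias being $0$ comes from the need to also handle the $|x|\in\{n-k,n\}$ branch: with one query you cannot tell $|x|=k$ (sees a $1$ w.p. $k/n$) apart from $|x|=n-k$ (sees a $1$ w.p. $(n-k)/n$) versus $|x|=0$ vs $|x|=n$, and the worst-case input kills the bias. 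This matches the claimed $\delta_C(\fnk,1)=0$. For $T\ge 2$ a symmetric guessing strategy (output $1$ iff the sample is neither all-zero nor all-one) achieves bias $\Theta(Tk/n)$, and Yao's principle against the hard distribution — uniform over the weight-$0$, weight-$k$, weight-$(n-k)$, weight-$n$ inputs — gives the matching upper bound via a straightforward computation of statistical distance between the induced sample distributions.

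\textbf{Main obstacle.} The routine parts are the algorithmic upper bounds and the classical lower bound (elementary sampling estimates). The delicate point is the \emph{quantum} lower bound with the constant tuned correctly in the near-$1/2$ regime: I must rule out that some clever quantum strategy beats $\Theta(T^2k/n)$ when $T$ is tiny, and also pin down the exact threshold $T=\sqrt{n/k}$ (as opposed to $n/k$) where the bias saturates. I expect the Markov-type inequality for polynomials bounded on the integer grid $\{0,1,\ldots,n\}$, combined with the evenness constraint $p(j)=p(n-j)$ (which effectively forces $p$ to be a polynomial in $j(n-j)$ of degree $\le T$), to be exactly the right tool, and getting the inequality in the form that yields a clean $\Theta$ rather than just an $O$ up to logs will be where most of the real work lies.
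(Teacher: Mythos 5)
Your overall architecture (amplitude amplification for the quantum algorithm, a polynomial/Markov-type bound for the quantum upper bound on the bias, and sampling plus statistical distance for the classical side) is close in spirit to the paper's, which uses the same Grover-iterate algorithm, obtains the quantum upper bound instead from $Q_{\epsilon}(f)=\Omega(\sqrt{(1-2\epsilon)\bs(f)})$ together with $\bs(\fnk)=\Omega(n/k)$, and proves the classical upper bound via the statistical distance of the induced answer distributions. However, there are two concrete gaps in your algorithmic (lower-bound-on-bias) arguments.

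First, the reduction ``a single query lets a classical algorithm learn [whether $|x|\le n/2$]'' is false: one query yields a single Bernoulli$(|x|/n)$ sample, which cannot distinguish the weight classes, so you cannot reduce $\fnk$ to the $0$-vs-$k$ problem this way. Your quantum algorithm must therefore handle $|x|\in\{n-k,n\}$ directly, and the standard Grover measurement (``did I find a marked item'') does not work there: for $|x|=n-k$ the quantity $\sin^2((2T+1)\arcsin\sqrt{(n-k)/n})$ oscillates and is uncontrolled. The paper's fix is to measure in the $\{\ket{\psi}\bra{\psi},\,I-\ket{\psi}\bra{\psi}\}$ basis, whose statistics depend only on $\cos^2(T\theta_x)$ with $\cos\theta_x=1-2|x|/n$; since $\theta_{n-|x|}=\pi-\theta_{|x|}$ and $T$ is an integer, these statistics are automatically invariant under $|x|\mapsto n-|x|$, so no case split is needed. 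Second, both of your algorithms output very lopsided raw success probabilities: the $f=0$ inputs are accepted as $0$ with probability essentially $1$, while the $f=1$ inputs are detected only with probability $\Theta(T^2k/n)$ (quantum) or $\Theta(Tk/n)$ (classical). Since the $T$-bias is defined by a worst-case-over-inputs guarantee, ``output $1$ iff the sample is not all-equal'' gives success probability far below $1/2$ on the $f=1$ inputs, i.e., no positive bias at all. You must randomize the output (as the paper does, outputting $0$ with a carefully chosen probability when the ``null'' outcome occurs) so that both success probabilities equal $\tfrac12+\Theta(\cdot)$. Both issues are repairable with standard tricks, but as written your lower bounds on $\delta_Q$ and $\delta_C$ are not established. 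A minor further caution on your quantum upper bound: the symmetrized acceptance polynomial is bounded in $[0,1]$ only on the integer grid, and it need not satisfy $p(j)=p(n-j)$; the Paturi-style two-case analysis on $\|p\|$ (as in the paper's Lemma 4.1 for $\kl$) is the clean way to make the Markov-type argument rigorous, and the $1/d^2$ term of that inequality is what yields the quadratic-in-$T$ bound.
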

\begin{restatable}{theorem}{quanOptKl}
\label{th:kl_quantum}
For $\kl$ and $T>0$, the quantum  $T$-bias is
\begin{equation*}
\delta_Q\inparen{\kl, T}= 
\begin{cases}
\Theta\inparen{\min \inbrace{\frac{l-k}{\sqrt{(n-k)l}}\cdot T, \frac{l-k}{n} \cdot T^2}}, & \text{ if }T= O\inparen{\frac{\sqrt{(n-k)l}}{l-k}}, \\
\Theta(1), & \text{ if }T = \Omega\inparen{\frac{\sqrt{(n-k)l}}{l-k}}.
\end{cases}
\end{equation*}
\end{restatable}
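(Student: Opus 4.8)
\emph{Overview.} The plan is to prove matching bounds on $\delta_Q(\kl,T)$: a quantum counting algorithm for the lower bound on the bias, and the polynomial method for the upper bound. Querying the input once on the uniform superposition produces a state with amplitude $\sqrt{|x|/n}$ on the marked set, so the relevant Grover angle is $\theta_{|x|}:=\arcsin\sqrt{|x|/n}$. The algorithm I would use applies $t\le T$ Grover iterations and measures the marked flag; its acceptance probability is $\sin^2\inparen{(2t+1)\theta_{|x|}}$, a symmetric polynomial of degree $2t+1$ in $|x|$, and the bias between weights $k$ and $l$ equals $\Abs{\sin\inparen{(2t+1)(\theta_l+\theta_k)}}\cdot\Abs{\sin\inparen{(2t+1)(\theta_l-\theta_k)}}$; I then optimize $t$ over $\inbrace{1,\dots,T}$. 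Using $\theta_l-\theta_k=\Theta\inparen{\IAdvf}$ and, in the relevant range, $\theta_l^2-\theta_k^2=\Theta\inparen{\diff}$ (both from $\sin^2\theta_j=j/n$), a short case analysis yields: when $(2T+1)(\theta_l+\theta_k)\lesssim 1$ every attainable product is $\Theta\inparen{(2t+1)^2(\theta_l^2-\theta_k^2)}$, maximized at $t=\Theta(T)$ to give $\Theta(\diff\cdot T^2)$; otherwise one picks $t=\Theta(T)$ keeping the first factor $\Omega(1)$ and the second $\Theta\inparen{(2t+1)(\theta_l-\theta_k)}$, giving $\Theta(\IAdvf\cdot T)$; and once $T=\Omega(\Advf)$ the second factor alone is $\Omega(1)$. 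The one delicate point is choosing $t$ so the first sine factor stays bounded away from $0$, which is handled by scanning a short window of consecutive iteration counts (or, equivalently, by running amplitude estimation with $\Theta(T)$ Grover calls and thresholding, the bias being half the total variation distance of the two output distributions).

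\emph{Lower bound.} By symmetrization in the polynomial method, the acceptance probability of any $T$-query quantum algorithm, viewed as a function of the Hamming weight, is a univariate polynomial $P$ with $\deg P\le 2T$ and $0\le P(j)\le 1$ for every integer $j\in\inbrace{0,\dots,n}$, and the worst-case bias on $\kl$ is at most $\tfrac12\inparen{P(l)-P(k)}=\tfrac12\int_k^l P'(x)\,dx$. I would bound $\Abs{P'(x)}$ pointwise by the minimum of the Bernstein estimate $\frac{2T}{\sqrt{x(n-x)}}$ and the Markov estimate $\frac{8T^2}{n}$; integrating the former gives $\Theta\inparen{T(\theta_l-\theta_k)}=\Theta(\IAdvf\cdot T)$ and integrating the latter (which dominates near an endpoint) gives $\Theta(\diff\cdot T^2)$, so $P(l)-P(k)=O\inparen{\min\inbrace{1,\IAdvf\cdot T,\diff\cdot T^2}}$, matching the algorithm.

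\emph{Main obstacle.} The Bernstein and Markov inequalities above require $P$ to be bounded on the whole interval $[0,n]$, whereas a priori $P$ is only bounded on the integer grid. For $2T=O(\sqrt n)$ this is resolved by the Coppersmith--Rivlin / Ehlich--Zeller phenomenon, which promotes grid-boundedness to interval-boundedness up to an absolute constant, after which the argument goes through. The remaining regime $T=\omega(\sqrt n)$ -- which, since $T$ is still below the threshold $\Advf$, forces $l-k$ small and $l$ near $n/2$, so that only the linear branch $\Theta(\IAdvf\cdot T)$ is binding -- instead calls for a discrete Bernstein/Markov-type bound on the finite difference $\Abs{P(j+1)-P(j)}$ for $j$ in the interior of the grid, telescoped over $j\in\inbrace{k,\dots,l-1}$. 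Making this estimate precise, and reconciling it with the algorithmic side in that regime, is where I expect most of the work to lie.
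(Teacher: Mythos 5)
Your algorithmic half is essentially the paper's own argument: the paper also runs amplitude amplification from the uniform superposition, reads off the bias from $\sin^2((2t+1)\theta_l)-\sin^2((2t+1)\theta_k)$ (it expands this trigonometrically rather than via the product identity $\sin(A+B)\sin(A-B)$, but the content is the same), and resolves exactly the ``delicate point'' you flag by scanning a window of $\Theta(\pi/\theta_l)$ consecutive iteration counts to find one where the non-difference sine factor is at least $1/2$ (its case (iii)(2)). The estimates $\theta_l-\theta_k=\Theta\inparen{\IAdvf}$ and $\sin^2\theta_l-\sin^2\theta_k=\diff$ are also the ones the paper uses, together with Claim~\ref{lemma:onequery} to convert the gap in acceptance probabilities into a worst-case bias. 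So the lower bound on $\delta_Q$ is fine.

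The genuine gap is in your upper bound, and you have correctly located it yourself: the regime $T=\omega(\sqrt n)$ with $T=O\inparen{\Advf}$ is nonempty and is in fact the canonical hard case (e.g.\ $l-k=1$ with $k,n-l=\Theta(n)$, where $\Advf=\Theta(n)$ and the binding branch is the linear one $\IAdvf\cdot T$), so deferring it to an unspecified ``discrete Bernstein/Markov-type bound'' leaves the theorem unproved precisely where it is most interesting. The paper does not route through Coppersmith--Rivlin at all; instead it proves Lemma~\ref{lemma:deg_lowerbound} via Paturi's combined inequality (Fact~\ref{face:bmi}), which bounds $\bigl(\tfrac1d\sqrt{1-x^2}+\tfrac1{d^2}\bigr)|p'(x)|$ by $2\|p\|$ with $\|p\|$ the sup over the whole interval $[-1,1]$, and then argues that at some point $x$ of the promise window one has $|p'(x)|/\|p\|=\Omega(n\beta/(l-k))$ \emph{whether or not} $\|p\|\le 1$ --- i.e.\ it keeps the unknown interval norm as a normalizing factor rather than trying to convert grid-boundedness into interval-boundedness. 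That is the ingredient your proposal is missing (and note that the paper's own treatment of the case $\|p\|\ge 1$, which locates a large derivative by a mean-value argument over an interval of length $2(l-k)/n$, is the step you would need to reproduce or repair: the point where $|p|$ attains $\|p\|$ need not be anywhere near the promise window, so this case genuinely requires care). Until that regime is handled, your upper bound only establishes the theorem for $T=O(\sqrt n)$.
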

\begin{restatable}{theorem}{classOptKl}
\label{th:kl_class}
If $T = O\inparen{\frac{(n-k)l}{(l-k)^2}}$, the classical $T$-bias of $\kl$ satisfies that
\[
\begin{aligned}
    \delta_C\inparen{\kl, T} &=O\inparen{\min \inbrace{\IAdvf \cdot\sqrt{T}+\frac{T}{n}, \diff \cdot T}},\\
\delta_C\inparen{\kl, T} &=
    \Omega\inparen{\max\inbrace{\IAdvfsqua \cdot T, \diff \cdot\sqrt{T}}}.
\end{aligned}
\]
If $T = \Omega\inparen{\frac{(n-k)l}{(l-k)^2}}$, then $\delta_C\inparen{\kl, T}= \Theta(1)$.
\end{restatable}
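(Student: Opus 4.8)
The plan is to reduce Theorem~\ref{th:kl_class} to a total‑variation estimate between two hypergeometric laws and then carry out that estimate regime by regime. \textbf{Reduction.} Since $\kl$ is symmetric, a standard symmetrization argument lets me assume that an optimal $T$‑query randomized algorithm simply queries $T$ uniformly random distinct coordinates and outputs a randomized function of the number of ones observed; adaptivity does not help, because the unqueried coordinates always remain exchangeable. On an input of Hamming weight $w$ this count has the hypergeometric law $H_w:=\mathrm{Hyp}(n,w,T)$, so by optimal binary hypothesis testing (a randomized threshold test equalizes the two error probabilities) the best worst‑case success probability is $\tfrac12+\tfrac12\,\mathrm{TV}(H_k,H_l)$. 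Hence $\delta_C\inparen{\kl,T}=\tfrac12\,\mathrm{TV}(H_k,H_l)$, and it suffices to bound this distance; it is non‑decreasing in $T$ and always at most $1$, which will yield the $\Theta(1)$ regime once the lower bound is in place.

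\textbf{Upper bounds.} I would get $O\inparen{\diff\cdot T}$ from a coupling: draw the \emph{same} random $T$‑subset to sample both a weight‑$l$ string and a weight‑$k$ string that differ in exactly $l-k$ coordinates; the observed counts coincide unless the subset meets one of those $l-k$ coordinates, an event of probability at most $(l-k)T/n$. For $O\inparen{\IAdvf\sqrt T+\tfrac Tn}$ I would route through the binomial: the total variation between sampling without and with replacement satisfies $\mathrm{TV}\inparen{H_w,\mathrm{Bin}(T,w/n)}=O(T/n)$, which accounts for the additive $\tfrac Tn$, and then $\mathrm{TV}\inparen{\mathrm{Bin}(T,k/n),\mathrm{Bin}(T,l/n)}$ is controlled by a chi‑squared/Hellinger (or local‑limit) estimate giving the $\IAdvf\sqrt T$ term; the degenerate ranges — $k$ near $0$ or $l$ near $n$, where $H_k$ or $H_l$ collapses toward a point mass — I would treat directly by estimating $1-\binom{n-l}{T}/\binom{n}{T}$ and its complement‑symmetric analogue. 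The stated bound is then the minimum of the two.

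\textbf{Lower bound and the constant regime.} For the lower bound I would fix the single statistic ``number of ones $\ge\mu_k$'' with $\mu_w=Tw/n$, and combine the mean gap $\mu_l-\mu_k=T(l-k)/n$ with anti‑concentration of the hypergeometric: a local limit theorem gives that the pmf of $H_w$ near its mean has height $\Theta\inparen{1/\sigma_w}$ with $\sigma_w^2\asymp T\tfrac wn\bigl(1-\tfrac wn\bigr)\tfrac{n-T}{n-1}$, so $\mathrm{TV}(H_k,H_l)=\Omega\inparen{\min\{1,(\mu_l-\mu_k)/\sigma\}}$, which after simplification dominates both $\IAdvfsqua\cdot T$ and $\diff\sqrt T$ throughout $T=O\inparen{\Advfsqua}$; when $k$ or $n-l$ is too small for the local limit theorem to apply, I would instead use $\mathrm{TV}(H_k,H_l)\ge\Pr_{H_l}[S\ge1]-\Pr_{H_k}[S\ge1]$ (or its complement), recovering the same bound. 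Evaluating this lower bound at $T=\Theta\inparen{\Advfsqua}$ gives $\mathrm{TV}=\Omega(1)$, and monotonicity in $T$ together with $\mathrm{TV}\le1$ finishes the $\Theta(1)$ case.

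\textbf{Main obstacle.} The conceptual steps above are routine; essentially all the work is in the hypergeometric probability estimates — proving a local limit theorem / anti‑concentration bound that is simultaneously uniform over every admissible regime of $(k,l,n,T)$ and sharp enough to produce the precise combinations $(n-k)l$ and $(l-k)$ appearing in the statement, while correctly handling the boundary cases where $H_k$ or $H_l$ degenerates and the finite‑population factor $\tfrac{n-T}{n-1}$ cannot be dropped.
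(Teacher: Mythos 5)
Your reduction and your upper bounds are sound and close to the paper's. The coupling argument for $\mathrm{TV}(H_k,H_l)\le\frac{(l-k)T}{n}$ is a clean (and slightly more direct) alternative to the paper's detour through the binomial, and the Hellinger/chi-squared route for the $\IAdvf\sqrt{T}+\frac{T}{n}$ term is exactly what the paper does (its Fact on $d_{TV}(B(T,w/n),H(n,w,T))\le\frac{T-1}{n-1}$ plus a computation of $d_H^2\bigl(B(\frac kn),B(\frac ln)\bigr)=\Theta\bigl(\IAdvfsqua\bigr)$). One small correction: the identity $\delta_C=\frac12\mathrm{TV}(H_k,H_l)$ is too strong in the worst-case (prior-free) setting; the randomized equalizing test only guarantees bias $\Theta(\mathrm{TV})$ (the paper proves $\ge\mathrm{TV}/4$ and gives an example-free remark explaining why $\mathrm{TV}/2$ is the Bayes-uniform value, not the minimax one). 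This costs only constants and does not affect the theorem.

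The genuine gap is the lower bound. Your entire argument rests on the claim $\mathrm{TV}(H_k,H_l)=\Omega\bigl(\min\{1,(\mu_l-\mu_k)/\sigma\}\bigr)$ with $\sigma^2\asymp T\frac{w}{n}(1-\frac{w}{n})\frac{n-T}{n-1}$, which you do not prove and which is false as stated outside the genuinely Gaussian regime: take $k=1$, $l=2$, $T=1$, $n$ large; then $\mathrm{TV}(H_k,H_l)=\frac1n$ while $(\mu_l-\mu_k)/\sigma=\Theta(1/\sqrt n)$. The failure is not captured by "$k$ or $n-l$ too small" — the local limit theorem with pmf height $\Theta(1/\sigma)$ requires $\sigma_l\gtrsim1$, i.e.\ $T\gtrsim n^2/(l(n-l))$, so your fallback does not delineate the bad regime correctly, and you yourself concede that "essentially all the work" is in this uniform anti-concentration estimate. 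The paper avoids the issue entirely by never lower-bounding $\mathrm{TV}(H_k,H_l)$: since for a lower bound one may design any algorithm, it samples coordinates \emph{with replacement}, reducing to product-Bernoulli distributions, and then (i) gets the $\diff\sqrt{T}$ term from Zubkov's explicit two-sided bound on $d_{TV}(p^{\otimes T},q^{\otimes T})$ after padding the input with $2n$ zeros and $n$ ones so the Bernoulli parameter lies in $[\frac14,\frac12)$, and (ii) gets the $\IAdvfsqua\cdot T$ term from the tensorization inequality $d_{TV}(p^{\otimes T},q^{\otimes T})\ge1-(1-d_H^2(p,q))^T=\Omega(T\,d_H^2(p,q))$ valid for $T\le1/d_H^2$. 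To complete your proof you would either need to adopt such a with-replacement reduction or actually prove a hypergeometric anti-concentration bound uniform over all $(k,l,n,T)$, correctly interpolating between the Gaussian regime and the near-point-mass regime; as written, that key step is missing.
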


\begin{restatable}{corollary}{corfnk}
\label{cor:fnk}
For arbitrarily small bias $\beta > 0$, if there exists a quantum algorithm using $T$ queries to compute $\fnk$ with success probability $1/2+\beta$, then there also exists a classical randomized algorithm using $O(T^2)$ queries to compute $\fnk$
with the same success probability.
\end{restatable}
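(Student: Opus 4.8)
The plan is to read the corollary straight off the tight characterization in \Cref{th:fnk1}. Suppose a quantum algorithm makes $T$ queries and computes $\fnk$ with success probability $1/2+\beta$; by definition this forces $\beta\le\delta_Q\inparen{\fnk,T}$. It therefore suffices to exhibit a randomized algorithm with $O(T^2)$ queries and bias at least $\beta$: given such an algorithm $A$ of bias $\beta'\ge\beta$, the algorithm that with probability $\beta/\beta'$ runs $A$ and otherwise outputs a uniformly random bit has bias exactly $\beta$ (hence success probability exactly $1/2+\beta$ on every input) and uses no extra queries. So the task reduces to proving $\delta_C\inparen{\fnk,cT^2}\ge\delta_Q\inparen{\fnk,T}$ for a suitable absolute constant $c\ge 2$ and all $T\ge 1$.

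I would establish this by substituting $T'=\lceil cT^2\rceil$ into the classical bias formula of \Cref{th:fnk1} and matching it against the quantum one, case by case. In the polynomial regime $T\le\sqrt{n/k}$ with $cT^2\le n/k$ we have $\delta_Q\inparen{\fnk,T}=\Theta\inparen{\frac{k}{n}T^2}$ and $\delta_C\inparen{\fnk,cT^2}=\Theta\inparen{\frac{k}{n}\cdot cT^2}$, so taking $c$ above the ratio of the two hidden constants makes the classical bias dominate (this also handles $T=1$, since $\lceil c\rceil\ge 2$). When $T>\sqrt{n/k}$ both biases are $\Theta(1)$ and $cT^2>n/k$, and the remaining mixed case is $T\le\sqrt{n/k}$ yet $cT^2>n/k$, where the quantum side is still polynomial (so $\beta\le\delta_Q\inparen{\fnk,T}=O\inparen{\frac{k}{n}T^2}$, an absolute constant below $1/2$) while the classical side has already saturated.

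The main obstacle is the constant bookkeeping at and beyond the threshold $T\approx\sqrt{n/k}$: the bare $\Theta(1)$ statements in \Cref{th:fnk1} do not by themselves imply $\delta_C\inparen{\fnk,cT^2}\ge\delta_Q\inparen{\fnk,T}$, since a priori the quantum $\Theta(1)$ could hide a constant near $1/2$ while the classical one hides a small constant. I would resolve this using the finer quantitative forms that come out of the proof of \Cref{th:fnk1}: in the saturated regime the optimal quantum bias behaves like $\frac{1}{2}-e^{-\Theta(T\sqrt{k/n})}$ (amplitude amplification iterated $\Theta(T\sqrt{k/n})$ times), while the optimal classical bias behaves like $\frac{1}{2}-e^{-\Theta(T'k/n)}$ (query $T'$ uniform positions and accept iff a marked position is seen). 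Substituting $T'=cT^2$ turns the classical exponent into $\Theta(cT^2k/n)$, which is $\Omega$ of the quantum exponent $\Theta(T\sqrt{k/n})$ precisely because $T>\sqrt{n/k}$ makes $cT\sqrt{k/n}=\Omega(1)$; since $x\mapsto\frac{1}{2}-e^{-x}$ is increasing and concave, the same estimate also disposes of the mixed case (where $\frac{k}{n}T^2\le 1$) and of the edge case where $\beta$ is close to $1/2$. Collecting the cases gives $\delta_C\inparen{\fnk,cT^2}\ge\delta_Q\inparen{\fnk,T}\ge\beta$, and the down-mixing step above upgrades this to a randomized algorithm with success probability exactly $1/2+\beta$.
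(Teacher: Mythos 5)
Your overall route is the right one and matches what the paper intends (it presents \Cref{cor:fnk} as a direct consequence of \Cref{th:fnk1}): bound $\beta\le\delta_Q(\fnk,T)$, show the classical $cT^2$-bias dominates, and down-mix to hit the success probability exactly. Your handling of the polynomial regime $cT^2\le n/k$ is fine, and your instinct that the bare $\Theta(1)$ statements do not close the saturated regime is correct.

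However, your resolution of the saturated regime contains a genuine error. You assert that the optimal quantum bias behaves like $\frac{1}{2}-e^{-\Theta(T\sqrt{k/n})}$ and use this as an \emph{upper} bound on $\delta_Q(\fnk,T)$. Iterated amplitude amplification only gives this as an achievability (lower) bound; as an upper bound it is false and contradicts the paper's own \Cref{th:fnk2}, which shows $Q_E(\fnk)\le\lceil\pi/(2\theta)\rceil+2=O\bigl(\sqrt{n/k}\bigr)$. Hence for $T\ge\lceil\pi/(2\theta)\rceil+2$ the quantum bias is exactly $1/2$, not $1/2-e^{-\Theta(T\sqrt{k/n})}$, and your intermediate claim $\delta_C(\fnk,cT^2)\ge\delta_Q(\fnk,T)$ for all $T$ is then false whenever $k=\omega(1)$: matching bias $1/2$ classically requires $n-k+1=\Theta(n)$ queries by Claim \ref{claim:d_fnk}, while $cT^2=O(n/k)=o(n)$. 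The same problem persists for $\beta$ close to $1/2$, where the classical query count needed is $\Theta\bigl(\frac{n}{k}\log\frac{1}{1-2\beta}\bigr)$, unbounded relative to $T^2$. The correct reading of the corollary is that it concerns the regime advertised in its first clause --- $\beta$ arbitrarily small, or at least bounded away from $1/2$ --- and there the saturated case needs no fine exponent comparison: once $cT^2>n/k$, \Cref{th:fnk1} gives $\delta_C(\fnk,cT^2)\ge c_0$ for an absolute constant $c_0>0$, and a constant number of repetitions with majority vote lifts this above any such $\beta$ while keeping the query count $O(T^2)$. Replacing your exponent argument with this amplification step (and restricting $\beta$ accordingly) repairs the proof.
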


\begin{restatable}{corollary}{corKl}
\label{cor:kl}
For arbitrarily small bias $\beta > 0$, if there exists a quantum algorithm using $T$ queries to compute $\kl$ with success probability $1/2+\beta$, then there also exist classical randomized algorithms using $T^2$ queries to compute $\kl$ with success probability $1/2+\Omega\inparen{\beta^2}$ and using $T^4$ queries to compute $\kl$ with success probability $1/2+\Omega\inparen{\beta}$. Thus \[\delta_C\inparen{\kl,T^2}\geq\Omega\inparen{\delta_Q(\kl,T)^2}~\mbox{and}~\delta_C\inparen{\kl,T^4}\geq\Omega\inparen{\delta_Q(\kl,T)}.\]
\end{restatable}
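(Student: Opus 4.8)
The plan is to read off \Cref{cor:kl} from the matching bounds of \Cref{th:kl_quantum} and \Cref{th:kl_class}. A $T$-query quantum algorithm with success probability $1/2+\beta$ means $\delta_Q\inparen{\kl,T}\geq\beta$, and by definition $\delta_C\inparen{\kl,T'}$ is witnessed by some $T'$-query randomized algorithm; hence it suffices to prove the two displayed inequalities $\delta_C\inparen{\kl,T^2}\geq\Omega\inparen{\delta_Q(\kl,T)^2}$ and $\delta_C\inparen{\kl,T^4}\geq\Omega\inparen{\delta_Q(\kl,T)}$, from which the stated randomized algorithms with biases $\Omega(\beta^2)$ (using $T^2$ queries) and $\Omega(\beta)$ (using $T^4$ queries) follow at once. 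For readability abbreviate $a\eqdef\IAdvf$ and $b\eqdef\diff$ and write $R\eqdef\Advf$, so that $a^2=\IAdvfsqua$ and $R^2=\Advfsqua$. Then \Cref{th:kl_quantum} reads $\delta_Q(\kl,T)=\Theta\inparen{\min\inbrace{aT,\,bT^2}}$ for $T=O(R)$ and $\delta_Q(\kl,T)=\Theta(1)$ for $T=\Omega(R)$, while \Cref{th:kl_class} gives $\delta_C(\kl,T')=\Omega\inparen{\max\inbrace{a^2T',\,b\sqrt{T'}}}$ for $T'=O(R^2)$ and $\delta_C(\kl,T')=\Theta(1)$ for $T'=\Omega(R^2)$.

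First I would dispose of the regime $T=\Omega(R)$: here $\delta_Q(\kl,T)=\Theta(1)$; moreover $T^2=\Omega(R^2)$ by squaring, and $T^4\geq T^2$, so $\delta_C(\kl,T^2)=\delta_C(\kl,T^4)=\Theta(1)$, and both inequalities hold trivially since their right-hand sides are at most $1/4$. So assume $T=O(R)$ and $\delta_Q(\kl,T)=\Theta\inparen{\min\inbrace{aT,\,bT^2}}$.

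For the $T^2$-bound, the key observation is $\delta_Q(\kl,T)\leq O(aT)$, hence $\delta_Q(\kl,T)^2\leq O(a^2T^2)$; since $a^2T^2$ is, up to a constant, the first term of the classical lower bound at $T'=T^2$, we obtain $\delta_C(\kl,T^2)\geq\Omega(a^2T^2)\geq\Omega\inparen{\delta_Q(\kl,T)^2}$ when $T^2=O(R^2)$, and $\delta_C(\kl,T^2)=\Theta(1)\geq\Omega\inparen{\delta_Q(\kl,T)^2}$ when $T^2=\Omega(R^2)$. For the $T^4$-bound I would split on which term attains $\min\inbrace{aT,bT^2}$, noting that the second term of the classical lower bound at $T'=T^4$ is $b\sqrt{T^4}=bT^2$. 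If $bT^2\leq aT$, then $\delta_Q(\kl,T)=\Theta(bT^2)$ and directly $\delta_C(\kl,T^4)\geq\Omega(bT^2)=\Omega\inparen{\delta_Q(\kl,T)}$. If $aT\leq bT^2$, i.e.\ $a\leq bT$, then $\delta_Q(\kl,T)=\Theta(aT)$, but the same term still satisfies $bT^2\geq aT$, so $\delta_C(\kl,T^4)\geq\Omega(bT^2)\geq\Omega(aT)=\Omega\inparen{\delta_Q(\kl,T)}$. In both sub-cases, when $T^4=\Omega(R^2)$ instead, $\delta_C(\kl,T^4)=\Theta(1)$, which is still $\Omega\inparen{\delta_Q(\kl,T)}$ because $\delta_Q(\kl,T)\leq 1/2$.

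I expect the only mildly delicate point to be the $T^4$-bound, where one must check that the term $\diff\cdot\sqrt{T'}$ of the classical lower bound is always the one matching $\delta_Q(\kl,T)$: this works precisely because raising $T$ to the fourth power cancels the square root in that term, and because the two branches of $\min\inbrace{aT,bT^2}$ align with the comparison between $bT$ and $a$ in exactly the right direction. A secondary bookkeeping point is that $T^2$ or $T^4$ may exceed the classical plateau threshold $R^2$ even when $T\leq R$ (which can happen only for $T$ near $R$), which is why each case also records the $\Theta(1)$ alternative; no other subtlety arises, and since query counts are integers, $T^2$ and $T^4$ are legitimate query budgets.
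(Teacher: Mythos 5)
Your proposal is correct and follows essentially the same route as the paper: bound $\delta_Q(\kl,T)$ above by each term of $\min\{aT,bT^2\}$ via Theorem \ref{th:kl_quantum}, then match $(aT)^2$ against the $\IAdvfsqua\cdot T'$ term of Theorem \ref{th:kl_class} at $T'=T^2$ and $bT^2$ against the $\diff\cdot\sqrt{T'}$ term at $T'=T^4$. Your extra case split for the $T^4$ bound is unnecessary (since $\delta_Q(\kl,T)=O(bT^2)$ holds in both branches of the min), but your explicit handling of the plateau regimes $T=\Omega(R)$ and $T',T^4=\Omega(R^2)$ is a small point of care that the paper's proof leaves implicit.
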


\begin{restatable}{theorem}{exactfnk}
\label{th:fnk2}
The exact quantum query complexity of $\fnk$ satisfies $\lceil \frac{\pi}{2\theta} \rceil \le Q_E(f_n^k) \le \lceil \frac{\pi}{2\theta} \rceil+2$, where $\theta = 2\arcsin{\sqrt{k/n}}$, whereas the zero-error randomized query complexity of $\fnk$ is $n-k+1$.
\end{restatable}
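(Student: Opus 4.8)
The statement has three independent pieces: the upper bound and the lower bound on $Q_E(\fnk)$, and the value $n-k+1$ of the zero-error randomized query complexity. The plan for the \emph{upper bound} $Q_E(\fnk)\le\lceil\pi/(2\theta)\rceil+2$ is to reduce to exact amplitude amplification. First apply a single Deutsch--Jozsa-type step $A$ on $\lceil\log_2 n\rceil$ qubits: prepare the uniform superposition over $[n]$ with a fixed unitary $P$, apply the phase oracle $\ket{i}\mapsto(-1)^{x_i}\ket{i}$ (one query), and apply $P^{-1}$. A one-line computation gives $\bra{0}A\ket{0}=1-2|x|/n$, which equals $\cos 0=1$, $\cos\pi=-1$, $\cos\theta$, $\cos(\pi-\theta)=-\cos\theta$ for $|x|=0,n,k,n-k$ respectively, since $\theta=2\arcsin\sqrt{k/n}$ satisfies $\cos\theta=1-2k/n$. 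Declaring the one-dimensional ``bad'' subspace to be $\spn\{\ket{0}\}$, the weight of $A\ket{0}$ on the orthogonal ``good'' subspace is $0$ when $|x|\in\{0,n\}$ and $\sin^2\theta$ when $|x|\in\{k,n-k\}$ (using $\sin(\pi-\theta)=\sin\theta$), so the evenness of $\fnk$ is handled for free. Since the good amplitude $\sin\theta$ on the $1$-inputs is a known constant, the exact version of amplitude amplification reaches good amplitude exactly $1$ there using $1+\lfloor\pi/(4\theta)-\tfrac12\rfloor$ iterations of the operator $-A\,S_{\ket{0}}\,A^{-1}\,S_{\mathrm{bad}}$ (the reflections are query-free, each iteration costs $2$ queries, and the last iteration uses modified phases); on the $0$-inputs the state never leaves the bad subspace. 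Measuring whether one is in the good subspace then computes $\fnk$ with zero error, and accounting for the $+1$ initial query and the floor/ceiling yields $Q_E(\fnk)\le \lceil\pi/(2\theta)\rceil+2$.

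The plan for the \emph{lower bound} $Q_E(\fnk)\ge\lceil\pi/(2\theta)\rceil$ is the polynomial method. A $T$-query exact algorithm produces a real polynomial of degree at most $2T$ on $\B^n$ that agrees with $\fnk$ on the promise set and takes values in $[0,1]$ everywhere on $\B^n$ (being an acceptance probability); symmetrizing over coordinate permutations gives a univariate $q$ with $\deg q\le 2T$, $q(w)\in[0,1]$ for every integer $0\le w\le n$, and $q(0)=q(n)=0$, $q(k)=q(n-k)=1$. It remains to prove any such $q$ has degree at least $2\lceil\pi/(2\theta)\rceil$. Under the substitution $1-2w/n=\cos\psi$ (so $w=0,n,k,n-k$ correspond to $\psi=0,\pi,\theta,\pi-\theta$), $q$ becomes an even cosine polynomial $g$ of the same degree with $g(0)=g(\pi)=0$, $g(\theta)=g(\pi-\theta)=1$, and $|g|\le 1$ on the non-uniform point set $\{\arccos(1-2w/n):0\le w\le n\}$; the vanishing at $\psi=0,\pi$ forces $g$ to be divisible in the variable $\cos\psi$ by $1-\cos^2\psi$, and a sharp Chebyshev/Markov-type inequality for polynomials bounded on an integer grid (in the spirit of the Coppersmith--Rivlin inequality) shows that rising from the minimum $0$ at $\psi=0$ to the maximum $1$ at $\psi=\theta$ requires $\deg g\ge 2\lceil\pi/(2\theta)\rceil$. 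Establishing this extremal estimate with the \emph{sharp} constant $\pi/2$ -- as opposed to the $\Theta(\sqrt{n/k})$ bound coming from a crude Markov inequality -- is the main obstacle, and it is exactly what pins $Q_E(\fnk)$ down to within the additive $2$.

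For the \emph{classical statement}, the deterministic algorithm that queries positions $1,2,\dots$ in order, outputs $1$ the moment it has seen both a $0$ and a $1$, and otherwise halts after $n-k+1$ queries (the queried bits are then all $0$, forcing $|x|=0$, or all $1$, forcing $|x|=n$) is always correct and uses at most $n-k+1$ queries. The matching lower bound is a routine adversary argument: any zero-error algorithm run on the all-zeros input must query at least $n-k+1$ positions before it can safely output $0$, because after at most $n-k$ answers ``$0$'' there is still a weight-$k$ input consistent with the transcript; hence every run uses $\ge n-k+1$ queries and the zero-error randomized query complexity is at least $n-k+1$.
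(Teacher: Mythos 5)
Your classical argument and your quantum upper bound are essentially sound, and the upper bound takes a genuinely different route from the paper: you build the exact algorithm directly from exact amplitude amplification (one Deutsch--Jozsa step giving good amplitude exactly $\sin\theta$ on the $1$-inputs and $0$ on the $0$-inputs, then $\lceil\pi/(4\theta)-\tfrac12\rceil$ Grover iterations with modified phases in the last one), whereas the paper treats the exact distinguisher $\mathcal{A}_{0,k}$ for $f_n^{0,k}$ from \cite{HSY+18} (Fact \ref{fact:padding}) as a black box and handles the evenness of $\fnk$ by running $\mathcal{A}_{0,n-k}$ on both $x$ and $\overline{x}$ (which is where its additive $+2$ comes from) before a final run of $\mathcal{A}_{0,k}$. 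Your construction is self-contained and handles $|x|=n-k$ for free because the good weight depends only on $|\bra{0}A\ket{0}|^2$, and the query count $1+2\lceil\pi/(4\theta)-\tfrac12\rceil\le\lceil\pi/(2\theta)\rceil+2$ checks out. The classical part matches the paper's Claim \ref{claim:d_fnk}; the paper proves the impossibility direction via Fact \ref{proposition:classical_upperbound} rather than an adversary argument, but your adversary argument on the all-zeros input is correct under the worst-case-query reading of zero-error complexity used in the paper.

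The genuine gap is the quantum lower bound. You reduce it, via symmetrization, to the claim that any univariate polynomial $q$ with $q(w)\in[0,1]$ for all integers $0\le w\le n$, $q(0)=q(n)=0$ and $q(k)=q(n-k)=1$ must have degree at least $2\lceil\pi/(2\theta)\rceil$, and you then state explicitly that establishing this extremal estimate with the sharp constant ``is the main obstacle.'' That obstacle \emph{is} the content of the lower bound, and it is not supplied: a crude Markov/Bernstein bound only yields $\Omega(\sqrt{n/k})$, and the Coppersmith--Rivlin-type inequalities you gesture at address boundedness on a grid versus an interval but do not, as invoked, deliver the constant $\pi/2$ you need (note also that the sharp bound may require $n$ large; the corresponding statement in \cite{HSY+18} is only asserted for sufficiently large $n$). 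The paper avoids this difficulty entirely: it observes that $f_n^{0,k}$ is a subfunction of $\fnk$, so $Q_E(\fnk)\ge Q_E(f_n^{0,k})$, and then invokes the known exact-query lower bound of \cite{HSY+18} (Fact \ref{fact:lowerbound}); choosing $d=\lceil\pi/(2\theta)\rceil-1$ one checks $k/n\le\frac12\inparen{1-\cos\frac{\pi}{2d}}$, whence $Q_E(f_n^{0,k})\ge d+1=\lceil\pi/(2\theta)\rceil$. To close your proof you must either prove your extremal polynomial estimate from scratch or, as the paper does, cite the existing lower bound for exactly distinguishing $|x|=0$ from $|x|=k$.
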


\begin{restatable}{theorem}{relationQR}
\label{th:relation_Q_R}
For any total symmetric Boolean function $f$ and arbitrarily small bias $\beta > 0$, if there exists a quantum algorithm using $T$ queries to compute $f$ with success probability $1/2+\beta$, then for any $\delta \in (0,1)$, there exists a randomized algorithm using $O(T^2)$ queries to compute $f$ with success probability $1/2+\Omega\inparen{\delta\beta^2}$  on a $1-\delta$ fraction of inputs. 
\end{restatable}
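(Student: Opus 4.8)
The plan is to turn the quantum algorithm into a univariate polynomial, have the classical algorithm sample $O(T^{2})$ coordinates and output $1$ with probability a clamped, slightly damped copy of the \emph{unbiased estimator} of that polynomial, and then show that clamping corrupts the answer on at most a $\delta$ fraction of inputs. If $T\ge\varepsilon\sqrt n$ for a small absolute constant $\varepsilon$, then $O(T^{2})$ queries already read all of $x$ and we are done, so assume $2T\le\varepsilon\sqrt n$. Averaging the quantum algorithm over all $n!$ coordinate permutations (legitimate since $f$ is symmetric; it changes neither the worst-case success probability nor the number of queries) makes its acceptance probability on $x$ a function $q(\abs{x})$ of the Hamming weight alone; by the polynomial method $q$ extends to a degree-$\le 2T$ real polynomial with $q(k)\in[0,1]$ and $q(k)-\tfrac12$ of sign $\sgn(f(k)-\tfrac12)$ and absolute value $\ge\beta$ for all $k\in\{0,\dots,n\}$. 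Put $p:=q-\tfrac12$. Since $\deg p\le 2T\le\varepsilon\sqrt n$ and $\abs{p}\le\tfrac12$ on the grid, the Ehlich--Zeller / Coppersmith--Rivlin inequality gives $\|p\|_{[0,n]}=O(1)$; by Bernstein's inequality this forces $p$ to have small derivatives on the bulk interval $[n/4,3n/4]$, and it also gives a crude bound on the estimator below.

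The classical algorithm: fix $m:=CT^{2}$ for a large absolute constant $C$ (so $m<n$), query a uniformly random size-$m$ set $S$ of coordinates, set $Y:=\sum_{i\in S}x_{i}$, and let $\widehat p\colon\{0,\dots,m\}\to\mathbb R$ be the unique unbiased estimator of $p$, i.e.\ $\E_{Y\sim\mathrm{Hyp}(n,k,m)}[\widehat p(Y)]=p(k)$ for all $k$ (explicitly $\widehat p(Y)=\sum_{j\le 2T}(\Delta^{j}p)(0)\binom nj\binom Yj/\binom mj$, a degree-$\le 2T$ polynomial in $Y$). With $\tau:=c_{0}\delta\beta$ for a small constant $c_{0}$, output $1$ with probability $\mathrm{clamp}_{[0,1]}\bigl(\tfrac12+\tau\,\widehat p(Y)\bigr)$. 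On a weight-$k$ input the acceptance probability is $h(k):=\E[\,\mathrm{clamp}_{[0,1]}(\tfrac12+\tau\widehat p(Y))\mid k\,]$, and because $\E[\tfrac12+\tau\widehat p(Y)\mid k]=\tfrac12+\tau p(k)$, one checks $\abs{h(k)-\tfrac12-\tau p(k)}\le E(k)$ where $E(k):=\tau\sum_{y:\tau\abs{\widehat p(y)}>1/2}\Pr[Y{=}y\mid k]\,\abs{\widehat p(y)}$. Call $k$ \emph{good} if $E(k)\le\tfrac12\tau\abs{p(k)}$; then for good $k$ the sign of $h(k)-\tfrac12$ equals $\sgn(f(k)-\tfrac12)$ and $\abs{h(k)-\tfrac12}\ge\tfrac12\tau\beta=\tfrac{c_{0}}{2}\delta\beta^{2}$, so the algorithm outputs $f(k)$ with probability $\ge\tfrac12+\Omega(\delta\beta^{2})$. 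All these quantities depend only on $\abs x$, so it remains to show $\Pr_{k\sim\mathrm{Bin}(n,1/2)}[k\text{ not good}]\le\delta$. Using $\abs{\widehat p(y)}\le 2\tau\widehat p(y)^{2}$ on the clamped set and $\E_{k}\Pr[Y{=}y\mid k]=\binom my2^{-m}$ gives $\E_{k}[E(k)]\le 2\tau^{2}\,\EE{\widehat p(Y)^{2}}$ with $Y\sim\mathrm{Bin}(m,1/2)$, whence Markov's inequality yields $\Pr_{k}[k\text{ not good}]\le\E_{k}[E(k)]/(\tfrac12\tau\beta)\le 4c_{0}\delta\,\EE{\widehat p(Y)^{2}}$.

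Everything therefore reduces to the estimate $\EE{\widehat p(Y)^{2}}=O(1)$, after which taking $c_{0}$ small enough completes the proof; this is the main obstacle, and it is exactly the place where one must spend $\Theta(T^{2})$ queries rather than $\mathrm{poly}(T)/\beta^{2}$. I would prove it by writing $\EE{\widehat p(Y)^{2}}=(\E_{k}p(k))^{2}+\mathrm{Var}_{k}\,p(k)+\E_{k}[\mathrm{Var}(\widehat p(Y)\mid k)]$ with $k\sim\mathrm{Bin}(n,1/2)$; the first two terms are $O(1)$ from $\|p\|_{[0,n]}=O(1)$, and for the third I would split on whether $k$ lies in the bulk $[n/4,3n/4]$. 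For bulk $k$, Bernstein's inequality makes $p$ slowly varying there (local oscillation wavelength $\gtrsim n/T$), so once the sampling resolution $n/\sqrt m\lesssim n/T$ matches it --- i.e.\ $m\gtrsim T^{2}$ --- even the plug-in estimate $p(nY/m)$ is nearly unbiased with variance $O(\|p'\|_{\mathrm{bulk}}^{2}\,n^{2}/m)=O(1/C)$, and the honest $\widehat p$ is no worse up to constants, so $\mathrm{Var}(\widehat p(Y)\mid k)=O(1)$. For non-bulk $k$ the conditional variance can be huge, but those $k$ carry total probability $e^{-\Omega(n)}$, which dominates even the crude bound $\mathrm{Var}(\widehat p(Y)\mid k)\le\|\widehat p\|_{\infty}^{2}\le (2n)^{O(T)}=e^{o(n)}$ (here $T\le\sqrt n$ is used). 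An alternative route, closer in spirit to the partial-function results of this paper, would instead localize $\abs x$ between two consecutive transitions of $f$ and appeal to the classical simulation of $\kl$ from \Cref{cor:kl}, but the direct polynomial argument above seems cleaner.
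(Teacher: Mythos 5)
Your overall architecture coincides with the paper's: symmetrize the quantum algorithm into a bounded degree-$\le 2T$ univariate polynomial, sample $\Theta(T^2)$ coordinates, feed the hypergeometric count into the \emph{unbiased} estimator of that polynomial (your finite-difference formula is exactly the Kravchuk expansion $A_t=\sum_l \hat g(l)\binom{n}{l}K_l(t,T)/\binom{T}{l}$ of Lemma~\ref{krav_transform}), damp and clamp, and control the fraction of bad inputs by a second-moment-plus-Markov argument under the binomial law of the count. Your clamping analysis (the inequality $|h(k)-\tfrac12-\tau p(k)|\le E(k)$, the bound $|\widehat p(y)|\le 2\tau\widehat p(y)^2$ on the clamped set, and the identification of $k\sim B(n,\tfrac12)$ with the uniform input distribution) is correct and, if anything, slightly cleaner than the paper's dyadic Chebyshev estimate. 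Everything therefore hinges, exactly as you say, on $\E_{Y\sim B(m,1/2)}[\widehat p(Y)^2]=O(1)$.

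That bound is where your argument has a genuine gap. Your route is: plug-in estimator $p(nY/m)$ has conditional variance $O(1/C)$ in the bulk (true, via Coppersmith--Rivlin and Bernstein), and ``the honest $\widehat p$ is no worse up to constants.'' This last step is asserted, not proved, and it is precisely the hard part: the unbiased estimator is the image of $p$ under a linear map whose $L^2(B(m,1/2))$ operator norm on degree-$\le 2T$ polynomials is of order $\max_{l\le 2T}\sqrt{\binom{n}{l}/\binom{m}{l}}$, which is enormous, so no generic comparison with the plug-in is available; one must exploit the decay of the coefficients of $p$ in the orthogonal (Kravchuk) basis. A crude attempt via the Markov brothers' inequality on the finite differences $(\Delta^j p)(0)$ gives $|b_j|\lesssim (2d^2/n)^j/(2j-1)!!$ and hence only $\E[\widehat p(Y)^2]^{1/2}\lesssim\sum_j (2d^2)^j/(2j)! = e^{\Theta(d)}$, which is useless. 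The paper closes this gap by combining the exact orthogonality identity $\E_{t\sim B(T,1/2)}[A_t^2]=\sum_l c_l^2/\binom{T}{l}$ with the Fourier-growth bound for bounded symmetric functions, $|c_l|=|\hat g(l)\binom{n}{l}|\le d^l/l!$ (Fact~\ref{fact:fourier_growth}, from \cite{Iyer2021tight}); with $T\ge 2d^2+d$ the resulting series is geometric and sums to $2$. Your sketch neither supplies this Fourier-growth input nor a working substitute, so as written the proof is incomplete; to finish it along your lines you would essentially have to reprove that bound (which itself requires more than the bulk Bernstein estimate, since the near-endpoint behaviour of $p$ also enters).
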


\begin{restatable}{corollary}{AA}
    \label{AA_conjecture}
    For any total symmetric Boolean function $f$ and arbitrarily small bias $\beta > 0$, if there exists a $T$-query quantum algorithm  to compute $f$ with success probability $1/2+\beta$, then for any $\epsilon \in (0, \beta), \delta \in (0,1)$, there exists a randomized algorithm using 
    $O(T^2/(\epsilon^2\delta^2))$
    queries to compute $f$ with success probability $1/2+(\beta-\epsilon)$ on a $1-\delta$ fraction of inputs.
\end{restatable}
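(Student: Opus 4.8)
The plan is to derive Corollary~\ref{AA_conjecture} from Theorem~\ref{th:relation_Q_R} by standard error reduction; the only point requiring care is the query count in the regime where $\beta$, $\epsilon$, $\delta$ are all tiny. First I would apply Theorem~\ref{th:relation_Q_R} to the given $T$-query quantum algorithm of bias $\beta$, with its error-fraction parameter set to the target $\delta$. This produces a randomized algorithm $R$ using $O(T^2)$ queries that, on some fixed $1-\delta$ fraction $S$ of the inputs, outputs $f$ correctly with probability at least $1/2+\gamma$, where $\gamma := c\,\delta\beta^2$ and $c>0$ is the absolute constant hidden in the $\Omega(\cdot)$ of Theorem~\ref{th:relation_Q_R}. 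If $\gamma \ge \beta-\epsilon$ we are already done: $R$ has the stated behaviour and uses $O(T^2) = O(T^2/(\epsilon^2\delta^2))$ queries. So assume $\gamma < \beta-\epsilon =: \rho$.

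In that case I would run $R$ independently $m$ times on the same input and return the majority vote. On every input in $S$ each run is correct with probability at least $1/2+\gamma$, so by the standard analysis of majority amplification the success probability is boosted to at least $1/2+\rho$ once $m = O\!\big(\rho^2/(\gamma^2(1/2-\rho)^2)\big)$; in particular $m = O(\rho^2/\gamma^2)$ suffices whenever $\rho$ is bounded away from $1/2$, since the post-amplification bias is of order $\gamma\sqrt m$ until it saturates near $1/2$. The amplified algorithm is randomized, makes $m\cdot O(T^2)$ queries, and succeeds with probability at least $1/2+(\beta-\epsilon)$ on the $1-\delta$ fraction $S$, which is the desired conclusion provided $m = O(1/(\epsilon^2\delta^2))$.

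To check that last bound, note $0<\epsilon<\beta\le 1/2$ (the quantum success probability $1/2+\beta$ is at most $1$), so $1/2-\rho = 1/2-\beta+\epsilon \ge \epsilon$; now split on whether $\rho \le 1/4$. If $\rho \le 1/4$ then $1/2-\rho \ge 1/4$ and $m = O(\rho^2/\gamma^2) = O\big((\beta-\epsilon)^2/(\delta^2\beta^4)\big)$, which is $O(1/(\epsilon^2\delta^2))$ because $\epsilon(\beta-\epsilon)\le\beta^2/4$ by AM--GM. If $\rho > 1/4$ then $\beta > 1/4$, hence $\gamma = \Theta(\delta)$, and $m = O\big(\rho^2/(\gamma^2(1/2-\rho)^2)\big) = O(1/(\delta^2\epsilon^2))$ using $\rho<1/2$ and $1/2-\rho\ge\epsilon$. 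Either way the algorithm makes $O(T^2/(\epsilon^2\delta^2))$ queries. There is no conceptual obstacle here — all the difficulty is already packed into Theorem~\ref{th:relation_Q_R} — but one must resist using the crude Hoeffding bound for the amplification step, which would cost an extra $\log(1/\epsilon)$ factor together with a $1/\beta^4$ that cannot in general be absorbed into $1/\epsilon^2$.
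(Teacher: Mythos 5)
Your proposal is correct and follows essentially the same route as the paper: invoke Theorem~\ref{th:relation_Q_R} to get bias $\Omega(\delta\beta^2)$ on a $1-\delta$ fraction, then amplify by independent repetition (the paper uses its Amplification Lemma~\ref{lemma:amplification} in place of plain majority) and bound the number of repetitions by $O(1/(\epsilon^2\delta^2))$ via $(\beta-\epsilon)^2/\beta^4\le 1/\beta^2\le 1/\epsilon^2$, which is the same bookkeeping as your AM--GM step. Your extra case split for $\rho=\beta-\epsilon$ close to $1/2$ is a refinement the paper silently skips, but it does not change the argument.
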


\begin{remark}
Corollary \ref{AA_conjecture} is a randomized version of Conjecture \ref{conjecture:AA}. Additionally, Corollary \ref{AA_conjecture} considers total symmetric Boolean functions, while Conjecture \ref{conjecture:AA} refers to any Boolean function.
\end{remark}
\begin{restatable}{theorem}{adegQ}
\label{th:poly_Q_adeg}
For any (possibly partial) symmetric Boolean function $f$ satisfying $f(x) = f(n-x)$ and arbitrarily small $\beta > 0$, if $T = \adeg_{\frac{1}{2}-\beta}(f)$, 
there exists a quantum query algorithm using $\lceil T/2 \rceil$ queries to compute $f$ with success probability $1/2+\Omega\inparen{\beta/\sqrt{T}}$. 
Namely, \[\delta_Q\inparen{f,\adeg_{\frac{1}{2}-\beta}(f)}=\Omega\inparen{\frac{\beta}{\sqrt{\adeg_{\frac{1}{2}-\beta}(f)}}}.\]
As a corollary, we have $Q_{\epsilon}(f) = O(\adeg_{\epsilon}(f)^2)$ for any error $\epsilon$ arbitrarily close to $1/2$. 
\end{restatable}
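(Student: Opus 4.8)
The plan is to go through the standard symmetrization machinery for symmetric Boolean functions and then run a phase-estimation–style quantum counting routine. Recall that for a symmetric Boolean function $f$, its $\epsilon$-approximate degree $\adeg_\epsilon(f)$ is governed, via Minsky–Papert symmetrization, by the minimal degree of a univariate real polynomial $p$ with $\abs{p(j)-f(j)}\le\epsilon$ for every Hamming weight $j$ in the domain, where $j$ ranges over $\{0,1,\dots,n\}$ (restricted to the promise set). So fix a degree-$T$ univariate polynomial $p$ witnessing $\adeg_{1/2-\beta}(f)=T$, i.e.\ $\abs{p(j)-f(j)}\le 1/2-\beta$ on the domain, where $T=\adeg_{1/2-\beta}(f)$. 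The hypothesis $f(x)=f(n-x)$ (``even'') means $p$ can be taken symmetric about $n/2$, hence a polynomial of degree $\lceil T/2\rceil$ in the variable $u=(j-n/2)^2$ or, more usefully, in the quantity $\sin^2(\theta_j/2)$ where $\theta_j$ is the rotation angle associated with Hamming weight $j$.

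Next I would invoke the amplitude/phase-estimation primitive: with $q$ queries one can prepare a state from which, upon measurement, one obtains an estimate whose distribution is a known trigonometric kernel (the Fejér-type kernel of amplitude estimation) concentrated around the true angle $\theta_j$ with spread $O(1/q)$; equivalently, one can realize any degree-$O(q)$ trigonometric polynomial of $\theta_j$ as an acceptance probability, by the quantum-signal-processing / Chebyshev-polynomial-as-amplitude correspondence used e.g.\ in the Grover/counting literature and in \cite{BCG+20,Cha19}. Concretely, from $\lceil T/2\rceil$ queries we can implement a polynomial of degree $T$ in $\cos\theta_j$ (equivalently degree $\lceil T/2\rceil$ in $\sin^2(\theta_j/2)$), and since $\sin^2(\theta_j/2)$ is an affine function of $j/n$, this is exactly a degree-$\lceil T/2\rceil$ univariate polynomial in $j$ — matching the degree budget afforded by the even symmetrized witness $p$. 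The key quantitative step is then to convert the \emph{value} guarantee $\abs{p(j)-f(j)}\le 1/2-\beta$ into a \emph{bias} guarantee for the resulting quantum algorithm: one sets the acceptance probability to be (an affinely rescaled version of) $p(j)$, so that the gap between ``$f=0$'' inputs and ``$f=1$'' inputs in acceptance probability is $2\beta$; but because the quantum primitive only produces $p$ up to the $O(1/q)=O(1/T)$ smearing of the estimation kernel, and because realizing $p$ exactly as a probability requires it to lie in $[0,1]$ which forces a normalization by its sup-norm on $[0,n]$ (which can be as large as a constant multiple via Markov/Bernstein-type bounds for polynomials bounded on the integer grid), the effective bias degrades to $\Omega(\beta/\sqrt{T})$. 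I would make this precise using the Coppersmith–Rivlin / Ehlich–Zeller inequality bounding the sup-norm of a degree-$d$ polynomial on $[0,n]$ by a constant times its max on the integers $\{0,\dots,n\}$ provided $d=O(\sqrt n)$, together with the observation that $\adeg_{1/2-\beta}(f)=T$ already forces $T=O(\sqrt n)$ in the relevant regime (and the edge cases $T=\Omega(\sqrt n)$ are handled separately, where $\Theta(1)$ bias is trivially achievable).

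The main obstacle I anticipate is the bias-loss bookkeeping: showing that the $\sqrt{T}$ factor — and not a worse power of $T$ — is the correct and necessary price for (i) rescaling $p$ into $[0,1]$ and (ii) tolerating the estimation kernel's tails. This requires a careful choice of which trigonometric polynomial to implement: rather than naively implementing a rescaled $p$, it is cleaner to implement the degree-$\lceil T/2\rceil$ polynomial whose existence is guaranteed by the symmetrized approximate-degree witness and then argue that its derivative / variation on the grid is controlled, so that the $O(1/T)$ smearing changes acceptance probabilities by only $O(1/T)\cdot\|p'\|_\infty = O(\beta/\sqrt T)$ after normalization — i.e.\ the loss is dominated by a single application of a Markov-type inequality relating $\|p'\|$ to $\deg(p)$ and $\|p\|$. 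Once that estimate is in hand, the final line $Q_\epsilon(f)=O(\adeg_\epsilon(f)^2)$ follows by standard error reduction: amplify the $1/2+\Omega(\beta/\sqrt T)$ bias to a constant by $O(T/\beta^2)$ repetitions, giving $O(\adeg_\epsilon(f)^2)$ total queries once $\beta$ is regarded as a constant bounded away from $0$ (and for $\beta\to 1/2$ one recovers the known bounded-error relation).
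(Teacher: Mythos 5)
Your high-level plan (symmetrize, exploit evenness to get a polynomial in $\cos$ of the Grover angle, realize it via controlled Grover iterations) points in the same general direction as the paper, but the step that actually produces the $\sqrt{T}$ — the heart of the theorem — is not correctly identified, and the mechanism you propose in its place does not work. The paper's algorithm incurs \emph{no} estimation-kernel smearing at all: it writes the even approximating polynomial $h$ in the Chebyshev basis, $h(\cos\eta_x)=\sum_{i\le\lceil T/2\rceil}a_i\cos(2i\eta_x)$, prepares a superposition over ancilla branches with amplitudes $\sqrt{|a_i|/M}$ where $M=\sum_i|a_i|$, runs $i$ controlled Grover steps on branch $i$ (total $\lceil T/2\rceil$ queries), and chooses the final measurement so that the acceptance probability is \emph{exactly} $\tfrac12+h(\cos\eta_x)/(2M)$. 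The entire loss is the $\ell_1$-norm $M$ of the Chebyshev coefficients, which is bounded by $\sqrt{2(T+1)}$ via Cauchy--Schwarz plus Parseval's identity for Chebyshev polynomials ($\sum_i a_i^2\le 2$ since $\|h\|_\infty\le1$). Your accounting instead attributes the loss to (i) sup-norm rescaling of $p$ and (ii) an $O(1/T)$ smearing controlled by $\|p'\|$ via a Markov-type inequality; but Markov gives $\|p'\|\le T^2\|p\|$ and Bernstein gives $\|p'(x)\|\le T\|p\|/\sqrt{1-x^2}$, so the product $O(1/T)\cdot\|p'\|$ comes out as $O(T)$ or $O(1)$, never $O(1/\sqrt{T})$ — the claimed chain $O(1/T)\cdot\|p'\|_\infty=O(\beta/\sqrt{T})$ is not derivable from the tools you name. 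Without the $\ell_1$-of-Chebyshev-coefficients idea (or an equivalent), your argument does not close.

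Two further problems. First, your assertion that $T=\adeg_{1/2-\beta}(f)=O(\sqrt{n})$ ``in the relevant regime'' is false: for instance $f_n^{k,k+1}$ with $k\approx n/2$ has $\adeg_{1/2-\beta}=\Omega(\beta n)$ by Lemma~\ref{lemma:deg_lowerbound}, and when $T=\Omega(\sqrt n)$ the required conclusion (bias $\Omega(\beta/\sqrt{T})$ with $\lceil T/2\rceil$ queries, for arbitrarily small $\beta$) is not ``trivially achievable.'' Second, for the corollary you amplify to \emph{constant} bias at cost $O(T/\beta^2)$ repetitions, which yields $O(T^2/\beta^2)$ queries and only gives $Q_\epsilon(f)=O(\adeg_\epsilon(f)^2)$ when $\beta$ is treated as constant — but the statement demands this for $\epsilon$ arbitrarily close to $1/2$. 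The paper instead amplifies only back to bias $\beta$, which by the amplification lemma costs $O(T)$ repetitions independently of $\beta$, giving $O(T^2)$ total queries for every $\beta$.
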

\begin{restatable}{theorem}{degQE}
\label{th:polyrelation1}
For any partial symmetric Boolean function $f$, we have $Q_E(f) = O(\deg(f)^2)$.
\end{restatable}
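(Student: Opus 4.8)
The plan is to extract, from a minimal representing polynomial of $f$, an \emph{exact} quantum algorithm whose query cost is \emph{linear} in $\deg(f)$; this already yields the claimed quadratic bound, with a genuine quadratic loss occurring only in a trivial high-degree regime. Write $f\colon D\to\{0,1\}$ with $D=\{x:|x|\in S\}$ and $f(x)=g(|x|)$ for $g\colon S\to\{0,1\}$, and let $p$ be a real multilinear polynomial of degree $d:=\deg(f)$ that represents $f$ on $D$ and takes values in $[0,1]$ on all of $\{0,1\}^n$ --- boundedness on the whole cube being the notion attached to acceptance probabilities, hence to $Q_E$, by the polynomial method. Averaging $p$ over coordinate permutations gives a polynomial depending only on $|x|$, of degree $\le d$, still in $[0,1]$ on the cube and still agreeing with $f$ on $D$; so there is a univariate real polynomial $q$ of degree $\le d$ with $q(j)\in[0,1]$ for all $j\in\{0,\dots,n\}$ and $q(s)=g(s)$ for all $s\in S$.

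I would first dispose of the easy regime: if $d\ge c\sqrt{n}$ for a suitable absolute constant $c$, then trivially $Q_E(f)\le n\le d^2/c^2=O(\deg(f)^2)$. So assume $d<c\sqrt{n}$. By the Ehlich--Zeller / Coppersmith--Rivlin inequality, a degree-$d$ polynomial bounded by $1$ at the $n{+}1$ equally spaced points $0,1,\dots,n$ with $d<c\sqrt n$ is bounded by $2$ on the whole interval $[0,n]$. Hence $\rho(y):=\tfrac12\,q(ny^2)$ is a real \emph{even} polynomial of degree $2d$ with $|\rho(y)|\le 1$ for $y\in[-1,1]$ and $\rho(\sqrt{s/n})=\tfrac12 g(s)\in\{0,\tfrac12\}$ for every $s\in S$.

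Next I would invoke the standard amplitude-amplification / quantum-signal-processing picture for symmetric functions: starting from the uniform superposition over $[n]$, one query to the sign oracle and the induced Grover reflection act on a $2$-dimensional invariant subspace as a rotation by $2\theta_x$ with $\sin\theta_x=\sqrt{|x|/n}$, and interleaving $O(k)$ such steps with single-qubit rotations (QSVT) realises, as a flag-qubit amplitude, any real polynomial in $\sin\theta_x$ of degree $O(k)$ bounded by $1$ on $[-1,1]$ (the fixed-parity case handled by the usual extra-ancilla gadget). I would apply this to $A\circ\rho$, where $A(t)=3t-4t^3$ satisfies $|A|\le 1$ on $[-1,1]$, $A(0)=0$, $A(\tfrac12)=1$ --- precisely one step of amplitude amplification, $3\cdot 2\arcsin\tfrac12=\pi$. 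Then $A\circ\rho$ is even, of degree $6d=O(d)$, and bounded by $1$ on $[-1,1]$, and for every feasible input ($|x|=s\in S$) the flag amplitude equals $A(\tfrac12 g(s))=g(s)\in\{0,1\}$; measuring the flag returns $f(x)=g(s)$ with probability $g(s)^2=g(s)$, i.e.\ with certainty, using $O(d)=O(\deg(f))$ queries. Combined with the high-degree case, $Q_E(f)=O(\deg(f)^2)$ (indeed $O(\deg(f))$ whenever $\deg(f)=O(\sqrt n)$).

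The main work lies in the last two paragraphs. One would need to pin down the Ehlich--Zeller constant so the two regimes stitch into a single clean $O(\deg(f)^2)$ bound, and --- the crux --- make the QSP step genuinely \emph{exact}: the flag amplitude must equal $g(s)$ on the nose, so the phase factors have to be chosen for the exact polynomial $A\circ\rho$, the even-parity/ancilla bookkeeping carried out carefully, and the degenerate inputs $|x|\in\{0,n\}$ (where $\theta_x\in\{0,\pi/2\}$ collapses the $2$D subspace) checked to still yield the stated amplitude. If one prefers to avoid QSP, the same outline runs through a hands-on amplitude-amplification subroutine for symmetric functions in the spirit of Beals et al., at the possible cost of one extra factor of $\deg(f)$ --- precisely the quadratic slack already present in the statement.
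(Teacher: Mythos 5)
Your proposal is correct in outline but takes a genuinely different route from the paper. The paper's proof is an elimination tournament: it symmetrizes to lower-bound $\deg(f)$ by $\max\{|S_0|,|S_1|\}$ (counting zeros of $F$ and $1-F$) and, via Lemma \ref{lemma:deg_lowerbound}, by $\max_{k<l:f(k)\neq f(l)}\sqrt{(n-k)l}/(l-k)$; it then invokes the exact $k$-vs-$l$ distinguisher of He--Sun--Yang--Yuan (\Cref{fact:kl_exact_upper_bound}), costing $O(\deg(f))$ queries per round, and runs at most $|S_0|+|S_1|-1\le 2\deg(f)$ rounds, each excluding one candidate Hamming weight. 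The quadratic loss there is intrinsic to the round structure. You instead run the polynomial method in reverse: symmetrize the representing polynomial, control it on all of $[0,n]$ via Coppersmith--Rivlin/Ehlich--Zeller (which is exactly why you need the case split at $\deg(f)=\Theta(\sqrt n)$, with the high-degree case dispatched by $Q_E(f)\le n$), compose with $-T_3$ to push the values $\{0,\tfrac12\}$ to $\{0,1\}$, and synthesize the resulting bounded even polynomial of $\sqrt{|x|/n}$ exactly via QSVT. Both external ingredients are genuine theorems (the Coppersmith--Rivlin bound, and the exact existence of QSP phase factors for real bounded polynomials of definite parity --- exactness of the phases is not an obstacle for query complexity, since the inter-query unitaries may be arbitrary), and your remaining worries (parity/ancilla bookkeeping, the endpoints $|x|\in\{0,n\}$) are routine. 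What your route buys, if written out in full, is strictly more: $Q_E(f)=O(\deg(f))$ whenever $\deg(f)=O(\sqrt n)$, with the quadratic loss confined to the trivial regime; what the paper's route buys is a short, self-contained argument that needs no boundedness-on-the-interval lemma and no QSVT machinery. The one point you should make explicit when completing the argument is the precise form of the Coppersmith--Rivlin statement you invoke (constant bound on $[0,n]$ under $d\le c\sqrt n$), since that is what licenses feeding $\rho$ into QSVT at all.
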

\begin{restatable}{theorem}{bsfbs}
\label{th:polyrelation2}
For any partial symmetric Boolean function $f$, we have 
\begin{equation*}
\begin{aligned}
\bs(f) &= \Theta\left(\fbs(f)\right)& & = \inparen{\max\limits_{k<l:f(k)\neq f(l)} \frac{n}{l-k}}, \\
Q(f) &= \Theta\left(\adeg(f)\right)& & =  \inparen{\max\limits_{k<l:f(k)\neq f(l)} \frac{\sqrt{(n-k)l}}{l-k}}.\\
\end{aligned}
\end{equation*}
\end{restatable}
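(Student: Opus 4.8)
The plan is to treat the two displayed identities separately: a purely combinatorial argument for $\bs(f)=\Theta(\fbs(f))=\Theta(n/\gamma)$, and an algorithm‑plus‑polynomial argument for $Q(f)=\Theta(\adeg(f))=\Theta(\Lambda)$, where I write $\gamma:=\min\inbrace{l-k:k<l,\ f(k)\neq f(l)}$ and $\Lambda:=\max_{k<l:f(k)\neq f(l)}\frac{\sqrt{(n-k)l}}{l-k}$, so that $\max_{k<l:f(k)\neq f(l)}\frac{n}{l-k}=n/\gamma$. Throughout I may assume $f$ is non‑constant.

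For the combinatorial identity I would first bound $\fbs(f)$ from above: at any input $x$ with $|x|=k\in\Dom(f)$, flipping a block $B$ changes the Hamming weight by at most $|B|$, so a sensitive block satisfies $|B|\ge\min\inbrace{|t-k|:t\in\Dom(f),\,f(t)\neq f(k)}\ge\gamma$; hence for any feasible fractional assignment $w$ on the sensitive blocks, $\gamma\sum_B w_B\le\sum_B w_B|B|=\sum_{i\in[n]}\sum_{B\ni i}w_B\le n$, giving $\fbs(f)\le n/\gamma$. For the lower bound on $\bs(f)$ I would take a pair $k<l$ realizing $\gamma$ and exhibit sensitive blocks directly: at the weight‑$k$ input, partition the $n-k$ zeros into disjoint $\gamma$‑blocks (flipping one yields weight $l$); at the weight‑$l$ input, partition the $l$ ones into $\gamma$‑blocks (flipping one yields weight $k$). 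One of these gives $\lfloor\max\inbrace{n-k,l}/\gamma\rfloor=\Omega(n/\gamma)$ disjoint sensitive blocks since $\max\inbrace{n-k,l}\ge n/2$. Combined with the trivial $\bs(f)\le\fbs(f)$, this pins both to $\Theta(n/\gamma)$.

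For the second identity I would establish $Q(f)=\Theta(\Lambda)$ first. Upper bound: run quantum counting (amplitude estimation) on $x$ with $M=\Theta(\Lambda)$ queries to get an estimate $\tilde t$ of $t:=|x|$ with $|\tilde t-t|=O(\sqrt{t(n-t)}/M+n/M^2)$ with constant probability; using the elementary facts $\sqrt{t(n-t)}\le\sqrt{(n-k)l}$ for $t\in\inbrace{k,l}$ and $n(l-k)\le(n-k)l$, one checks that for the critical pair responsible for $t$ this error is below half the gap, so outputting $f$ of the nearest weight in $\Dom(f)$ succeeds with constant probability, giving $Q(f)=O(\Lambda)$. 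Lower bound: for each critical pair $(k,l)$, the restriction of $f$ to weights $\inbrace{k,l}$ is $\kl$ up to output negation, and restricting the domain cannot raise query complexity, so $Q(f)\ge Q(\kl)=\Omega\!\left(\frac{\sqrt{(n-k)l}}{l-k}\right)$ by \Cref{th:kl_quantum}; maximizing over pairs yields $Q(f)=\Omega(\Lambda)$. Then $\adeg(f)\le 2Q(f)=O(\Lambda)$ by the polynomial method, and for the reverse I would symmetrize an optimal approximating polynomial $p$ (of degree $\adeg(f)$) to the univariate $q(i)=\E_{|x|=i}[p(x)]$, which satisfies $0\le q(i)\le1$ for all integers $i\in[0,n]$, $\deg q\le\deg p$, and $|q(k)-q(l)|\ge\frac13$ for every critical pair; a Paturi/Nayak--Wu‑type univariate inequality then forces $\deg q=\Omega\!\left(\frac{\sqrt{(n-k)l}}{l-k}\right)$, so $\adeg(f)=\Omega(\Lambda)$ and hence $Q(f)=\Theta(\adeg(f))=\Theta(\Lambda)$.

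The main obstacle is this last step: showing that if a degree‑$d$ real polynomial is bounded by $1$ on the entire grid $\inbrace{0,1,\dots,n}$ and differs by a constant between integer points $k$ and $l$, then $d=\Omega(\sqrt{(n-k)l}/(l-k))$. Two points make it delicate. First, boundedness is assumed only at the $n+1$ integers, not on the real interval, so one must invoke a Coppersmith--Rivlin/Ehlich--Zeller estimate to upgrade to $O(1)$‑boundedness on $[0,n]$ in the relevant degree range. Second, obtaining the asymmetric $\sqrt{(n-k)l}$ scaling — rather than the symmetric‑looking $\sqrt{k(n-k)}$ that a naive Bernstein bound would suggest — requires handling the near‑endpoint regime with a Markov‑type derivative estimate, the bulk with a Bernstein‑type estimate, and then integrating the derivative bound across $[k,l]$. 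This is precisely the estimate underpinning the known approximate‑counting lower bounds, so in the write‑up I would either cite it or reproduce the symmetrized version; everything else is routine.
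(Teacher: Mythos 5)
Your proposal is correct and follows essentially the same route as the paper: quantum approximate counting for the $O(\Lambda)$ upper bound, a Paturi-type univariate derivative estimate (the paper's Lemma~\ref{lemma:deg_lowerbound}, built on Fact~\ref{face:bmi}) combined with $Q\ge\adeg/2$ for the lower bound, disjoint $\gamma$-blocks at a weight-$k$ or weight-$l$ input for $\bs(f)=\Omega(n/\gamma)$, and an upper bound of $n/\gamma$ on $\fbs(f)$ (the paper assigns dual weights $w_i=1/\gamma$ via $\fbs=\FC$, whereas you argue directly on the primal via $\gamma\sum_B w_B\le\sum_i\sum_{B\ni i}w_B\le n$ — an equally valid, slightly more self-contained variant). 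The "obstacle" you flag about boundedness only on the integer grid is handled exactly as you anticipate: Paturi's inequality is stated with $\|p\|$ over the real interval in the denominator, and the paper's Lemma~\ref{lemma:deg_lowerbound} splits into the cases $\|p\|\le 1$ and $\|p\|\ge 1$ (using the mean value theorem in the latter) to get the required $\Omega\bigl(\sqrt{(n-k)l}/(l-k)\bigr)$ bound without any Coppersmith--Rivlin machinery.
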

\begin{restatable}{corollary}{Qbs}
\label{th:poly_Q_bs}
For any partial symmetric Boolean function $f$, we have $Q(f) = O\inparen{\bs(f)}$. 
\end{restatable}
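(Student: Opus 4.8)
The plan is to read this off immediately from the tight characterizations established in \Cref{th:polyrelation2}. That theorem gives, up to universal constants,
\[
\bs(f) = \Theta\!\inparen{\max_{k<l:\,f(k)\neq f(l)} \frac{n}{l-k}}
\qquad\text{and}\qquad
Q(f) = \Theta\!\inparen{\max_{k<l:\,f(k)\neq f(l)} \frac{\sqrt{(n-k)l}}{l-k}},
\]
where both maxima range over exactly the same index set $S = \{(k,l): 0\le k<l\le n,\ f(k)\neq f(l)\}$. If $S=\emptyset$ then $f$ is constant on its domain and $Q(f)=\bs(f)=0$, so the claim holds trivially; assume henceforth $S\neq\emptyset$.

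First I would record the elementary termwise bound: for every pair $(k,l)$ with $0\le k<l\le n$ we have $(n-k)l\le n\cdot n = n^2$, since $n-k\le n$ and $l\le n$. Hence $\sqrt{(n-k)l}\le n$, and dividing by the positive quantity $l-k$ yields
\[
\frac{\sqrt{(n-k)l}}{l-k} \;\le\; \frac{n}{l-k}.
\]

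Next I would take the maximum of both sides over $(k,l)\in S$; since the inequality holds for each individual pair, it is preserved under the maximum, giving $\max_{(k,l)\in S}\frac{\sqrt{(n-k)l}}{l-k}\le \max_{(k,l)\in S}\frac{n}{l-k}$. Combining this with the two $\Theta$-characterizations above immediately yields $Q(f)=O(\bs(f))$. There is no genuine obstacle here: all the difficulty — matching upper and lower bounds for $\bs$, $\fbs$, $Q$, and $\adeg$ of partial symmetric functions — is already carried out in \Cref{th:polyrelation2}, and the present corollary needs only the trivial observation $\sqrt{(n-k)l}\le n$. (One could equally note that $\fbs$, $\bs$, $Q$, and $\adeg$ are all squeezed between the two displayed quantities, which differ by at most a factor of $n/\sqrt{(n-k)l}\ge 1$, and that this factor disappears once we pass to the upper bound $Q(f)=O(\bs(f))$.)
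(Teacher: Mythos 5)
Your proof is correct and is essentially identical to the paper's: the authors likewise derive the corollary from \Cref{th:polyrelation2} via the single observation $\sqrt{(n-k)l}\le n$. Your extra handling of the constant case and the explicit termwise-maximum step are just slightly more detailed write-ups of the same argument.
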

\begin{theorem}\label{th:exprelation}
There exists a partial symmetric Boolean function $f$ such that $Q_E(f) = \Omega(n)$ and $R(f) = O(1)$.
\end{theorem}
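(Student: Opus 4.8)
The plan is to exhibit an explicit partial symmetric Boolean function of ``gapped threshold'' type whose exact quantum query complexity is linear while its bounded-error randomized query complexity is constant. Take $n$ divisible by $3$ (integrality is cosmetic), let $D=\{x\in\B^n:|x|\le n/3\ \text{or}\ |x|\ge 2n/3\}$, and define $f\colon D\to\B$ by $f(x)=0$ when $|x|\le n/3$ and $f(x)=1$ when $|x|\ge 2n/3$; this $f$ is symmetric and genuinely partial. I would prove the two bounds separately.

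For $R(f)=O(1)$ I would analyze the trivial sampler: query a constant number $c$ of coordinates chosen uniformly at random with replacement, let $\hat p$ be the empirical fraction of $1$'s, and output $0$ if $\hat p<1/2$ and $1$ otherwise. On any $x\in D$ the mean of $\hat p$ is $|x|/n$, which is $\le 1/3$ in the $0$-case and $\ge 2/3$ in the $1$-case, so a Hoeffding bound gives $\Abs{\hat p-|x|/n}<1/6$, and hence the correct answer, with probability $\ge 2/3$ once $c$ is a large enough absolute constant. (Consistently, $Q(f)=O(1)$ too, either from $Q(f)\le R(f)$ or directly from \Cref{th:polyrelation2}, since $\max_{k\le n/3<2n/3\le l}\frac{\sqrt{(n-k)l}}{l-k}=O(1)$; only \emph{exact} quantum will be expensive.)

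For $Q_E(f)=\Omega(n)$ I would use the polynomial method. If $\mathcal{A}$ is any exact $T$-query quantum algorithm for $f$, its acceptance probability $A(x)$ is a real multilinear polynomial of degree at most $2T$ with $A(x)=f(x)$ for every $x\in D$. Symmetrizing, $p(w):=\E_{|x|=w}[A(x)]$ is a univariate real polynomial of degree at most $2T$, and since $f$ is symmetric, $p(w)=f(w)$ at every promise weight $w$. Thus $p(w)=0$ for all integers $w\in\{0,1,\dots,n/3\}$ while $p(n)=1$, so $p\not\equiv 0$ yet has $n/3+1$ distinct roots; hence $2T\ge\deg p\ge n/3+1$, i.e.\ $Q_E(f)=T=\Omega(n)$. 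With the trivial bound $Q_E(f)\le n$ this even gives $Q_E(f)=\Theta(n)$.

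There is essentially no technical obstacle once the function is fixed, so the real point is the choice, which deliberately decouples ``hard for exact quantum'' from ``easy for randomized''. Exact-quantum hardness is automatic the moment the promise freezes $f$ to one value on $\Omega(n)$ distinct Hamming weights, because a nonzero univariate polynomial cannot vanish at that many points in sub-linear degree. Randomized easiness needs only that the $0$- and $1$-weights be separated by a constant-fraction gap, so that a constant-size sample distinguishes them. These are compatible exactly because all the frozen weights can be parked on one side of the gap --- which is impossible for a total symmetric function, whose $0$- and $1$-weights abut at a sensitive boundary and thereby force $R=\Omega(n)$ as well; this is the very contrast \Cref{th:exprelation} highlights. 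The only thing to be careful about is the quantifier order in the polynomial argument: the vanishing set of $p$, and therefore the degree bound, is dictated by the promise alone, so it applies uniformly to every exact algorithm.
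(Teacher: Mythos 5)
Your proposal is correct and follows essentially the same strategy as the paper: choose a gapped symmetric promise that freezes $f$ to $0$ on $\Omega(n)$ consecutive Hamming weights so that symmetrization plus the root-counting degree bound forces $Q_E(f)\ge\deg(f)/2=\Omega(n)$, while the constant-fraction gap between the $0$- and $1$-weights makes the function easy in the bounded-error setting. The only (immaterial) differences are your choice of thresholds ($n/3$ vs.\ $2n/3$ instead of the paper's $|x|\le n/2$ vs.\ $|x|=n$) and that you establish $R(f)=O(1)$ directly by a Hoeffding sampling argument, whereas the paper routes through $Q(f)=O(1)$ and the relation $R(f)=O(Q(f)^2)$ from \cite{AA14}.
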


\subsection{Proof Techniques}
In this section, we give a high-level technical overview of our main results. 

\subsubsection{Upper and Lower Bounds on Quantum $T$-Bias}
We use several methods to show the upper bound on the quantum $T$-bias of different symmetric Boolean functions:
\begin{enumerate}
    \item For $\fnk$, we show if the number of a quantum algorithm is no more than $T$ queries, then the bias $\beta$ of the algorithm is at most 
    $O(T^2/\bs(f_n^k))$ (See \Cref{eq:bias_bs}), where $\bs(f_n^k)$ is the block sensitivity of $f_n^k$. By solving a lower bound of $\bs(f_n^k)$, we obtain an upper bound on the quantum $T$-bias of $f_n^k$ (\Cref{th:fnk1}). 
    \item For $\kl$, using Paturi's lower bound technique \cite{Paturi92} for the approximate degree of symmetric Boolean functions, we give the following lower bound (See Fact \ref{beals} and \Cref{lemma:deg_lowerbound}):
    \[
    Q_{\epsilon}(f_n^{k,l}) \ge \frac{1}{2}
    \adeg_{\epsilon}\left(\kl\right) = \Omega\inparen{\max\inbrace{\frac{\beta\sqrt{\left(n-k\right)l}}{l-k}
, \sqrt{\frac{\beta n}{l-k}}}},
\]
where $\beta = 1/2-\epsilon$. 
The quantum $T$-bias of $f_n^{k,l}$ is derived by this lower bound
    (\Cref{th:kl_quantum}).
\end{enumerate}

To obtain the lower bound on the quantum $T$-bias, we also use diverse ideas to design $T$-query quantum algorithms:
\begin{enumerate}
    \item For $\fnk$ and $\kl$, we use various variants of amplitude amplification algorithm and analyze the success probability of algorithms meticulously (\Cref{th:fnk1,th:kl_quantum}).
\item For even symmetric Boolean functions, we design a novel quantum algorithm by taking advantage of the Chebyshev expansion and constructing controlled Grover's diffusion operations (\Cref{th:poly_Q_adeg}).
\end{enumerate}

\subsubsection{Upper and Lower Bounds on Classical $T$-Bias}
For $\fnk$ and $\kl$, we show the upper bound on the classical $T$-bias by analyzing the total variation distance of distributions; for the lower bound, we give sampling algorithms to estimate Hamming weights of the input and analyze the success probability of the algorithms (\Cref{th:fnk1,th:kl_class}). 

For the lower bound on the classical $T$-bias of total symmetric Boolean functions, we 
design an innovative randomized algorithm by utilizing the Kravchuk polynomial when the number of queries is $T$. The analysis of the algorithm also makes use of the orthogonality property of the Kravchuk polynomial (\Cref{th:relation_Q_R}).

\subsubsection{The Relation Between Complexity Measures}
The key ideas to build the relation between complexity measures of partial symmetric Boolean functions are as follows:
\begin{enumerate}
    \item 
    In \Cref{th:polyrelation1},
    we show the relation between the exact quantum query complexity and the degree by giving the lower bound of the degree and designing a matching exact quantum algorithm up to a polynomial level. 
    Similar to the proof of \Cref{th:fnk2}, the exact quantum algorithm makes use of a subroutine to distinguish $|x| = k$ from $|x| = l$ exactly \cite{HSY+18}.
    \item 
    In \Cref{th:polyrelation2}, the analysis of block sensitivity and fractional block sensitivity relies on the symmetry property of the function. Furthermore, we show the quantum query complexity and the approximate degree of any partial symmetric Boolean function $f$ are equivalent to a constant factor. While the lower bound is well known (\Cref{beals}), we show $Q(f) \le \adeg(f)$ by giving a quantum approximate counting algorithm using $O(\adeg(f))$ quantum queries.
    \item 
    The exponential gap in \Cref{th:exprelation} is shown by giving a function easy to compute in a bounded-error case but has a large degree. 
\end{enumerate}

\subsection{Related Work}
The need for structure in quantum speedups has been studied extensively. Beals, Buhrman, Cleve, Mosca and de Wolf \cite{BBC+01} showed that there exists at most polynomial quantum speedups for total Boolean functions in the query model. Thus, the exponential speedups may only occur at partial functions. Furthermore, Aaronson and Ambainis \cite{AA14} showed that symmetric functions do not allow super-polynomial quantum speedups, even if the functions are partial. Chailloux \cite{Cha19} improved this result for a broader class of symmetric functions. Ben-David, Childs, Gilyén, Kretschmer, Podder and Wang \cite{BCG+20} further showed that hypergraph symmetries in the adjacency matrix model allow at most polynomial separations between quantum and randomized query complexities. Ben-David \cite{BenDavid16} proved a classical and quantum polynomial equivalence for a class of functions satisfying a certain symmetric promise. Aaronson and Ben-David \cite{AB16} showed that there exists at most polynomial quantum speedups to compute an $n$-bit partial Boolean function if the domain 
$D = \poly(n)$. Nonetheless, all these results concern the algorithms with a constant probability of success. They do not cover the query complexity with a subconstant probability of success. 

We also survey some results about the optimal success probability of quantum algorithms when the number of queries is fixed. 
For the unstructured search problem, Zalka \cite{Zalka} showed an optimal success probability of a quantum algorithm given the number of queries. For the collision finding problem, Zhandry \cite{Zhandry15} gave the upper bound on the success probability of quantum algorithms when the number of queries is fixed,  which matched the algorithm proposed by Brassard, H{\o}yer and Tapp \cite{BHT98}. 
Ambainis and Iraids \cite{AI16} analyzed the optimal success probability of one-query quantum algorithms to compute $\text{EQUALITY}_n$ and $\text{AND}_n$ functions. Montanaro, Jozsa, and Mitchison \cite{MJM15} indicated the optimal success probability of small symmetric Boolean functions when given any number of queries by numerical results. There is not much study about the optimal success probability with a given number of queries for symmetric Boolean functions.
Our work will fill the gap in this field. 

For a nonconstant $n$-bit total symmetric Boolean function $f$, it has been known that some fundamental complexity measures are $\Theta(n)$,
and $Q(f) = \Theta\inparen{\adeg(f)} = \Theta\inparen{\sqrt{n(n-\Gamma(f))}}$, 
where $\Gamma(f) = \min\inbrace{|2k-n+1|:f(k)\neq f(k+1)}$ \cite{buhrman2002complexity}. Mittal, Nair and Patro \cite{MNP21} further explored the best possible separation between sensitivity, block sensitivity and certificate complexity of total symmetric functions. Sherstov \cite{Sherstov09appro} gave an almost tight characterization of $\deg_{\epsilon}(f)$ for specific $\epsilon \in [1/2^n,1/3]$. Afterward, de Wolf \cite{Wolf10} obtained the optimal bound. Regarding the complexity measures of partial symmetric Boolean functions, Aaronson and Ambainis \cite{AA14} showed for any partial symmetric Boolean function $f$, $R(f) = O\left(Q(f)^2\right)$ as mentioned before. Researchers also studied the exact quantum query complexity for many instances of partial symmetric Boolean functions. For example,
Deutsch and Jozsa \cite{DJ92} studied the first partially symmetric Boolean function. 
Afterward, generalized Deutsch-Jozsa problems were studied in \cite{MJM15,QZ2016,QZ18}. 
He, Sun, Yang and Yuan \cite{HSY+18} established the asymptotically optimal bound for the exact quantum query complexity of distinguishing whether $|x|=k$ or $l$. Qiu and Zheng \cite{QZ2016,QZ20} studied the exact quantum query complexity of symmetric Boolean functions with degree 1 or 2. Additionally, several works \cite{BB14,FHKL16,LZ23} explored the connections between block sensitivity, fractional block sensitivity and degree for bounded functions.

In a similar work, Montanaro, Nishimura and Raymond \cite{MNR08} studied the unbounded error query complexity of Boolean functions in a scenario where
it is only required that the query algorithm succeeds with a probability strictly greater than
1/2. They proved quantum and classical query complexities are related by a constant factor for any (possibly partial) Boolean function. Similar results are also known in the communication complexity model \cite{INRY07A,INRY07}.
Compared to the result in \cite{MNR08}, we aim to analyze the relation between quantum/classical query complexity and bias more precisely. For instance, we show for any quantum algorithm computing $\fnk$ and $\kl$ using $T$ queries, there exist randomized algorithms using $\poly(T)$ queries that have the same bias as the quantum algorithm. Such a conclusion is not implied by \cite{MNR08} since the unbounded error model only cares about a strictly positive bias without a more accurate quantitative analysis.

\subsection{Organization}
The remainder of the paper is organized as follows. In Section \ref{sec:pre}, we review some notations, definitions and facts used in this paper. In Section \ref{sec:fnk}, the optimal success probabilities of quantum and classical randomized query algorithms to compute $\fnk$ are given when the number of queries is fixed. Furthermore, we study the exact quantum query complexity of $\fnk$.
In Section \ref{sec:kl}, we analyze the optimal success probability of quantum and classical query algorithms to compute $\kl$ given the number of queries. In Section \ref{sec:relationship_Q_R}, we consider the relation between the success probability of quantum and randomized algorithms to compute any total symmetric Boolean function for arbitrarily small bias.
In Section \ref{sec:relationship}, the relation of several complexity measures of partial symmetric Boolean functions is given. Finally, a conclusion is made in Section \ref{sec:conclusion}.

\section{Preliminary}\label{sec:pre}
This section gives some notations and definitions used in this paper. Moreover, we overview some useful facts and claims. Let $f:D \rightarrow \B$ be an $n$-bit Boolean function, where $D \subseteq \B^n$. If $D = \B^n$, $f$ is a total function. If $D \subset \B^n$, $f$ is a partial function. We say $f$ is a subfunction of $g: D_2 \rightarrow \B$, if $D \subseteq D_2$ and $f(x) = g(x)$ for any $x \in D$. We say $f$ is symmetric if $f(x)$ only depends on $|x|$, i.e., the number of $1$'s in $x$. 
If $k < 0$, we let $\binom{n}{k} = 0$. Let 
$[n] = \inbrace{1,...,n}$.
Every function $g:\inbrace{-1,1}^n \rightarrow \mathbb{R}$ can be uniquely expressed as
$g(x) = \sum_{S \subseteq [n]} \hat{g}(S) x_S$, 
where $x_S = \prod_{j \in S} x_j$ and $\hat{g}(S)$ is the Fourier coefficient of $g$  for any $S \subseteq [n]$.
\subsection{Query Models}
In the query model, a query algorithm uses a series of queries to the input to solve a specific problem. It works by making queries to an oracle, which is a function that provides answers to queries based on a specific problem. The algorithm then uses the responses to these queries to solve the problem. The goal of a query algorithm is to minimize the number of queries required to obtain a solution to the problem.  

In the classical query model, we can obtain $x_i$ for some $i$ by making one query. The classical query model can be discussed in two cases: the deterministic case and the randomized case. The deterministic query complexity of $f$, denoted by $D(f)$, is the minimum number of queries required by a classical deterministic algorithm to compute $f$ on the worst input. The randomized query complexity of $f$, denoted by $R_{\epsilon}(f)$, is the minimum number of queries required by a classical randomized algorithm to compute $f$ with error $\epsilon$ on the worst input. If the error is bounded, i.e., $\epsilon \le 1/3$, we abbreviate $R_{\epsilon}(f)$ to $R(f)$. Moreover, $R_0(f)$ is called the zero-error randomized query complexity of $f$.

In the quantum query model, a quantum query algorithm can be described as follows: it starts with a fixed state $|\psi_0\rangle$ and then performs the sequence of operations $U_0, O_x, U_1, \ldots, O_x, U_t$, where $U_i$'s are unitary operators not depend on $x$ and the query oracle $O_x$ is defined as 
\begin{equation}\label{eq:oracle}
O_x\ket{i}\ket{b} = \ket{i}\ket{x_i \oplus b}
\end{equation}
for any $i\in [n]$ and $b\in \B$. This leads to the final state $ |\psi_x\rangle=U_tO_xU_{t-1}\cdots U_1O_xU_0|\psi_0\rangle$. The output result is obtained by measuring the final state $ |\psi_x\rangle$. Similar to the classical case, the exact query complexity of $f$, denoted by $Q_E(f)$, is the minimum number of queries required by a quantum algorithm to compute $f$ exactly on the worst input. 
Such a quantum algorithm is called an exact quantum algorithm. The quantum query complexity of $f$, denoted by $Q_{\epsilon}(f)$, is the minimum number of queries required by a quantum algorithm to compute $f$ with $\epsilon$ on the worst input. If $\epsilon \le 1/3$, we abbreviate $Q_{\epsilon}(f)$ to $Q(f)$.

If an algorithm makes $T$ queries, we say the algorithm is a $T$-query algorithm. To characterize the optimal success probability of $T$-query algorithms formally, we give the following definitions.

\begin{definition}[Classical bias]
    Suppose $f:D\rightarrow\{0,1\}$ is an $n$-bit Boolean function, where $D \subseteq \B^n$. For an integer $T>0$, let the classical $T$-bias of $f$, denoted by $\delta_C(f,T)$, be the optimal success probability minus $1/2$ over all randomized algorithms using $T$ queries and all possible inputs. Namely,
    \begin{eqnarray*}
    &\hspace{-11cm}\delta_C(f,T)=\\
    &\max\{\delta:~\text{$\exists$ $T$-query randomized algorithm $\mathcal{A}$ s.t. $\forall x\in D$}~P(\mathcal{A}(x)=f(x))\geq1/2+\delta\}.
    \end{eqnarray*}
\end{definition}

\begin{definition}[Quantum bias]
    Suppose $f:D\rightarrow\{0,1\}$ is an $n$-bit Boolean function, where $D \subseteq \B^n$. For an integer $T>0$, let the quantum $T$-bias of $f$, denoted by $\delta_Q(f,T)$, be the optimal success probability minus $1/2$ over all quantum algorithms using $T$ quantum queries and all possible inputs.
     Namely,
    \begin{eqnarray*}
    &\hspace{-11cm}\delta_Q(f,T)=\\
    &\max\{\delta:~\text{$\exists$ $T$-query quantum algorithm $\mathcal{A}$ s.t. $\forall x\in D$}~P(\mathcal{A}(x)=f(x))\geq1/2+\delta\}.
    \end{eqnarray*}
\end{definition}

\subsection{Complexity measures}
In this section, we overview some definitions and facts about the complexity measures of Boolean functions.

\begin{definition}[Degree]\label{degree}
For an $n$-bit Boolean function $f:D \rightarrow \B$, where $D \subseteq \B^n$,
an $n$-variate polynomial $p:\mathbb{R}^n\rightarrow \mathbb{R}$ is called a real-valued multilinear polynomial representation of $f$ if $f(x)=p(x)$ for all $x\in D$ and $0\le p(x) \le 1$ for all $x\in \B^n$.
The \emph{degree} of $f$, denoted by $\deg(f)$, is defined as the minimum degree of all real-valued multilinear polynomial representations of $f$.
\end{definition}	
\begin{definition}[Approximate degree]
Suppose $f:D \rightarrow \B$ is an $n$-bit Boolean function, where $D \subseteq \B^n$. For $0\leq \varepsilon < 1/2$, we say a real multilinear polynomial $p$ approximates $f$ with error $\varepsilon$ if:
\begin{enumerate}
\item[(1)]  $|p(x)-f(x)|\leq \varepsilon$ for all $x\in D$;
\item[(2)]  $0\leq p(x)\leq 1$ for all $x\in\{0,1\}^n$.
\end{enumerate}
The approximate degree of $f$ with error $\epsilon$, denoted by $\widetilde{ \deg}_{\epsilon}(f)$, is the minimum degree among all real multilinear polynomials that approximate $f$ with error $\epsilon$. If $\epsilon \le 1/3$, we abbreviate $\widetilde{ \deg}_{\epsilon}(f)$ as $\widetilde{ \deg}(f)$. 
\end{definition}
\begin{definition}[Block sensitivity]
Suppose $f:D \rightarrow \B$ is an $n$-bit Boolean function, where $D \subseteq \B^n$.
    The block sensitivity $\bs(f,x)$ of $f$ on an input $x$ is defined as the maximum number $t$ such that there are $t$ pairwise disjoint subsets $B_1, \ldots , B_t$ of $[n]$ satisfying $x,x^{B_i}\in D$ and $f(x) \neq f\left(x^{B_i}\right)$, where $x^{B_i}$ denotes the string obtained by flipping the values of $x_i$ for all $i\in B$. Each $B_i$ is called a block. The block sensitivity $\bs(f)$ of $f$  is defined as $\max_{x \in D} \bs(f,x)$.
\end{definition}

\begin{definition}[Fractional block sensitivity]
Suppose $f:D \rightarrow \B$ is an $n$-bit Boolean function, where $D \subseteq \B^n$.
The fractional block sensitivity $\fbs(f,x)$ of $f$ on an input $x$ is the optimal value of the following linear program: 
\begin{align*}
\max \sum_{\genfrac{}{}{0pt}{}{B:x,x^B \in D,} {f(x) \neq f(x^B)}} w_B \hspace{1.5cm} 
\text{subject to: } &\forall i \in [n]: \sum_{B \ni i} w_B \leq 1, \\
&\forall B: 0 \leq w_B \leq 1.
\end{align*}
The fractional block sensitivity of $f$ is defined as $\fbs(f) = \max_{x \in D} \fbs(f,x)$. 
\end{definition}
\begin{definition}[Fractional certificate complexity]
Let $f:D \rightarrow \B$ be a Boolean function, where $D \subseteq \B^n$.  The fractional certificate complexity of $f$ on $x$, denoted by $\FC(f, x)$, is defined as
\begin{align*}
\min \sum_{i \in [n]} w_i \hspace{1.5cm} 
\text{subject to: } &\forall B \text{ s.t. } x,x^B \in D, f(x) \neq f(x^B): \sum_{i \in B} w_i \geq 1, \\
&\forall i \in [n]: 0 \leq w_i \leq 1.
\end{align*}
Then $\FC(f) = \max_{x \in D}\FC(f,x)$. 
\end{definition}
\begin{Fact}[\cite{BBC+01}]\label{beals}
If $f$ is a Boolean function, then $Q_E(f)\ge \deg(f)/2$ and $Q_{\epsilon}(f) \ge \adeg_{\epsilon}(f)/2$.
\end{Fact}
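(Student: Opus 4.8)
The plan is to give an explicit family of partial symmetric Boolean functions that a randomized algorithm solves with a constant number of queries but whose degree is $\Omega(n)$; the exact quantum lower bound then follows from the polynomial method $Q_E(f) \ge \deg(f)/2$ (Fact~\ref{beals}). Concretely, for each $n$ I would take $D = D_0 \cup D_1 \subseteq \B^n$ with $D_0 = \{x : |x| \le \lfloor n/3 \rfloor\}$ and $D_1 = \{x : |x| \ge \lceil 2n/3 \rceil\}$, and define $f(x) = 0$ on $D_0$ and $f(x) = 1$ on $D_1$. This $f$ is symmetric, and it is genuinely partial since no string whose weight lies strictly between $n/3$ and $2n/3$ belongs to $D$.

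For the randomized upper bound, the algorithm queries $t$ coordinates chosen independently and uniformly at random, lets $S$ be the number of $1$s returned, and outputs $1$ iff $S/t \ge 1/2$. If $|x| = w$ then $S/t$ is the empirical mean of $t$ i.i.d.\ $\mathrm{Bernoulli}(w/n)$ samples, so $\E[S/t] = w/n$, which is at most $1/3$ on $D_0$ and at least $2/3$ on $D_1$; the output is therefore wrong only if $S/t$ deviates from its mean by at least $1/6$, an event of probability at most $\exp(-t/18)$ by Hoeffding's inequality. Taking $t$ a sufficiently large constant makes this below $1/3$, so $R(f) = O(1)$.

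For the degree lower bound I would use Minsky--Papert symmetrization. Let $p$ be any multilinear polynomial representing $f$ on $D$ with $0 \le p \le 1$ on $\B^n$. Its symmetrization $p^{\mathrm{sym}}(x) = \E_\sigma[p(\sigma x)]$ has degree at most $\deg(p)$, still agrees with $f$ on $D$ because both $D$ and $f$ are invariant under coordinate permutations, and by symmetrization equals $q(|x|)$ for a univariate polynomial $q$ with $\deg q \le \deg p$. Now $q$ vanishes at every integer in $[0,\lfloor n/3\rfloor]$, i.e.\ at $\lfloor n/3\rfloor + 1$ distinct points, while $q(\lceil 2n/3\rceil) = 1$, so $q$ is not the zero polynomial; hence $\deg q \ge \lfloor n/3\rfloor + 1$. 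Since $p$ was arbitrary, $\deg(f) \ge \lfloor n/3\rfloor + 1 = \Omega(n)$, and therefore $Q_E(f) \ge \deg(f)/2 = \Omega(n)$ by Fact~\ref{beals}.

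There is no genuinely hard step here; the two things that need care are (i) choosing $D$ so that $f$ is forced to be constant on $\Omega(n)$ consecutive integer Hamming weights (and to take the other value at least once) — this interval of forced zeros is exactly what produces the $\Omega(n)$ root count and hence the large degree — and (ii) spelling out the symmetrization argument for a \emph{partial} function, observing that it uses only the values of a representing polynomial on $D$, so the boundedness constraint $0 \le p \le 1$ on $\B^n$ plays no role in the lower bound. The separation of the two Hamming-weight windows by a constant factor is what lets the sampling algorithm succeed with $O(1)$ queries; any two disjoint intervals of linear length would serve equally well, and this behaviour is of course impossible for \emph{total} symmetric functions, where every nonconstant $f$ has $R(f) = \Theta(n)$.
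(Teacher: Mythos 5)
Your proposal does not prove the statement it was asked to prove. The target statement is Fact~\ref{beals} itself, namely the polynomial-method lower bounds $Q_E(f)\ge \deg(f)/2$ and $Q_{\epsilon}(f)\ge\adeg_{\epsilon}(f)/2$. What you have written is instead a proof of the paper's separation result (Theorem~\ref{th:exprelation}, that some partial symmetric $f$ has $Q_E(f)=\Omega(n)$ and $R(f)=O(1)$), and your argument explicitly \emph{invokes} Fact~\ref{beals} as a black box in its final step. So as a proof of Fact~\ref{beals} it is circular, and as a response to the prompt it addresses the wrong claim entirely. (As an aside, your construction and analysis for the separation are fine and essentially match the paper's proof of Theorem~\ref{th:exprelation}, which uses the window $|x|\le n/2$ versus $|x|=n$ instead of your thirds; but that is not what was asked.)

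What actually needs to be shown is the content of the Beals--Buhrman--Cleve--Mosca--de Wolf argument: after a $T$-query quantum algorithm $U_T O_x U_{T-1}\cdots O_x U_0\ket{\psi_0}$, each amplitude of the final state is a multilinear polynomial of degree at most $T$ in the variables $x_1,\dots,x_n$ (proved by induction on the number of queries, since each application of $O_x$ multiplies amplitudes by affine functions of the $x_i$ and each $U_j$ takes linear combinations). Hence the acceptance probability, being a sum of squared moduli of such amplitudes, is a real multilinear polynomial $p$ of degree at most $2T$ with $0\le p(x)\le 1$ on $\B^n$. If the algorithm computes $f$ exactly, $p$ represents $f$ on its domain, so $\deg(f)\le 2T$; if it computes $f$ with error $\epsilon$, then $|p(x)-f(x)|\le\epsilon$ on the domain, so $\adeg_{\epsilon}(f)\le 2T$. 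None of this appears in your proposal.
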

\begin{Fact}[\cite{GSS13,Tal13}]\label{bs_fbs_FC}
If $f$ is a Boolean function, then $\bs(f) \le \fbs(f) = \FC(f)$.
\end{Fact}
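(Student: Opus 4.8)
The claim splits into the inequality $\bs(f)\le\fbs(f)$ and the equality $\fbs(f)=\FC(f)$; I would establish both at a fixed input $x\in D$ and then take the maximum over $x$. Throughout, call a subset $B\subseteq[n]$ a \emph{sensitive block at $x$} if $x,x^B\in D$ and $f(x)\ne f(x^B)$; note every sensitive block is nonempty, since $B=\emptyset$ forces $x^B=x$.

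For $\bs(f)\le\fbs(f)$: fix $x$ with $\bs(f,x)=\bs(f)$ and pairwise disjoint sensitive blocks $B_1,\dots,B_t$, $t=\bs(f)$. Setting $w_{B_j}=1$ for $j\in[t]$ and $w_B=0$ for every other block gives a feasible point of the linear program defining $\fbs(f,x)$: disjointness means each coordinate $i\in[n]$ lies in at most one $B_j$, so $\sum_{B\ni i}w_B\le1$, and trivially $0\le w_B\le1$. Its objective value is $t$, so $\fbs(f,x)\ge t$, and maximizing over $x$ gives $\fbs(f)\ge\bs(f)$.

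For $\fbs(f)=\FC(f)$: the point is that, for a fixed $x$, the two linear programs are LP duals of each other once the redundant box constraints are deleted. In the $\fbs(f,x)$ program, if a feasible point had $w_B>1$, then for any $i\in B$ (which exists) the constraint $\sum_{B'\ni i}w_{B'}\le1$ would already be violated; so every feasible point satisfies $w_B\le1$ automatically. In the $\FC(f,x)$ program, if some $w_i>1$, lowering it to $1$ keeps every covering constraint satisfied — for a block $B\ni i$ the left side decreases by $w_i-1$ and was at least $w_i$, hence stays $\ge1$ — while strictly decreasing the objective, so some optimum has all $w_i\le1$. After removing these constraints, writing $A$ for the coordinate-by-(sensitive-block) incidence matrix, $\fbs(f,x)=\max\{\mathbf 1^{\transpose}w:\ Aw\le\mathbf 1,\ w\ge0\}$ and $\FC(f,x)=\min\{\mathbf 1^{\transpose}y:\ A^{\transpose}y\ge\mathbf 1,\ y\ge0\}$, a primal--dual pair. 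Both are feasible ($w=0$ and $y=\mathbf 1$) and bounded (the primal objective is at most $n$ because every block is nonempty, the dual objective is $\ge0$), so strong LP duality gives $\fbs(f,x)=\FC(f,x)$; taking the maximum over $x\in D$ yields $\fbs(f)=\FC(f)$.

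The only genuine care needed is the reduction removing the upper bounds $w_B\le1$ and $w_i\le1$, since strong duality is being applied to the stripped-down pair; everything else is routine. It is also worth noting the degenerate case in which $x$ admits no sensitive block (e.g.\ $f$ constant, or $x$ isolated in $D$): there $\bs(f,x)=\fbs(f,x)=\FC(f,x)=0$, so the statement holds trivially at such $x$.
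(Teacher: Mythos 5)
Your proof is correct. The paper does not prove this statement at all — it is imported as a known fact from \cite{GSS13,Tal13}, with only the remark that the extension from total to partial functions is easy to verify — and your argument (the disjoint-blocks feasible point for $\bs(f)\le\fbs(f)$, plus stripping the redundant box constraints and invoking strong LP duality for $\fbs(f,x)=\FC(f,x)$) is precisely the standard proof from those references, correctly carried out for partial domains, including the nonemptiness of sensitive blocks and the degenerate case.
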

While \cite{BBC+01,GSS13,Tal13} focused on the total Boolean functions, it is not hard to verify that Facts \ref{beals} and \ref{bs_fbs_FC} also work for the partial case.

\subsection{Distance Measures}
Let $H\left(n,i,T\right)$ be the hypergeometric distribution sampling $T$ times from $x \in \B^n$ satisfying that $|x| = i$ without replacement. 
A binomial distribution with parameters $n,p$ is written as $B(n,p)$ and a Bernoulli distribution with parameter $p$ is written as $B(p)$. Given a probability distribution $p$, let $p^{\otimes T}$ denote the $T$-fold product distribution of $p$. Then the definition of total variation distance and the Hellinger distance are given as follows. 
\begin{definition}[The total variation distance]\label{def:TV}
Given two discrete probability distributions $p, q$ over probability
space $S$, the total variation distance $d_{TV}(p,q)$ between $p$ and $q$ is defined as:
\begin{equation*}
d_{TV}(p,q) = \frac{1}{2}\sum_{i \in S} |p(i)-q(i)|.
\end{equation*}
\end{definition}
\begin{definition}[The Hellinger distance]\label{def:Hell_dis}
The Hellinger distance between two discrete probability distributions $p, q$ over probability
space $S$ is defined as
    \begin{equation*}
    d_H^2(p,q) = \frac{1}{2}\sum_{i \in S} \left(\sqrt{p(i)}-\sqrt{q(i)}\right)^2.
    \end{equation*}
\end{definition}
\begin{Fact}[\cite{Lee2020}]\label{fact:dH_dTV_1}
For the total variation distance and the Hellinger distance, we have the following properties:
\begin{equation*}
\begin{aligned}
 d_H^2(p,q) &\le  d_{TV}(p,q) \le  \sqrt{2}d_H(p,q),\\
  d_{TV}(p^{\otimes T}, q^{\otimes T}) &\le T \cdot d_{TV}(p,q),    \\
  d_{TV}\left(p^{\otimes T},q ^{\otimes T}\right) &\ge d_H^2\left(p^{\otimes T},q^{\otimes T}\right) = 1-\left(1-d_H^2\left(p,q\right)\right)^T. \\
\end{aligned}
\end{equation*}
\end{Fact}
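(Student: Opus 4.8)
The plan is to prove the three displayed statements one at a time, with the Bhattacharyya coefficient $\mathrm{BC}(p,q):=\sum_{i\in S}\sqrt{p(i)q(i)}$ as the unifying tool. Expanding the square in the definition and using $\sum_i p(i)=\sum_i q(i)=1$ gives $d_H^2(p,q)=1-\mathrm{BC}(p,q)$; I will also use the standard identity $d_{TV}(p,q)=1-\sum_i\min\{p(i),q(i)\}$, which comes from $|a-b|=a+b-2\min\{a,b\}$.

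For the first chain $d_H^2(p,q)\le d_{TV}(p,q)\le\sqrt{2}\,d_H(p,q)$: the left inequality, after substituting the two identities above, is equivalent to $\sum_i\min\{p(i),q(i)\}\le\mathrm{BC}(p,q)$, which holds termwise since $\min\{a,b\}\le\sqrt{ab}$ for $a,b\ge0$. For the right inequality, I would factor $|p(i)-q(i)|=|\sqrt{p(i)}-\sqrt{q(i)}|\cdot(\sqrt{p(i)}+\sqrt{q(i)})$ and apply Cauchy--Schwarz:
\begin{equation*}
2\,d_{TV}(p,q)\le\Big(\sum_i(\sqrt{p(i)}-\sqrt{q(i)})^2\Big)^{1/2}\Big(\sum_i(\sqrt{p(i)}+\sqrt{q(i)})^2\Big)^{1/2}.
\end{equation*}
The first factor is $\sqrt{2}\,d_H(p,q)$, and the second factor equals $\sqrt{2+2\,\mathrm{BC}(p,q)}\le2$ because $\mathrm{BC}(p,q)\le1$ (Cauchy--Schwarz again on $\sum_i\sqrt{p(i)}\sqrt{q(i)}$). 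This gives $d_{TV}(p,q)\le\sqrt{2}\,d_H(p,q)$.

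For $d_{TV}(p^{\otimes T},q^{\otimes T})\le T\cdot d_{TV}(p,q)$, I would run a hybrid/telescoping argument: set $r_j:=p^{\otimes j}\otimes q^{\otimes(T-j)}$ so that $r_0=q^{\otimes T}$ and $r_T=p^{\otimes T}$, and apply the triangle inequality $d_{TV}(r_0,r_T)\le\sum_{j=0}^{T-1}d_{TV}(r_j,r_{j+1})$. Since $r_j$ and $r_{j+1}$ agree on all coordinates except the $(j{+}1)$-st (where they are $q$ and $p$ respectively), and $d_{TV}$ is invariant under tensoring both arguments with a common distribution (a one-line computation: $\sum_{x,y}\rho(y)|\mu(x)-\nu(x)|=\sum_x|\mu(x)-\nu(x)|$), each summand equals $d_{TV}(p,q)$, and the bound follows. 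Finally, for the last line: applying the already-proved inequality $d_H^2\le d_{TV}$ to the product distributions yields $d_{TV}(p^{\otimes T},q^{\otimes T})\ge d_H^2(p^{\otimes T},q^{\otimes T})$, and distributing the product over the sum gives $\mathrm{BC}(p^{\otimes T},q^{\otimes T})=\sum_{i_1,\dots,i_T}\prod_{t}\sqrt{p(i_t)q(i_t)}=\mathrm{BC}(p,q)^T$, hence $d_H^2(p^{\otimes T},q^{\otimes T})=1-\mathrm{BC}(p,q)^T=1-(1-d_H^2(p,q))^T$. None of these steps is a real obstacle; the only point needing mild care is tracking the constant $\sqrt{2}$ in the Cauchy--Schwarz step, where one must invoke $\mathrm{BC}(p,q)\le1$ to bound the second factor by $2$.
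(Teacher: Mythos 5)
The paper does not prove this statement at all --- it is stated as a Fact and attributed to \cite{Lee2020}, so there is no in-paper argument to compare against. Your proposal is a correct, self-contained derivation of all three properties, and it is the standard one. The two identities $d_H^2(p,q)=1-\mathrm{BC}(p,q)$ and $d_{TV}(p,q)=1-\sum_i\min\{p(i),q(i)\}$ are computed correctly from the paper's normalizations (note the factor $\tfrac12$ in both Definitions \ref{def:TV} and \ref{def:Hell_dis}, which is what makes these identities clean); the left inequality then reduces to the termwise bound $\min\{a,b\}\le\sqrt{ab}$, and your Cauchy--Schwarz step gives $2\,d_{TV}\le\sqrt{2d_H^2}\cdot\sqrt{2+2\,\mathrm{BC}}\le 2\sqrt{2}\,d_H$, i.e.\ exactly the claimed constant. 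The hybrid/telescoping argument for subadditivity under tensoring and the multiplicativity $\mathrm{BC}(p^{\otimes T},q^{\otimes T})=\mathrm{BC}(p,q)^T$ are both standard and correctly executed, and the final chain follows by applying the already-proved $d_H^2\le d_{TV}$ to the product distributions. No gaps.
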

\begin{Fact}[Theorem 3.1 in 
\cite{holmes2004stein}]\label{fact:dtv_BH}
For any $0 \le k \le n$, $d_{TV}\left(B\left(T,\frac{k}{n}\right), H\left(n,k,T\right)\right) \le \frac{T-1}{n-1}.$
\end{Fact}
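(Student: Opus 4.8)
\textbf{Proof proposal for Theorem~\ref{th:exprelation}.}

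The plan is to exhibit a single partial symmetric Boolean function $f$ on $n$ bits that is trivial for a randomized algorithm (solvable with $O(1)$ queries) yet requires $\Omega(n)$ exact quantum queries. The natural candidate is a ``gap'' function whose two promise layers of Hamming weight are extremely far apart, so that classically a constant number of samples already distinguishes them with overwhelming confidence, but whose degree (and hence, via Fact~\ref{beals}, exact quantum complexity) is forced to be large by a different mechanism. Concretely I would take $f = \kl$ with, say, $k = 0$ and $l = n/2$ — or more robustly, a three-layer variant — but the cleanest route is: let $f$ be defined on the promise set $\{x : |x| \in \{0,1,n\}\}$ by $f(x)=0$ if $|x|\in\{0,n\}$ and $f(x)=1$ if $|x|=1$. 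This is a partial symmetric function.

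First I would handle the randomized upper bound: the three weight-classes $0$, $1$, $n$ are pairwise at Hamming distance $\Omega(n)$, so a randomized algorithm that queries, say, three uniformly random coordinates and looks at the pattern of answers separates them with probability $1-O(1/n)$; boosting to error $\le 1/3$ takes only $O(1)$ queries. Hence $R(f)=O(1)$. (If one prefers error exactly driven to a constant with a fixed number of queries, one can use the fact that the statistical distance between the relevant hypergeometric sampling distributions $H(n,0,T)$, $H(n,1,T)$, $H(n,n,T)$ is $\Omega(1)$ already for $T$ a small constant, which also follows from the comparison with binomials in Fact~\ref{fact:dtv_BH}.)

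Second, the exact quantum lower bound. By Fact~\ref{beals}, $Q_E(f) \ge \deg(f)/2$, so it suffices to show $\deg(f) = \Omega(n)$. Here I would use the standard symmetrization argument: any multilinear polynomial $p$ representing $f$ symmetrizes to a univariate polynomial $q$ of degree $\le \deg(p)$ with $q(j) = f(j)$ for every promised weight $j$, i.e. $q(0) = 0$, $q(1) = 1$, $q(n) = 0$, and $0 \le q(j) \le 1$ for all integers $0 \le j \le n$ (the latter because $p$ maps $\B^n$ into $[0,1]$, and symmetrization preserves this on integer points). A univariate real polynomial that equals $0$ at $0$, jumps to $1$ at $1$, returns to $0$ at $n$, and stays bounded in $[0,1]$ on all of $\{0,\dots,n\}$ must oscillate: between $1$ and $n$ it must drop from $1$ back to $0$ while respecting the box constraint, and the ``spike'' at $j=1$ forces a large derivative near the endpoint. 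Invoking the Markov brothers' inequality (or directly the Ehlich--Zeller / Coppersmith--Rivlin bound for polynomials bounded on a grid): a degree-$d$ polynomial bounded by $1$ on $\{0,1,\dots,n\}$ has $|q'(x)| = O(d^2/n)$ on that range, whereas $q$ changes by $1$ between the consecutive points $0$ and $1$, forcing $d^2/n = \Omega(1)$, i.e. $d = \Omega(\sqrt n)$. To push this to the claimed $\Omega(n)$ one uses the full shape: $q-\tfrac12$ changes sign at least $\Omega(n)$ times is \emph{not} available here with only three constrained points, so to get the genuinely linear lower bound I would instead pick $f$ to be the ``exact-weight'' function $f(x)=1 \iff |x| = n/2$ on the promise $|x| \in \{0, n/2\}$ composed appropriately — but simplest: take $f = \fnk$ with $k=1$ and invoke Theorem~\ref{th:fnk2}, which already gives $Q_E(f_n^1) \ge \lceil \pi/(2\theta)\rceil$ with $\theta = 2\arcsin\sqrt{1/n} = \Theta(1/\sqrt n)$, hence $Q_E(f_n^1) = \Omega(\sqrt n)$; and $f_n^1$ has $R(f_n^1) = O(1)$ since weights $1$ and $n-1$ are distinguished from $0,n$ by $O(1)$ random samples. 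For the stated $\Omega(n)$ I would use a layered function such as $f(x) = |x| \bmod 2$ restricted to the promise $|x| \in \{0,1,2,\dots,$ some $\Theta(n)$ values$\}$ arranged so that symmetrization yields a univariate polynomial forced to change sign $\Omega(n)$ times — the \textsc{parity}-type construction — while classically $O(1)$ random samples reveal enough of the input's structure under that promise to decide $f$; the $\Omega(n)$ sign changes give $\deg(f) = \Omega(n)$ and thus $Q_E(f) = \Omega(n)$ by Fact~\ref{beals}.

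The main obstacle is balancing the two requirements in a single promise: making the promise \emph{rich} enough (many weight layers with alternating $f$-values) to force $\deg(f)=\Omega(n)$ via sign-change counting, while keeping it \emph{structured} enough that a constant-query randomized algorithm can decide $f$ — the resolution is that, under a sufficiently restrictive promise, the few random samples pin down the Hamming weight (and hence $f$) exactly with high probability even though the weight ranges over $\Omega(n)$ values, because the promise set can be chosen so that distinct permitted weights induce statistically well-separated sampling distributions. Verifying this separation quantitatively, using the hypergeometric/binomial comparison (Fact~\ref{fact:dtv_BH}) and the product-distance bounds (Fact~\ref{fact:dH_dTV_1}), is the technical heart of the randomized upper bound; the quantum lower bound is then a routine symmetrization-plus-Markov computation.
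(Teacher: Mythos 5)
Your proposal does not address the statement at hand. The statement to be proved is \Cref{fact:dtv_BH}, a quantitative bound on the total variation distance between a binomial distribution $B\left(T,\frac{k}{n}\right)$ and the hypergeometric distribution $H(n,k,T)$, namely $d_{TV} \le \frac{T-1}{n-1}$. This is a purely probabilistic estimate (in the paper it is imported from Theorem 3.1 of the cited reference, where it is established via Stein's method / a coupling between sampling with and without replacement). Your text instead announces and develops a proof of \Cref{th:exprelation} — the exponential separation between exact quantum and randomized query complexity for a partial symmetric function — and at no point engages with binomial versus hypergeometric sampling, couplings, or any mechanism that could yield the $\frac{T-1}{n-1}$ bound. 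The one mention of \Cref{fact:dtv_BH} in your writeup treats it as a tool to be \emph{used}, not as the thing to be proved. So there is no argument here for the claimed inequality at all; that is the gap.

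As a secondary remark: even read as a proof of \Cref{th:exprelation}, the argument does not close. You cycle through several candidate functions, concede that the first ones only give $\deg(f) = \Omega(\sqrt{n})$, and end with a parity-on-many-layers construction whose randomized $O(1)$ upper bound you yourself flag as the unverified ``technical heart.'' The paper's actual construction is much simpler: take $f(x)=0$ for all $|x| \le n/2$ and $f(x)=1$ for $|x|=n$; the symmetrized polynomial then has at least $n/2+1$ zeros, forcing $\deg(f) \ge n/2+1$ and hence $Q_E(f) = \Omega(n)$ by \Cref{beals}, while the huge weight gap gives $Q(f)=O(1)$ and thus $R(f)=O(1)$. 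No sign-change counting or Markov-type inequality is needed.
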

For the Hellinger distance between two Bernoulli distributions, we give the following claim: 
\begin{claim}\label{lemma:Hellinger_kl}
Let $p = B\left(\frac{k}{n}\right)$, $q = B\left(\frac{l}{n}\right)$ and $k < l$. Then
    $d_H^2 \inparen{p^{\otimes T}, q^{\otimes T}} = \Theta\inparen{\frac{\left(l-k\right)^2}{\left(n-k\right)l}}$.
\end{claim}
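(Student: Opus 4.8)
The plan is to pass to a single coordinate and then evaluate the resulting Hellinger distance by rationalizing twice. By the tensorization identity $d_H^2\inparen{p^{\otimes T}, q^{\otimes T}} = 1 - \inparen{1 - d_H^2(p,q)}^T$ from \Cref{fact:dH_dTV_1}, the main task is the single-copy estimate $d_H^2(p,q) = \Theta\inparen{\frac{(l-k)^2}{(n-k)l}}$. Specializing \Cref{def:Hell_dis} to two Bernoulli distributions gives $d_H^2\inparen{B(a), B(b)} = 1 - \sqrt{ab} - \sqrt{(1-a)(1-b)}$, so with $a = k/n$ and $b = l/n$,
\[
d_H^2(p,q) = 1 - \frac{1}{n}\inparen{\sqrt{kl} + \sqrt{(n-k)(n-l)}} = \frac{n - \sqrt{kl} - \sqrt{(n-k)(n-l)}}{n}.
\]

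Next I would rationalize $n - \sqrt{kl} - \sqrt{(n-k)(n-l)}$ by its conjugate. Using $kl + (n-k)(n-l) = n^2 - n(k+l) + 2kl$ and $n(k+l) - 2kl = k(n-l) + l(n-k)$, one gets
\[
n^2 - \inparen{\sqrt{kl} + \sqrt{(n-k)(n-l)}}^2 = k(n-l) + l(n-k) - 2\sqrt{k(n-l)\cdot l(n-k)} = \inparen{\sqrt{k(n-l)} - \sqrt{l(n-k)}}^2,
\]
hence $d_H^2(p,q) = \frac{\inparen{\sqrt{k(n-l)} - \sqrt{l(n-k)}}^2}{n\inparen{n + \sqrt{kl} + \sqrt{(n-k)(n-l)}}}$. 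Rationalizing the numerator as well, and using $k(n-l) - l(n-k) = n(k-l)$, gives $\sqrt{k(n-l)} - \sqrt{l(n-k)} = \frac{n(k-l)}{\sqrt{k(n-l)} + \sqrt{l(n-k)}}$, so
\[
d_H^2(p,q) = \frac{n\,(l-k)^2}{\inparen{\sqrt{k(n-l)} + \sqrt{l(n-k)}}^2\,\inparen{n + \sqrt{kl} + \sqrt{(n-k)(n-l)}}}.
\]

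It then remains to control the two denominators up to constants. By Cauchy--Schwarz, $\sqrt{kl} + \sqrt{(n-k)(n-l)} \le \sqrt{k + (n-k)}\sqrt{l + (n-l)} = n$, so the second factor is $\Theta(n)$. For the first, the hypothesis $k < l$ with $0 \le k$ and $l \le n$ gives $0 \le k(n-l) \le l(n-k)$ (the inequality $k(n-l) \le l(n-k)$ is exactly $k \le l$); writing $X = k(n-l) \le Y = l(n-k)$, we have $Y \le \inparen{\sqrt{X} + \sqrt{Y}}^2 = X + Y + 2\sqrt{XY} \le 4Y$, i.e.\ $\inparen{\sqrt{k(n-l)} + \sqrt{l(n-k)}}^2 = \Theta\inparen{(n-k)l}$. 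Substituting both estimates yields $d_H^2(p,q) = \Theta\inparen{\frac{(l-k)^2}{(n-k)l}}$, with explicit constants $\frac{1}{8}\cdot\frac{(l-k)^2}{(n-k)l} \le d_H^2(p,q) \le \frac{(l-k)^2}{(n-k)l}$; the $T$-fold statement then follows from the tensorization identity quoted at the start.

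I do not expect a serious obstacle here. The only points requiring care are spotting the two conjugate identities --- in particular $n^2 - \inparen{\sqrt{kl}+\sqrt{(n-k)(n-l)}}^2 = \inparen{\sqrt{k(n-l)}-\sqrt{l(n-k)}}^2$ --- and noting that the inequality $k(n-l) \le l(n-k)$ driving the lower bound is precisely the hypothesis $k \le l$, so the argument genuinely uses $k < l$. The degenerate cases $k = 0$ (with $p$ a point mass) and $l = n$ are covered by the same formulas and should be checked to specialize correctly.
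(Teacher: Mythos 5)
Your single-copy computation is correct, and it takes a genuinely different route from the paper. The paper works directly with the form $d_H^2(p,q)=\frac12\sum_i(\sqrt{p_i}-\sqrt{q_i})^2$, splits into the cases $k\le n/2$ and $k>n/2$, and in each case lower-bounds one of the two summands by $\frac{l-k}{2\sqrt2\sqrt{(n-k)l}}$ (and upper-bounds both by $\frac{l-k}{\sqrt{(n-k)l}}$), obtaining $\frac{1}{16}\frac{(l-k)^2}{(n-k)l}\le d_H^2(p,q)\le\frac{(l-k)^2}{(n-k)l}$. You instead use the Bhattacharyya form $1-\sqrt{ab}-\sqrt{(1-a)(1-b)}$ and rationalize twice to reach the exact identity
\[
d_H^2(p,q)=\frac{n(l-k)^2}{\bigl(\sqrt{k(n-l)}+\sqrt{l(n-k)}\bigr)^2\bigl(n+\sqrt{kl}+\sqrt{(n-k)(n-l)}\bigr)},
\]
after which only the two denominator estimates remain. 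I checked the conjugate identities and the bounds $(\sqrt{X}+\sqrt{Y})^2=\Theta(Y)$ for $0\le X\le Y$ and $\sqrt{kl}+\sqrt{(n-k)(n-l)}\le n$; all are correct, and your observation that $k(n-l)\le l(n-k)$ is exactly the hypothesis $k\le l$ is the right way to see where that hypothesis enters. Your route avoids the paper's case split, yields a slightly better constant ($1/8$ versus $1/16$), and treats the endpoint cases uniformly.

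One caveat concerns your closing sentence. The claim as printed asserts $d_H^2(p^{\otimes T},q^{\otimes T})=\Theta\bigl(\frac{(l-k)^2}{(n-k)l}\bigr)$ with a right-hand side independent of $T$; this cannot follow from the tensorization identity $d_H^2(p^{\otimes T},q^{\otimes T})=1-(1-d_H^2(p,q))^T$, which behaves like $T\cdot d_H^2(p,q)$ when $T\cdot d_H^2(p,q)=O(1)$ and saturates at $\Theta(1)$ beyond that. So the assertion that ``the $T$-fold statement then follows'' is not correct for general $T$. In fairness, the paper's own proof never touches the $T$-fold product either --- it only establishes the single-copy bound --- and the subsequent applications to the classical $T$-bias of $\kl$ only invoke $d_H^2(p,q)$, so the ``$\otimes T$'' in the claim statement is best read as a typo. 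Still, as written, your final inference (like the literal statement) fails once $T$ is large.
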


\begin{proof}
By Definition \ref{def:Hell_dis}, we have
    \begin{equation*}
d_H^2\left(p,q\right) =
\frac{1}{2}\inparen{ \inparen{\sqrt{\frac{k}{n}}-\sqrt{\frac{l}{n}}}^2+\inparen{\sqrt{\frac{n-k}{n}}-\sqrt{\frac{n-l}{n}}}^2}.
\end{equation*}
If $k \le n/2$, then \begin{equation}\label{inequality_kl_lower} 
\sqrt{\frac{l}{n}}-\sqrt{\frac{k}{n}} = \frac{l-k}{\sqrt{n}\left(\sqrt{l}+\sqrt{k}\right)} \ge \frac{l-k}{2\sqrt{2}\sqrt{\left(n-k\right)l}},
\end{equation}
since $\sqrt{n} \le \sqrt{2\left(n-k\right)}$ and $\sqrt{l}+\sqrt{k} \le 2\sqrt{l}$. Similarly, if $k > n/2$, then
\begin{equation*}
\frac{\sqrt{n-k}-\sqrt{n-l}}{\sqrt{n}} = \frac{l-k}{\sqrt{n}\left(\sqrt{n-l}+\sqrt{n-k}\right)} \ge 
\frac{l-k}{2\sqrt{2}\sqrt{\left(n-k\right)l}}
.
\end{equation*}
Thus, 
we have 
\begin{equation*}
d_H^2\left(p,q\right) \ge \frac{1}{16}\frac{\left(l-k\right)^2}{\left(n-k\right)l},
\end{equation*}
Moreover, since
\begin{equation}\label{inequality_kl_upper}
\sqrt{\frac{l}{n}}-\sqrt{\frac{k}{n}} = \frac{l-k}{\sqrt{n}\left(\sqrt{l}+\sqrt{k}\right)} \le 
\frac{l-k}{\sqrt{\left(n-k\right)l}},
\end{equation}
and
\begin{equation*}
\frac{\sqrt{n-k}-\sqrt{n-l}}{\sqrt{n}} = \frac{l-k}{\sqrt{n}\left(\sqrt{n-l}+\sqrt{n-k}\right)} \le \frac{l-k}{\sqrt{\left(n-k\right)l}},
\end{equation*}
we have
\begin{equation*}
d_H^2\left(p,q\right) \le \frac{\left(l-k\right)^2}{\left(n-k\right)l}.
\end{equation*}
\end{proof}
Then we present the following claim about distinguishing two discrete probability distributions.
\begin{claim}\label{lemma:onequery}
Given an unknown discrete distribution and promising that the distribution is either $p$ or $q$. 
Using one sample, one can determine whether the distribution is $p$ or $q$ with bias $d_{TV}\left(p,q\right)/4$.
\end{claim}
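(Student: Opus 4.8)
The plan is to exhibit an explicit single‑sample randomized test and to bound its success probability simultaneously on $p$ and on $q$. Write $d = d_{TV}(p,q)$ and let $\mathcal{E} = \{i : q(i) \ge p(i)\}$, so that $\sum_{i \in \mathcal{E}}\bigl(q(i)-p(i)\bigr) = \sum_i \max\bigl(q(i)-p(i),0\bigr) = d$; in particular $q(\mathcal{E}) = p(\mathcal{E}) + d$ and, since $q(\mathcal{E})\le 1$, we get the key inequality $0 \le p(\mathcal{E}) \le 1-d$.

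First consider the deterministic maximum‑likelihood test $g$ that outputs ``$q$'' when the observed sample lies in $\mathcal{E}$ and ``$p$'' otherwise. Its success probability is $1-p(\mathcal{E})$ when the true distribution is $p$ and $q(\mathcal{E}) = p(\mathcal{E})+d$ when it is $q$; these sum to $1+d$, so $g$ has average bias $d/2$ under the uniform prior but may be badly unbalanced between the two cases. To fix this I would pass to the randomized test that, independently of the sample, runs $g$ with probability $1/2$ and otherwise outputs ``$q$'' with some fixed probability $c\in[0,1]$ to be chosen. If $\mu_i$ denotes the overall probability of outputting ``$q$'' given sample $i$, then $\mu_i = \tfrac12\mathbf{1}[i\in\mathcal{E}] + \tfrac12 c$, the probability of erring on $p$ is $\sum_i p(i)\mu_i = \tfrac12 p(\mathcal{E}) + \tfrac12 c$, and the probability of answering correctly on $q$ is $\sum_i q(i)\mu_i = \tfrac12 q(\mathcal{E}) + \tfrac12 c$. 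Choosing $c = 1 - d/2 - p(\mathcal{E})$ — which lies in $[d/2,\,1-d/2]\subseteq[0,1]$ precisely because $0\le p(\mathcal{E})\le 1-d$ — makes the first quantity equal to $\tfrac12 - d/4$ and the second equal to $\tfrac12 + d/4$, so the test is correct with probability exactly $\tfrac12 + d/4$ on each of $p$ and $q$.

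The only real subtlety is that the naive likelihood‑ratio test can be arbitrarily lopsided (e.g.\ correct with probability $d$ on $p$ but $1$ on $q$), so a \emph{worst‑case} guarantee genuinely requires the extra independent randomization; the content of the argument is the bookkeeping that pins down the admissible range of the balancing parameter $c$, and this is exactly where $p(\mathcal{E})\le 1-d$ (equivalently $q(\mathcal{E})\le 1$) is used. Everything else is a one‑line computation, so I expect the whole proof to be short. One may add as a remark that the Bayes‑optimal test attains average bias $d/2$, so the factor $1/4$ is the price of insisting on a uniform (worst‑case over $\{p,q\}$) success guarantee and is not claimed to be tight.
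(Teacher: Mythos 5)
Your proposal is correct. The key facts you use — $q(\mathcal{E}) = p(\mathcal{E}) + d$, hence $p(\mathcal{E}) \le 1-d$, which is exactly what places the balancing parameter $c = 1 - d/2 - p(\mathcal{E})$ inside $[0,1]$ — are verified correctly, and the two success probabilities indeed both come out to exactly $\tfrac12 + d/4$. The paper's proof is built on the same underlying idea (run the likelihood test on the maximum-likelihood set and add independent randomization so that the success probability is equalized between the two hypotheses), but the constructions differ in detail: the paper splits into the cases $a+b\ge 1$ and $a+b<1$ (with $a=p(S_1)$, $b=q(S_1)$) and uses case-specific post-processing probabilities, obtaining an equalized success probability of $\frac{b}{a+b}$ or $\frac{1-a}{2-a-b}$, each of which is then lower-bounded by $\tfrac12+\frac{b-a}{4}$; you instead mix the deterministic test with an input-independent $c$-biased coin and solve for $c$ in one formula, avoiding the case analysis entirely. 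The trade-off is minor: the paper's response function has a larger gap between its two values (namely $\frac{1}{a+b}$ or $\frac{1}{2-a-b}$, both $\ge \tfrac12$) and so can achieve bias up to $d/2$ when $a+b$ is close to $1$, whereas your mixture fixes the gap at $\tfrac12$ and lands exactly on $d/4$ — but since the claim only asserts bias $d/4$, both arguments fully suffice, and yours is arguably the cleaner write-up. Your closing remark that the Bayes-optimal (uniform-prior) test attains average bias $d/2$, and that the factor-of-two loss is the price of a worst-case guarantee, matches the paper's own remark contrasting its claim with the prior-$1/2$ setting of the distinguishing lemma it cites.
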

\begin{proof}
We assume the probability space is $S$, i.e., $\sum_{i \in S} p\left(i\right) = \sum_{i \in S} q\left(i\right) = 1$.
Let $S_0 = \inbrace{i \in S|p\left(i\right) \ge q\left(i\right)}$ and $S_1 = \inbrace{i \in S|p\left(i\right) < q\left(i\right)}$. 
Suppose $a = \sum_{i \in S_1} p\left(i\right)$ and $b = \sum_{i \in S_1} q\left(i\right)$. Then $d_{TV}\left(p,q\right) = b-a$. We give Algorithm \ref{al:one_query} to distinguish $p$ from $q$ using one sample.
\begin{algorithm}
\caption{One sample to distinguish distributions $p,q$.}
\label{al:one_query}
Sample an element $i \in S$ according to the distribution.

Suppose $a+b \ge 1$. If $i\in S_1$, we output $q$ with the probability $\frac{1}{a+b}$ and output $p$ with the probability $\frac{a+b-1}{a+b}$; if $i \in S_0$, then we output $p$.

Suppose $a+b < 1$. If $i \in S_1$, we output $q$; if $i \in S_0$, we output $q$ with the probability $\frac{1-a-b}{2-a-b}$ and output $p$ with the probability $\frac{1}{2-a-b}$.
\end{algorithm}
Then we analyze the success probability of Algorithm \ref{al:one_query} as follows:
\begin{enumerate}
    \item  $a+b \ge 1$.  
    If the distribution is $p$, then the success probability is $a\cdot\frac{a+b-1}{a+b}+1-a = \frac{b}{a+b}$; if the the distribution is $q$, then the success probability is $b\cdot\frac{1}{a+b} = \frac{b}{a+b}$. Thus, the bias of the algorithm is $\frac{b}{a+b}-\frac{1}{2} = \frac{b-a}{2\left(a+b\right)}\ge \frac{b-a}{4}$.
    \item $a+b < 1$. 
    If the distribution is $p$, then the success probability is $\left(1-a\right)\frac{1}{2-a-b}$; if the distribution is $q$, then the success probability is $b+\left(1-b\right)\frac{1-a-b}{2-a-b} = \frac{1-a}{2-a-b}$. Thus, the bias of the algorithm is $\frac{1-a}{2-a-b}-\frac{1}{2} = \frac{b-a}{2\left(2-a-b\right)} \ge \frac{b-a}{4}$. 
\end{enumerate}
\end{proof}
We note that there exists a similar result to distinguish $p$ from $q$ with bias $d_{TV}\left(p,q\right)/2$, such as Theorem 3.4 in \cite{ODonnel2015Lecture}. The difference is that \cite{ODonnel2015Lecture} considers the case that the prior probabilities that two probability distributions $p$ and $q$ occur are both $1/2$; the setting in our paper has no prior distribution, and thus we need to consider the worst case.
\subsection{Exact Quantum Query Complexity of $\kl$}
This section overviews some results about the exact quantum query complexity of $f_n^{k,l}$ \cite{HSY+18}. For every $(x,y)\in [0,1]^2$ and $x<y$, the upper-left region $\textsf{UL}(x,y)$ and the lower-right region $\textsf{LR}(x,y)$ are defined as follows:
$$                  \begin{array}{ll}
                        & \textsf{UL}(x,y) = \left\{(\kappa,\lambda)\in I^2|\lambda x\geq \kappa y ,(1-\kappa)(1-y)\geq(1-\lambda)(1-x),\kappa<\lambda\right\}; \\
                        & \textsf{LR}(x,y) =\left\{(\kappa,\lambda)\in I^2|\lambda x\leq \kappa y ,(1-\kappa)(1-y)\leq(1-\lambda)(1-x),\kappa<\lambda,(\kappa,\lambda)\neq (x,y)\right\}.
                     \end{array}
$$
For every set $S\subseteq [0,1]^2$, let $\textsf{UL}(S) = \bigcup_{(x,y)\in S}\textsf{UL}(x,y)$ and $\textsf{LR}(S) = \bigcup_{(x,y)\in S}\textsf{LR}(x,y)$. 
For every $d\in\mathbb{N}$, let 
\begin{multline*}
S_d  = \inbrace{\left( \frac{1}{2}{\left(1-\cos\left(\frac{\gamma\pi}{2d}\right)\right)},\frac{1}{2}\inbrace{\left(1-\cos\left(\frac{(\gamma+1)\pi}{2d}\right)\right)}   \right) \;\Big|\; \gamma\in\inbrace{0,1,\dots,2d-1}} \\
  \bigcup \inbrace{\left( \frac{1}{2}{\left(1-\cos\left(\frac{\gamma\pi}{2d-1}\right)\right)},\frac{1}{2}{\left(1-\cos\left(\frac{(\gamma+1)\pi}{2d-1}\right)\right)}   \right) \;\Big|\; \gamma\in\inbrace{1,\dots,2d-3}}.
\end{multline*}
Then the upper and lower bounds of the exact quantum query complexity of $\kl$ are as follows. 
\begin{Fact}[Theorem 1 in \cite{HSY+18}]\label{fact:padding}
For every $d\in\mathbb{N}$ and $0\leq k<l\leq n$ with $k,l,n\in\mathbb{N}$, let $\kappa=\frac{k}{n}$ and $\lambda=\frac{l}{n}$. If $(\kappa,\lambda)\in \mathsf{UL}(S_d)$, then $Q_E(f_n^{k,l})\leq d.$
\end{Fact}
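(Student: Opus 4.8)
Since the statement is an upper bound, the plan is to construct, for every target pair $(x,y)\in S_d$, an explicit $d$-query exact quantum algorithm $\mathcal A_{x,y}$ that decides $\kl$ correctly for all $(k,l)$ with $(k/n,l/n)\in\mathsf{UL}(x,y)$; since $(k/n,l/n)\in\mathsf{UL}(S_d)$ means $(k/n,l/n)\in\mathsf{UL}(x,y)$ for some $(x,y)\in S_d$, running the corresponding $\mathcal A_{x,y}$ on the input $z$ (with $|z|\in\{k,l\}$) proves the Fact. The first step is to isolate the single-query atom: from an index register in a weighted superposition $\sum_{i\in[n]}\sqrt{w_i}\,|i\rangle|\phi_i\rangle$ together with a small ``decision'' register, one call to $O_z$ followed by a fixed input-independent unitary transforms the reduced state in a way that depends on $z$ only through $\mu=|z|/n\in\{k/n,l/n\}$, and through the \emph{same} function of $\mu$ in the two branches; this is the coupling between the $|z|=k$ and $|z|=l$ cases. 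It is already instructive at $d=1$: preparing $|\phi_i\rangle=a_i|{+}\rangle+b_i|{-}\rangle$ with $|b_i|^2$ equal to a common constant, querying once, and measuring whether the post-query state is still $\sum_i\sqrt{w_i}|i\rangle|\phi_i\rangle$ distinguishes the two Hamming weights with zero error precisely when that constant can be chosen so that the residual weights $|a_i|^2$ are non-negative and the whole state is normalized --- and these feasibility conditions work out to exactly the inequalities $\lambda x\ge\kappa y$ and $(1-\kappa)(1-y)\ge(1-\lambda)(1-x)$ defining $\mathsf{UL}(x,y)$ (for $S_1=\{(0,1/2),(1/2,1)\}$ they force $\kappa=0,\lambda\ge1/2$ or $\lambda=1,\kappa\le1/2$).

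The second step is to chain $d$ such atoms, interleaved with input-independent ``generalized Grover'' operators whose rotation angles and reflection phases are tuned to $(x,y)$, and track the reduced two-branch state, which --- by invariance under permutations of coordinates inside a fixed Hamming weight --- collapses to a constant-dimensional object described by a pair of angles, one per branch. The zero-error requirement (in case $|z|=k$ the final state has zero overlap with the ``output $1$'' subspace, and symmetrically for $|z|=l$) becomes, after the $d$-fold composition, a trigonometric identity that must hold in both branches simultaneously. I expect this to follow by induction on $d$: one suitably chosen query sends a target in $S_d$-position to one in $S_{d-1}$-position while preserving the two monotone ``slack'' directions of $\mathsf{UL}$, with the two sublists comprising $S_d$ (angular common difference $\pi/(2d)$ versus $\pi/(2d-1)$ in the parametrization $\kappa=\sin^2(\theta/2)$) distinguished by whether the last reflection phase is $\pi$ or generic. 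The base case is the $d=1$ computation above, and in the inductive step it is precisely the rigid arithmetic-progression structure of the $S_d$ angles that lets a clean choice of per-step rotations land both branches on mutually orthogonal final states.

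If establishing robustness of $\mathcal A_{x,y}$ over the whole region $\mathsf{UL}(x,y)$ (rather than just at the point $(x,y)$) turns out to be awkward, the fallback is a padding reduction: appending $a$ known $0$-bits and $b$ known $1$-bits costs no queries and turns the $(k/n,l/n)$ instance into $\big((k+b)/(n+a+b),(l+b)/(n+a+b)\big)$, and ranging over integers $a,b\ge 0$ this sweeps the cone of fractions lying ``$\mathsf{UL}$-above'' $(k/n,l/n)$; one would then only need $\mathcal A_{x,y}$ to be correct at the extreme points of such cones. The subtlety is that the coordinates of the $S_d$ points are cosines, hence irrational, so padding cannot hit them exactly --- which is why the primary route is to prove $\mathcal A_{x,y}$ works on all of $\mathsf{UL}(x,y)$.

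The main obstacle is the $d$-query exactness analysis of the second step: forcing \emph{both} branches to terminate in mutually orthogonal states with one and the same input-independent schedule of $d$ rotation and reflection operators. Ordinary (bounded-error) amplitude amplification only gets one branch right; pinning down the other branch at the same time is exactly what forces the rigid $S_d$ angle lists, and verifying that the resulting set of feasible $(k/n,l/n)$ is neither larger nor smaller than $\mathsf{UL}(S_d)$ is the delicate algebraic heart of the argument.
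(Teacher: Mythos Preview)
The paper does not prove this statement: it is quoted verbatim as a Fact from \cite{HSY+18} (their Theorem~1) and is used as a black box, so there is no in-paper proof to compare your proposal against.

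For what it is worth, your outline is in the right spirit and tracks the known approach: build an exact $d$-query procedure out of amplitude-amplification-style rotations whose angles are tuned to the targets in $S_d$, and use the monotonicity encoded by $\mathsf{UL}$ (together with padding by known $0$'s and $1$'s) to extend correctness from the distinguished points to the whole region. Your identification of the key difficulty---forcing \emph{both} Hamming-weight branches to land in orthogonal output subspaces under one common input-independent schedule---is exactly the crux of the exactness analysis in \cite{HSY+18}, and the irrationality-of-cosines issue you flag is precisely why that paper proves the algorithm correct on all of $\mathsf{UL}(x,y)$ rather than relying solely on padding. Since this paper does not reproduce that argument, if you want a detailed comparison you should consult \cite{HSY+18} directly.
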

\begin{Fact}[Theorem 2 in \cite{HSY+18}]\label{fact:lowerbound}
For every $d\in\mathbb{N}$ and $0\leq k<l\leq n$ with $k,l,n\in \mathbb{N}$, let $\kappa=\frac{k}{n}$ and $\lambda=\frac{l}{n}$. If $(\kappa,\lambda)\in \mathsf{LR}(S_d)$, then $Q_E(f_n^{k,l})\geq d+1$ for sufficiently large $n$.
\end{Fact}
\begin{Fact}[Corollary 1 in \cite{HSY+18}]\label{fact:kl_exact_upper_bound}
If $k,l,n\in\mathbb{N}$ and $0\le k< l\le n$, then $Q_E(f_n^{k,l})=O\left(\frac{\sqrt{(n-k)l}}{l-k}\right)$.
\end{Fact}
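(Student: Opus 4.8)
The plan is to obtain the bound directly from Fact~\ref{fact:padding}: it suffices to exhibit some $d = O\!\left(\frac{\sqrt{(n-k)l}}{l-k}\right)$ for which the point $(\kappa,\lambda):=(k/n,\,l/n)$ lies in $\mathsf{UL}(S_d)$. The first step is to reparametrize the grid underlying $S_d$. Writing $g_\gamma := \frac12\!\left(1-\cos\frac{\gamma\pi}{2d}\right) = \sin^2\!\frac{\gamma\pi}{4d}$ for $\gamma = 0,1,\dots,2d$, the first family of pairs in $S_d$ is precisely the set of consecutive pairs $(g_\gamma,g_{\gamma+1})$ with $\gamma\in\{0,\dots,2d-1\}$, and $0 = g_0 < g_1 < \cdots < g_{2d} = 1$. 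Similarly set $\theta_t := 2\arcsin\sqrt t\in[0,\pi]$, so $\kappa = \sin^2(\theta_\kappa/2)$ and $\lambda = \sin^2(\theta_\lambda/2)$; then $g_\gamma\in[\kappa,\lambda]$ holds exactly when the grid index $\gamma$ lies in the real interval $[j_\kappa,j_\lambda]$, where $j_t := \frac{4d}{\pi}\arcsin\sqrt t\in[0,2d]$.

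The second step is an elementary sufficient condition for membership in $\mathsf{UL}$: if $\kappa \le x < y \le \lambda$, then $(\kappa,\lambda)\in\mathsf{UL}(x,y)$. Indeed $\kappa\le x$ and $y\le\lambda$ give $\lambda x \ge yx \ge \kappa y$ and $(1-\kappa)(1-y)\ge(1-x)(1-y)\ge(1-\lambda)(1-x)$, while $\kappa<\lambda$ is immediate; these are exactly the three defining inequalities of $\mathsf{UL}(x,y)$. Hence it is enough to find $\gamma\in\{0,\dots,2d-1\}$ with $\kappa\le g_\gamma < g_{\gamma+1}\le\lambda$, since then $(\kappa,\lambda)\in\mathsf{UL}(g_\gamma,g_{\gamma+1})\subseteq\mathsf{UL}(S_d)$; equivalently, I need two consecutive integers $\gamma,\gamma+1$ inside $[j_\kappa,j_\lambda]$. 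Choosing $\gamma = \lceil j_\kappa\rceil$ works as soon as $\lceil j_\kappa\rceil+1\le j_\lambda$, which is guaranteed whenever $j_\lambda - j_\kappa\ge 2$ (this also keeps $\gamma\le 2d-1$, because $j_\lambda\le 2d$).

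The final step is to pick $d$ making $j_\lambda - j_\kappa\ge 2$, i.e. $d\ge \dfrac{\pi}{2(\arcsin\sqrt\lambda-\arcsin\sqrt\kappa)}$, while keeping $d$ of the right order. For this I would use the integral estimate
\[
\arcsin\sqrt\lambda-\arcsin\sqrt\kappa \;=\; \int_\kappa^\lambda\frac{dt}{2\sqrt{t(1-t)}} \;\ge\; \int_\kappa^\lambda\frac{dt}{2\sqrt{\lambda(1-\kappa)}} \;=\; \frac{\lambda-\kappa}{2\sqrt{\lambda(1-\kappa)}},
\]
which uses $t(1-t)\le\lambda(1-\kappa)$ for every $t\in[\kappa,\lambda]$. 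Therefore $d := \left\lceil \dfrac{\pi\sqrt{\lambda(1-\kappa)}}{\lambda-\kappa}\right\rceil$ satisfies the requirement; and since $\dfrac{\sqrt{\lambda(1-\kappa)}}{\lambda-\kappa} = \dfrac{\sqrt{(n-k)l}}{l-k}\ge 1$ (because $l\ge l-k$ and $n-k\ge l-k$), the ceiling costs only a constant factor, so $d = O\!\left(\frac{\sqrt{(n-k)l}}{l-k}\right)$ and Fact~\ref{fact:padding} gives $Q_E(f_n^{k,l})\le d = O\!\left(\frac{\sqrt{(n-k)l}}{l-k}\right)$.

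I expect the main obstacle to be careful bookkeeping rather than a new idea: one must check that the chosen consecutive indices really lie in the admissible range $\{0,\dots,2d-1\}$, handle the boundary cases $k=0$ or $l=n$ (where $\kappa=g_0$ or $\lambda=g_{2d}$ holds with equality, which the non-strict inequalities above still permit), and verify the spacing computation $j_\lambda-j_\kappa\ge 2$ under the stated choice of $d$. A more self-contained alternative would avoid $S_d$ altogether: run the exact-amplitude-amplification algorithm of \cite{HSY+18}, namely $d$ generalized Grover iterations on the uniform superposition over $[n]$ followed by a query-free unitary tuned to the two known rotation angles $\theta_\kappa,\theta_\lambda$, and argue that $d = O(1/(\theta_\lambda-\theta_\kappa))$ iterations suffice to separate $|x|=k$ from $|x|=l$ with certainty; the same angle-difference estimate controls $d$ in that argument as well.
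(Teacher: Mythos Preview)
Your proof is correct. The paper itself does not give a proof of this statement: it is quoted as a known result (Corollary~1 in \cite{HSY+18}), and the paper only uses it as a black box. Your derivation from Fact~\ref{fact:padding} is exactly the route one would expect for a ``corollary'' of that theorem, and each step checks out: the reparametrization $g_\gamma=\sin^2(\gamma\pi/4d)$, the sufficient condition $\kappa\le x<y\le\lambda\Rightarrow(\kappa,\lambda)\in\mathsf{UL}(x,y)$, the reduction to finding two consecutive grid indices in $[j_\kappa,j_\lambda]$, and the integral bound $\arcsin\sqrt\lambda-\arcsin\sqrt\kappa\ge(\lambda-\kappa)/(2\sqrt{\lambda(1-\kappa)})$ are all clean. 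The boundary cases $k=0$ and $l=n$ that you flag are indeed harmless, since the inequalities in $\mathsf{UL}$ are non-strict and $g_0=0$, $g_{2d}=1$. The alternative direct amplitude-amplification argument you sketch at the end is essentially how \cite{HSY+18} proves Fact~\ref{fact:padding} in the first place, so going through $S_d$ as you do is the more economical choice here.
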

\subsection{Miscellaneous}
\begin{Fact}[Claim 2 in \cite{Ambainis07}\label{sin_inequality}]
    For any $0 \le \theta \le \pi/2$, we have $2\theta/\pi \le \sin \theta \le \theta$.
\end{Fact}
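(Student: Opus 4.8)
The plan is to prove the two inequalities separately by elementary one-variable calculus, each reducing to a sign analysis of an auxiliary function that vanishes at the origin.

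\textbf{Upper bound.} To show $\sin\theta \le \theta$ on $[0,\pi/2]$, I would set $g(\theta) = \theta - \sin\theta$. Then $g(0) = 0$ and $g'(\theta) = 1 - \cos\theta \ge 0$ for every real $\theta$, so $g$ is nondecreasing on $[0,\pi/2]$ and therefore $g(\theta) \ge g(0) = 0$ throughout the interval. Equivalently, one writes $\sin\theta = \int_0^\theta \cos t \, dt \le \int_0^\theta 1 \, dt = \theta$.

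\textbf{Lower bound (Jordan's inequality).} To show $2\theta/\pi \le \sin\theta$ on $[0,\pi/2]$, I would invoke the concavity of $\sin$ on this interval: since $(\sin)''(\theta) = -\sin\theta \le 0$ for $\theta\in[0,\pi/2]$, the graph of $\sin$ lies on or above the chord joining its endpoints $(0,0)$ and $(\pi/2,1)$, and that chord is precisely the line $y = (2/\pi)\theta$. Alternatively, and without citing concavity abstractly, I would study $h(\theta) = \sin\theta/\theta$ on $(0,\pi/2]$: its derivative has the sign of $\theta\cos\theta - \sin\theta$, whose derivative is $-\theta\sin\theta \le 0$ and whose value at $\theta=0$ is $0$, so $\theta\cos\theta - \sin\theta \le 0$ on the whole interval; hence $h$ is nonincreasing, giving $h(\theta) \ge h(\pi/2) = 2/\pi$, which rearranges to $\sin\theta \ge 2\theta/\pi$. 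The endpoint $\theta = 0$ is handled trivially since both sides equal $0$ there.

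\textbf{Main obstacle.} There is essentially no obstacle: the statement is classical (it combines Jordan's inequality with the standard bound $\sin\theta\le\theta$), which is why it appears here as a cited Fact rather than a proved one. The only point needing a line of justification is the monotonicity of the auxiliary function $\theta\cos\theta - \sin\theta$ (equivalently, in the concavity phrasing, that a concave function dominates its chords on a closed interval), together with treating $\theta=0$ separately in the quotient argument.
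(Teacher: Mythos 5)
Your proof is correct. Note that the paper does not prove this statement at all --- it is imported as a Fact, cited as Claim~2 of \cite{Ambainis07} --- so there is no in-paper argument to compare against. Both halves of your argument are standard and airtight: the upper bound follows from $g(\theta)=\theta-\sin\theta$ having $g(0)=0$ and $g'(\theta)=1-\cos\theta\ge 0$, and the lower bound (Jordan's inequality) follows either from concavity of $\sin$ on $[0,\pi/2]$ placing its graph above the chord $y=(2/\pi)\theta$, or from the monotonicity of $\sin\theta/\theta$, which you justify correctly by showing $\theta\cos\theta-\sin\theta\le 0$ via its derivative $-\theta\sin\theta\le 0$ and its vanishing at the origin. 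Either route is a complete, self-contained substitute for the citation.
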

\begin{Fact}[Corollary 4.17 in \cite{Ben2021polynomial}]\label{fact:Ben2021}
Let $f$ be a (possibly partial) Boolean function. Then
\begin{equation*}
Q_{\epsilon}(f) = \Omega\left( \sqrt{(1-2\epsilon)\bs(f)}\right).
\end{equation*}
\end{Fact}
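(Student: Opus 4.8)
The plan is to exhibit a partial symmetric Boolean function $f$ whose domain consists of two bands of Hamming weights separated by a constant fraction of $n$ — so that a constant number of random queries distinguishes the bands — while one band contains a linear number of consecutive weights, which forces every symmetrized polynomial representation to have linearly many roots, hence linear degree. Concretely I would set $D = \{x \in \B^n : |x| \le n/3\} \cup \{x \in \B^n : |x| \ge 2n/3\}$ and define $f(x) = 0$ if $|x| \le n/3$ and $f(x) = 1$ if $|x| \ge 2n/3$; this is symmetric, and for all sufficiently large $n$ it is a proper partial function, since the weights strictly between $n/3$ and $2n/3$ are excluded. Any other pair of fixed distinct fractions would serve equally well.

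First I would establish $R(f) = O(1)$. The randomized algorithm samples $m$ coordinates uniformly at random (with repetition), queries them, and outputs $0$ if strictly fewer than $m/2$ of the sampled bits equal $1$, and $1$ otherwise. If $|x| = w \le n/3$, the number of sampled ones is distributed as $B(m, w/n)$ with mean at most $m/3$, so by a Hoeffding (Chernoff) bound the probability of observing at least $m/2$ ones is $\exp(-\Omega(m))$; the case $w \ge 2n/3$ is symmetric. Taking $m$ to be a large enough constant drives the error below $1/3$ on every $x \in D$, so $R(f) = O(1)$.

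Next I would prove $\deg(f) = \Omega(n)$, after which \Cref{beals} immediately yields $Q_E(f) \ge \deg(f)/2 = \Omega(n)$. Let $p$ be any multilinear polynomial of degree $d = \deg(f)$ with $p(x) = f(x)$ for $x \in D$ and $0 \le p(x) \le 1$ for all $x \in \B^n$. Symmetrizing $p$ over all permutations of the coordinates yields a symmetric multilinear polynomial $p^{\mathrm{sym}}$ of degree at most $d$ which, because both $D$ and $f$ are permutation-invariant, still satisfies $p^{\mathrm{sym}}(x) = f(x)$ on $D$ and $0 \le p^{\mathrm{sym}} \le 1$ on the cube; by Minsky--Papert symmetrization it agrees with $q(|x|)$ for a univariate polynomial $q$ of degree at most $d$. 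Then $q(i) = 0$ for every integer $i \in \{0, 1, \dots, \lfloor n/3 \rfloor\}$, so $q$ has at least $\lfloor n/3 \rfloor + 1$ distinct roots, while $q(n) = 1 \neq 0$ shows $q \not\equiv 0$; hence $d \ge \deg q \ge \lfloor n/3 \rfloor + 1 = \Omega(n)$.

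There is no deep obstacle here; the entire content lies in choosing $f$ so that the two requirements hold simultaneously — separation of the bands by $\Theta(n)$ weights (needed for the $O(1)$ randomized upper bound) and a band of $\Omega(n)$ consecutive weights mapped to a single value (needed for the degree lower bound). The only step warranting care is the symmetrization, where one must check both that $p^{\mathrm{sym}}$ remains a valid $[0,1]$-bounded representation of the \emph{partial} function $f$ and that restricting it to a line does not increase its degree; both are standard and, as noted after \Cref{bs_fbs_FC} in the excerpt, carry over to partial functions.
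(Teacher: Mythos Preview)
Your proposal does not address the stated Fact at all. Fact~\ref{fact:Ben2021} asserts that $Q_{\epsilon}(f) = \Omega\bigl(\sqrt{(1-2\epsilon)\bs(f)}\bigr)$ for any (possibly partial) Boolean function; it is a lower bound on bounded-error quantum query complexity in terms of block sensitivity, and in the paper it is simply cited from~\cite{Ben2021polynomial} with no proof given. What you have written is instead a proof of Theorem~\ref{th:exprelation} --- the existence of a partial symmetric Boolean function with $Q_E(f) = \Omega(n)$ and $R(f) = O(1)$ --- which is an entirely different statement.

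For what it is worth, your argument is a correct proof of Theorem~\ref{th:exprelation}, and the approach matches the paper's in spirit: force large degree by putting many consecutive Hamming weights in one fiber, and force small randomized complexity by keeping the two fibers a constant fraction of $n$ apart. The paper's choice of function is slightly leaner ($f(x)=0$ for $|x|\le n/2$ and $f(x)=1$ for $|x|=n$), and it obtains $R(f)=O(1)$ indirectly via $Q(f)=O(1)$ and the known relation $R(f)=O(Q(f)^2)$ rather than by a direct sampling argument, but the content is the same. None of this, however, bears on Fact~\ref{fact:Ben2021}; you have proved the wrong statement.
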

\begin{Fact}[Fact 4 in \cite{Paturi92}]\label{face:bmi}
Let $p:[-1,1]\rightarrow \mathbb{R}$ be a polynomial of degree $d$. Then for $x \in [-1,1]$, we have
\begin{equation*}
\left|\left(\frac{1}{d}\sqrt{1-x^2}+\frac{1}{d^2}\right)p'\left(x\right)\right| \le 2||p||,
\end{equation*}
where $||p|| = \sup\inbrace{|p\left(x\right)|:x \in [-1,1]}$.
\end{Fact}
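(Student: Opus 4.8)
The plan is to derive this inequality by splitting the coefficient $\frac{1}{d}\sqrt{1-x^2}+\frac{1}{d^2}$ into its two summands and bounding the contribution of each one by $\|p\|$ using, respectively, the Bernstein and the Markov inequalities. Normalize $\|p\|=1$ (the general statement follows by scaling) and assume $d\geq 1$. \emph{Bernstein's inequality} asserts that $|p'(x)|\leq \frac{d}{\sqrt{1-x^2}}$ for every $x\in(-1,1)$, and \emph{Markov's inequality} asserts that $|p'(x)|\leq d^2$ for every $x\in[-1,1]$. Granting these two estimates, for $x\in(-1,1)$ we multiply the first by $\frac{\sqrt{1-x^2}}{d}$ and the second by $\frac{1}{d^2}$ and add:
\[
\inparen{\frac{1}{d}\sqrt{1-x^2}+\frac{1}{d^2}}|p'(x)| \;\leq\; \frac{\sqrt{1-x^2}}{d}\cdot\frac{d}{\sqrt{1-x^2}} + \frac{1}{d^2}\cdot d^2 \;=\; 2 ,
\]
while at $x=\pm 1$ the coefficient of $p'(x)$ equals $\frac{1}{d^2}$, so Markov's inequality alone already yields the bound (namely $1\leq 2$). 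Hence the whole argument reduces to recalling the two classical pointwise derivative estimates, and the numerical constant $2$ is precisely the price of gluing the ``interior'' regime (Bernstein) to the ``endpoint'' regime (Markov).

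For Bernstein's inequality I would use the substitution $x=\cos\theta$: the function $g(\theta)\eqdef p(\cos\theta)$ is a real trigonometric polynomial of degree at most $d$, so the trigonometric Bernstein inequality $\|g'\|_\infty\leq d\,\|g\|_\infty$ applies; since $g'(\theta)=-\sin\theta\cdot p'(\cos\theta)$, this reads $|\sin\theta|\,|p'(\cos\theta)|\leq d$, i.e. $\sqrt{1-x^2}\,|p'(x)|\leq d$. The trigonometric Bernstein inequality itself can be obtained from M. Riesz's interpolation formula, which expresses $g'(\theta_0)$ as a fixed linear combination of finitely many shifted values $g(\theta_0+\phi_j)$ whose coefficients have absolute values summing to the degree of $g$, or alternatively via a comparison with the extremal trigonometric polynomial $\cos(d\theta)$.

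The genuinely delicate ingredient, and what I expect to be the main obstacle, is Markov's inequality, since Bernstein's bound $\frac{d}{\sqrt{1-x^2}}$ degenerates as $x\to\pm1$ and gives no information at the endpoints, where a different mechanism is needed. The route I would take is through Chebyshev's extremal property: if $|p(x)|\leq 1$ on $[-1,1]$ then $|p(x)|\leq |T_d(x)|$ for all $|x|\geq 1$, where $T_d$ is the degree-$d$ Chebyshev polynomial; evaluating at $x=1+\varepsilon$ and letting $\varepsilon\to 0$ yields $|p'(1)|\leq T_d'(1)=d^2$ (and likewise at $x=-1$), and the bound $|p'(x_0)|\leq d^2$ at an interior point $x_0$ then follows by applying the same endpoint comparison to $p$ restricted to a suitably rescaled subinterval, which is the substance of V.~A.~Markov's original computation. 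Since the surrounding text treats the whole inequality as a cited fact (``Fact~4 in \cite{Paturi92}''), it is equally legitimate simply to quote the classical Bernstein and Markov inequalities and assemble them as in the display above; the only real work, if one insists on a self-contained account, lies in formalizing this endpoint-to-interior reduction for Markov.
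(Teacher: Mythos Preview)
Your proposal is correct. Note that the paper does not prove this statement at all: it is listed among the preliminaries as a cited fact from \cite{Paturi92} and is used as a black box in the proof of Lemma~\ref{lemma:deg_lowerbound}. Your derivation via the classical Bernstein and Markov inequalities is exactly the standard way this inequality is obtained (and is essentially what Paturi's Fact~4 packages), so there is nothing to compare against in the paper itself.
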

\begin{Fact}[Theorem 12 in \cite{BHMT02}]\label{amplitude_estimation}
    Given an unknown input $x \in \B^n$ and a quantum query oracle to $x$. Suppose $a = |x|/n$,
    there exists a quantum approximate counting algorithm outputting an estimate $\Tilde{a}$ such that \begin{equation*}
|\Tilde{a}-a| \le 2\pi \frac{\sqrt{a(1-a)}}{t}+\frac{\pi^2}{t^2},
\end{equation*}
with probability at least $\frac{8}{\pi^2}$ using $t$ queries.
\end{Fact}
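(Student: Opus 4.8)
The plan is to reconstruct the amplitude-estimation routine of \cite{BHMT02} in the query setting and track its error explicitly. First I would build the amplitude-bearing unitary: let $\mathcal{A}$ act on $\lceil\log_2 n\rceil+1$ qubits by $\mathcal{A}\ket{0}=\frac{1}{\sqrt{n}}\sum_{i\in[n]}\ket{i}\ket{x_i}$, obtained from a uniform-superposition gate followed by a single call to $O_x$, and call a basis state ``good'' when its second register equals $1$. Then $\mathcal{A}\ket{0}=\sqrt{a}\,\ket{\psi_{\mathrm{good}}}+\sqrt{1-a}\,\ket{\psi_{\mathrm{bad}}}$ with $a=|x|/n$, and writing $a=\sin^2\theta_a$ with $\theta_a\in[0,\pi/2]$ casts everything in the standard two-dimensional picture. (The boundary cases $|x|\in\{0,n\}$ degenerate and are handled trivially.)

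Next I would introduce the Grover iterate $Q=-\mathcal{A}S_0\mathcal{A}^{-1}S_\chi$, where $S_\chi$ places a phase $-1$ on good states and $S_0$ a phase $-1$ on $\ket{0}$; note $S_\chi$ is free of queries since the good predicate reads the already-computed second register, so each application of $Q$ costs only two calls to $O_x$. The structural facts I would verify are that $Q$ preserves $\mathrm{span}\{\ket{\psi_{\mathrm{good}}},\ket{\psi_{\mathrm{bad}}}\}$ and acts there as a rotation by $2\theta_a$, so its relevant eigenvalues are $e^{\pm 2i\theta_a}$ with eigenvectors $\ket{\psi_{\pm}}$, and that $\mathcal{A}\ket{0}$ is an equal-weight superposition of $\ket{\psi_+}$ and $\ket{\psi_-}$. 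I would then run quantum phase estimation on $Q$ with $t$ grid points (applying $Q^{2^j}$ controlled on $\log_2 t$ ancilla qubits) with the eigenvector register initialised to $\mathcal{A}\ket{0}$, obtaining an outcome $\tilde y\in\{0,\dots,t-1\}$; I set $\tilde\theta_a=\pi\tilde y/t$ and output $\tilde a=\sin^2\tilde\theta_a$. Since the two eigenphases are $2\theta_a$ and $2\pi-2\theta_a$ and $\sin^2$ is invariant under $\theta\mapsto\pi-\theta$, $\tilde a$ is a legitimate estimate regardless of which branch the measurement collapses onto.

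For the error bound I would invoke the standard phase-estimation guarantee: with probability at least $8/\pi^2$ the outcome satisfies $|\tilde\theta_a-\theta_a|\le\pi/t$ (after using the reflection symmetry, in the case the measurement lands on the $2\pi-2\theta_a$ branch). Writing $\tilde\theta_a=\theta_a+\varepsilon$ with $|\varepsilon|\le\pi/t$ and using the identity $\sin^2\tilde\theta_a-\sin^2\theta_a=\sin(2\theta_a+\varepsilon)\sin\varepsilon$, one expands $\sin(2\theta_a+\varepsilon)$ and bounds $|\sin\varepsilon|\le|\varepsilon|\le\pi/t$ together with $|\sin 2\theta_a|=2\sin\theta_a\cos\theta_a=2\sqrt{a(1-a)}$, which yields $|\tilde a-a|\le\frac{\pi}{t}\cdot 2\sqrt{a(1-a)}+\frac{\pi^2}{t^2}$, exactly the claimed inequality. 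Since phase estimation performs $t-1$ applications of $Q$ plus one call to $\mathcal{A}$, the algorithm uses $\Theta(t)$ queries to $O_x$ (with the convention of \cite{BHMT02} counting $Q$-evaluations, exactly $t$ of them).

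The main obstacle is the phase-estimation probability bound rather than the algebra above. One must establish that $\mathcal{A}\ket{0}$ lands exactly in the two-dimensional $Q$-eigenspace and splits evenly between the two eigenvectors---a short but careful spectral computation of $Q$---and, more importantly, the classical lower bound that phase estimation on $t$ grid points returns a best $t$-point rational approximation of the unknown phase with probability at least $8/\pi^2$; this amounts to lower-bounding a Dirichlet-kernel-type sum over the two nearest grid outcomes and is precisely where the constant $8/\pi^2$ appears. Once these are in hand, the error estimate is the elementary trigonometric manipulation of the third step.
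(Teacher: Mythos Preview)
The paper does not prove this statement at all: it is recorded as a \emph{Fact} cited from \cite{BHMT02} (Theorem~12 there) and invoked as a black box in the proof of Lemma~\ref{lemma:Q_tight_bound}. So there is no ``paper's own proof'' to compare against.

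That said, your proposal is a faithful reconstruction of the original \cite{BHMT02} argument: build the amplitude-loading unitary $\mathcal{A}$, form the Grover iterate $Q$, observe that $Q$ acts as a $2\theta_a$-rotation on the two-dimensional good/bad subspace with eigenphases $e^{\pm 2i\theta_a}$, run phase estimation with $t$ grid points to recover $\tilde\theta_a$ within $\pi/t$ with probability at least $8/\pi^2$, and then convert via the identity $\sin^2\tilde\theta_a-\sin^2\theta_a=\sin(2\theta_a+\varepsilon)\sin\varepsilon$. The query accounting and the trigonometric bound are correct. The only substantive work you flag---verifying the $8/\pi^2$ success probability of phase estimation on the nearest two grid points---is exactly where \cite{BHMT02} spends its effort (their Lemma~7), so your identification of the obstacle is accurate. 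For the purposes of the present paper, however, none of this is needed: citing the result suffices.
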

\begin{Fact}[Corollary 2.3 in \cite{Levenshtein95}]\label{lemma:krav}
For any $0 \le j \le T$, the Kravchuk polynomial is defined as
\begin{equation*}
K_j(t,T) = \sum_{i=0}^j\binom{t}{i}\binom{T-t}{j-i}(-1)^{i}.
\end{equation*}
Then for any $0 \le l,m \le T$, there exists the following orthogonality property:
\begin{equation*}
\sum_{t = 0}^T \binom{T}{t}K_l(t,T)K_m(t,T) = 2^T\binom{T}{l}\delta_{l,m},
\end{equation*}
where 
\begin{equation*}
\delta_{l,m} = 
\begin{cases}
1,&\text{ if }l=m,\\
0,&\text{ if }l \neq m.
\end{cases}
\end{equation*}
\end{Fact}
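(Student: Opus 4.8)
The plan is to prove this orthogonality relation by the standard generating-function method, which collapses the double sum over $t$ into a single application of the binomial theorem.

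First I would record the generating-function identity for the Kravchuk polynomials: for every fixed $t$ with $0\le t\le T$,
\[
\sum_{j=0}^T K_j(t,T)\, z^j \;=\; (1-z)^t (1+z)^{T-t}.
\]
This is immediate from the definition: writing $(1-z)^t = \sum_i \binom{t}{i}(-z)^i$ and $(1+z)^{T-t} = \sum_{j'} \binom{T-t}{j'} z^{j'}$, multiplying, and collecting the coefficient of $z^j$ gives exactly $\sum_i \binom{t}{i}(-1)^i \binom{T-t}{j-i} = K_j(t,T)$. The conventions $\binom{t}{i}=0$ for $i>t$ and $\binom{n}{k}=0$ for $k<0$ make all index ranges automatic, and the right-hand side is a polynomial in $z$ of degree at most $T$, so only $0\le j\le T$ contribute.

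Next I would introduce two formal variables $y,z$ and consider
\[
\Phi(y,z) \;:=\; \sum_{t=0}^T \binom{T}{t}\Bigl(\sum_{l=0}^T K_l(t,T)\, y^l\Bigr)\Bigl(\sum_{m=0}^T K_m(t,T)\, z^m\Bigr).
\]
Expanding the product of the two inner sums, $\Phi(y,z) = \sum_{l,m}\bigl(\sum_{t=0}^T \binom{T}{t} K_l(t,T) K_m(t,T)\bigr)\, y^l z^m$, so the quantities of interest are precisely the coefficients of $\Phi$. Substituting the generating-function identity into each inner sum gives
\[
\Phi(y,z) \;=\; \sum_{t=0}^T \binom{T}{t}\,\bigl[(1-y)(1-z)\bigr]^t\,\bigl[(1+y)(1+z)\bigr]^{T-t} \;=\; \Bigl[(1-y)(1-z) + (1+y)(1+z)\Bigr]^T
\]
by the binomial theorem. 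A one-line simplification yields $(1-y)(1-z)+(1+y)(1+z) = 2 + 2yz$, hence $\Phi(y,z) = 2^T(1+yz)^T = \sum_{l=0}^T 2^T \binom{T}{l}\, y^l z^l$.

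Finally I would compare the coefficient of $y^l z^m$ in the two expressions for $\Phi$: on one side it is $\sum_{t=0}^T \binom{T}{t} K_l(t,T) K_m(t,T)$, and on the other it is $2^T\binom{T}{l}$ when $l=m$ and $0$ otherwise, i.e. $2^T\binom{T}{l}\delta_{l,m}$, which is exactly the claimed identity. There is no real obstacle here; the only thing to be careful about is the bookkeeping of the binomial-coefficient conventions so that every sum ranges over the stated index set, since everything in sight is a polynomial identity in $y$ and $z$.
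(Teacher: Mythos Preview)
Your generating-function argument is correct and is the standard proof of Kravchuk orthogonality. Note, however, that the paper does not actually prove this statement: it is recorded as a \emph{Fact} with a citation to \cite{Levenshtein95} and no proof is given in the text. So there is nothing to compare against beyond saying that your proposal supplies a clean self-contained derivation where the paper simply invokes the literature.
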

\begin{Fact}[Theorem 1 in \cite{Iyer2021tight}]\label{fact:fourier_growth}
If symmetric function $f:\inbrace{-1,1}^n \rightarrow [-1,1]$ has degree $d$, then
\begin{equation*}
\sum_{S\subseteq [n]:|S| = l}|\hat{f}(S)| \le \frac{d^l}{l!},
\end{equation*}
where $\hat{f}(S)$ is the Fourier coefficients of $f$ for any $S \subseteq [n]$.
\end{Fact}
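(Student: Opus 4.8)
\emph{The plan} is to reduce the statement to a one-variable extremal problem for polynomials and then invoke a classical coefficient inequality; throughout we may assume $1\le l\le d$, the statement being trivial otherwise. Since $f$ is symmetric, $\hat f(S)$ depends only on $|S|$, so writing $c_l=\hat f(S)$ for $|S|=l$ we have $\sum_{|S|=l}|\hat f(S)|=\binom{n}{l}|c_l|$. Now, for $y\in[-1,1]$, take the $y$-biased product distribution, i.e.\ independent $\{-1,1\}$-valued $X_1,\dots,X_n$ with $\E[X_i]=y$, and set
\[
R(y):=\E\big[f(X_1,\dots,X_n)\big]=\sum_{l=0}^{d}c_l\binom{n}{l}y^{l},
\]
the second equality following from the Fourier expansion of $f$ and $\E[\prod_{i\in S}X_i]=y^{|S|}$. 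Because $R(y)$ is a convex combination of the values $\{f(x):x\in\{-1,1\}^n\}\subseteq[-1,1]$ (the weights $\prod_i\frac{1+yx_i}{2}$ are nonnegative and sum to $1$), we get $|R(y)|\le1$ for $y\in[-1,1]$. Writing $[y^l]R$ for the coefficient of $y^l$ in $R$, this reduces the target to the purely univariate claim: \emph{if $p$ is a real polynomial with $\deg p\le d$ and $\sup_{[-1,1]}|p|\le1$, then $|[y^l]p|\le d^l/l!$.}

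For the univariate claim I would invoke the classical theorem of V.~A.~Markov on coefficients of bounded polynomials: among degree-$\le d$ polynomials bounded by $1$ on $[-1,1]$, the modulus of the $y^l$-coefficient is maximized by $\pm T_d$ if $d\equiv l\bmod2$, and by $\pm T_{d-1}$ otherwise, where $T_m$ is the $m$-th Chebyshev polynomial of the first kind. It then remains to estimate the Chebyshev coefficients: substituting $l\equiv d\bmod2$ into $T_d(y)=\frac d2\sum_{j\ge0}\frac{(-1)^j}{d-j}\binom{d-j}{j}(2y)^{d-2j}$ gives
\[
\big|[y^l]T_d\big|=\frac1{l!}\,d\,(d+l-2)(d+l-4)\cdots(d-l+2),
\]
a product of $l$ positive integers whose $l-1$ non-leading factors are symmetric about $d$; pairing each factor $d+a$ with $d-a$ and using $(d+a)(d-a)\le d^2$ bounds the product by $d^l$, so $|[y^l]T_d|\le d^l/l!$, and $|[y^l]T_{d-1}|\le(d-1)^l/l!\le d^l/l!$ as well. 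Chaining this with the reduction above yields $\sum_{|S|=l}|\hat f(S)|\le d^l/l!$.

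The only non-elementary step is identifying the Chebyshev polynomial as the coefficient-extremal polynomial; this is precisely where the equioscillation/sign structure must be used, and it is not recoverable by merely expanding $p$ in the Chebyshev basis and applying the triangle inequality (that route loses a $\Theta(d)$ factor, since $\Theta(d)$ basis polynomials each contribute up to $\approx d^l/l!$). If one only wants the claim up to an absolute constant for each fixed $l$, namely $|[y^l]p|=O(\sqrt{l}\cdot d^l/l!)$, there is a fully self-contained substitute: the Bernstein--Walsh inequality gives $|p(z)|\le(\rho+\sqrt{1+\rho^2})^d\le e^{\rho d}$ on the circle $|z|=\rho$, so Cauchy's integral formula on $|z|=l/d$ yields $|[y^l]p|\le e^l(d/l)^l=(ed/l)^l\le\sqrt{2\pi l}\cdot d^l/l!$ by Stirling. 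The reduction is routine; all the real work sits in the one-variable estimate.
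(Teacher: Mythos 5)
The paper does not prove this statement at all: it is imported verbatim as Theorem~1 of the cited reference \cite{Iyer2021tight}, so there is no in-paper argument to compare against. Your proposal is, as far as I can check, a correct self-contained derivation, and the reduction is the natural one: symmetry forces all level-$l$ coefficients to coincide, so the level-$l$ mass is $\binom{n}{l}|c_l|$; the $y$-biased expectation $R(y)=\E[f(X)]=\sum_l c_l\binom{n}{l}y^l$ is a degree-$\le d$ univariate polynomial bounded by $1$ on $[-1,1]$ (convex combination of values of $f$), whose $y^l$-coefficient is exactly the quantity to be bounded; and V.~A.~Markov's classical theorem on coefficients of polynomials bounded on $[-1,1]$ reduces everything to the Chebyshev coefficient, for which your identity $|[y^l]T_d|=\frac{1}{l!}\,d\,(d+l-2)(d+l-4)\cdots(d-l+2)$ and the pairing bound $(d+a)(d-a)\le d^2$ give exactly $d^l/l!$ (I checked the identity, e.g.\ it returns the leading coefficient $2^{d-1}$ at $l=d$). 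The only external ingredient is Markov's coefficient theorem, which is correctly stated and legitimately citable, and you rightly flag that a naive Chebyshev-basis expansion with the triangle inequality would lose a factor of order $d$. One small quibble with the closing aside (not used in the main argument): the inequality $(ed/l)^l\le\sqrt{2\pi l}\,d^l/l!$ is backwards by the Stirling correction factor $e^{\theta/(12l)}$, since $l!\ge\sqrt{2\pi l}(l/e)^l$; the Bernstein--Walsh/Cauchy substitute therefore gives $O(\sqrt{l})\cdot d^l/l!$ only up to such a constant, which is what you intended but should be phrased with the constant absorbed.
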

\begin{Fact}[Lemma 2 in \cite{buhrman2002complexity}]\label{symmetrization}
If $p:\mathbb{R}^n \rightarrow \mathbb{R}$ is a multilinear polynomial, then there exists a single-variate polynomial $q: \mathbb{R} \rightarrow \mathbb{R}$, of degree at most the degree of $p$, such that $p^{\text{sym}}\inparen{x} = q\inparen{|x|}$ for all $x\in \inbrace{0,1}^n$, where
\begin{equation*}
    p^{\text{sym}}(x) = \frac{\sum_{\pi \in S_n}p\inparen{\pi\inparen{x}}}{n!},
\end{equation*}
where $S_n$ is the symmetric group and $\pi(x) = \inparen{x_{\pi(1)},...,x_{\pi(n)}}$ for any $\pi \in S_n$.
\end{Fact}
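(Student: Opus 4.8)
\textbf{Proof proposal for Theorem~\ref{th:exprelation}.}

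The plan is to exhibit an explicit partial symmetric Boolean function on $n$ bits that is trivial for bounded-error randomized algorithms yet requires $\Omega(n)$ exact quantum queries. The natural candidate is a ``gap'' function that separates two Hamming weights which are far apart in absolute terms but close enough (in the relevant sense for exact computation) to force a large exact quantum cost. Concretely, I would take $f = \kl$ with $k = 0$ and $l = n/2$ (or, say, $k=0$ and $l = \lfloor n/3 \rfloor$), so the promise is $|x| \in \{0, n/2\}$ and the goal is to decide which. Randomly it is nearly free: a single query already distinguishes the all-zeros string from a string of weight $n/2$ with constant bias, and $O(1)$ queries boost this to error below $1/3$; so $R(f) = O(1)$. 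The whole content of the theorem is therefore the lower bound $Q_E(f) = \Omega(n)$ for this $f$.

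For the exact quantum lower bound I would invoke Fact~\ref{beals}: $Q_E(f) \ge \deg(f)/2$, so it suffices to show $\deg(\kl) = \Omega(n)$ for the chosen parameters. By the symmetrization lemma (Fact~\ref{symmetrization}), any multilinear polynomial $p$ representing $f$ with $0 \le p \le 1$ on $\B^n$ gives a single-variate real polynomial $q$ of degree at most $\deg(p)$ with $q(|x|) = f(|x|)$ on the promise and $0 \le q(t) \le 1$ for all integers $t \in \{0,1,\dots,n\}$. Thus $q(0) = 0$, $q(n/2) = 1$, and $q$ stays in $[0,1]$ on all the integer grid points in between. A polynomial that is $0$ at one grid point and $1$ at another grid point $n/2$ steps away, while remaining bounded in $[0,1]$ on every intervening integer, must have large degree: its derivative somewhere between $0$ and $n/2$ is at least $\approx 2/n$ in a region where the Markov-type / Paturi-type inequalities (Fact~\ref{face:bmi}) force the degree up. More carefully, I would apply the standard argument that a degree-$d$ polynomial bounded by $1$ on $\{0,\dots,n\}$ and changing by $\Omega(1)$ between two points at distance $\Theta(n)$ satisfies $d = \Omega(\sqrt{n \cdot n}) = \Omega(n)$ — this is exactly the regime of Paturi's bound, and for $k=0$, $l=\Theta(n)$ one gets $\adeg(\kl) = \Theta\bigl(\sqrt{(n-k)l}/(l-k)\bigr) = \Theta(1)$ for the \emph{approximate} degree, so I must be careful: approximate degree is small, but \emph{exact} degree is what we need, and exactness is what makes it large. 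The cleanest route is: $q$ must take the value $0$ exactly at $t=0$ and $1$ exactly at $t=l$, but between consecutive integers in $\{1,\dots,l-1\}$ it can do anything in $[0,1]$; however a polynomial of degree $d < l$ that hits $0$ and $1$ at integer endpoints distance $l$ apart with $\sup$-norm $1$ on $[0,l]$ forces $d \ge$ something like $\sqrt{l}$ — which is not $\Omega(n)$.

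So the parameter choice matters and I would instead use $\kl$ with $l = k+1$ for $k = \Theta(n)$, e.g. $k = \lfloor n/2 \rfloor$, $l = k+1$: the function is $0$ on weight $k$ and $1$ on weight $k+1$, a ``gap-$1$ threshold at the middle.'' Here $R(f) = O(1)$ still holds trivially (sample $O(1)$ coordinates; actually one must check: distinguishing weight $n/2$ from weight $n/2+1$ by sampling is \emph{not} $O(1)$ — the statistical distance per sample is $\Theta(1/n)$). Hence this parameter choice fails the randomized upper bound. The real tension in the theorem is that we need \emph{simultaneously} $R(f)=O(1)$ and $Q_E(f)=\Omega(n)$, and the function must be partial to escape the polynomial relation between $R$ and $Q_E$ for total functions. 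The resolution, which I expect the authors use, is a function whose promise set consists of weights $0$ and $l$ with $l = \Theta(n)$ — giving $R(f) = O(1)$ — together with \emph{additional} promised weights that force the symmetrized polynomial to oscillate. Concretely: promise $|x| \in \{0\} \cup \{$ a dense set of weights around the middle with alternating values $\}$, arranged so that $f$ is still symmetric, $R(f)$ stays $O(1)$ because $|x|=0$ is distinguishable from everything else with one query and the middle weights are mutually indistinguishable only in the bounded-error sense we are \emph{not} using — wait, this again breaks $R$.

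Let me state the plan at the right level of abstraction and flag the obstacle honestly. The plan: (i) choose $f$ to be a partial symmetric function on a promise $D$ consisting of inputs of weight $0$, weight $l$, and weight $n$ (or similar extreme/middle weights) with $f = \kl$-like values, tuned so that $R(f) = O(1)$ via a trivial sampling distinguisher exploiting that $0$ and $n$ are perfectly distinguishable from a middle weight by one query; (ii) show $Q_E(f) = \Omega(n)$ by symmetrization plus the observation that the single-variate polynomial $q$ must \emph{exactly} interpolate prescribed $0/1$ values at three widely-separated grid points while staying in $[0,1]$ on the full integer grid $\{0,\dots,n\}$, and then apply a Markov–Bernstein/Paturi-type inequality (Fact~\ref{face:bmi}) to lower bound $\deg q$; (iii) conclude via Fact~\ref{beals}. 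The main obstacle is pinning down the exact promise set $D$ so that both bounds hold at once — the $R(f)=O(1)$ side wants the promised weights to be near the extremes $0$ or $n$ (easy to distinguish), while the $Q_E(f)=\Omega(n)$ side wants the polynomial to be pinned exactly at values that are hard for low-degree polynomials to achieve, which typically wants many constraints spread across the grid. I expect the right construction restricts $D$ to weights $\{0, l, n\}$ with $l$ chosen near $0$ (say $l = n^{1-o(1)}$ or even $l$ a constant), using that $\deg$ of ``is $|x| = 0$ vs. $|x| = l$ vs. $|x| = n$'' forces a polynomial vanishing at $0$ and $n$ (a factor $t(n-t)$, degree $2$) but also hitting a prescribed value at $l$ while staying bounded — and the \emph{exactness} at $|x|=n$ combined with boundedness on all of $\{0,\dots,n\}$ is what blows up the degree, since a polynomial that is $0$ at both endpoints of $\{0,\dots,n\}$ and $1$ somewhere in the interior with $\sup$-norm $1$ needs degree $\Omega(\sqrt{n})$ only — still not $\Omega(n)$, so ultimately the function must force the polynomial to change values between \emph{many} consecutive integers, i.e. $D$ must contain a long run of consecutive weights with alternating $f$-values near one end of the grid where $R$ is still $O(1)$ because that end is weight $0$ and one query detects ``weight $>0$.'' That run of $t$ consecutive alternating values forces $\deg \ge t$, and choosing $t = \Omega(n)$ while keeping those weights all within, say, $[0, n/3]$ (so still $O(1)$-distinguishable from the weight-$0$ branch? no — alternating values among $\{1,2,\dots,t\}$ are \emph{not} mutually $O(1)$-distinguishable). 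I will defer the precise reconciliation to the detailed proof; the honest summary is that the hard part is exactly this simultaneous constraint, and I believe it is handled by a clever choice where the ``easy for $R$'' structure and the ``hard for $\deg$'' structure live on disjoint parts of the weight spectrum, connected by symmetry.
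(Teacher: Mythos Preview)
First, a mismatch: the statement you were given is Fact~\ref{symmetrization} (the symmetrization lemma), which the paper does not prove --- it is quoted from \cite{buhrman2002complexity}. Your proposal addresses Theorem~\ref{th:exprelation} instead, so I will compare against the paper's proof of that.

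Your search never lands on a working construction, and the reason is a missing idea. The paper takes
\[
f(x) = \begin{cases} 0 & \text{if } |x| \le n/2, \\ 1 & \text{if } |x| = n, \end{cases}
\]
and both bounds are one line. For $Q_E$: the symmetrized univariate polynomial $q$ must vanish at the $n/2+1$ integers $0,1,\dots,n/2$ yet equal $1$ at $n$, so $q$ is nonzero with at least $n/2+1$ roots, whence $\deg(f)\ge n/2+1$ and $Q_E(f)\ge \deg(f)/2=\Omega(n)$ by Fact~\ref{beals}. For $R$: distinguishing weight $\le n/2$ from weight $n$ is trivial with $O(1)$ samples (if $|x|=n$ every query returns $1$; if $|x|\le n/2$ a single query returns $0$ with probability $\ge 1/2$). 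The paper itself just quotes $Q(f)=O(1)$ from \Cref{eq:Q_upper_bound} and then $R(f)=O(Q(f)^2)$.

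The idea you are missing is that \emph{root-counting}, not Markov/Paturi, is what gives the $\Omega(n)$ exact-degree lower bound. You kept trying two- or three-weight promises $\kl$ and Paturi-type inequalities, but those control $\adeg$, which is $\Theta(1)$ here --- indeed the whole point of the separation is that $\adeg(f)=O(1)$ while $\deg(f)=\Omega(n)$. To force $\deg$ large you need the polynomial to be \emph{exactly} $0$ (or $1$) on a linear-in-$n$ block of consecutive integers, i.e., a large block of consecutive promised weights all with the \emph{same} $f$-value, not alternating. Placing the opposite value at the single extreme weight $n$ keeps all $0$-inputs statistically far from the $1$-inputs, so $R$ stays $O(1)$. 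Your final paragraph was circling this (``a long run of consecutive weights \dots near one end''), but the word ``alternating'' is precisely the wrong choice; constant-on-a-block is what makes both halves work at once.
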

\begin{Fact}[Parseval’s identity]\label{lemma:parseval}
For a function $g:[-1,1]\rightarrow [-1,1]$, if $g(x) = \sum_{i = 0}^{T} a_i T_i(x)$ for any $x \in [-1,1]$, where $T_i$ is the Chebyshev polynomial of degree $i$, then
\begin{equation*}
\int_{-1}^1 \frac{1}{\sqrt{1-x^2}}\inparen{g(x)}^2 dx = \pi a_0^2  + \frac{\pi}{2}\sum_{i=1}^{T} a_i^2.
\end{equation*}
\end{Fact}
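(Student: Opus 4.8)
\textbf{Proof proposal for \Cref{lemma:parseval}.}
The plan is to reduce the weighted integral to a trigonometric orthogonality computation via the standard substitution $x = \cos\theta$. First I would set $x = \cos\theta$ with $\theta$ ranging over $[0,\pi]$, so that $dx = -\sin\theta\, d\theta$ and, crucially, $\sqrt{1-x^2} = \sin\theta \ge 0$ on this range. This converts the weight $\tfrac{1}{\sqrt{1-x^2}}\,dx$ into simply $d\theta$ (after flipping the limits of integration to absorb the sign), so the left-hand side becomes $\int_0^\pi \bigl(g(\cos\theta)\bigr)^2\, d\theta$.

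Next I would invoke the defining identity of the Chebyshev polynomials, $T_i(\cos\theta) = \cos(i\theta)$, which turns the hypothesis $g(x) = \sum_{i=0}^T a_i T_i(x)$ into $g(\cos\theta) = \sum_{i=0}^T a_i \cos(i\theta)$. Squaring and expanding gives a finite double sum $\sum_{i=0}^T\sum_{j=0}^T a_i a_j \cos(i\theta)\cos(j\theta)$, and I would integrate term by term, which is legitimate since the sum is finite.

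The computation then rests on the orthogonality relations for cosines on $[0,\pi]$: $\int_0^\pi \cos(i\theta)\cos(j\theta)\, d\theta = 0$ when $i \neq j$, equals $\pi$ when $i = j = 0$, and equals $\pi/2$ when $i = j \ge 1$ (all of which follow from the product-to-sum formula $\cos(i\theta)\cos(j\theta) = \tfrac12[\cos((i+j)\theta)+\cos((i-j)\theta)]$ and $\int_0^\pi \cos(m\theta)\, d\theta = 0$ for $m \ge 1$). Collecting surviving diagonal terms yields exactly $\pi a_0^2 + \tfrac{\pi}{2}\sum_{i=1}^T a_i^2$, as claimed.

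There is no genuine obstacle here — the identity is a classical fact, and the only point requiring a moment of care is the sign and range handling in the substitution (ensuring $\sqrt{1-x^2}$ is the nonnegative root $\sin\theta$ and that reversing the integration limits cancels the minus sign from $dx$); everything else is a short and routine orthogonality argument.
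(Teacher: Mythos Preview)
Your proposal is correct and is essentially the same approach as the paper's: both expand $g(x)^2$ and use the orthogonality of the Chebyshev polynomials with respect to the weight $1/\sqrt{1-x^2}$. The only cosmetic difference is that the paper quotes the orthogonality relation $\int_{-1}^1 \frac{T_n(x)T_m(x)}{\sqrt{1-x^2}}\,dx \in \{0,\pi,\pi/2\}$ from a reference, whereas you derive it yourself via the substitution $x=\cos\theta$ and the cosine orthogonality on $[0,\pi]$.
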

\begin{proof}
By Equations (4.10), (4.11a) and (4.11b) of \cite{mason2002chebyshev} (Page 84), we have
\begin{equation*}
\int_{-1}^{1} \frac{1}{\sqrt{1-x^2}}T_n(x) T_m(x) \, dx = \begin{cases} 0, &\mbox{if } n \neq m,\\ \pi, & \mbox{if } n = m = 0, \\ \frac{\pi}{2}, & \mbox{if } n = m \neq 0. \end{cases}
\end{equation*}
Thus,
\begin{equation*}
\begin{aligned}
    \int_{-1}^1 \frac{1}{\sqrt{1-x^2}}\inparen{g(x)}^2 dx &= \int_{-1}^1 \frac{1}{\sqrt{1-x^2}}\inparen{\sum_{i = 0}^ {T}a_i T_{i}(x)}^2 dx  \\
    &=  \pi a_0^2  + \frac{\pi}{2}\sum_{i=1}^{T} a_i^2.
\end{aligned}
\end{equation*}
\end{proof}
\begin{claim}\label{claim1e}
If $x > 1$, then $(1-\frac{1}{x})^{x-1} \ge \frac{1}{e}$.
\end{claim}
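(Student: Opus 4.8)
The plan is to reduce the claimed inequality to a standard one-variable logarithmic estimate. Since both sides are positive for $x>1$, it suffices to take logarithms and show that
\begin{equation*}
(x-1)\ln\!\left(1-\frac1x\right)\ge -1
\end{equation*}
for all $x>1$.

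First I would invoke the elementary bound $\ln y \ge 1 - \tfrac1y$, valid for every $y>0$ (this follows from concavity of $\ln$, or from the fact that $h(y)=\ln y - 1 + 1/y$ has $h(1)=0$ and $h'(y)=\tfrac1y-\tfrac1{y^2}$ which is negative on $(0,1)$ and positive on $(1,\infty)$). Applying it with $y = 1-\tfrac1x = \tfrac{x-1}{x}\in(0,1)$ gives
\begin{equation*}
\ln\!\left(1-\frac1x\right)\ge 1 - \frac{x}{x-1} = \frac{-1}{x-1}.
\end{equation*}
Multiplying both sides by $x-1>0$ yields exactly $(x-1)\ln\!\left(1-\tfrac1x\right)\ge -1$, and exponentiating recovers $\left(1-\tfrac1x\right)^{x-1}\ge e^{-1}$.

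Alternatively, if one prefers to avoid quoting $\ln y \ge 1-\tfrac1y$, I would substitute $t = 1/x\in(0,1)$, so $x-1 = (1-t)/t$, and show $g(t) := (1-t)\ln(1-t) + t \ge 0$ on $[0,1)$ by noting $g(0)=0$ and $g'(t) = -\ln(1-t)\ge 0$. Either route is a couple of lines, so there is no real obstacle here; the only point to be careful about is that $x>1$ ensures $1-\tfrac1x\in(0,1)$ and $x-1>0$, so that taking logs and multiplying preserve the inequality direction.
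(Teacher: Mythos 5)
Your proof is correct, but it takes a genuinely different route from the paper's. The paper defines $f(x)=\left(\frac{x}{x+1}\right)^x$, shows via the derivative of $x\ln\frac{x}{x+1}$ that $f$ is decreasing on $(0,\infty)$, and concludes $f(x)\ge\lim_{x\to\infty}f(x)=\frac{1}{e}$, finally substituting $x\mapsto x-1$. You instead take logarithms and apply the single elementary estimate $\ln y\ge 1-\frac1y$ at $y=1-\frac1x$, which yields $(x-1)\ln\left(1-\frac1x\right)\ge -1$ in one line. Your argument is the more economical of the two: it avoids both the monotonicity analysis and the evaluation of the limit at infinity (which the paper asserts without justification, and with a typo, writing $\lim_{n\to\infty}$ of an expression in $x$). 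It is worth noting that the two proofs are cousins at heart: the paper's derivative computation implicitly uses $\ln(1-u)<-u$, which is the same logarithm inequality you invoke, just deployed to prove monotonicity rather than the bound directly. What the paper's approach buys is a slightly stronger intermediate fact (the map $x\mapsto(1-\frac{1}{x+1})^x$ is decreasing with limit $\frac1e$), but that extra information is not needed for the claim, so your direct route is preferable here. Your fallback substitution $t=1/x$ with $g(t)=(1-t)\ln(1-t)+t$ is also correct and equally self-contained.
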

\begin{proof}
    Let $f(x)=(\frac{x}{x+1})^x = e^{x\ln{\frac{x}{x+1}}}$. Let $g(x) = x \ln{\frac{x}{x+1}}$. If $x > 0$, then $g'(x) = \ln{\frac{x}{x+1}}+(\frac{1}{x}-\frac{1}{x+1})x = \ln(1-\frac{1}{x+1})+\frac{1}{x+1} < 0$. Thus, if $x>0$, $f(x)$ is a decreasing function and $f(x) = (1-\frac{1}{x+1})^x \ge \lim\limits_{n \to \infty}(1-\frac{1}{x+1})^x = \frac{1}{e}$. Let $h(x) = f(x-1)$. Then $h(x) = (1-\frac{1}{x})^{x-1} \ge \frac{1}{e}$ for $x > 1$.
\end{proof}

\begin{claim}\label{claim_1epT}
If $0 < p < 1$ and $T$ is an integer satisfying that $1 \le T \le \frac{1}{p}$, then
\begin{equation*}
1-(1-p)^T \ge \frac{1}{e}pT.
\end{equation*}
\end{claim}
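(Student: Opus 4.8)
The plan is to reduce the claim to the elementary inequality already established in Claim \ref{claim1e}. The starting point is the geometric-series identity
\begin{equation*}
1-(1-p)^T = p\sum_{i=0}^{T-1}(1-p)^i .
\end{equation*}
Since $0<p<1$, the terms $(1-p)^i$ are non-increasing in $i$, so each of the $T$ summands is at least the last one, $(1-p)^{T-1}$. This yields the clean lower bound
\begin{equation*}
1-(1-p)^T \;\ge\; pT\,(1-p)^{T-1}.
\end{equation*}

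The second step is to show $(1-p)^{T-1}\ge 1/e$, which is exactly where the hypothesis $T\le 1/p$ enters. From $p\le 1/T$ we get $1-p\ge 1-1/T\ge 0$, and since the exponent $T-1$ is non-negative this gives $(1-p)^{T-1}\ge (1-1/T)^{T-1}$. For $T\ge 2$, applying Claim \ref{claim1e} with $x=T$ gives $(1-1/T)^{T-1}\ge 1/e$; for $T=1$ the quantity $(1-p)^{T-1}$ equals $1\ge 1/e$ trivially (alternatively, $T=1$ can be checked directly since $1-(1-p)=p\ge p/e$). Combining the two steps,
\begin{equation*}
1-(1-p)^T \;\ge\; pT\,(1-p)^{T-1} \;\ge\; \frac{1}{e}\,pT,
\end{equation*}
which is the desired inequality.

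I do not expect any real obstacle here; the only point requiring a moment's care is the case split at $T=1$, where Claim \ref{claim1e} does not formally apply (it is stated for $x>1$) but the bound holds trivially. Everything else is routine manipulation of the geometric series and monotonicity of $(1-p)^i$.
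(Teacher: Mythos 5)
Your proof is correct and follows essentially the same route as the paper: both arguments reduce to the intermediate bound $1-(1-p)^T \ge pT(1-p)^{T-1}$ (you via the geometric series, the paper via the difference-of-powers inequality) and then invoke Claim~\ref{claim1e} to get $(1-p)^{T-1}\ge 1/e$, with the $T=1$ case handled separately. The only cosmetic difference is that you apply Claim~\ref{claim1e} at $x=T$ after using $1-p\ge 1-1/T$, whereas the paper applies it at $x=1/p$ after using $(1-p)^{T-1}\ge(1-p)^{1/p-1}$; both are valid.
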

\begin{proof}
For $T=1$, it is trivial; for $1 < T \le \frac{1}{p}$, we have
\begin{equation*}
\begin{aligned}
1-(1-p)^T &= (1-p+p)^T-(1-p)^T \\
&\ge (1-p)^{T-1}pT\\
&\ge \inparen{1-p}^{\frac{1}{p}-1}pT \\
&\ge \frac{1}{e}pT,
\end{aligned}
\end{equation*}
where the last inequality comes from Claim \ref{claim1e}.
\end{proof}
\begin{claim}
\label{claim:sin_in}
If $T \ge 1$, $0 < \theta_2 < \theta_1 < \frac{\pi}{4T}$, then
$
\sin^2 \left(T\theta_1\right)-\sin^2 \left(T\theta_2\right) \ge \frac{1}{\pi}T^2\left(\sin^2 \theta_1-\sin^2 \theta_2\right).
$
\end{claim}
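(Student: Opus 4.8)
The plan is to reduce the claimed two‑variable inequality to a product of one‑variable estimates via the identity $\sin^2 A-\sin^2 B=\sin(A+B)\sin(A-B)$, and then invoke only the elementary bounds of Fact~\ref{sin_inequality}.

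First I would apply this identity with $A=T\theta_1,\,B=T\theta_2$ on the left‑hand side and with $A=\theta_1,\,B=\theta_2$ on the right‑hand side. Writing $\alpha=\theta_1+\theta_2$ and $\gamma=\theta_1-\theta_2$, the claim becomes
\[
\sin(T\alpha)\sin(T\gamma)\;\ge\;\frac{T^2}{\pi}\,\sin\alpha\,\sin\gamma .
\]
Next I would use the hypotheses $0<\theta_2<\theta_1<\tfrac{\pi}{4T}$ and $T\ge 1$ to record the range bounds $0<\gamma<\theta_1<\tfrac{\pi}{4T}$ and $0<\alpha<2\theta_1<\tfrac{\pi}{2T}$; hence $T\alpha,\,T\gamma\in(0,\pi/2)$ and also $\alpha,\gamma\in(0,\pi/2)$. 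In particular all four sine values appearing above are strictly positive, so it suffices to lower bound the two ratios $\sin(T\alpha)/\sin\alpha$ and $\sin(T\gamma)/\sin\gamma$ separately.

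The core step is a single application of Fact~\ref{sin_inequality}: since $T\alpha\le\pi/2$ we get $\sin(T\alpha)\ge \tfrac{2T\alpha}{\pi}$, and since $\alpha\le\pi/2$ we get $\sin\alpha\le\alpha$, so $\sin(T\alpha)/\sin\alpha\ge 2T/\pi$; the identical argument gives $\sin(T\gamma)/\sin\gamma\ge 2T/\pi$. Multiplying,
\[
\sin(T\alpha)\sin(T\gamma)\;\ge\;\frac{4T^2}{\pi^2}\,\sin\alpha\,\sin\gamma\;\ge\;\frac{T^2}{\pi}\,\sin\alpha\,\sin\gamma,
\]
where the last inequality uses $4/\pi^2\ge 1/\pi$, i.e.\ $\pi\le 4$. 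Translating back through the identity $\sin^2A-\sin^2B=\sin(A+B)\sin(A-B)$ finishes the proof.

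There is essentially no serious obstacle here; the only points requiring care are verifying that every angle entering an application of Fact~\ref{sin_inequality} actually lies in $[0,\pi/2]$ — which is precisely why the hypothesis reads $\theta_1<\pi/(4T)$ rather than $\pi/(2T)$ — and checking that the constant loss $4/\pi^2$ versus the target $1/\pi$ goes in the favorable direction, which it does with room to spare.
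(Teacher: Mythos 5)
Your proof is correct, and it takes a genuinely different route from the paper's. The paper proves the claim by a calculus argument: it sets $f(\theta)=\sin^2(T\theta)-\tfrac{1}{\pi}T^2\sin^2\theta$, computes $f'(\theta)=T\bigl(\sin(2T\theta)-\tfrac{1}{\pi}T\sin(2\theta)\bigr)$, uses Fact~\ref{sin_inequality} to show $f'\ge 0$ on $[0,\tfrac{\pi}{4T}]$, and then invokes monotonicity of $f$ to get $f(\theta_1)\ge f(\theta_2)$. You instead factor both sides via $\sin^2A-\sin^2B=\sin(A+B)\sin(A-B)$ and apply Fact~\ref{sin_inequality} separately to the sum angle $\alpha=\theta_1+\theta_2$ and difference angle $\gamma=\theta_1-\theta_2$; your range checks ($T\alpha<\pi/2$, $T\gamma<\pi/4$, and $\alpha,\gamma<\pi/2$ since $T\ge1$) are all valid, so each ratio bound $\sin(T\cdot)/\sin(\cdot)\ge 2T/\pi$ holds. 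Your argument is purely algebraic (no derivatives), and it actually yields the stronger constant $4/\pi^2$ in place of $1/\pi$; the paper's derivative approach is slightly more flexible in that it does not require the two sides to factor compatibly, but for this particular claim your route is shorter and gives a marginally better bound.
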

\begin{proof}
Let $f\left(\theta\right) = \sin^2 \left(T\theta\right)-\frac{1}{\pi}T^2\sin^2 \theta$.
Then 
\[
\begin{aligned}
f'\left(\theta\right) &= T\sin \left(2T\theta\right) - \frac{1}{\pi}T^2 \sin \left(2\theta\right) \\
&= T\left(\sin \left(2T\theta\right) - \frac{1}{\pi}T \sin \left(2\theta\right)\right).
\end{aligned}
\]
If $0 \le \theta \le \frac{\pi}{4T}$, we have 
\[\sin \left(2T \theta\right)-\frac{1}{\pi}T \sin \left(2\theta\right) \ge \frac{4T\theta}{\pi}-\frac{1}{\pi}T\cdot 2\theta \ge 0\] 
by Fact \ref{sin_inequality}, and thus $f'\left(\theta\right) \ge 0$. Since $f\left(0\right) =0$, we have $f\left(\theta\right) \ge 0$ for any $0 \le \theta \le \frac{\pi}{4T}$. Thus, for $0 < \theta_2 < \theta_1 < \frac{\pi}{4T}$, then 
\begin{equation*}
\sin^2 \left(T\theta_1\right) - \frac{1}{\pi}\sin^2\theta_1 \ge \sin^2 \left(T \theta_2\right) -\frac{1}{\pi}T^2\sin^2 \theta_2.
\end{equation*}
Thus, we have
\begin{equation*}
\sin^2 \left(T\theta_1\right) -\sin^2 \left(T\theta_2\right) \ge \frac{1}{\pi}T^2\left(\sin^2\theta_1-\sin^2 \theta_2\right).
\end{equation*}
\end{proof}

\section{The Optimal Success Probability of $T$-Query Algorithms to Compute $\fnk$}
\label{sec:fnk}
In this section, we show the optimal success probability of $T$-query quantum and classical randomized algorithms to compute $\fnk$ in Section \ref{subsec:fnk_quantum} and Section \ref{subsec:fnk_classical}. The definition of $\fnk:\B^n\rightarrow \B$ is as follows:
\begin{equation*}
\fnk(x) = 
\begin{cases}
0, & \text{ if }|x| \in \{0,n\},\\
1, & \text{ if }|x| \in \{k,n-k\}, \\
\end{cases}
\end{equation*}
where $1 \le k \le n/2$.

\subsection{The Quantum Case}\label{subsec:fnk_quantum}
In this section, we first show the upper bound on the bias of $T$-query quantum algorithms to compute $\fnk$ in Section \ref{subsubsec:fnk_quantumupperbound}. Then we give an optimal $T$-query algorithm to match the upper bound in Section \ref{subsubsec:fnk_quantumlowerbound}. For convenience, we let $\theta = 2\arcsin{\sqrt{k/n}}$. 

\subsubsection{The Upper Bound}\label{subsubsec:fnk_quantumupperbound}
If the success probability of an algorithm is $1-\epsilon$, then the bias of the algorithm is $\beta = 1-\epsilon-1/2=1/2-\epsilon$. By Fact \ref{fact:Ben2021}, we have $\Q_{\epsilon}(f) = \Omega(\sqrt{\beta\bs(f)})$. Thus, if the number of a quantum algorithm is no more than $T$ queries, then the bias $\beta$ of the algorithm is at most 
\begin{equation}\label{eq:bias_bs}
\beta  = O\left(\frac{T^2}{\bs(f)}\right).
\end{equation}
Next, we prove the following lemma:
\begin{lemma}\label{lemma:bs}
If $f$ is a partial symmetric Boolean function: $D \rightarrow \B$, where $D \subseteq \inbrace{0,1}^n$,  then 
\begin{equation*}
\bs(f) \ge \max_{a < b:f(a) \neq f(b)} \left\lfloor\frac{n}{2(b-a)}\right\rfloor.
\end{equation*} 
\end{lemma}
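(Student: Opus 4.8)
The plan is to fix a pair $a<b$ with $f(a)\neq f(b)$, set $d=b-a$, and exhibit a single input $x\in D$ together with $\lfloor n/(2d)\rfloor$ pairwise disjoint blocks each of which is sensitive; taking the maximum over all such pairs then gives the bound. Here I use that the domain $D$ of a symmetric function is a union of Hamming-weight slices, so that $f(a)$ and $f(b)$ being defined means \emph{every} string of weight $a$ (resp.\ $b$) lies in $D$. The one genuine idea is the following counting observation: the number of coordinates that are $0$ in a weight-$a$ string is $n-a$, the number of coordinates that are $1$ in a weight-$b$ string is $b$, and $(n-a)+b=n+d$, so at least one of $n-a$ and $b$ is at least $(n+d)/2\ge n/2$. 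I will split on which side has more room.

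First I would treat the case $n-a\ge b$ (equivalently $a\le (n-d)/2$, so $n-a\ge (n+d)/2$). Take $x$ to be any string of Hamming weight $a$; it has $n-a$ zero-coordinates. Put $t=\lfloor n/(2d)\rfloor$, so $td\le n/2\le n-a$, and partition $td$ of the zero-coordinates of $x$ into $t$ pairwise disjoint blocks $B_1,\dots,B_t\subseteq[n]$, each of size exactly $d$. Since flipping $B_i$ turns $d$ zeros into ones, $|x^{B_i}|=a+d=b$, hence $x,x^{B_i}\in D$ and $f(x^{B_i})=f(b)\neq f(a)=f(x)$ for every $i$; therefore $\bs(f)\ge\bs(f,x)\ge t=\lfloor n/(2d)\rfloor$. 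The complementary case $b>n-a$ is handled symmetrically: now $b>(n+d)/2\ge n/2\ge td$, so I take $x$ of weight $b$, carve $t$ disjoint size-$d$ blocks out of its $b$ one-coordinates, and observe $|x^{B_i}|=b-d=a$, which again gives $f(x^{B_i})=f(a)\neq f(b)=f(x)$. In either case $\bs(f)\ge\lfloor n/(2d)\rfloor$, and maximizing over valid pairs $(a,b)$ yields the lemma.

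I do not expect a real obstacle beyond bookkeeping: one must verify that $td\le n-a$ (resp.\ $td\le b$) so the blocks actually fit, which is exactly where the case split and the identity $(n-a)+b=n+d$ are used, and one must confirm that all the inputs $x,x^{B_i}$ lie in $D$, which is where permutation-invariance of the domain of a symmetric function enters. The floor arithmetic is immediate since $n-a\ge(n+d)/2\ge n/2$ forces $\lfloor(n-a)/d\rfloor\ge\lfloor n/(2d)\rfloor$ by monotonicity of the floor, and likewise for $b$.
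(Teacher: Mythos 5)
Your proposal is correct and follows essentially the same route as the paper: both arguments pack pairwise disjoint size-$(b-a)$ blocks into the zero-coordinates of a weight-$a$ input (or the one-coordinates of a weight-$b$ input) and use the identity $(n-a)+b\ge n$ to conclude that the larger of the two counts is at least $\lfloor n/(2(b-a))\rfloor$. The only difference is organizational — you case-split up front on which side has more room, whereas the paper derives both bounds $\lfloor(n-a)/(b-a)\rfloor$ and $\lfloor b/(b-a)\rfloor$ and then takes the maximum.
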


\begin{proof}
Suppose $f(a) \neq  f(b)$ and $a < b$. Let $x = \underbrace{0\cdots 0}_{n-a}\underbrace{1\cdots1}_{a}$. For $1 \le i \le \lfloor \frac{n-a}{b-a} \rfloor$, let 
\begin{equation*}
B_i = \inbrace{(i-1)(b-a)+1,(i-1)(b-a)+2,...,i(b-a)}.
\end{equation*}
Then $f(x) \neq f(x^{B_i})$.
Thus, $\bs(f) \ge \lfloor \frac{n-a}{b-a} \rfloor$. Using a similar argument, we have $\bs(f) \ge \lfloor \frac{b}{b-a} \rfloor$. Since $n-a+b \ge n$, it follows that $\bs(f) \ge \max\inbrace{\lfloor \frac{n-a}{b-a} \rfloor,\lfloor \frac{b}{b-a} \rfloor} \ge \lfloor\frac{n}{2(b-a)}\rfloor$. 
\end{proof}
By Lemma \ref{lemma:bs}, for the function $\fnk$, we have $\bs(\fnk) = \Omega\inparen{n/k}$.
By \Cref{eq:bias_bs}, we have the following lemma:
\begin{lemma}\label{lemma:fnk_quantum_upperbound}
For the function $\kl$ and $T>0$, the quantum $T$-bias satisfies that
$\delta_Q\inparen{\fnk,T} \le O\inparen{\frac{k}{n}\cdot T^2}$.
\end{lemma}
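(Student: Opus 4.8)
The plan is simply to combine the generic bias--block-sensitivity tradeoff recorded in \Cref{eq:bias_bs} with the block-sensitivity lower bound furnished by \Cref{lemma:bs}; the derivation is short. First I would instantiate \Cref{lemma:bs} at $f=\fnk$. Since $\fnk$ takes value $0$ on inputs of Hamming weight $0$ and value $1$ on inputs of Hamming weight $k$, and both of these weights lie in the domain of $\fnk$, the pair $(a,b)=(0,k)$ is admissible, so
\[
\bs(\fnk)\ge \left\lfloor \frac{n}{2k}\right\rfloor=\Omega\inparen{\frac{n}{k}}.
\]
The last estimate uses the hypothesis $1\le k\le n/2$: if $k\le n/4$ then $\lfloor n/(2k)\rfloor\ge n/(4k)$, while if $n/4<k\le n/2$ then $\lfloor n/(2k)\rfloor\ge 1$ and $n/k=\Theta(1)$, so in either case $\bs(\fnk)=\Omega(n/k)$.

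Next I would invoke \Cref{eq:bias_bs}, which itself follows from \Cref{fact:Ben2021} by writing an algorithm's success probability as $1/2+\beta$ and solving $T=\Omega(\sqrt{\beta\,\bs(f)})$ for $\beta$: every $T$-query quantum algorithm computing a (possibly partial) Boolean function $f$ has bias $O(T^2/\bs(f))$. Applying this to $\fnk$ and substituting the bound on $\bs(\fnk)$ just derived gives
\[
\delta_Q\inparen{\fnk,T}=O\inparen{\frac{T^2}{\bs(\fnk)}}=O\inparen{\frac{k}{n}\cdot T^2},
\]
which is exactly the claimed bound.

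There is essentially no obstacle here; the only points requiring a moment's care are (i) checking that the floor in \Cref{lemma:bs} does not spoil the $\Omega(n/k)$ scaling, which is handled by the case split on the size of $k$ relative to $n$ above, and (ii) confirming that $(0,k)$ is a legitimate pair to feed into \Cref{lemma:bs}, i.e.\ that the weights $0$ and $k$ both belong to the domain on which $\fnk$ is defined --- they do, directly from the definition of $\fnk$.
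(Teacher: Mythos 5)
Your proposal is correct and follows exactly the paper's own route: instantiate \Cref{lemma:bs} at the pair $(a,b)=(0,k)$ to get $\bs(\fnk)=\Omega(n/k)$, then plug into the bias--block-sensitivity bound of \Cref{eq:bias_bs}. The extra care you take with the floor and the admissibility of $(0,k)$ is a slight elaboration of details the paper leaves implicit, but the argument is the same.
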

Furthermore, we aim to give the range of $T$ such that the optimal success probability of $T$-query quantum algorithms is less than 1. Equivalently, we aim to prove the lower bound of the exact quantum query complexity of $\fnk$. We give the following lemma:
\begin{lemma}\label{lemma:fnk_exact_lower_bound}
The exact quantum query complexity of $\fnk$ is at least $\lceil \frac{\pi}{2\theta}\rceil$.
\end{lemma}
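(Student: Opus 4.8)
The plan is to use the polynomial method together with the symmetrization technique (Fact~\ref{symmetrization}). Suppose a quantum algorithm computes $\fnk$ exactly using $d$ queries. By Fact~\ref{beals} (or directly by the Beals--Buhrman--Cleve--Mosca--de Wolf argument), the acceptance probability is a multilinear polynomial in $x$ of degree at most $2d$; symmetrizing it via Fact~\ref{symmetrization} yields a univariate real polynomial $q$ of degree at most $2d$ such that $q(i)$ equals the acceptance probability on inputs of Hamming weight $i$, so $q(i)\in[0,1]$ for all $i\in\{0,1,\dots,n\}$, with $q(0)=q(n)=0$ (since $\fnk=0$ there) and $q(k)=q(n-k)=1$ (since $\fnk=1$ there).

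First I would reduce to a clean extremal question about univariate polynomials: how large must $\deg q$ be, given that $q$ maps $[0,1]$-ish constraints at the integer grid, jumps from value $0$ at $0$ to value $1$ at $k$, and stays in $[0,1]$ on all of $\{0,\dots,n\}$? The natural way to get a lower bound of $\lceil \pi/(2\theta)\rceil$ with $\theta = 2\arcsin\sqrt{k/n}$ is to pass to the ``angle'' parametrization used throughout the paper for amplitude amplification: under the substitution $i/n \mapsto \sin^2(\phi/2)$ (so weight $k$ corresponds to $\phi=\theta$ and weight $0$ to $\phi=0$), a degree-$2d$ polynomial in $i$ becomes, after the standard Chebyshev change of variables, a trigonometric polynomial of degree $2d$ in $\phi$ that is bounded in magnitude on the relevant interval and must swing from $0$ at $\phi=0$ up to $1$ at $\phi=\theta$. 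A Markov-brothers / Bernstein-type inequality for trigonometric polynomials then forces $2d \cdot \theta \gtrsim \pi/2 \cdot(\text{something})$; pinning down the exact constant to get $d\ge \lceil\pi/(2\theta)\rceil$ is the delicate part.

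Alternatively — and this is the route I would actually pursue because it gives the sharp constant cleanly — I would invoke the companion lower bound machinery of He--Sun--Yang--Yuan already imported in the excerpt (Facts~\ref{fact:padding}--\ref{fact:kl_exact_upper_bound}), since distinguishing $|x|\in\{0,n\}$ from $|x|\in\{k,n-k\}$ contains as a sub-instance the exact discrimination of Hamming weight $0$ versus weight $k$ on an $n$-bit string. Feeding $\kappa=0$, $\lambda=k/n$ into the $\mathsf{LR}(S_d)$ condition of Fact~\ref{fact:lowerbound} and computing for which $d$ the point $(0,k/n)$ fails to lie in $\mathsf{LR}(S_d)$ should translate, after simplifying the cosine expressions defining $S_d$, into exactly the threshold $d=\lceil\pi/(2\theta)\rceil$; the evenness of $\fnk$ (it also takes value $1$ at $n-k$ and $0$ at $n$) only makes the function harder, so the bound is inherited. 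One has to be slightly careful that the reduction preserves exactness and that the ``sufficiently large $n$'' caveat in Fact~\ref{fact:lowerbound} is handled, perhaps by a padding argument.

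The main obstacle I anticipate is getting the constant exactly right rather than up to a multiplicative factor: both the Bernstein-inequality route and the $\mathsf{LR}(S_d)$ route naturally produce $\Theta(1/\theta)$, and extracting the precise ceiling $\lceil\pi/(2\theta)\rceil$ requires carefully matching the extremal trigonometric polynomial (a shifted/scaled Chebyshev polynomial, essentially $\cos(2d\arccos(\cdot))$ reparametrized) against the interpolation constraints at $\phi=0$ and $\phi=\theta$, and checking that the $[0,1]$ bound at the remaining grid points is automatically satisfied by that extremal polynomial. I would handle this by exhibiting explicitly the degree-$(\lceil\pi/(2\theta)\rceil-1)$ obstruction fails and a Chebyshev-type polynomial witnesses tightness, mirroring the $\lceil\pi/(2\theta)\rceil$ upper-bound construction that Theorem~\ref{th:fnk2} will later give.
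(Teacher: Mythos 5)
Your second route is exactly the paper's proof: restrict to the subfunction $f_n^{0,k}$ (distinguishing $|x|=0$ from $|x|=k$), and apply Fact~\ref{fact:lowerbound} at $(\kappa,\lambda)=(0,k/n)$, where the $\mathsf{LR}(S_d)$ membership condition collapses to $k/n\le\frac{1}{2}\left(1-\cos\frac{\pi}{2d}\right)=\sin^2\frac{\pi}{4d}$, i.e.\ $d\le\pi/(2\theta)$, so taking $d=\lceil\pi/(2\theta)\rceil-1$ yields $Q_E(\fnk)\ge Q_E(f_n^{0,k})\ge d+1=\lceil\pi/(2\theta)\rceil$. The Chebyshev-extremal matching you worry about is not needed for the lower bound (it only enters the upper-bound construction), and the paper itself invokes Fact~\ref{fact:lowerbound} directly without further discussion of the ``sufficiently large $n$'' caveat you rightly flag.
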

\begin{proof}
Since $f_n^{0,k}$ is a subfunction of $\fnk$, which is defined as
\begin{equation*}
f_n^{0,k}(x) = 
\begin{cases}
0 & \text{ if }|x| = 0, \\
1 & \text{ if }|x| = k,
\end{cases}
\end{equation*}
we have $Q_E(\fnk) \ge Q_E(f_n^{0,k})$. Fact \ref{fact:lowerbound} implies that if 
\begin{equation}\label{eq:fn0k_upper_bound}
\frac{k}{n} \le \frac{1}{2}\inparen{1-\cos \frac{\pi}{2d}},
\end{equation}
then $Q_E(f_n^{0,k}) \ge d+1$. Let $d = \lceil \frac{\pi}{2\theta}\rceil-1$. Then $d$ satisfies Equation (\ref{eq:fn0k_upper_bound}) and thus $ Q_E(f_n^{0,k}) \ge \lceil \frac{\pi}{2\theta}\rceil$. Thus, we have
$Q_E(\fnk) \ge \lceil \frac{\pi}{2\theta}\rceil$.
\end{proof}

\subsubsection{The Lower Bound}\label{subsubsec:fnk_quantumlowerbound}
This section gives an optimal $T$-query quantum algorithm to match the above upper bound on
the success probability. The algorithm is a variant of amplitude amplification algorithm \cite{BHMT02}. 
Let 
\begin{equation*}
\ket{\psi} = \frac{1}{\sqrt{n}}\sum_{i\in [n]} \ket{i}\ket{-}.
\end{equation*}
Then 
\begin{equation*}
O_{x} \ket{\psi} = \frac{1}{\sqrt{n}}\sum_{i\in [n]} (-1)^{x_i}\ket{i}\ket{-}
\end{equation*}
Thus, 
\begin{equation*}
\langle \psi|O_x | \psi\rangle  = \frac{1}{n}(n-2|x|) = 1- \frac{2|x|}{n}.    
\end{equation*}
As a result, there exists a state $\ket{\psi^{\bot}}$ such that $\langle \psi | \psi^{\bot} \rangle = 0$ and 
\begin{equation}\label{eq:Ox_psi_fnk}
O_x \ket{\psi} = \cos \theta_x \ket{\psi}+\sin \theta_x \ket{\psi^{\bot}},
\end{equation}
where $\cos \theta_x = 1-\frac{2|x|}{n}$, i.e., $\theta_x = 2\arcsin{\sqrt{\frac{|x|}{n}}}$. 
Two reflection operations are defined as follows:

\begin{equation}\label{eq:S0S1_fnk}
\begin{aligned}
S_{0} = 2\ket{\psi}\bra{\psi}-I,\quad\mbox{and}\quad S_{1} = 2O_x\ket{\psi}\bra{\psi}O_x-I.
\end{aligned}
\end{equation}
Then we have 
\begin{equation}\label{eq:S1S0_psi_fnk}
\begin{aligned}
& S_1S_0\ket{\psi}& & =\cos 2\theta_x\ket{\psi}+\sin 2\theta_x\ket{\psi^{\bot}},\\
& S_1S_0\ket{\psi^{\bot}}& & = -\sin2\theta_x\ket{\psi}+\cos2\theta_x\ket{\psi^{\bot}}.
\end{aligned}
\end{equation}
Since $O_xS_{1} O_x = 2\ket{\psi}\bra{\psi}-I,$
we have
\begin{equation*}
S_{1} = O_x(2\ket{\psi}\bra{\psi}-I)O_x = O_x
S_{0}O_x.
\end{equation*}
Let 
\begin{equation}\label{eq:pro_fnk}
P_0 = \ket{\psi}\bra{\psi}, P_1 = I - \ket{\psi}\bra{\psi}.
\end{equation}
When the number of queries $T$ satisfies $T \le \sqrt{n/k}$, we give Algorithm \ref{al:T_query_fnk} to compute $\fnk$.
\begin{algorithm}
\caption{A $T$-query quantum algorithm to compute $\fnk$.}
\label{al:T_query_fnk}
Prepare the initial state $\ket{\psi} = \frac{1}{\sqrt{n}}\sum_{i\in [n]} \ket{i}\ket{-}$. 

If $T = 2m$ for some integer $m$, then perform the operation $(S_1S_0)^{m}$ to $\ket{\psi}$; if $T = 2m+1$ for some integer $m$, then perform the operation $(S_1S_0)^{m}O_x$ to $\ket{\psi}$, where $S_0,S_1$ are defined as \Cref{eq:S0S1_fnk}.

Perform the project measurement $\inbrace{P_0,P_1}$ given in \Cref{eq:pro_fnk} to the final state. 
If the measurement result is 1, we output 1; if the measurement result is 0, we output 0 with the probability $\frac{1}{2-\sin^2 T\theta}$ and output 1 with the probability $1-\frac{1}{2-\sin^2 T\theta}$, where $\theta = 2\arcsin{\sqrt{\frac{k}{n}}}$. 
\end{algorithm}
By \Cref{eq:Ox_psi_fnk,eq:S1S0_psi_fnk}, whichever $T$ is even or odd, after running Step 2 of Algorithm \ref{al:T_query_fnk}, the final state is 
\begin{equation*}
\cos T \theta_x \ket{\psi}+\sin T \theta_x\ket{\psi^{\bot}}. 
\end{equation*}
We analyze the success probability of the algorithm as follows: 
If $|x| = 0$ or $n$, then $\theta_x = 0$ or $\pi$. Thus, we have $\cos^2 T\theta_x = 1$ and the measurement result is 0. As a result, the success probability of the algorithm is  $\frac{1}{2-\sin^2 T \theta}$. If $|x| = k$ or $n-k$, then $\theta_x = \theta$ or $\pi-\theta$. Thus, we have $\cos^2 T\theta_x = \cos^2 T\theta$, and the success probability of the algorithm is 
\begin{equation*}
\inparen{1-\frac{1}{2-\sin^2 T \theta}}\cos^2 T\theta  + 1-\cos^2 T\theta = \frac{1}{2-\sin^2 T \theta}.
\end{equation*}
Thus, the bias of the algorithm is 
\begin{equation*}
\frac{1}{2-\sin^2 T\theta}-\frac{1}{2} \ge \frac{\sin^2 T \theta}{4} \ge \frac{1}{\pi^2}T^2\theta^2 \ge \frac{4}{\pi^2}\frac{k}{n}\cdot T^2.
\end{equation*}
If $T > \sqrt{n/k}$, we run Algorithm \ref{al:T_query_fnk} with $\sqrt{n/k}$ queries and let other queries be dummies. Then the success probability of the algorithm is at least $\frac{4}{\pi^2}$.
Thus we have the following lemma:
\begin{lemma} \label{lemma:fnk_quantum_lowerbound}
For the function $\fnk$ and $T>0$, the quantum $T$-bias satisfies that
\begin{equation*}
\delta_Q\inparen{\fnk, T}= 
\begin{cases}
\Omega(\fnkquanbias),&  \text{ if } T \le  \sqrt{n/k},\\
\Omega\inparen{1},&\text{ if } T > \sqrt{n/k},
\end{cases}
\end{equation*}
\end{lemma}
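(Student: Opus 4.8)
The lemma is proved constructively, and in fact the construction and its analysis were already carried out above in \Cref{al:T_query_fnk}; the plan is to assemble those pieces into the stated two‑regime bound. The skeleton is as follows. First, work in the two‑dimensional real subspace $\mathrm{span}\{\ket{\psi},\ket{\psi^{\bot}}\}$ with $\ket{\psi}=\frac{1}{\sqrt n}\sum_{i\in[n]}\ket{i}\ket{-}$: the oracle rotates $\ket{\psi}$ by $\theta_x=2\arcsin\sqrt{|x|/n}$, and the two reflections $S_0,S_1$ defined above compose to a rotation by $2\theta_x$, so that $(S_1S_0)^m$ (with one extra $O_x$ when $T=2m+1$) takes $\ket{\psi}$ to $\cos(T\theta_x)\ket{\psi}+\sin(T\theta_x)\ket{\psi^{\bot}}$ after exactly $T$ queries. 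Since $\fnk$ depends only on whether $|x|\in\{0,n\}$ (then $\theta_x\in\{0,\pi\}$, so $\cos^2(T\theta_x)=1$) or $|x|\in\{k,n-k\}$ (then $\theta_x\in\{\theta,\pi-\theta\}$, so $\cos^2(T\theta_x)=\cos^2(T\theta)$), measuring $\{P_0=\ket{\psi}\bra{\psi},\,P_1=I-P_0\}$ reduces everything to distinguishing a coin that returns ``$\ket{\psi}$'' with probability $1$ from one that returns it with probability $\cos^2(T\theta)$.

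The step that carries the real content is the randomized post‑processing, and this is the part I would be most careful about: the naive rule (output $0$ on $P_0$, $1$ on $P_1$) classifies every $\fnk=0$ input perfectly but an $\fnk=1$ input only with probability $\sin^2(T\theta)$, which is not a worst‑case bias. Instead, on outcome $P_0$ output $0$ with probability $\frac{1}{2-\sin^2(T\theta)}$ and $1$ otherwise (always output $1$ on $P_1$). A one‑line computation then gives success probability $\frac{1}{2-\sin^2(T\theta)}$ on $|x|\in\{0,n\}$ and $\bigl(1-\tfrac{1}{2-\sin^2(T\theta)}\bigr)\cos^2(T\theta)+\bigl(1-\cos^2(T\theta)\bigr)=\frac{1}{2-\sin^2(T\theta)}$ on $|x|\in\{k,n-k\}$, so the algorithm has \emph{identical} success probability on every input, and the worst‑case bias is exactly $\frac{1}{2-\sin^2(T\theta)}-\frac12=\frac{\sin^2(T\theta)}{2(2-\sin^2(T\theta))}\ge\tfrac14\sin^2(T\theta)$.

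Finally, lower‑bound $\sin^2(T\theta)$. For $T\le\sqrt{n/k}$, first ensure $T\theta\le\pi/2$ — this holds up to an absolute constant factor, and when it fails one runs the algorithm with the largest $T'\le T$ having $T'\theta\le\pi/2$, which is still $\Theta(\sqrt{n/k})$ queries — and then apply Fact~\ref{sin_inequality} ($\sin\alpha\ge 2\alpha/\pi$ on $[0,\pi/2]$) to get bias $\ge\frac{1}{\pi^2}(T\theta)^2$; combining with $\arcsin x\ge x$, hence $\theta\ge 2\sqrt{k/n}$, yields bias $\Omega\!\left(\frac{k}{n}T^2\right)$. For $T>\sqrt{n/k}$, run the same algorithm with only $\lfloor\sqrt{n/k}\rfloor$ queries and pad the rest as dummies; the bias is then a positive absolute constant, which is the claimed $\Omega(1)$ and also dominates $\frac{k}{n}T^2=O(1)$ throughout that range. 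The only genuinely delicate points are getting the post‑processing probabilities exactly right so the two value classes are classified equiprobably, and keeping $T\theta$ in the interval on which $\sin$ is bounded below by its linear term; everything else is routine.
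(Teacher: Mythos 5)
Your proposal is correct and follows essentially the same route as the paper: the same amplitude-amplification circuit producing $\cos(T\theta_x)\ket{\psi}+\sin(T\theta_x)\ket{\psi^{\bot}}$, the same randomized post-processing with probability $\frac{1}{2-\sin^2(T\theta)}$ equalizing the success probability across both value classes, the same $\sin\alpha\ge 2\alpha/\pi$ bound, and the same dummy-query truncation for $T>\sqrt{n/k}$. Your extra care about keeping $T\theta\le\pi/2$ (by truncating to $T'=\Theta(T)$ when needed) is in fact a point the paper glosses over, since $T\le\sqrt{n/k}$ only guarantees $T\theta\le\pi$, so your version is, if anything, slightly more rigorous.
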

Combining Lemma \ref{lemma:fnk_quantum_upperbound} and Lemma \ref{lemma:fnk_quantum_lowerbound}, we have the following theorem:
\begin{theorem}\label{th:fnk_quantum}
For the function $\fnk$ and $T>0$, the quantum $T$-bias is
\begin{equation*}
\delta_Q\inparen{\fnk, T}= 
\begin{cases}
\Theta(\fnkquanbias),&  \text{ if } T \le  \sqrt{n/k},\\
\Theta\inparen{1},&\text{ if } T > \sqrt{n/k},
\end{cases}
\end{equation*}
\end{theorem}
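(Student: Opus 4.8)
\textbf{Proof proposal for Theorem~\ref{th:fnk_quantum}.}
The plan is to prove the theorem by stitching together the two one‑sided bounds established in Sections~\ref{subsubsec:fnk_quantumupperbound} and~\ref{subsubsec:fnk_quantumlowerbound}. Lemma~\ref{lemma:fnk_quantum_upperbound} supplies $\delta_Q(\fnk,T)=O\inparen{\frac{k}{n}\cdot T^2}$ for every $T>0$, and Lemma~\ref{lemma:fnk_quantum_lowerbound} supplies $\delta_Q(\fnk,T)=\Omega\inparen{\frac{k}{n}\cdot T^2}$ when $T\le\sqrt{n/k}$, together with $\delta_Q(\fnk,T)=\Omega(1)$ when $T>\sqrt{n/k}$. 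The proof is then a two‑line packaging argument: in the regime $T\le\sqrt{n/k}$ the matching $O$ and $\Omega$ bounds give $\delta_Q(\fnk,T)=\Theta\inparen{\frac{k}{n}\cdot T^2}$; in the regime $T>\sqrt{n/k}$ the $\Omega(1)$ lower bound, combined with the trivial upper bound $\delta_Q(\fnk,T)\le\frac12$, gives $\delta_Q(\fnk,T)=\Theta(1)$. I would also note that the two cases agree at the threshold, since $\frac{k}{n}\cdot T^2=\Theta(1)$ whenever $T=\Theta(\sqrt{n/k})$, so the case split is seamless and the constants do not hide a gap there.

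Since the theorem is merely the packaging step, before writing it I would re‑check that the two lemmas really say what they claim. For the upper bound I would confirm the chain $\delta_Q(\fnk,T)=O\inparen{T^2/\bs(\fnk)}$ coming from Fact~\ref{fact:Ben2021} (rewritten as \Cref{eq:bias_bs}) together with $\bs(\fnk)=\Omega(n/k)$ from Lemma~\ref{lemma:bs} applied to the value pair $a=0$, $b=k$, which multiply out to $O\inparen{\frac{k}{n}\cdot T^2}$. For the lower bound I would re‑read Algorithm~\ref{al:T_query_fnk}: applying $(S_1S_0)^m$ (or $(S_1S_0)^mO_x$) rotates $\ket{\psi}$ through angle $2T\theta_x$ inside $\spn\{\ket{\psi},\ket{\psi^{\bot}}\}$, so measuring $\{P_0,P_1\}$ returns outcome $0$ with probability $\cos^2 T\theta_x$, which is $1$ when $|x|\in\{0,n\}$ and $\cos^2 T\theta$ when $|x|\in\{k,n-k\}$; the randomized acceptance rule with probability $\frac{1}{2-\sin^2 T\theta}$ equalizes the two success probabilities, yielding bias $\frac{1}{2-\sin^2 T\theta}-\frac12\ge\frac{\sin^2 T\theta}{4}\ge\frac{1}{\pi^2}T^2\theta^2\ge\frac{4}{\pi^2}\cdot\frac{k}{n}\cdot T^2$ via Fact~\ref{sin_inequality} and $\theta=2\arcsin\sqrt{k/n}\ge 2\sqrt{k/n}$. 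The hypothesis $T\le\sqrt{n/k}$ is exactly what keeps $T\theta\le\pi/2$ so these sine inequalities apply; for larger $T$ one runs the algorithm with $\sqrt{n/k}$ queries and pads with dummies, keeping $\Omega(1)$ bias.

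I do not expect a genuine obstacle at the level of Theorem~\ref{th:fnk_quantum} itself — all the content is already in Lemmas~\ref{lemma:bs}, \ref{lemma:fnk_quantum_upperbound} and~\ref{lemma:fnk_quantum_lowerbound}. The only thing to watch is bookkeeping: making sure the $\Theta$‑regimes of the two lemmas are quantified over the same ranges of $T$, and that the implicit constants do not leave a discontinuity at $T\asymp\sqrt{n/k}$, which they do not since both one‑sided bounds are $\Theta(1)$ precisely there. (If one insists on locating the ``hard part'', it is in the two lemmas: the block‑sensitivity lower bound $\bs(\fnk)=\Omega(n/k)$ for the upper bound, and the careful analysis of the amplitude‑amplification variant plus its randomized post‑processing for the lower bound.)
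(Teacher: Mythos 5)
Your proposal is correct and matches the paper exactly: the paper's proof of this theorem is literally the one-line combination of Lemma~\ref{lemma:fnk_quantum_upperbound} (the $O(\frac{k}{n}T^2)$ upper bound via \Cref{eq:bias_bs} and Lemma~\ref{lemma:bs}) with Lemma~\ref{lemma:fnk_quantum_lowerbound} (the amplitude-amplification algorithm), just as you describe. Your sanity checks of the two lemmas are also consistent with the paper's arguments, so there is nothing further to add.
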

Furthermore, we aim to show if $T \ge \lceil \frac{\pi}{2\theta} \rceil+2$, the optimal success probability of $T$-query quantum algorithm is 1. We give the following lemma:
\begin{lemma}\label{lemma:fnk_exact_upper_bound}
    The exact quantum query complexity of $\fnk$ is at most $\lceil \frac{\pi}{2\theta} \rceil+2$.
\end{lemma}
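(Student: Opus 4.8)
The plan is to upgrade the amplitude-amplification routine of Algorithm~\ref{al:T_query_fnk} to an \emph{exact} one. Keep $\ket\psi=\tfrac{1}{\sqrt n}\sum_{i\in[n]}\ket i\ket-$, so that $O_x\ket\psi=\cos\theta_x\ket\psi+\sin\theta_x\ket{\psi^\bot}$ with $\theta_x=2\arcsin\sqrt{|x|/n}$, and the reflections $S_0=2\ket\psi\bra\psi-I$, $S_1=O_xS_0O_x$, for which $(S_1S_0)^m$ acts on $\spn\{\ket\psi,\ket{\psi^\bot}\}$ as the rotation by $2m\theta_x$ using $2m$ queries, and one $O_x$ rotates by $\theta_x$ using one query. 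The first ingredient is a \emph{folding} observation: the four admissible Hamming weights $0,n,k,n-k$ give $\theta_x\in\{0,\pi,\theta,\pi-\theta\}$; for $\theta_x\in\{0,\pi\}$ we have $O_x\ket\psi=\pm\ket\psi$, so each $S_1$ (and, below, each $O_x$-conjugated generalized reflection about $\ket\psi$) collapses to a reflection about $\ket\psi$ and the whole circuit fixes $\ket\psi$ up to a phase, while for $\theta_x=\pi-\theta$ the relevant rotations merely reverse orientation. Hence it suffices to build a circuit that, on inputs with $\theta_x\in\{0,\theta\}$, ends exactly at $\ket\psi$ if $\theta_x=0$ and exactly at a scalar multiple of $\ket{\psi^\bot}$ if $\theta_x=\theta$; measuring $\{P_0,P_1\}$ from \Cref{eq:pro_fnk} then decides $\fnk$ with certainty, and a short sign check confirms the $\theta_x=\pi-\theta$ branch is also driven onto $\pm\ket{\psi^\bot}$.

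For the construction, set $M:=\lfloor\pi/(2\theta)\rfloor$ and first apply $M$ query-units of rotation to $\ket\psi$ -- namely $(S_1S_0)^{M/2}$ if $M$ is even, and $(S_1S_0)^{(M-1)/2}O_x$ if $M$ is odd -- costing $M$ queries and bringing the $\theta_x=\theta$ branch to angle $M\theta\in(\tfrac\pi2-\theta,\tfrac\pi2]$, while the $\theta_x=0$ branch stays at $\ket\psi$. If $M\theta=\tfrac\pi2$ we are done. Otherwise append one \emph{tuned} Grover iteration $\widetilde S_1\widetilde S_0$ with $\widetilde S_0=I-(1-e^{i\phi})\ket\psi\bra\psi$ and $\widetilde S_1=O_x\big(I-(1-e^{i\varphi})\ket\psi\bra\psi\big)O_x$ (two more queries), choosing the phases $(\phi,\varphi)$ -- in the spirit of the exact amplitude amplification of \cite{BHMT02} -- so that the $\ket\psi$-component of the $\theta_x=\theta$ output vanishes, i.e.\ the state is driven exactly onto $\ket{\psi^\bot}$. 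A $2\times2$ computation in the $\{\ket\psi,\ket{\psi^\bot}\}$ basis shows this reduces to $\sin^2(\varphi/2)=\cos^2(M\theta)/\sin^2(2\theta)$ together with $\phi$ set to a prescribed argument, which is solvable precisely when $\cos^2(M\theta)\le\sin^2(2\theta)$. The total cost is $M+2=\lfloor\pi/(2\theta)\rfloor+2\le\lceil\pi/(2\theta)\rceil+2$ (just $M$ queries when $\pi/(2\theta)\in\mathbb Z$). As an alternative route one could instead invoke the exact algorithm of \cite{HSY+18} for the subfunction $\pfnk$ (Fact~\ref{fact:padding}, for which $(0,k/n)\in\mathsf{UL}(S_d)$ reduces to $d\ge\pi/(2\theta)$, so $d=\lceil\pi/(2\theta)\rceil$ suffices) and check that the same folding makes it compute $\fnk$.

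I expect the main obstacle to be verifying the inequality $\cos^2(M\theta)\le\sin^2(2\theta)$ for \emph{every} $\theta=2\arcsin\sqrt{k/n}\in(0,\pi/2]$ with $M=\lfloor\pi/(2\theta)\rfloor$: since $M\theta\in(\tfrac\pi2-\theta,\tfrac\pi2)$ one has $\cos^2(M\theta)<\sin^2\theta$, so for $\theta\le\pi/3$ it is enough that $\sin^2\theta\le\sin^2(2\theta)$ (which holds there since $\sin(2\theta)\ge\sin\theta$), whereas for $\theta\in(\pi/3,\pi/2)$ one has $M=1$, $M\theta=\theta$, and $\cos^2\theta\le\sin^2(2\theta)=4\sin^2\theta\cos^2\theta$ follows from $\sin\theta\ge\tfrac12$; the degenerate case $\theta=\pi/2$ (i.e.\ $k=n/2$) is handled by a single $O_x$ query. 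The residual bookkeeping -- checking that the tuned iteration is trivial on the $\theta_x\in\{0,\pi\}$ branches and sends the $\theta_x=\pi-\theta$ branch to $\pm\ket{\psi^\bot}$ as well, via the $\theta_x\mapsto\pi-\theta_x$ symmetry -- is a routine sign computation. Combined with \Cref{lemma:fnk_exact_lower_bound}, this yields \Cref{th:fnk2}.
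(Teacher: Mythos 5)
Your construction is correct, but it is a genuinely different route from the paper's. The paper does not build an exact circuit by hand: it invokes Fact~\ref{fact:padding} to get the exact $\lceil\pi/(2\theta)\rceil$-query algorithm $\mathcal{A}_{0,k}$ of \cite{HSY+18} for the subfunction $\pfnk$ as a black box, observes that even outside its promise this algorithm certifies that one of the two candidate weights is excluded, and then disposes of the four admissible weights $\{0,k,n-k,n\}$ by running $\mathcal{A}_{0,n-k}$ on both $x$ and $\overline{x}$ (each run costs a single query because $n-k\ge n/2$ forces $\lceil\pi/(2\cdot 2\arcsin\sqrt{(n-k)/n})\rceil=1$) before finishing with one run of $\mathcal{A}_{0,k}$; the ``$+2$'' is the cost of those two one-query runs. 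You instead get the ``$+2$'' from a single phase-tuned Grover iteration appended to $M=\lfloor\pi/(2\theta)\rfloor$ plain query-units, and you fold all four weights into one analysis via the $\theta_x\mapsto\pi-\theta_x$ symmetry. I checked the two points you flagged as potential obstacles and both go through: writing the tuned iteration in the $\{\ket{\psi},\ket{\psi^\bot}\}$ basis, the vanishing condition on the $\ket{\psi}$-component is $\inparen{1-\sin^2(2\theta)\sin^2(\varphi/2)}\cos^2(M\theta)=\sin^2(2\theta)\sin^2(\varphi/2)\sin^2(M\theta)$, i.e.\ exactly $\sin^2(\varphi/2)=\cos^2(M\theta)/\sin^2(2\theta)$, and the same equation (up to an overall factor $(-1)^M$) governs the $\theta_x=\pi-\theta$ branch, so one choice of $(\phi,\varphi)$ serves both; your case split $\theta\le\pi/3$ versus $\theta\in(\pi/3,\pi/2)$ correctly establishes $\cos^2(M\theta)\le\sin^2(2\theta)$ using $M\theta\in(\pi/2-\theta,\pi/2]$ and $M=1$ respectively. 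The trade-off: your argument is self-contained (only elementary trigonometry, no reliance on \cite{HSY+18} for the upper bound) at the price of the solvability verification, while the paper's is shorter given Fact~\ref{fact:padding}. Note that your parenthetical ``alternative route'' is close to, but not identical with, what the paper actually does — the paper does not rely on any folding symmetry of the \cite{HSY+18} circuit, only on its exclusion property outside the promise.
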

\begin{proof}
Fact \ref{fact:padding} implies if 
\begin{equation}\label{eq:fn0k_lower_bound}
\frac{k}{n} \ge \frac{1}{2}\inparen{1-\cos \frac{\pi}{2d}},
\end{equation}
then $Q_E(f_n^{0,k}) \le d$. Let $d = \lceil \frac{\pi}{2\theta}\rceil$. Then $d$ satisfies Equation (\ref{eq:fn0k_lower_bound}) and thus $ Q_E(f_n^{0,k}) \le \lceil \frac{\pi}{2\theta}\rceil$. As a result, promising either $|x| = 0$ or $|x| =k$, there exists an exact quantum algorithm $\mathcal{A}_{0,k}$ distinguish $|x| = 0$ from $|x| = k$ with $\lceil \frac{\pi}{2\theta} \rceil$ queries, where $\theta = 2\arcsin{\sqrt{k/n}}$. 
Note that when $|x| \notin \inbrace{0,k}$, although $\mathcal{A}_{0,k}$ does not promise to output the correct result with certainty, it can give some information about $x$. Specifically, if the output of $\mathcal{A}_{0,k}$ is $|x| = 0$, then it implies $|x| \neq k$ and vice versa. 

We give an exact quantum algorithm to compute $f_n^k(x)$ for $|x| \in \inbrace{0,k,n-k,n}$ by utilizing the above property of $\mathcal{A}_{0,k}$: We run $\mathcal{A}_{0,n-k}$ to $x$ and $\overline{x}$ respectively, where $\overline{x}$ is the string obtained by flipping the values of $x_i$ for all $i\in [n]$. Then we discuss the following four cases. 
\begin{enumerate}
    \item The outputs are $|x| = 0$, $|\overline{x}| = 0$. It implies $|x| \neq n-k$ and $|\overline{x}| \neq n-k$. Thus $|x| \in \inbrace{0,n}$ and $\fnk(x) = 0$.
    \item The outputs are $|x| =n-k$, $|\overline{x}| = n-k$. It implies $|x| \neq 0$ and $|\overline{x}| \neq 0$. Thus $|x| \in \inbrace{k,n-k}$ and $\fnk(x) = 1$.
    \item \label{item_exact_nk}
    The outputs are $|x| = 0$, $|\overline{x}| = n-k$. It implies $|x| \neq n-k$ and $|\overline{x}| \neq 0$, and thus $|x| \in \inbrace{0,k}$. So we distinguish $|x| = 0$ from $|x| = k$ by running $\mathcal{A}_{0,k}$ to $x$. If $|x| = 0$, then $\fnk(x) = 0$; if $|x| = k$, then $\fnk(x) = 1$.
    \item The outputs are $|x| = n-k$, $|\overline{x}| = 0$. The case is similar to Case \ref{item_exact_nk}. 
\end{enumerate}
Since $k \le n/2$, the number of queries in the algorithm is 
\begin{equation*}
2\left\lceil \frac{\pi}{2\cdot 2\arcsin{\sqrt{\frac{n-k}{n}}}} \right\rceil +\left\lceil \frac{\pi}{2\cdot 2\arcsin{\sqrt{\frac{k}{n}}}}\right\rceil = 2+ \left\lceil \frac{\pi}{2\theta}\right\rceil.
\end{equation*}
Thus, we have
$Q_E(\fnk) \le \left\lceil \frac{\pi}{2\theta}\right\rceil+2$.
\end{proof}

\subsection{The Classical Case}\label{subsec:fnk_classical}
This section consists of Section \ref{subsubsec:fnk_classicalupperbound} and \ref{subsubsec:fnk_classicallowerbound}, which show the upper and lower bounds on the success probability of $T$-query classical randomized algorithms, respectively.
\subsubsection{The Upper Bound}\label{subsubsec:fnk_classicalupperbound}

First, we give the following fact. 
\begin{Fact}[Merging Corollary 1 with Proposition 2 in \cite{Koiran2006on}]\label{proposition:classical_upperbound}
For a partial symmetric Boolean function $f:D \rightarrow \B$, where $D \subseteq \B^n$, let $X = \inbrace{x|f(x) = 0}$ and $Y = \inbrace{y|f(y) = 1}$. Let $I \in [n]^T$ be a list of queries and $B \in \B^T$ be a set of possible answers. Let $x_I = x_{I_1}\cdots x_{I_T}$. 
Let $P_I^X(B)$ be the proportion of $x \in X$ satisfying the condition $x_I \in B$ and $P_I^Y(B)$ be the proportion of $y \in Y$ satisfying the condition $y_I \in B$. Let 
\begin{equation}\label{eq:alpha}
\alpha = \max_{I \in [n]^T, B \in \B^T}|P_I^X(B)-P_I^Y(B)| = \frac{1}{2}\sum_{I \in [n]^T, B \in \B^T}|P_I^X(B)-P_I^Y(B)|.
\end{equation}
Then $\delta_C\inparen{f, T} \le \alpha/2$.
\end{Fact}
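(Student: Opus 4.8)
The plan is to prove directly that every $T$-query randomized algorithm computing $f$ has worst-case success probability at most $1/2+\alpha/2$. First I would pass from randomized to deterministic by a standard averaging step (the easy direction of Yao's principle): a $T$-query randomized algorithm is a distribution over deterministic depth-$T$ decision trees, and its worst-case success probability over $D$ is at most its average success probability under the mixture $\mu=\tfrac12 U_X+\tfrac12 U_Y$, where $U_X,U_Y$ are the uniform distributions on $X=\{x:f(x)=0\}$ and $Y=\{y:f(y)=1\}$. By linearity of expectation it then suffices to bound $\E_{x\sim\mu}[\one{\mathcal D(x)=f(x)}]$ for a single deterministic tree $\mathcal D$.

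Next I would analyze $\mathcal D$ leaf by leaf. Index the leaves by $\ell$; leaf $\ell$ carries the partial assignment $\rho_\ell$ read along its path and an output bit $o_\ell$. Writing $P^X(\rho)$, $P^Y(\rho)$ for the fractions of $X$, resp.\ $Y$, consistent with $\rho$, the average success probability equals $\tfrac12\sum_{\ell:o_\ell=0}P^X(\rho_\ell)+\tfrac12\sum_{\ell:o_\ell=1}P^Y(\rho_\ell)\le \tfrac12\sum_\ell\max\{P^X(\rho_\ell),P^Y(\rho_\ell)\}$. Since the leaves partition $\B^n$ we have $\sum_\ell P^X(\rho_\ell)=\sum_\ell P^Y(\rho_\ell)=1$, and using $\max\{a,b\}=\tfrac12(a+b+|a-b|)$ the right-hand side becomes $\tfrac12+\tfrac14\sum_\ell|P^X(\rho_\ell)-P^Y(\rho_\ell)|$. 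Hence the bias of any deterministic $T$-query tree is at most $\tfrac14\sum_\ell|P^X(\rho_\ell)-P^Y(\rho_\ell)|$, and the whole statement reduces to showing this sum is at most $2\alpha$.

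Here symmetry does the work. Because $f$ (hence $X$ and $Y$, which I take to be unions of Hamming levels) is permutation-invariant, $P^X(\rho)$ depends only on the number $t$ of coordinates fixed by $\rho$ and the number $s$ of $1$'s among them. I would assume without loss of generality that $\mathcal D$ never re-queries a variable and that all leaves sit at depth exactly $T$ — padding shorter leaves with extra queries only splits them, which by the triangle inequality can only increase $\sum_\ell|P^X(\rho_\ell)-P^Y(\rho_\ell)|$. Then $P^X(\rho_\ell)$ and $P^Y(\rho_\ell)$ depend only on the count $c_\ell\in\{0,\dots,T\}$ of $1$-answers on the path; call them $P^X(T,c),P^Y(T,c)$. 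The key combinatorial observation is that the leaves of a full-depth binary decision tree are in bijection with their length-$T$ answer strings, and $c_\ell$ is exactly the number of $1$'s in that string, so at most $\binom{T}{c}$ leaves have $c_\ell=c$. Therefore $\sum_\ell|P^X(\rho_\ell)-P^Y(\rho_\ell)|\le\sum_{c=0}^T\binom{T}{c}\,|P^X(T,c)-P^Y(T,c)|=\sum_{b\in\B^T}|P_I^X(\{b\})-P_I^Y(\{b\})|$ for $I=(1,\dots,T)$, and by the standard total-variation identity the last sum equals $2\max_{B}|P_I^X(B)-P_I^Y(B)|\le 2\alpha$. Combining the estimates gives $\delta_C(f,T)\le\tfrac14\cdot 2\alpha=\alpha/2$.

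The hard part will be the treatment of adaptivity: $\alpha$ is defined through \emph{non-adaptive} query lists $I$, whereas $\delta_C$ ranges over adaptive decision trees, so one must show adaptivity buys nothing. The leaf-counting observation — leaves correspond to answer strings, and by symmetry the only relevant statistic of a leaf is its number of $1$-answers, of which there are at most $\binom{T}{c}$ with value $c$ — is precisely what closes that gap; everything else (the balanced-distribution reduction, the per-leaf $\max$ computation, and the max-over-subsets versus half-$L_1$ formulations of $\alpha$) is routine.
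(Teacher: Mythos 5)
The paper does not prove this statement itself --- it imports it verbatim as a Fact from \cite{Koiran2006on} --- so there is no internal proof to compare against. Your argument is a correct, self-contained derivation. The three steps all check out: (i) the averaging against $\mu=\tfrac12 U_X+\tfrac12 U_Y$ and the per-leaf identity $\max\{a,b\}=\tfrac12(a+b+|a-b|)$ correctly bound the bias of any deterministic depth-$T$ tree by $\tfrac14\sum_\ell|P^X(\rho_\ell)-P^Y(\rho_\ell)|$; (ii) the padding/refinement step is valid since splitting a leaf can only increase the $L_1$ sum by the triangle inequality; and (iii) the crucial reduction from adaptive to non-adaptive queries is exactly right --- by symmetry of $f$ a leaf's contribution depends only on the number of $1$-answers $c$ along its path, a full depth-$T$ tree without repeats has exactly $\binom{T}{c}$ such leaves, and $\sum_c\binom{T}{c}|P^X(T,c)-P^Y(T,c)|$ is precisely twice the total variation distance $\alpha$ realized by any fixed list of $T$ distinct queries. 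This yields $\delta_C(f,T)\le\tfrac14\cdot 2\alpha=\alpha/2$ as claimed (implicitly assuming $T\le n$ and $X,Y\neq\emptyset$, which is the regime in which the paper applies the Fact). Your symmetrization-over-leaves argument is essentially the same mechanism underlying the cited result of Koiran et al.
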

Then we prove the following lemma using \Cref{proposition:classical_upperbound}:
\begin{lemma}\label{lemma:fnk_classical_upperbound}
    For the function $\fnk$, the classical $T$-bias  satisfies:
    \begin{equation*}
      \delta_C\inparen{\fnk, T} = 
        \begin{cases}
         0,& \text{ if }T=1,\\
         O\inparen{\frac{k}{n}\cdot T},&\text{ if } 2 \le T \le n/k.
        \end{cases}
    \end{equation*}
\end{lemma}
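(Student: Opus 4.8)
The plan is to apply Fact~\ref{proposition:classical_upperbound}, so the entire task reduces to upper bounding the quantity $\alpha$ in \eqref{eq:alpha} for $f = \fnk$. Here $X = \{x : |x| \in \{0,n\}\}$ consists of just the all-zeros and all-ones strings, while $Y = \{y : |y| \in \{k,n-k\}\}$. For a fixed query list $I \in [n]^T$ and answer set $B \subseteq \B^T$, the value $P_I^X(B)$ is the fraction of $X$ whose restriction to the coordinates of $I$ lands in $B$ — but since $X$ has only two elements, whose $I$-restrictions are the constant strings $0^T$ and $1^T$, $P_I^X(B)$ is a multiple of $1/2$ determined by whether $0^T \in B$ and whether $1^T \in B$. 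Similarly $P_I^Y(B)$ is governed by sampling $T$ coordinates (without replacement, but with possible repeats in the list $I$) from a uniformly random string of weight $k$ or $n-k$. The key observation is that because $k \le n/2$ and by the even symmetry of $\fnk$, we may analyze the weight-$k$ case and the weight-$(n-k)$ case symmetrically; it suffices to bound $d_{TV}$ between the distribution of $y_I$ for $|y|=k$ and the point masses at $0^T$, $1^T$.

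\medskip
The core estimate is this: if $y$ is uniform over strings of weight $k$ and $I$ has distinct entries (the worst case, since repeats only make the restriction more degenerate and cannot increase $\alpha$), then $y_I$ is distributed as $H(n,k,T)$ on its Hamming weight; in particular $\Pr[y_I = 0^T]$ is the probability of drawing zero ones, namely $\binom{n-k}{T}/\binom{n}{T} = \prod_{j=0}^{T-1}\frac{n-k-j}{n-j} \ge \left(1 - \frac{k}{n-T+1}\right)^T$, and $\Pr[y_I = 1^T] = \binom{k}{T}/\binom{n}{T}$, which for $T \ge 2$ is at most $(k/n)^2 = O(k/n)$ and in the relevant range $T \le n/k$ is $O((k/n)^T) = O((k/n)^2)$. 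Using $1 - (1-p)^T \le pT$ with $p = \frac{k}{n-T+1} = O(k/n)$ (since $T \le n/k \le n/2$ forces $n - T + 1 = \Omega(n)$), we get $\Pr[y_I \ne 0^T] = O(\frac{k}{n} T)$. Combining, the total variation distance between the law of $y_I$ and the point mass at $0^T$ is $O(\frac{k}{n}T)$, and likewise the weight-$(n-k)$ distribution is $O(\frac{k}{n}T)$-close to the point mass at $1^T$. Since every element of $X$ restricts to $0^T$ or $1^T$, a short case analysis over the four possibilities for $(0^T \in B, 1^T \in B)$ shows $|P_I^X(B) - P_I^Y(B)| = O(\frac{k}{n}T)$ uniformly in $I, B$; hence $\alpha = O(\frac{k}{n}T)$ and $\delta_C(\fnk,T) \le \alpha/2 = O(\frac{k}{n}T)$.

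\medskip
For the $T=1$ case, the bound $\delta_C(\fnk,1)=0$ is immediate from the same setup: one query returns a single bit $y_{I_1}$, and for both the weight-$k$ and weight-$(n-k)$ inputs every coordinate is $1$ with probability exactly $k/n$ or $(n-k)/n$; but crucially the adversary can force $y$ to have weight $k$ \emph{or} weight $n-k$, and can also force $x$ to be $0^n$ \emph{or} $1^n$. A one-query algorithm sees a $\mathrm{Bernoulli}$ with parameter $0$, $1$, $k/n$, or $(n-k)/n$ depending on the (adversarially chosen) input, and because the $1$-inputs straddle both extreme parameters while the $0$-inputs are the two determinate strings, no decision rule on a single bit beats $1/2$ in the worst case — one writes out the $2\times 2$ payoff and checks that the value is exactly $1/2$. (Alternatively, this is just the degenerate endpoint of the general bound together with the trivial observation that a single coordinate of $0^n$ and a single coordinate of a weight-$(n-k)$ string can both be $0$, and a single coordinate of $1^n$ and of a weight-$k$ string can both be $1$, so $\alpha$ cannot exceed... one must be slightly careful and argue it directly.)

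\medskip
The main obstacle I anticipate is the bookkeeping around the list $I$ possibly containing \emph{repeated} indices: Fact~\ref{proposition:classical_upperbound} quantifies over all $I \in [n]^T$, not just injective ones, and one must argue that repeats do not increase $\alpha$ (intuitively a repeated query is wasted, but this needs a clean monotonicity statement, e.g. that the distribution of $y_I$ for a list with a repeat is a ``coarsening'' of one with distinct entries, or simply that $\Pr[y_I = 1^T]$ and $\Pr[y_I = 0^T]$ only move toward the distinct-entry values as one collapses repeats). The second delicate point is getting the $n - T + 1 = \Omega(n)$ denominator right throughout so that $p = O(k/n)$, which is exactly where the hypothesis $T \le n/k$ is used (it gives $T \le n/2$ since $k \ge 1$... actually one needs $k\ge 2$ or a separate tiny argument for $k=1, T$ near $n$, but $T \le n/k = n$ combined with the floor makes this harmless). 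Everything else is the routine hypergeometric-versus-point-mass estimate sketched above, for which Claim~\ref{claim_1epT}-style inequalities and Fact~\ref{fact:dtv_BH} are the natural tools.
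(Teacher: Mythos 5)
Your proposal is correct and follows essentially the same route as the paper: apply Fact~\ref{proposition:classical_upperbound}, note that the $X$-distribution of answers is supported on $0^T,1^T$ with mass $1/2$ each, and bound $\alpha$ by $\Pr[y_I\notin\{0^T,1^T\}]=1-\binom{n-T}{k}/\binom{n}{k}-\binom{n-T}{k-T}/\binom{n}{k}=O(\tfrac{k}{n}T)$, with the $T=1$ case vanishing identically. The only difference is cosmetic: the paper factors the ratio as $\prod_{j<k}(1-\tfrac{T}{n-j})\ge(1-\tfrac{T}{n-k+1})^k\ge 1-\tfrac{kT}{n-k+1}$, which uses $k\le n/2$ and thereby avoids the $k=1$, $T$ near $n$ edge case your factorization $(1-\tfrac{k}{n-T+1})^T$ forces you to treat separately.
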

\begin{proof}
We assume $T \le n/k$.
When $|B| = 0$ or $T$, $P_I^X(B)= 1/2$; when $0 < |B| < T$, $P_I^X(B) = 0$. On the other hand, for any $B \in \B^T$,
\begin{equation*}
P_I^Y(B) = \frac{\binom{n-T}{k-|B|}+\binom{n-T}{k-T+|B|}}{\binom{n}{k}+\binom{n}{n-k}}.
\end{equation*}
Then 
\begin{equation}\label{eq:gamma_alpha}
      \delta_C\inparen{\fnk, T} \le \frac{\alpha}{2},
\end{equation}
where
\begin{equation*}
\begin{aligned}
\alpha &= \max_{I \in [n]^T, B \in \B^T}|P_I^X(B)-P_I^Y(B)| \\
&= \sum_{|B| = 0 \text{ or } T}\inparen{P_I^X(B)-P_I^Y(B)} \\
&= 1-\frac{\binom{n-T}{k}+\binom{n-T}{k-T}}{\binom{n}{k}}. \\
\end{aligned}
\end{equation*}
If $T = 1$, then 
\begin{equation}\label{eq:alpha1}
\alpha = 1-\frac{\binom{n-1}{k}+\binom{n-1}{k-1}}{\binom{n}{k}} = 0;
\end{equation}
if $2 \le T \le n/k$, then 
\begin{equation}\label{eq:alpha2}
\begin{aligned}
\alpha &= 1-\frac{\binom{n-T}{k}+\binom{n-T}{k-T}}{\binom{n}{k}} \\
&\le 1-\frac{\binom{n-T}{k}}{\binom{n}{k}} \\
&\le 1-\inparen{1-\frac{T}{n-k+1}}^k \\
&\le \frac{kT}{n-k+1}\\
&\le \frac{2kT}{n}.
\end{aligned}
\end{equation}
Combining \Cref{eq:gamma_alpha,eq:alpha1,eq:alpha2}, Lemma \ref{lemma:fnk_classical_upperbound} is proven.
\end{proof}

\subsubsection{The Lower Bound}
\label{subsubsec:fnk_classicallowerbound}
We give the following lemma to match the upper bound above:
\begin{lemma}\label{lemma:fnk_classical_lowerbound}
    For the function $\fnk$ and $2 \le T \le n/k$, the classical $T$-bias satisfies:
    \begin{equation*}
      \delta_C\inparen{\fnk, T} = 
    \Omega\inparen{\frac{k}{n}\cdot T}.
    \end{equation*}
\end{lemma}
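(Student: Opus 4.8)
The plan is to design a $T$-query randomized algorithm for $\fnk$ that matches the upper bound $O\inparen{\frac{k}{n}\cdot T}$ from Lemma \ref{lemma:fnk_classical_upperbound}. The natural idea is to sample $T$ coordinates of the input $x$ (say without replacement) and base the decision on whether we see any $1$'s. Concretely: if $|x|\in\inbrace{0,n}$ then $\fnk(x)=0$, and the sample is either all $0$'s (when $|x|=0$) or all $1$'s (when $|x|=n$); if $|x|\in\inbrace{k,n-k}$ then $\fnk(x)=1$, and the sample will typically contain a mix of $0$'s and $1$'s. So the test ``output $1$ if the $T$ samples are not all-equal, output $0$ if they are all-equal'' almost works, except when $|x|\in\inbrace{k,n-k}$ and the sample happens to come out all-$0$ or all-$1$ (which happens with probability bounded away from $1$), and when $|x|\in\inbrace{0,n}$ the test is always correct. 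To balance the error we should randomize the output in the all-equal case, exactly as in Algorithm \ref{al:T_query_fnk}: when the sample is all-equal, output $0$ with a carefully chosen probability and $1$ otherwise.

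First I would fix notation: let $p$ be the probability that $T$ samples drawn without replacement from a string of Hamming weight $k$ are all $0$'s, i.e. $p_0 = \binom{n-k}{T}/\binom{n}{T}$, and by symmetry the probability they are all $1$'s from weight $k$ is $\binom{k}{T}/\binom{n}{T}$ (which is $0$ once $T>k$; since $T\le n/k$ and $k\le n/2$ this need not vanish, but it is small). Let $q = \Pr[\text{sample all-equal} \mid |x|\in\{k,n-k\}]$; by the $|x|\leftrightarrow n-|x|$ symmetry this is the same for $|x|=k$ and $|x|=n-k$. The key quantitative claim is $1-q = \Omega\inparen{\frac{k}{n}\cdot T}$. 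For the dominant term $1-p_0$, I would write $p_0 = \prod_{i=0}^{T-1}\frac{n-k-i}{n-i}\le \prod_{i=0}^{T-1}\frac{n-k}{n}\cdot(\text{correction})$, or more cleanly bound $p_0 \le \inparen{1-\frac{k}{n}}^{T}$ hmm — actually $\frac{n-k-i}{n-i}\le \frac{n-k}{n}$ fails; instead use $\frac{n-k-i}{n-i} \le 1 - \frac{k}{n}$ is false too. The correct direction: $\frac{n-k-i}{n-i}$ is decreasing in $i$, so $p_0 \le \inparen{\frac{n-k}{n}}^{?}$ — I'd instead bound $p_0 \le \binom{n-k}{T}/\binom{n}{T} \le \inparen{1-\frac{T}{n-k+1}}^{k}$ by the same manipulation used in \Cref{eq:alpha2} (pairing up factors), and then apply Claim \ref{claim_1epT} with $p = \frac{T}{n-k+1}$ and the roles of $T,k$ swapped (valid since $k \le \frac{n}{T} \le \frac{n-k+1}{T}$ when $T\le n/k$, using $k\le n/2$), giving $1-p_0 \ge \frac{1}{e}\cdot\frac{kT}{n-k+1} = \Omega\inparen{\frac{kT}{n}}$. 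Since $q \le p_0 + \binom{k}{T}/\binom{n}{T}$ and the second term is easily seen to be at most a small constant fraction of $1-p_0$ (or one argues more carefully), we get $1-q = \Omega\inparen{\frac{kT}{n}}$ as well; I expect this bookkeeping — controlling the all-$1$'s contribution when $T\le k$ — to be the main technical nuisance.

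Next I would specify the randomized output rule and compute the bias exactly. On the all-equal event output $0$ with probability $\frac{1}{1+q}$ and $1$ with probability $\frac{q}{1+q}$; on the not-all-equal event output $1$. Then for $|x|\in\inbrace{0,n}$ the sample is always all-equal, so we are correct with probability $\frac{1}{1+q}$; for $|x|\in\inbrace{k,n-k}$ we are correct with probability $(1-q)\cdot 1 + q\cdot\frac{q}{1+q} = \frac{1-q+q^2+q}{1+q}\cdot\frac{1}{1}$ — let me just record it as $\frac{1}{1+q}$ after simplification (this is the standard trick: both success probabilities equalize to $\frac{1}{1+q}$, mirroring the quantum analysis). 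Hence the worst-case success probability is $\frac{1}{1+q} = \frac12 + \frac{1-q}{2(1+q)} \ge \frac12 + \frac{1-q}{4}$, so the bias is at least $\frac{1-q}{4} = \Omega\inparen{\frac{k}{n}\cdot T}$, which is Lemma \ref{lemma:fnk_classical_lowerbound}. Finally I would note $T\ge 2$ is used only to ensure the regime is nontrivial (for $T=1$ the sample is trivially ``all-equal'' and no information distinguishes the cases, consistent with $\delta_C(\fnk,1)=0$), and $T\le n/k$ is exactly the range in which Claim \ref{claim_1epT} applies; for larger $T$ one truncates to $T = n/k$ queries and inherits a constant bias.

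\begin{proof}[Proof sketch to be expanded]
Consider the randomized algorithm that queries $T$ coordinates of $x$ uniformly without replacement. Let $q$ be the probability, conditioned on $|x|\in\inbrace{k,n-k}$, that all $T$ answers are equal; by the symmetry $|x|\leftrightarrow n-|x|$ this is well-defined and equals $\inparen{\binom{n-k}{T}+\binom{k}{T}}/\binom{n}{T}$. If all $T$ answers are equal, output $0$ with probability $\frac{1}{1+q}$ and $1$ otherwise; if not all equal, output $1$. When $|x|\in\inbrace{0,n}$ the answers are always equal, so the algorithm outputs $0$ with probability $\frac{1}{1+q}$, which is correct. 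When $|x|\in\inbrace{k,n-k}$, the algorithm is correct with probability $(1-q) + q\cdot\frac{q}{1+q} = \frac{1}{1+q}$. Hence the bias is $\frac{1}{1+q}-\frac12 = \frac{1-q}{2(1+q)} \ge \frac{1-q}{4}$. It remains to show $1-q = \Omega\inparen{\frac{k}{n}\cdot T}$ for $2\le T\le n/k$. Since $\binom{n-k}{T}/\binom{n}{T} \le \inparen{1-\frac{T}{n-k+1}}^{k}$ by the factor-pairing argument of \Cref{eq:alpha2}, and $k\le \frac{n}{T}\le\frac{n-k+1}{T}$ in this range, Claim \ref{claim_1epT} (with parameters $p=\frac{T}{n-k+1}$ and exponent $k$) gives $1-\binom{n-k}{T}/\binom{n}{T} \ge \frac{1}{e}\cdot\frac{kT}{n-k+1} = \Omega\inparen{\frac{kT}{n}}$. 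A direct estimate shows $\binom{k}{T}/\binom{n}{T}$ contributes at most a constant fraction of this quantity (it is $0$ when $T>k$, and otherwise bounded by $\inparen{k/n}^{T}$, which is much smaller), so $1-q = \Omega\inparen{\frac{k}{n}\cdot T}$, completing the proof.
\end{proof}
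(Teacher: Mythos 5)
Your algorithm and error-balancing are essentially the paper's proof (the paper samples the $T$ positions uniformly \emph{with} replacement and sets the all-equal output probability to $1-p=\frac{1}{1+q'}$ with $q'=(\tfrac kn)^T+(\tfrac{n-k}{n})^T$, which is exactly your $\frac{1}{1+q}$ rule for the with-replacement statistics), and the final step is the same application of Claim \ref{claim_1epT} followed by absorbing the all-ones term using $T\ge 2$ and $k\le n/2$. So the route is the same; the without-replacement variant is only a cosmetic difference.

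There is, however, one step that fails as written: the inequality $\binom{n-k}{T}/\binom{n}{T}\le\bigl(1-\tfrac{T}{n-k+1}\bigr)^{k}$ goes in the wrong direction. The factor-pairing in \Cref{eq:alpha2} gives $\binom{n-T}{k}/\binom{n}{k}=\prod_{i=0}^{k-1}\bigl(1-\tfrac{T}{n-i}\bigr)\ge\bigl(1-\tfrac{T}{n-k+1}\bigr)^{k}$, i.e.\ a \emph{lower} bound on the ratio, which is what the paper needs for its \emph{upper} bound on the bias; you need an \emph{upper} bound on the ratio here. (Concretely, for $n=4,k=2,T=2$ the ratio is $1/6$ while $(1-\tfrac23)^2=1/9$.) The fix is immediate and is in fact the route you talked yourself out of: $\frac{n-k-i}{n-i}\le\frac{n-k}{n}$ \emph{does} hold for all $0\le i\le T-1$ (it is equivalent to $ki\ge 0$), so
\[
\frac{\binom{n-k}{T}}{\binom{n}{T}}=\prod_{i=0}^{T-1}\frac{n-k-i}{n-i}\le\Bigl(1-\frac{k}{n}\Bigr)^{T},
\]
and Claim \ref{claim_1epT} with $p=k/n$ and $T\le n/k$ gives $1-\binom{n-k}{T}/\binom{n}{T}\ge\frac{1}{e}\cdot\frac{kT}{n}$, exactly as in the paper. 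With that substitution, and your (correct) bound $\binom{k}{T}/\binom{n}{T}\le(k/n)^T\le\frac12\cdot\frac kn$ for $T\ge2$, the rest of your argument is sound.
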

\begin{proof}
For $2 \le T \le n/k$, we give the following $T$-query algorithm: We make $T$ queries uniformly and independently. If the query result is all 1's or 0's, we output 0 with the probability $1-p$ and output 1 with the probability $p$, where 
\begin{equation*}
p = \frac{\inparen{\frac{k}{n}}^T+\inparen{\frac{n-k}{n}}^T}{\inparen{\frac{k}{n}}^T+\inparen{\frac{n-k}{n}}^T+1}.
\end{equation*}
Otherwise, we output 1.

Next, we analyze the bias of the algorithm. If $|x| = 0$ or $n$, the success probability of the algorithm is $1-p$; if $|x| = k$ or $n-k$, the success probability of the algorithm is 
\begin{equation*}
p\inparen{\inparen{\frac{k}{n}}^T+\inparen{\frac{n-k}{n}}^T}+\inparen{1-\inparen{\inparen{\frac{k}{n}}^T+\inparen{\frac{n-k}{n}}^T}} = 1-p.
\end{equation*}
Thus, the bias of the algorithm is 
\begin{equation*}
\begin{aligned}
1-p-\frac{1}{2} &= \frac{1-\inparen{\frac{k}{n}}^T-\inparen{\frac{n-k}{n}}^T}{2\inparen{\inparen{\frac{k}{n}}^T+\inparen{\frac{n-k}{n}}^T+1}}\\
&\ge \frac{1}{4}\inparen{1-\inparen{\frac{n-k}{n}}^T-\inparen{\frac{k}{n}}^T}\\
&\ge \frac{1}{4}\cdot\inparen{\frac{1}{e}\frac{kT}{n}-\inparen{\frac{k}{n}}^T} \\
&\ge \frac{1}{4}\cdot\inparen{\frac{1}{e}\frac{kT}{n}-\frac{1}{2}\frac{k}{n}} \\
&= \frac{1}{4}\cdot\frac{k}{n}\inparen{\frac{T}{e}-\frac{1}{2}},\\
\end{aligned}
\end{equation*}
where the second inequality comes from Claim \ref{claim_1epT}.
\end{proof}
\Cref{lemma:fnk_classical_lowerbound} implies if $T > n/k$, then $\delta_C\inparen{\fnk, T} = \Omega(1)$.
Combining Lemmas \ref{lemma:fnk_classical_upperbound} and \ref{lemma:fnk_classical_lowerbound}, we have the following theorem:
\begin{theorem}\label{th:fnk_classical}
For the function $\fnk$ and $T>0$, the classical $T$-bias is 
\begin{equation*}
\delta_C\inparen{\fnk, T} = 
\begin{cases}
0, & \text{ if }T=1, \\
\Theta(\fnkclassbias), & \text{ if }2 \le T \le n/k, \\
\Theta(1), & \text{ if }T > n/k.
\end{cases}
\end{equation*}
\end{theorem}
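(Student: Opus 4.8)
The plan is simply to assemble the statement from the two lemmas just proved, splitting into the three regimes of $T$.

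First I would dispatch $T=1$: the first branch of \Cref{lemma:fnk_classical_upperbound} already gives $\delta_C(\fnk,1)\le 0$, and since outputting a uniformly random bit succeeds with probability exactly $1/2$ on every input, we have $\delta_C(f,T)\ge 0$ in general, so $\delta_C(\fnk,1)=0$. Next, for $2\le T\le n/k$ I would combine the upper bound $\delta_C(\fnk,T)=O(\tfrac kn T)$ from \Cref{lemma:fnk_classical_upperbound} with the matching lower bound $\delta_C(\fnk,T)=\Omega(\tfrac kn T)$ from \Cref{lemma:fnk_classical_lowerbound} to get $\delta_C(\fnk,T)=\Theta(\tfrac kn T)$.

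For the last regime $T>n/k$, the upper bound $\delta_C(\fnk,T)\le 1/2=O(1)$ is immediate, and for the lower bound I would use that the classical $T$-bias is nondecreasing in $T$ (pad a shorter randomized algorithm with dummy queries) together with $n/k\ge 2$ (which holds because $k\le n/2$ in the definition of $\fnk$): running the algorithm of \Cref{lemma:fnk_classical_lowerbound} on a fixed budget $T'=\Theta(n/k)$ with $2\le T'\le\lfloor n/k\rfloor$ yields bias $\tfrac14\cdot\tfrac kn(\tfrac{T'}{e}-\tfrac12)=\Omega(1)$, so $\delta_C(\fnk,T)=\Theta(1)$.

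I do not expect any real obstacle — all the work lives in \Cref{lemma:fnk_classical_upperbound}, \Cref{lemma:fnk_classical_lowerbound}, and \Cref{proposition:classical_upperbound}. The one point I would be mildly careful about is the constant in the final regime: I would take $T'=2$ when $n/k<3$ and $T'=\lfloor n/k\rfloor$ otherwise, since the bare estimate $\tfrac14\cdot\tfrac kn(\tfrac{\lfloor n/k\rfloor}{e}-\tfrac12)$ can degenerate for the smallest admissible ratios $n/k\approx 2$, whereas either of these explicit choices keeps the bias bounded away from $0$.
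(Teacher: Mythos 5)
Your proposal is correct and matches the paper's proof, which likewise just combines \Cref{lemma:fnk_classical_upperbound} and \Cref{lemma:fnk_classical_lowerbound} and obtains the $T>n/k$ regime by running the lower-bound algorithm with a budget of $\Theta(n/k)$ queries. Your extra care about the constant when $n/k$ is close to $2$ is a legitimate (if minor) refinement of a point the paper passes over silently.
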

Finally, we give 
Claim \ref{claim:d_fnk}, 
which implies the zero-error randomized query complexity of $\fnk$ is $n-k+1$.
\begin{claim}\label{claim:d_fnk}
The optimal success probability of $T$-query classical randomized algorithms is 1 if and only if $T \ge n-k+1$.
\end{claim}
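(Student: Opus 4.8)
The optimal success probability of $T$-query classical randomized algorithms for $\fnk$ is $1$ if and only if $T \ge n-k+1$.

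The plan is to prove both directions by reasoning about the information a set of (possibly adaptive) queries can extract from an input whose Hamming weight lies in $\{0,k,n-k,n\}$. For the ``if'' direction, I would exhibit a deterministic algorithm: query the $n-k+1$ coordinates $1,2,\dots,n-k+1$. If the input has weight $0$ or $n$ the answers are all-$0$ or all-$1$, and $\fnk(x)=0$; if the input has weight $k$ or $n-k$, then among any $n-k+1$ coordinates at least one is a $0$ (since there are only $k\le n-k$ ones in the $n-k$ case, wait---need care) and at least one is a $1$. Let me redo this: if $|x|=k$ there are $n-k$ zeros, so in $n-k+1$ fixed coordinates we see at least one $1$; if $|x|=n-k$ there are $k\le n-k$ zeros and $n-k$ ones, so in $n-k+1$ coordinates we see at least one $0$; hence in both the weight-$k$ and weight-$(n-k)$ cases the answers are \emph{not} constant. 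So the rule ``output $0$ if the $n-k+1$ answers are constant, else output $1$'' is correct on all of $D$, giving success probability $1$ with $n-k+1$ queries; averaging over a distribution of such algorithms keeps it $1$.

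For the ``only if'' direction, I would show that no randomized algorithm using $T\le n-k$ queries can succeed with probability $1$ on every input. It suffices to exhibit, for each deterministic $T$-query decision tree $\mathcal{A}$ with $T \le n-k$, two inputs $x,y\in D$ with $\fnk(x)\neq\fnk(y)$ on which $\mathcal{A}$ behaves identically, since then $\mathcal{A}$ errs on at least one of them, and a randomized algorithm is a distribution over such trees (so some input is missed with probability $\ge$ the weight of the erring tree---more carefully, by a minimax/averaging argument, if every deterministic $T$-query tree errs somewhere, then against the uniform distribution over $D$ any randomized tree has error bounded away from... actually cleanest: any randomized algorithm is a distribution over deterministic trees; pick the all-zeros input $x=0^n$; the algorithm's run on $0^n$ follows some root-to-leaf path in each tree reading $T$ zeros; now flip to $1$ the $k$ coordinates \emph{not} queried on that path---possible since $n-T\ge k$---to get an input $y$ with $|y|=k$, on which that tree gives the same answer as on $0^n$; hence for each tree, its error on $\{0^n\}\cup\{\text{such }y\}$ is at least... ). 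The clean statement: for every deterministic tree $\mathcal A$, run it on $0^n$; it queries a set $S$ of $\le T\le n-k$ positions, all returning $0$, and outputs some bit $a$. Form $y$ by setting $k$ positions outside $S$ to $1$ (possible as $n-|S|\ge n-T\ge k$); then $\mathcal A(y)=a$ too, while $\fnk(0^n)=0\neq1=\fnk(y)$. So $\mathcal A$ errs on at least one of $0^n,y$. For a randomized algorithm $\mathcal R=\sum_i p_i \mathcal A_i$: consider the two-input family; each $\mathcal A_i$ outputs the same value $a_i$ on $0^n$ and on its associated $y_i$, so $\Pr[\mathcal R(0^n)=0]+\max_i... $ hmm, the $y_i$ depend on $i$. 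A robust way: let $a_i=\mathcal A_i(0^n)$; then $\Pr_{\mathcal R}[\text{correct on }0^n]=\sum_{i:a_i=0}p_i=:q$. For each $i$ with $a_i=1$, $\mathcal A_i$ is correct on $0^n$? no. Let me instead just say: on input $0^n$ the success probability is $q$; I will construct a single input $y$ of weight $k$ with success probability $\le 1-q$, contradiction unless... Choose $y$: we want $\mathcal A_i(y)=a_i$ for as many $i$ as possible. Since the $y_i$ differ, take $y$ uniformly at random among weight-$k$ inputs; for fixed tree $\mathcal A_i$, $\Pr_y[\mathcal A_i(y)\text{ disagrees with }\fnk]\ge \Pr_y[y\text{ agrees with }0^n\text{ on the queried path of }\mathcal A_i]$, and that path reads $\le T$ zeros, so $\Pr_y[\text{agree}]=\binom{n-T}{k}/\binom nk>0$; thus expected error of $\mathcal R$ over random weight-$k$ input is $\ge \sum_i p_i\,[\,a_i=0\,]\cdot\binom{n-T}{k}/\binom nk$, so some weight-$k$ input has error $>0$ whenever $q>0$; combined with error $1-q$ on $0^n$, one of the two inputs has positive error. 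Hence perfect success forces $T\ge n-k+1$.

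The main obstacle is the ``only if'' direction's handling of \emph{randomized} (not merely deterministic) algorithms together with \emph{adaptivity}: one must argue that no distribution over adaptive decision trees of depth $n-k$ is perfect. Adaptivity is handled because the argument only ever looks at the single fixed path taken on $0^n$; randomization is handled by the averaging/minimax step above, where the key quantity is $\binom{n-T}{k}/\binom nk>0$ for $T\le n-k$, which is exactly the threshold $T=n-k+1$ at which this fraction can be forced to $0$. I expect to present the randomized lower bound via the distributional (Yao-style) argument over the uniform distribution on weight-$k$ strings plus the single point $0^n$, which makes the adaptivity transparent and the counting elementary.
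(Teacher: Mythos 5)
Your proposal is correct, and the two directions split as follows when compared with the paper. For the ``if'' direction you give essentially the same algorithm as the paper: query $n-k+1$ (distinct) positions and output $0$ iff the answers are constant; your pigeonhole count (at least one $1$ and at least one $0$ appear whenever $|x|\in\{k,n-k\}$, using $k\le n/2$) is exactly the justification the paper leaves implicit. For the ``only if'' direction you take a genuinely different route. The paper invokes its general distributional upper bound (Fact~\ref{proposition:classical_upperbound}) and observes that the quantity $\alpha=1-\bigl(\binom{n-T}{k}+\binom{n-T}{k-T}\bigr)/\binom{n}{k}$ computed in Lemma~\ref{lemma:fnk_classical_upperbound} is strictly less than $1$ when $T\le n-k$, so the success probability is at most $\tfrac12+\tfrac{\alpha}{2}<1$. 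You instead give a self-contained adversary argument: run each deterministic tree on $0^n$, note it reads at most $T\le n-k$ zeros, and plant $k$ ones outside the queried path; your averaging step over a uniformly random weight-$k$ input (with the key fraction $\binom{n-T}{k}/\binom{n}{k}>0$) correctly handles both adaptivity and the fact that the fooling input depends on the tree. A slightly cleaner way to finish your randomized step is to observe that a success probability of exactly $1$ forces every tree in the support of the mixture to be individually correct on every input, reducing immediately to the deterministic case; but your distributional version is also valid. Your approach is more elementary and makes the threshold $T=n-k+1$ transparent; the paper's approach reuses machinery already set up for the quantitative bias bounds. The exploratory asides in your write-up (``wait,'' ``hmm'') should of course be removed, but the underlying argument is sound.
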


\begin{proof}
On the one hand, by  \Cref{proposition:classical_upperbound} and \Cref{eq:alpha}, if $T \le n-k$, 
the optimal success algorithm of $T$-query algorithms is at most 
\begin{equation*}
\frac{1}{2}+\frac{1}{2}\inparen{1-\frac{\binom{n-T}{k}+\binom{n-T}{k-T}}{\binom{n}{k}}} < 1.
\end{equation*}
On the other hand, we give the following algorithm using $n-k+1$ queries: Query $n-k+1$ distinct bits randomly. i) If the results are all 1's or all 0's, then $|x| \ge n-k+1$ or $|x| \le k-1$. Since $|x| \in \inbrace{0,k,n-k,n}$, we have $|x| = 0$ or $|x| = n$. Thus we output 0. ii) If the results include both 0 and 1, then we output 1. 
\end{proof}
Finally, \Cref{th:fnk1} consists of \Cref{th:fnk_quantum,th:fnk_classical};
\Cref{cor:fnk} is a direct corollary of \Cref{th:fnk1};
\Cref{th:fnk2} consists of 
Claim \ref{claim:d_fnk}, Lemma \ref{lemma:fnk_exact_lower_bound} and Lemma \ref{lemma:fnk_exact_upper_bound}.
We recall \Cref{th:fnk1}, \Cref{cor:fnk} and \Cref{th:fnk2} as follows:
\optfnk* 
\corfnk*
\exactfnk*

\section{The Optimal Success Probability of $T$-Query Algorithms to Compute $\kl$}
\label{sec:kl}
In this section, we analyze the optimal success probability of quantum and classical query algorithms to compute $\kl$ when the number of queries is fixed, where $k < l$ and
\begin{equation*}
\kl(x) = 
\begin{cases}
0, & \text{ if }|x| = k, \\
1, & \text{ if }|x| = l.
\end{cases}
\end{equation*}

\subsection{The Quantum Case}
\subsubsection{The Upper Bound}\label{kl_quantum_upper_bound}
We first give the lower bound of the approximate degree of $f_n^{k,l}$ for any $0 < \epsilon < 1/2$ by \Cref{lemma:deg_lowerbound}. The proof idea is inspired by \cite{Paturi92}.
\begin{restatable}{lemma}{deglowerbound}
\label{lemma:deg_lowerbound}
$\adeg_{\epsilon}\left(\kl\right) = \Omega\inparen{\max\inbrace{\frac{\beta\sqrt{\left(n-k\right)l}}{l-k}
, \sqrt{\frac{\beta n}{l-k}}}}$, where $\beta = 1/2-\epsilon$.
\end{restatable}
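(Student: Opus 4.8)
The plan is to follow Paturi's strategy: symmetrize the $\epsilon$-approximating polynomial into a univariate one, observe that it must rise by at least $2\beta$ between the integer points $k$ and $l$, and then push its degree down with a Markov--Bernstein inequality. First I would take an $\epsilon$-approximating polynomial $p$ for $\kl$ of degree $d=\adeg_{\epsilon}(\kl)$ and apply \Cref{symmetrization} to $p^{\mathrm{sym}}$, obtaining a single-variate polynomial $q$ of degree at most $d$ with $q(i)=p^{\mathrm{sym}}(x)$ whenever $|x|=i$. Since $p$ takes values in $[0,1]$ on the cube, $q(i)\in[0,1]$ for every integer $i\in\{0,\dots,n\}$, and since $\kl$ is $0$ on weight $k$ and $1$ on weight $l$ we get $q(k)\le\epsilon$ and $q(l)\ge1-\epsilon$. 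Hence $\int_k^l q'(t)\,dt=q(l)-q(k)\ge1-2\epsilon=2\beta$, so $\int_k^l|q'(t)|\,dt\ge2\beta$.

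Next I would transplant \Cref{face:bmi} from $[-1,1]$ to $[0,n]$ via $x=(2t-n)/n$, for which $\sqrt{1-x^2}=2\sqrt{t(n-t)}/n$; this gives $|q'(t)|\le 4M\big/\big(2\sqrt{t(n-t)}/d+n/d^2\big)$ for $t\in(0,n)$, where $M=\|q\|_{[0,n]}$. Integrating over $[k,l]$ and keeping only the $\sqrt{t(n-t)}$ term of the denominator yields $2\beta\le 2Md\int_k^l\frac{dt}{\sqrt{t(n-t)}}=4Md\big(\arcsin\sqrt{l/n}-\arcsin\sqrt{k/n}\big)$, while keeping only the $n/d^2$ term yields $2\beta\le 4Md^2(l-k)/n$. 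For the first inequality I need the elementary estimate $\arcsin\sqrt{l/n}-\arcsin\sqrt{k/n}\le\frac{\pi}{2}\cdot\frac{l-k}{\sqrt{(n-k)l}}$, which follows from the angle-subtraction identity: the difference lies in $[0,\pi/2]$ and its sine equals $\frac{\sqrt{l(n-k)}-\sqrt{k(n-l)}}{n}=\frac{l-k}{\sqrt{l(n-k)}+\sqrt{k(n-l)}}\le\frac{l-k}{\sqrt{(n-k)l}}$, together with $\arcsin u\le\frac{\pi}{2}u$. Substituting, the two inequalities become $d=\Omega\big(\frac{\beta\sqrt{(n-k)l}}{M(l-k)}\big)$ and $d=\Omega\big(\sqrt{\beta n/(M(l-k))}\big)$, and taking the maximum gives exactly the claimed bound once $M=O(1)$. (Equivalently, the first term can be obtained slightly more cleanly by passing to the angle variable $\phi=2\arcsin\sqrt{t/n}$, under which $q$ becomes an even trigonometric polynomial of degree $d$, and applying Bernstein's inequality for trigonometric polynomials with the same arcsine estimate.)

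The step that requires real care --- and the one I expect to be the main obstacle --- is the bound $\|q\|_{[0,n]}=O(1)$. \Cref{face:bmi} applies with $M=\|q\|_{[0,n]}$ whatever its value, but a polynomial bounded by $1$ only at the integers $0,1,\dots,n$ is guaranteed to stay $O(1)$ on all of $[0,n]$ only when its degree is $O(\sqrt n)$ (the Ehlich--Zeller / Coppersmith--Rivlin growth estimates), and can be exponentially large in between otherwise. When the bias $\beta$ is small this is harmless, since then the $\sqrt{\beta n/(l-k)}$ term dominates and is at most $\sqrt n$, forcing $d=O(\sqrt n)$; but for $\beta$ bounded away from $0$ together with a small gap $l-k$ the approximate degree can be as large as $\Theta(n)$, and there one must work harder --- restricting the Markov--Bernstein estimate to a sub-interval on which $q$ is controlled, or arguing directly with the forward differences of $q$ at integer points, as in Paturi's original treatment. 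Apart from this, the argument is routine: the integral estimate and the arcsine inequality are both elementary, and everything else is bookkeeping with the symmetrization lemma.
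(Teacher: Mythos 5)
Your proposal follows the same route as the paper's proof: symmetrize the approximating polynomial (Fact~\ref{symmetrization}), use the rise of $2\beta$ between the points $k$ and $l$, and apply Paturi's Markov--Bernstein estimate (Fact~\ref{face:bmi}) after rescaling to $[-1,1]$, with the two terms of the stated bound coming respectively from the $\frac{1}{d}\sqrt{1-x^2}$ part and the $\frac{1}{d^2}$ part of that inequality. Your arcsine computation is a correct integral version of the paper's pointwise mean-value-theorem step combined with its estimate $\sqrt{1-x^2}\ge 2\sqrt{\min\inbrace{(n-k)/n,\,l/n}}$ on the interval $[\tfrac{2k}{n}-1,\tfrac{2l}{n}-1]$.

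The gap is exactly the one you flag and leave open: your conclusions carry a factor $1/M$ with $M=\|q\|_{[0,n]}$, and $M=O(1)$ is false in general for a polynomial constrained only at the integers, so the lemma does not follow from your argument as written. The paper never bounds $M$. Instead it uses the fact that Fact~\ref{face:bmi} controls the \emph{relative} derivative $|p'(x)|/\|p\|$ and splits on the size of $\|p\|$: if $\|p\|\le 1$, the mean value theorem on $[\tfrac{2k}{n}-1,\tfrac{2l}{n}-1]$ gives $|p'(x)|/\|p\|\ge n\beta/(l-k)$ directly; if $\|p\|\ge 1$, the polynomial must climb from a value at most $\tfrac12-\beta\le\|p\|/2$ at the grid point corresponding to $k$ up to $\|p\|$, so the required rise is itself proportional to $\|p\|$ and the norm cancels, again giving $|p'(x)|/\|p\|=\Omega\inparen{n\beta/(l-k)}$ at a point where $\sqrt{1-x^2}$ is controlled. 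In other words, the fix is not Coppersmith--Rivlin-type growth control but the observation that the quantity being lower-bounded is scale-invariant in $p$; you should restate your two inequalities with $2\beta$ replaced by $\max\inbrace{2\beta,\ \|q\|_{[0,n]}/2}$ on the left and then divide through by $M$. (One caveat if you adopt this route: the paper's second case tacitly locates the maximizer of $|p|$ within the interval $[\tfrac{2k}{n}-1,\tfrac{2l}{n}-1]$, so that both the mean-value point and the $\sqrt{1-x^2}$ estimate apply there; a careful write-up should justify that localization, e.g.\ by running the argument on the subinterval between a grid point and a nearby maximizer.)
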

\begin{proof}
By Fact \ref{face:bmi}, we have 
\begin{equation}\label{eq:deg_lower_bound}
\deg\left(p\right) \ge \max\inbrace{\frac{\sqrt{1-x^2}|p'\left(x\right)|}{2||p||},\sqrt{\frac{|p'\left(x\right)|}{2||p||}}}.
\end{equation}
Let $g:\{0,1\}^n \rightarrow [0,1]$ be a function to approximate $\kl$ with bias $\beta$. 
we let $g:[0,n] \rightarrow [0,1]$  be the corresponding single-variate version of polynomial such that $g\left(k\right) \le 1/2-\beta$, $g\left(l\right) \ge 1/2+\beta$ and $g\left(i\right) \in [0,1]$ for any $i \in \inbrace{0,...,n}$. Let $p:[-1,1] \rightarrow [-1,1]$ defined as 
$p\left(\frac{2x}{n}-1\right) = g\left(x\right)$ for any $x \in [0,n]$. Then $p\left(\frac{2k}{n}-1\right) \le 1/2-\beta$, $p\left(\frac{2l}{n}-1\right)\ge 1/2+\beta$. For the interval $[\frac{2k}{n}-1,\frac{2l}{n}-1]$, if $||p|| \le 1$, by the mean value theorem, there exists $x \in [\frac{2k}{n}-1,\frac{2l}{n}-1]$ such that 
\begin{equation*}
p'\left(x\right) \ge \frac{\left(\frac{1}{2}+\beta\right)-\left(\frac{1}{2}-\beta\right)}{\left(\frac{2l}{n}-1\right)-\left(\frac{2k}{n}-1\right)} = \frac{n\beta}{l-k},
\end{equation*}
so we have
$\frac{|p'\left(x\right)|}{||p||} \ge \frac{n\beta}{l-k}$; if $||p|| \ge 1$, then there exists $x$ such that 
\begin{equation*}
p'\left(x\right) = \frac{||p||-\left(\frac{1}{2}-\beta\right)}{\frac{2\left(l-k\right)}{n}} \ge \frac{\frac{||p||}{2}}{\frac{2\left(l-k\right)}{n}} = \frac{||p||n}{4\left(l-k\right)} \ge  \frac{||p||n\beta}{4\left(l-k\right)}.
\end{equation*} 
Thus, for the interval $[\frac{2k}{n}-1,\frac{2l}{n}-1]$, there always exists $x$ such that
\begin{equation}\label{eq:deg_lower_bound_2}
\frac{|p'\left(x\right)|}{||p||} \ge \frac{n\beta}{4\left(l-k\right)}.
\end{equation}
Since $x \in [\frac{2k}{n}-1, \frac{2l}{n}-1]$, we have 
\begin{equation}\label{eq:inequ_sqrt_x}
\begin{aligned}
\sqrt{1-x^2} &\ge \min\inbrace{2\sqrt{\frac{k}{n}\left(1-\frac{k}{n}\right)}, 2\sqrt{\frac{l}{n}\left(1-\frac{l}{n}\right)}} \\
&\ge 2\sqrt{\min\inbrace{\frac{n-k}{n},\frac{l}{n}}}.
\end{aligned}
\end{equation}
Let $d_1 = \frac{\sqrt{1-x^2}p^{\prime}\left(x\right)}{2||p||}$. 
By \Cref{eq:deg_lower_bound_2,eq:inequ_sqrt_x}, we have
\begin{equation}\label{eq:d1}
\begin{aligned}
d_1 &=\Omega\inparen{\frac{n\beta\min\inbrace{\sqrt{n-k},\sqrt{l}}}{l-k}} \\
&=\Omega\inparen{\frac{\beta\max\inbrace{\sqrt{n-k},\sqrt{l}}\min\inbrace{\sqrt{n-k},\sqrt{l}}}{l-k}} \\
&= \Omega\inparen{\frac{\beta\sqrt{\left(n-k\right)l}}{l-k}}.  
\end{aligned}
\end{equation}
Combining \Cref{eq:deg_lower_bound,eq:deg_lower_bound_2,eq:d1}, Lemma \ref{lemma:deg_lowerbound} is proved.
\end{proof}
Then we give the following lemma:
\begin{lemma}\label{lemma:upperboundKL}
For the function $\kl$ and $T>0$, the quantum $T$-bias satisfies that 
\[
\delta_Q\inparen{\kl, T}= O\inparen{\min \inbrace{\frac{l-k}{\sqrt{\left(n-k\right)l}}\cdot T, \frac{l-k}{n} \cdot T^2}}.
\]
\end{lemma}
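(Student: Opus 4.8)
The plan is to bound $\delta_Q(\kl, T)$ in two ways and take the minimum. First, I would invoke the polynomial method: by Fact~\ref{beals}, any $T$-query quantum algorithm computing $\kl$ with bias $\beta$ yields an approximating polynomial, so $T \ge Q_{1/2-\beta}(\kl) \ge \frac{1}{2}\adeg_{1/2-\beta}(\kl)$. Plugging in the lower bound $d_1 = \Omega\inparen{\frac{\beta\sqrt{(n-k)l}}{l-k}}$ from \Cref{lemma:deg_lowerbound} gives $T = \Omega\inparen{\frac{\beta\sqrt{(n-k)l}}{l-k}}$, which rearranges to $\beta = O\inparen{\frac{l-k}{\sqrt{(n-k)l}}\cdot T}$. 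This handles the first term inside the $\min$.

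For the second term $\frac{l-k}{n}\cdot T^2$, I would use the block sensitivity bound. By Fact~\ref{fact:Ben2021}, $Q_{1/2-\beta}(\kl) = \Omega(\sqrt{\beta\cdot\bs(\kl)})$, so $\beta = O\inparen{T^2/\bs(\kl)}$ (this is exactly the inequality \eqref{eq:bias_bs} applied to $\kl$). Thus it suffices to lower bound $\bs(\kl)$. Since $\kl(k) \neq \kl(l)$, \Cref{lemma:bs} (with $a=k$, $b=l$) gives $\bs(\kl) \ge \lfloor \frac{n}{2(l-k)}\rfloor = \Omega\inparen{\frac{n}{l-k}}$, and therefore $\beta = O\inparen{\frac{l-k}{n}\cdot T^2}$.

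Combining the two bounds, $\delta_Q(\kl,T) = O\inparen{\min\inbrace{\frac{l-k}{\sqrt{(n-k)l}}\cdot T,\ \frac{l-k}{n}\cdot T^2}}$, as claimed. The only mild subtlety is ensuring \Cref{lemma:deg_lowerbound} and \Cref{lemma:bs} really apply to the \emph{partial} function $\kl$ — but \Cref{lemma:bs} is stated for partial symmetric functions, and the argument behind \Cref{lemma:deg_lowerbound} only uses the values of the approximating polynomial at $|x| = k$ and $|x| = l$ together with the global bound $0 \le p \le 1$ on $\B^n$, so partiality is not an obstacle. I do not expect any genuinely hard step here; the real work was already done in \Cref{lemma:deg_lowerbound}, and this lemma is essentially a repackaging of that bound together with the known relation between quantum query complexity and block sensitivity. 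If I wanted to be careful about edge cases, the one thing to check is the regime where the $\min$ is dominated by one term versus the other (roughly $T \lessgtr \frac{\sqrt{(n-k)l}}{l-k}$), but since we only claim an upper bound on the bias and both estimates hold for all $T$, taking the minimum is automatically valid.
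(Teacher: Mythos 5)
Your proposal is correct, and the first half is exactly the paper's argument: Fact~\ref{beals} plus the first term of \Cref{lemma:deg_lowerbound} gives $T=\Omega\inparen{\frac{\beta\sqrt{(n-k)l}}{l-k}}$, hence $\beta=O\inparen{\frac{l-k}{\sqrt{(n-k)l}}\cdot T}$. For the second term you diverge from the paper: you go through block sensitivity, using Fact~\ref{fact:Ben2021} to get $\beta=O(T^2/\bs(\kl))$ and \Cref{lemma:bs} to get $\bs(\kl)=\Omega\inparen{\frac{n}{l-k}}$ (your handling of the edge case $l-k>n/2$ is fine since $\bs\geq 1$ always holds there). The paper instead stays entirely within the polynomial method: \Cref{lemma:deg_lowerbound} already contains a \emph{second} term, $\adeg_{\epsilon}(\kl)=\Omega\inparen{\sqrt{\beta n/(l-k)}}$, coming from the $\sqrt{|p'(x)|/(2\|p\|)}$ branch of the Bernstein--Markov inequality (Fact~\ref{face:bmi}), and squaring $T=\Omega\inparen{\sqrt{\beta n/(l-k)}}$ gives the same $\beta=O\inparen{\frac{l-k}{n}\cdot T^2}$. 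So both routes yield identical bounds; your block-sensitivity route is in fact the one the paper uses for the analogous upper bound for $\fnk$ (via \Cref{eq:bias_bs}), while the paper's proof for $\kl$ packages both terms into the single degree lower bound. The only thing your write-up leaves implicit is that you use only the $d_1$ term of \Cref{lemma:deg_lowerbound} and replace its second term by an independent argument — worth a sentence, but not a gap.
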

\begin{proof}
By \Cref{beals}, if a quantum algorithm uses $T$ queries succeed with probability at least $1-\epsilon = 1/2+\beta$, then 
\begin{equation}\label{eq:upperboundKL}
T \ge \frac{1}{2}\adeg_{\epsilon}\left(\kl\right) = \Omega\inparen{\max\inbrace{\frac{\beta\sqrt{\left(n-k\right)l}}{l-k}
, \sqrt{\frac{\beta n}{l-k}}}}.
\end{equation}
By \Cref{eq:upperboundKL}, we have
$\beta = O\inparen{\min \inbrace{\frac{l-k}{\sqrt{\left(n-k\right)l}}\cdot T, \frac{l-k}{n} \cdot T^2}}$.   
\end{proof}

\subsubsection{The Lower Bound}
In this section, we will prove the following lemma:
\begin{lemma}\label{lemma:lowerboundKL}
For the function $\kl$ and $T>0$, the quantum $T$-bias satisfies that
\begin{equation*}
\delta_Q\inparen{\kl, T}= 
\begin{cases}
\Omega\inparen{\min \inbrace{\frac{l-k}{\sqrt{(n-k)l}}\cdot T, \frac{l-k}{n} \cdot T^2}}, & \text{ if }T= O\inparen{\frac{\sqrt{(n-k)l}}{l-k}}, \\
\Omega(1), & \text{ if }T = \Omega\inparen{\frac{\sqrt{(n-k)l}}{l-k}}.
\end{cases}
\end{equation*}
\end{lemma}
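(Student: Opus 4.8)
The plan is to construct $T$-query quantum algorithms for $\kl$ attaining the stated biases, using amplitude amplification when $T$ is small and quantum amplitude estimation when $T$ is large. First I would reduce to a convenient normalization: replacing the input $x$ by its bitwise complement $\bar x$ turns $\kl$ into $f_n^{n-l,n-k}$ and leaves both $\Advf$ and $\IAdvf$ unchanged, so without loss of generality $k\le n/2$, hence $n-k\ge n/2$. I would then split the range of $T$ into a \emph{low} part $T=O\inparen{\sqrt{n/l}}$ and a \emph{high} part; these overlap and together cover all $T$, because for $k\le n/2$ one has $\frac{n}{\sqrt{(n-k)l}}=O\inparen{\sqrt{n/l}}$, and $\diff\cdot T^2$ is the smaller of the two terms in the minimum precisely for $T\le \frac{n}{\sqrt{(n-k)l}}$.

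\textbf{Small $T$ (the $\diff\cdot T^2$ bound).} Here I would reuse the amplitude-amplification machinery of \Cref{sec:fnk} after padding: append $3n$ dummy zeros to $x$ to obtain an $N$-bit input with $N=4n$ (queries to dummy coordinates are free), and run the reflections $S_0,S_1$ about $\ket{\psi}=\frac{1}{\sqrt N}\sum_{i\in[N]}\ket{i}\ket{-}$ for $T$ queries, producing $\cos(T\theta_x)\ket{\psi}+\sin(T\theta_x)\ket{\psi^{\bot}}$ with $\sin^2\theta_x=\frac{4|x|(N-|x|)}{N^2}$. Measuring $\inbrace{\ket{\psi}\bra{\psi},\,I-\ket{\psi}\bra{\psi}}$ returns the first outcome with probability $\cos^2(T\theta_k)$ or $\cos^2(T\theta_l)$ according to $\kl(x)$, and these two values are explicit functions of $k,l,n,T$. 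Since $N=4n\ge 2(k+l)$ one gets $\sin^2\theta_l-\sin^2\theta_k=\frac{4(l-k)(N-k-l)}{N^2}=\Omega\inparen{\diff}$, and since $\theta_l=O(\sqrt{l/N})$ the hypothesis $T\theta_l\le\pi/4$ needed for a $\sin^2$-concavity estimate as in \Cref{claim:sin_in} holds throughout the low part, whence $\bigl|\cos^2(T\theta_k)-\cos^2(T\theta_l)\bigr|=\bigl|\sin^2(T\theta_l)-\sin^2(T\theta_k)\bigr|\ge\frac1\pi T^2\inparen{\sin^2\theta_l-\sin^2\theta_k}=\Omega\inparen{\diff\cdot T^2}$. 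Feeding the single measurement bit into the one-sample distinguisher of \Cref{lemma:onequery} converts this gap into a bias $\Omega\inparen{\diff\cdot T^2}$; for $T$ beyond the low part I would run the same algorithm with $\Theta\inparen{\sqrt{n/l}}$ genuine queries (padding the rest), obtaining bias $\Omega\inparen{(l-k)/l}=\Omega(1)$ once $\diff\cdot T^2=\Omega(1)$.

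\textbf{Large $T$ (the $\IAdvf\cdot T$ and $\Omega(1)$ bounds).} Here I would run quantum amplitude estimation (\Cref{amplitude_estimation}) with $\Theta(T)$ queries to estimate $a=|x|/n$ by $\tilde a$, and output $0$ or $1$ according to whether $\tilde a<\frac{k+l}{2n}$. When $T=\Omega\inparen{\Advf}$, using $\sqrt{a(1-a)}\le\frac{\sqrt{(n-k)l}}{n}$ and $\sqrt{n/(l-k)}\le\Advf$, \Cref{amplitude_estimation} gives $|\tilde a-a|<\frac{l-k}{2n}$ with probability $\ge 8/\pi^2$, hence bias $\ge 8/\pi^2-1/2=\Omega(1)$. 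For the intermediate range $\sqrt{n/l}=O(T)$ and $T=O\inparen{\Advf}$ I would analyze the amplitude-estimation output distribution directly: in the phase variable $\vartheta_a=\arcsin\sqrt a$ the hypotheses $a=k/n$ and $a=l/n$ sit at phases differing by $\vartheta_{l/n}-\vartheta_{k/n}=\Theta\inparen{\IAdvf}$ (from the arcsine subtraction formula), while the phase-estimation output is a Fej\'er-kernel distribution of width $\Theta(1/T)$ around the true phase, so thresholding at the midpoint of the two candidate phases separates the acceptance probabilities by $\Theta\inparen{T\cdot\IAdvf}$ as long as $T=O\inparen{\Advf}$, giving bias $\Omega\inparen{\IAdvf\cdot T}$.

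\textbf{Combining and the main difficulty.} Running, for each $T$, whichever of the two algorithms is better yields $\delta_Q\inparen{\kl,T}=\Omega\inparen{\min\inbrace{\IAdvf\cdot T,\ \diff\cdot T^2}}$ for $T=O\inparen{\Advf}$ and $\Omega(1)$ for $T=\Omega\inparen{\Advf}$. I expect the main obstacle to be the intermediate range of the large-$T$ case: getting the bias to grow linearly in $T$ requires opening up amplitude estimation rather than using \Cref{amplitude_estimation} as a black box (which only certifies the $\Omega(1)$ regime), and carefully tracking how the mass of the Fej\'er-kernel output on either side of the midpoint threshold depends on $T$, including the saturation at $T\asymp\Advf$. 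A secondary bookkeeping point is checking that the two algorithms' biases agree up to constants near the crossovers $T\asymp\sqrt{n/l}$ and $T\asymp\frac{n}{\sqrt{(n-k)l}}$, so that the two pieces splice into one continuous lower bound.
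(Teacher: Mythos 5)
Your small-$T$ argument (padding to $N=4n$, iterating the reflections, and applying \Cref{claim:sin_in} plus the one-sample distinguisher of \Cref{lemma:onequery}) is sound and essentially matches the paper's treatment of the regime $T=O(\sqrt{n/l})$, and your use of \Cref{amplitude_estimation} as a black box correctly yields the $\Omega(1)$ bias once $T=\Omega\inparen{\Advf}$. The genuine gap is exactly where you suspected it: the intermediate regime $\sqrt{n/l}\lesssim T\lesssim\Advf$, which is the only regime where the bound $\IAdvf\cdot T$ is actually the binding one. Your claim that ``thresholding at the midpoint of the two candidate phases separates the acceptance probabilities by $\Theta\inparen{T\cdot\IAdvf}$'' is asserted, not proved, and as stated it does not obviously hold: the phase-estimation output is supported on a discrete grid of spacing $\Theta(1/T)$, so when the two candidate phases differ by $\Delta\ll 1/T$ the event $\{\tilde\vartheta\le\text{midpoint}\}$ can have identical (or nearly identical) probability under both hypotheses for unlucky alignments of the grid, and the heavy $1/(Tx^2)$ tails of the Fej\'er kernel make a shift-by-$\Delta$ total-variation estimate delicate. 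Turning this sketch into a proof would require a genuinely new computation that the proposal does not supply.

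The paper avoids phase estimation entirely in this regime. It runs $T-1$ iterations of $(S_0\otimes I)O_x$ on the \emph{unpadded} uniform state (after arranging $l\le n/4$), measures an index $i$ in the computational basis, and queries $x_i$; the resulting bit is Bernoulli with parameter $\sin^2\inparen{(2T-1)\theta_{|x|}}$ where $\theta_{|x|}=\arcsin\sqrt{|x|/n}$. Writing $\alpha=(2T-1)\theta_l$ and $\Delta=(2T-1)(\theta_l-\theta_k)=\Theta\inparen{\IAdvf\cdot T}$, the identity
\begin{equation*}
\sin^2\alpha-\sin^2(\alpha-\Delta)=\tfrac12\sin 2\alpha\,\sin 2\Delta-\cos 2\alpha\,\sin^2\Delta
\end{equation*}
gives a gap $\Omega(\Delta)$ whenever $\sin 2\alpha\ge 1/2$; and when $\sin 2\alpha<1/2$ one simply reduces the number of iterations by at most $\lceil\pi/(2\theta_l)\rceil\le T/2$ so that the new $\alpha$ satisfies $\sin 2\alpha\ge 1/2$ (possible because consecutive values of $2\alpha_t$ advance by $4\theta_l<2\pi/3$). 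This elementary ``rotate to a good angle'' trick is the idea your proposal is missing, and it is what converts the phase separation $\Theta\inparen{\IAdvf}$ into a bias linear in $T$ without any analysis of the amplitude-estimation output distribution. I would recommend replacing your Fej\'er-kernel step with this argument (or, if you insist on phase estimation, proving a quantitative lower bound on the total variation distance between the two discrete output distributions, which is substantially more work).
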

\begin{proof}

If $l-k = \Omega\left(n\right)$, then $\frac{\sqrt{\left(n-k\right)l}}{l-k} = \Omega\left(1\right)$ and $\frac{l-k}{n} = \Omega\left(1\right)$. Moreover, Lemma \ref{lemma:onequery} implies there exists a one-query classical algorithm such that the bias of the algorithm is $\Omega\left(1\right)$. Thus, there also exists a one-query quantum algorithm such that the bias of the algorithm is $\Omega\left(1\right)$. As a result, \Cref{lemma:lowerboundKL} is correct when $l-k = \Omega(n)$. 
Next, we only need to consider the case $l-k = o\left(n\right)$. 
We first assume $k \le n/2$. 
Then we further assume $k \le l \le n/4$, since if $l > n/4$, we can pad $3n$ 0's to the input and obtain a new $n'$-bit Boolean input string, where $n' = 4n$ and $l \le n'/4$. Since $n' = 4n$, the complexity analysis only has a difference of constant factor. Then the upper bound in Lemma \ref{lemma:upperboundKL} satisfies 
\begin{equation*}
    O\inparen{\min \inbrace{\frac{l-k}{\sqrt{\left(n-k\right)l}}\cdot T, \frac{l-k}{n} \cdot T^2}} = 
    \begin{cases}
        O\inparen{\frac{l-k}{n} \cdot T^2}, &\text{ if }T = O\inparen{\sqrt{\frac{n}{l}}},\\
        O\inparen{\frac{l-k}{\sqrt{\left(n-k\right)l}}\cdot T}, &\text{ if }T = \Omega\inparen{\sqrt{\frac{n}{l}}}.
    \end{cases}
\end{equation*}
Let $\theta_l = \arcsin{\sqrt{l/n}}$ and $\theta_k = \arcsin{\sqrt{k/n}}$. Then $\theta_k \le \theta_l \le \pi/6$. We discuss the following cases according to the range of the number of queries $T$.

i) If $T \le \frac{1}{4}\cdot \sqrt{\frac{n}{l}}$, then 
$\left(2T-1\right)\theta_l \le 2T \cdot \frac{\pi}{2}
\sqrt{\frac{l}{n}} \le \pi/4$. We give Algorithm \ref{al:T_queryKL1} to compute $\kl$.
\begin{algorithm}
\caption{A $T$-query quantum algorithm to compute $\kl$.}
\label{al:T_queryKL1}
\KwIn{$x\in\B^n$;}
\KwOut{$f_n^{k,l}\left(x\right)$.}
Prepare the initial state $\ket{\psi}\ket{-} = \frac{1}{\sqrt{n}}\sum_{i \in [n]}\ket{i}\ket{-}$.

Perform the unitary operation $\left(\inparen{S_0\otimes I}O_x\right)^{T-1}$ to $\ket{\psi}\ket{-}
$ and obtain the quantum state $\ket{\psi'}\ket{-}$, where $S_0 = 2\ket{\psi}\bra{\psi}-I$ and the oracle $O_x$ is defined in Equation (\ref{eq:oracle}).

Make a measurement to $\ket{\psi'}$ using project measurement $\inbrace{P_i}$, where $P_i = \ket{i}\bra{i}$ for $i\in [n]$. If the measurement result is $i$, then query $x_i$. Then we call Algorithm \ref{al:one_query} to distinguish the probability distribution of $x_i$ whether is $B\left(\sin^2 \left(2T-1\right)\theta_l\right)$ or $B\left(\sin^2 \left(2T-1\right)\theta_k\right)$. If the distribution is $B\left(\sin^2 \left(2T-1\right)\theta_l\right)$, we output $f_n^{k,l}\left(x\right) = 0$; otherwise, we output $f_n^{k,l}\left(x\right) = 1$. \label{step:measure_query}
\end{algorithm}

Next, we analyze the success probability of Algorithm \ref{al:T_queryKL1}. As shown in \cite{HSY+18}, $\ket{\psi}$ can be expressed as $\ket{\psi} = \cos \theta_x \ket{\alpha_{\bot}}+\sin \theta_x \ket{\alpha}$,
where $\theta_x = \arcsin{\sqrt{|x|/n}}$,
$\ket{\alpha_{\bot}} = \frac{1}{\sqrt{n-|x|}}\sum_{i:x_i = 0}\ket{i}$ and $\ket{\alpha} = \frac{1}{\sqrt{|x|}}\sum_{i:x_i = 1}\ket{i}$; after performing $T-1$ times operator $\inparen{S_0 \otimes I}O_x$, we have
\begin{equation*}
    \ket{\psi'} =  \cos \left(2T-1\right)\theta_x \ket{\alpha_{\bot}}+\sin \left(2T-1\right)\theta_x \ket{\alpha}.
\end{equation*}
Thus, if $|x| = l$, the distribution of $x_i$ in Step \ref{step:measure_query} of Algorithm \ref{al:T_queryKL1} is $B\left(\sin^2 \left(2T-1\right)\theta_l\right)$; if $|x| = k$, the distribution of $x_i$ is $B\left(\sin^2 \left(2T-1\right)\theta_k\right)$. By Lemma \ref{lemma:onequery}, the bias of the algorithm is 
\begin{equation*}
\begin{aligned}
\frac{1}{4}\inparen{\sin^2 \left(\left(2T-1\right)\theta_l\right) -\sin^2\left(\left(2T-1\right)\theta_k\right)} 
&\ge \frac{1}{4\pi}\left(2T-1\right)^2 \left(\sin^2 \theta_l - \sin^ 2 \theta_k\right) \\
&\ge \frac{1}{4\pi}\frac{l-k}{n}T^2\\
&= \Omega\inparen{\min \inbrace{\frac{l-k}{\sqrt{\left(n-k\right)l}}\cdot T, \frac{l-k}{n} \cdot T^2}},
\end{aligned}
\end{equation*}
where the second inequality follows by Claim $\ref{claim:sin_in}$.

ii) If $\frac{1}{4}\cdot \sqrt{\frac{n}{l}} < T < 3\pi \cdot \sqrt{\frac{n}{l}}$, we perform $T'$-query quantum algorithm as Algorithm \ref{al:T_queryKL1}, where $T' = \frac{1}{4}\cdot \sqrt{\frac{n}{l}}$. Let other queries be dummies. Then the bias of the algorithm is at least
\begin{equation*}
\frac{1}{4\pi} \frac{l-k}{n}T'^2 \ge \frac{1}{576\pi^3} \frac{l-k}{n}T^2 = \Omega\inparen{\min \inbrace{\frac{l-k}{\sqrt{\left(n-k\right)l}}\cdot T, \frac{l-k}{n} \cdot T^2}}.
\end{equation*}

iii) Suppose $3\pi \cdot \sqrt{\frac{n}{l}} \le T \le \frac{\sqrt{\left(n-k\right)l}}{4\sqrt{3}\pi\left(l-k\right)}$. Then $T \ge \frac{3\pi}{\sqrt{\frac{l}{n}}} \ge 3\pi/\theta_l$. 
Let $\Delta = \left(2T-1\right)\left(\theta_l-\theta_k\right)$, $\alpha = \left(2T-1\right)\theta_l$. 
By the mean value theorem, there exists some $a \in [\sqrt{\frac{k}{n}}, \sqrt{\frac{l}{n}}]$ such that 
\begin{equation*}
\begin{aligned}
\theta_l-\theta_k = \arcsin{\sqrt{\frac{l}{n}}}-\arcsin{\sqrt{\frac{k}{n}}} &= \frac{1}{\sqrt{1-a^2}}\left(\sqrt{\frac{l}{n}}-\sqrt{\frac{k}{n}}\right). \\
\end{aligned}
\end{equation*}
By $a \le \sqrt{\frac{l}{n}} = 1/2$ and \Cref{inequality_kl_upper}, we have
\begin{equation*}
\begin{aligned}
\arcsin{\sqrt{\frac{l}{n}}}-\arcsin{\sqrt{\frac{k}{n}}}
&\le 2\left(\sqrt{\frac{l}{n}}-\sqrt{\frac{k}{n}}\right) 
&\le \frac{2\left(l-k\right)}{\sqrt{\left(n-k\right)l}}.
\end{aligned}
\end{equation*}
Similarly, by $a \ge \sqrt{\frac{k}{n}} \ge 0$ and \Cref{inequality_kl_lower}, we have
\begin{equation*}
\begin{aligned}
\arcsin{\sqrt{\frac{l}{n}}}-\arcsin{\sqrt{\frac{k}{n}}}
&\ge \sqrt{\frac{l}{n}}-\sqrt{\frac{k}{n}} 
&\le \frac{\left(l-k\right)}{2\sqrt{2}\sqrt{\left(n-k\right)l}}.
\end{aligned}
\end{equation*}
Thus, we have 
\begin{equation}\label{Delta_upperbound}
\Delta \le 2T\left(\theta_l-\theta_k\right) = \frac{4\left(l-k\right)}{\sqrt{\left(n-k\right)l}}T \le \frac{1}{\sqrt{3}\pi}.
\end{equation}
and
\begin{equation}\label{Delta_lowerbound}
\Delta \ge  T\left(\theta_l-\theta_k\right)  \ge \frac{l-k}{4\sqrt{\left(n-k\right)l}}T.
\end{equation}
Moreover, the total variance distance between $B\left(\sin^2\left(2T-1\right)\theta_k\right)$ and $B\left(\sin^2\left(2T-1\right)\theta_l\right)$ is
\begin{equation*}
\begin{aligned}
d_{k,l} &=    \sin^2 \alpha - \sin^2 \left(\alpha-\Delta\right) \\ 
    &= \sin^2 \alpha-\left(\sin\alpha\cos \Delta - \cos \alpha \sin \Delta\right)^2 \\
    &= \sin^2 \alpha-\left(\sin^2\alpha\cos^2\Delta+\cos^2\alpha\sin^2\Delta-2\sin\alpha\cos\alpha\sin\Delta\cos\Delta\right) \\
    &= \sin^2 \alpha \sin^2\Delta-\cos^2\alpha \sin^2\Delta+\frac{1}{2}\sin 2\alpha\sin 2 \Delta \\
    &= \frac{1}{2}\sin 2\alpha\sin 2 \Delta-\cos 2\alpha \sin^2 \Delta.
\end{aligned}
\end{equation*}
Next, we discuss the following subcases: (1) If $\sin 2\alpha \ge 1/2$, by Fact \ref{sin_inequality}, \Cref{Delta_upperbound,Delta_lowerbound}, we have  
\begin{equation*}
\begin{aligned}
d_{k,l} &\ge \frac{1}{4}\sin 2\Delta-\frac{\sqrt{3}}{2}\sin^2 \Delta \\
&\ge \frac{1}{\pi}\Delta - \frac{\sqrt{3}}{2}\Delta^2 \\
&\ge \frac{1}{2\pi}\Delta \\
&\ge \frac{1}{8\pi}\frac{l-k}{\sqrt{\left(n-k\right)l}} \cdot T.\\ 
\end{aligned}
\end{equation*}
By Lemma \ref{lemma:onequery}, the bias of the algorithm is at least 
\begin{equation*}
\frac{1}{4}d_{k,l} = \Omega\inparen{\min \inbrace{\frac{l-k}{\sqrt{\left(n-k\right)l}}\cdot T, \frac{l-k}{n} \cdot T^2}}.
\end{equation*}
(2) If $\sin 2\alpha < 1/2$, 
let $T_- = T- \lceil \frac{\pi}{2\theta_l} \rceil$. Since $T \ge 3\pi/\theta_l$, we have $T_- \ge T/2$.  Note that for any $\theta$, there exist $\eta$ such that $\theta \le \eta \le \eta+2\pi/3\le \theta+2\pi$ and for any $\phi \in [\eta, \eta+2\pi/3]$, $\sin \phi \ge 1/2$. 
Let $\alpha_{t} = \left(2t-1\right)\theta_l$ for any integer $t$. Since $2\alpha_{T} - 2\alpha_{T_-} \ge 2\pi$ and  $2\alpha_{t+1}-2\alpha_t  = 4\theta_l < 2\pi/3$ for any $t \in [T_-, T)$, there exists integer $T' \in [T_-, T)$ such that $\sin 2\alpha_{T'}
\ge 1/2$.
We query $T'$ times and let other $T-T'$ queries be dummies. Accoring to the above subcase (1), the bias of the algorithm is at least
\begin{equation*}
\frac{1}{32\pi}\frac{l-k}{\sqrt{\left(n-k\right)l}}\cdot T' \ge \frac{1}{64\pi}\frac{l-k}{\sqrt{\left(n-k\right)l}}\cdot T= \Omega\inparen{\min \inbrace{\frac{l-k}{\sqrt{\left(n-k\right)l}}\cdot T, \frac{l-k}{n} \cdot T^2}}.
\end{equation*}

iv) If $T > \frac{1}{4\sqrt{3}\pi}\frac{\sqrt{\left(n-k\right)l}}{l-k}$, we only use $\frac{1}{4\sqrt{3}\pi}\frac{\sqrt{\left(n-k\right)l}}{l-k}$ queries and let other queries be dummies. According to the analysis of case iii), the bias of the algorithm is at least $\Omega\left(1\right)$.

Combining the above four cases, if $k \le n/2$ and $l-k = o\left(n\right)$, we have the following conclusion: If $T = O\inparen{\frac{\sqrt{\left(n-k\right)l}}{l-k}}$, then the optimal success probability $\gamma$ of $T$-query quantum algorithms to compute $\kl$ is at least $1/2+\Omega\inparen{\min \inbrace{\frac{l-k}{\sqrt{\left(n-k\right)l}}\cdot T, \frac{l-k}{n} \cdot T^2}}$. If $T = \Omega\inparen{\frac{\sqrt{\left(n-k\right)l}}{l-k}}$, then $\gamma = 1/2+\Omega\inparen{1}$.

If $k > n/2$ and $l-k = o(n)$, we can flip all bits of $x$ such that $n-l \le n-k \le n/2$ and run the above algorithm in the same way. Since $\sqrt{\left(n-\left(n-l\right)\right)\left(n-k\right)} = \sqrt{\left(n-k\right)l}$ and $(n-k)-(n-l) = l-k$, the corresponding bound is consistent with $k \le n/2$. Thus, the proof is finished.
\end{proof}
Finally, \Cref{th:kl_quantum} consists of \Cref{lemma:upperboundKL,lemma:lowerboundKL}, which is restated as follows:
\quanOptKl*

\subsection{The Classical Case}

\subsubsection{The Upper Bound}
In this section, we give the following lemma:
\begin{lemma}\label{lemma:kl_class_upper_bound}
For $T > 0$, the classical $T$-bias of $\kl$ satisfies that
\[\delta_C\inparen{\kl, T} =O\inparen{\min \inbrace{\IAdvf \cdot\sqrt{T}+\frac{T}{n}, \diff \cdot T}}.\]
\end{lemma}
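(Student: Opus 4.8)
The plan is to bound $\delta_C(\kl,T)$ via Fact \ref{proposition:classical_upperbound}, which says that for a partial symmetric function the $T$-bias is at most $\alpha/2$, where $\alpha$ is the maximum over query lists $I\in[n]^T$ and answer sets $B\in\B^T$ of $|P_I^X(B)-P_I^Y(B)|$. Since $\kl$ has $X=\{x:|x|=k\}$ and $Y=\{y:|y|=l\}$, the distribution of the answer vector $x_I$ (for $I$ a list of $T$ distinct indices — the optimal $I$ uses distinct indices) is exactly the hypergeometric distribution $H(n,k,T)$ on the Hamming weight of the answers when $|x|=k$, and $H(n,l,T)$ when $|x|=l$. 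So $\alpha \le d_{TV}\big(H(n,k,T),H(n,l,T)\big)$, and it suffices to prove
\[
d_{TV}\big(H(n,k,T),H(n,l,T)\big)=O\!\left(\min\Big\{\IAdvf\cdot\sqrt{T}+\tfrac{T}{n},\ \diff\cdot T\Big\}\right).
\]

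For the first term, I would pass from the hypergeometric to the binomial using Fact \ref{fact:dtv_BH}: $d_{TV}(H(n,k,T),B(T,k/n))\le \frac{T-1}{n-1}=O(T/n)$, and likewise for $l$. By the triangle inequality it then remains to bound $d_{TV}\big(B(T,k/n),B(T,l/n)\big)$. Here I use the Hellinger machinery from Fact \ref{fact:dH_dTV_1}: $d_{TV}(p^{\otimes T},q^{\otimes T})\le \sqrt{2}\,d_H(p^{\otimes T},q^{\otimes T})$ with $p=B(k/n)$, $q=B(l/n)$, and then $d_H^2(p^{\otimes T},q^{\otimes T})=1-(1-d_H^2(p,q))^T\le T\,d_H^2(p,q)$. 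Claim \ref{lemma:Hellinger_kl} gives $d_H^2(p,q)=O\big(\frac{(l-k)^2}{(n-k)l}\big)$, so $d_{TV}\big(B(T,k/n),B(T,l/n)\big)=O\big(\sqrt{T}\cdot\IAdvf\big)$. Adding the two hypergeometric-to-binomial errors yields the $\IAdvf\cdot\sqrt{T}+\frac{T}{n}$ bound.

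For the second term $\diff\cdot T$, I would instead bound the single-query total variation distance directly and use subadditivity $d_{TV}(p^{\otimes T},q^{\otimes T})\le T\,d_{TV}(p,q)$ from Fact \ref{fact:dH_dTV_1}, applied again after the hypergeometric-to-binomial reduction — or, more cleanly, observe that each of the $T$ queries contributes at most $d_{TV}(B(k/n),B(l/n))=\big|\frac{l}{n}-\frac{k}{n}\big|=\diff$ to the total variation distance of the answer strings, so $\alpha\le T\cdot\diff$; this is essentially a direct coupling argument on the $T$ answer bits and needs no detour through Hellinger distance. Taking the minimum of the two bounds and then $\delta_C(\kl,T)\le\alpha/2$ finishes the proof.

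I expect the main obstacle to be the bookkeeping in the first bound: one must be careful that the optimal query list in Fact \ref{proposition:classical_upperbound} may be taken with distinct indices (so that the answer-weight distribution is genuinely hypergeometric), and one must track that the two separate applications of Fact \ref{fact:dtv_BH} each cost $O(T/n)$ rather than something larger, so that the error term is $\frac{T}{n}$ and not, say, $\frac{T^2}{n}$. The Hellinger tensorization bound $1-(1-x)^T\le Tx$ is elementary, and Claim \ref{lemma:Hellinger_kl} is already available, so the analytic content beyond these reductions is light.
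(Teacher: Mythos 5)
Your proposal is correct and follows essentially the same route as the paper: reduce to $d_{TV}\bigl(H(n,k,T),H(n,l,T)\bigr)$ via Fact \ref{proposition:classical_upperbound}, pass to binomials with Fact \ref{fact:dtv_BH} at cost $O(T/n)$ each, use Hellinger tensorization plus Claim \ref{lemma:Hellinger_kl} for the $\IAdvf\sqrt{T}$ term, and TV subadditivity for the $\diff\cdot T$ term (where the extra $O(T/n)$ is absorbed since $l-k\ge 1$). The bookkeeping you flag as the main obstacle works out exactly as you anticipate.
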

\begin{proof}
By \Cref{proposition:classical_upperbound}, the success probability of $T$-query classical randomized algorithms to compute $\kl$ is at most $1/2+\alpha/2$ and 
\begin{equation}\label{eq:alpha_kl}
\begin{aligned}
\alpha &= \frac{1}{2}\sum_{i=0}^T\binom{T}{i}\cdot\left|\frac{\binom{n-T}{k-i}}{\binom{n}{k}}-\frac{\binom{n-T}{l-i}}{\binom{n}{l}}\right| \\
&= \frac{1}{2}\sum_{i=0}^T\left|\frac{\binom{k}{i}\binom{n-k}{T-i}}{\binom{n}{T}} - \frac{\binom{l}{i}\binom{n-l}{T-i}}{\binom{n}{T}}\right| \\
&= d_{TV}\inparen{H\left(n,k,T\right),H\left(n,l,T\right)}.
\end{aligned}
\end{equation}
By \Cref{fact:dH_dTV_1,lemma:Hellinger_kl}, we have
\begin{equation*}
d_{TV}\left(B\left(T,\frac{k}{n}\right),B\left(T,\frac{l}{n}\right)\right) \le T \cdot d_{TV}\left(B\left(\frac{k}{n}\right),B\left(\frac{l}{n}\right)\right) = \frac{l-k}{n}T,
\end{equation*}
and
\begin{equation*}
d_{TV}\left(B\left(T,\frac{k}{n}\right),B\left(T,\frac{l}{n}\right)\right) \le \sqrt{2}\sqrt{T}d_H\left(B\left(\frac{k}{n}\right),B\left(\frac{l}{n}\right)\right) = O\left(\IAdvf \sqrt{T}\right).
\end{equation*}
By \Cref{fact:dtv_BH}, we have 
\begin{equation*}
\begin{aligned}
d_{TV}\left(B\left(T,\frac{k}{n}\right), H\left(n,k,T\right)\right) &\le \frac{T-1}{n-1},\\
d_{TV}\left(B\left(T,\frac{l}{n}\right), H\left(n,l,T\right)\right) &\le \frac{T-1}{n-1}. \\
\end{aligned}
\end{equation*}
Thus,
\begin{equation}\label{eq:dTV_Hkl}
\begin{aligned}
d_{TV}\left(H\left(n,k,T\right),H\left(n,l,T\right)\right) 
&= O\inparen{\min\inbrace{\IAdvf \sqrt{T},\frac{l-k}{n}T}+\frac{T-1}{n-1}}  \\
&= O\inparen{\min\inbrace{\IAdvf \sqrt{T},\frac{l-k}{n}T}+\frac{T}{n}} \\
&= O\inparen{\min\inbrace{\IAdvf \sqrt{T}+\frac{T}{n},\frac{l-k}{n}T}}.
\end{aligned}
\end{equation}
By \Cref{eq:alpha_kl,eq:dTV_Hkl}, Lemma \ref{lemma:kl_class_upper_bound} is proven.
\end{proof}
\subsubsection{The Lower Bound}

We first show the following fact that will be used in the proof of Lemma \ref{lemma:kl_class_lower_bound}:
\begin{Fact}[Theorem 1 in \cite{zubkov2019recent}]\label{prop:zubkov}
Given the Bernoulli distributions $p = B(a)$ and $q = B(b)$, let $\epsilon = b-a$ and $v = \min\inbrace{a,1-a,b,1-b}$. If $v \in [\frac{2}{n},\frac{n-1}{2n})$, we have
\begin{equation*}
d_{TV}\left(p^{\otimes T},q^{\otimes T}\right) \ge \frac{e^{-4/nv\left(1-v\right)}}{2\left(1+2/T\right)}\sqrt{\frac{v}{1-v}}\left(\Phi\left(2\epsilon\sqrt{T}\right)-\frac{1}{2}\right),
\end{equation*}
where $\Phi\left(x\right) = \frac{1}{\sqrt{2\pi}}\int_{-\infty}^xe^{-u^2/2}du$ is a standard normal distribution function.
\end{Fact}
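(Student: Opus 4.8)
The plan is to reduce this sharp lower bound on the total‑variation distance between two $T$‑fold product Bernoulli laws to a one‑dimensional statement about binomial tails and then feed in a non‑asymptotic Gaussian approximation. First I would pass from products to binomials: the Hamming weight $S(x)=\sum_{i=1}^T x_i$ is a sufficient statistic for the Bernoulli family, since conditioned on $S=m$ a $B(a)^{\otimes T}$‑sample is uniform over the weight‑$m$ strings, with a conditional law independent of $a$. Combining the data‑processing inequality (so $d_{TV}$ of the pushforwards under $S$ is at most $d_{TV}$ of the joints) with the reverse inequality coming from the common conditional kernel yields $d_{TV}(p^{\otimes T},q^{\otimes T})=d_{TV}\inparen{\mathrm{Bin}(T,a),\mathrm{Bin}(T,b)}$. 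Swapping $p$ and $q$ if necessary, assume $a<b$ so $\epsilon=b-a>0$; applying the reflection $x_i\mapsto 1-x_i$ (which replaces $(a,b)$ by $(1-b,1-a)$ and leaves $\epsilon$, $v$ and the claimed right‑hand side unchanged) we may also assume the smaller parameter equals $v$, i.e.\ $a=v<1/2$.

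Next I would pin down the extremal distinguishing event. The likelihood ratio of $\mathrm{Bin}(T,b)$ to $\mathrm{Bin}(T,a)$ at weight $m$ is $(b/a)^m\inparen{(1-b)/(1-a)}^{T-m}$, strictly increasing in $m$ because $b(1-a)>a(1-b)$; hence $\{m:\Pr[\mathrm{Bin}(T,b)=m]>\Pr[\mathrm{Bin}(T,a)=m]\}$ is an up‑set and
\[
d_{TV}\inparen{\mathrm{Bin}(T,a),\mathrm{Bin}(T,b)}=\max_{m}\Big(\Pr\big[\mathrm{Bin}(T,b)\ge m\big]-\Pr\big[\mathrm{Bin}(T,a)\ge m\big]\Big),
\]
so it is enough to pick one good threshold $m^{\star}$ and bound the corresponding gap from below.

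The technical core is a sharp two‑sided normal approximation for binomial tails, namely the Zubkov--Serov universal inequalities: these sandwich $\Pr[\mathrm{Bin}(T,p)\ge m]$ between $\Phi$ of two explicit quantities built from the variance‑stabilizing (arcsine) transform of $(m-1)/T$, $m/T$ and $p$, the two differing only by a discretization step of size $\asymp 1/\sqrt{Tv(1-v)}$, and they hold exactly in the stated range $v\in[2/T,(T-1)/(2T))$. Taking $m^{\star}$ so that its arcsine‑transformed value is the lattice point nearest the midpoint of $\arcsin\sqrt a$ and $\arcsin\sqrt b$ balances the two tails; then subtracting the lower bound for the $\mathrm{Bin}(T,a)$‑tail from the upper bound for the $\mathrm{Bin}(T,b)$‑tail, controlling $\Phi$‑differences by the Gaussian density, absorbing the discretization step into the factors $e^{-4/(Tv(1-v))}$ and $(1+2/T)^{-1}$, extracting the ratio of limiting Gaussian scales as the $\sqrt{v/(1-v)}$ prefactor, and converting $\arcsin\sqrt b-\arcsin\sqrt a$ into $\epsilon$ via the mean value theorem together with monotonicity of $\xi\mapsto\xi(1-\xi)$ on $[a,b]$, one is left with exactly $\Phi(2\epsilon\sqrt T)-1/2$.

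I expect this last step to be the main obstacle: a generic Berry--Esseen bound controls each tail only up to an additive $O(1/\sqrt{Tv})$ error, far too crude for a multiplicative correction of the claimed shape, so one genuinely needs the Zubkov--Serov universal (non‑asymptotic, two‑sided, valid down to $v\sim 1/T$) inequalities for the binomial distribution function. Granting those, the rest — reducing to binomials, locating the balanced threshold, and translating arcsine differences into $\epsilon$ — is routine bookkeeping.
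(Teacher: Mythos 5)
This statement is not proved in the paper at all: it is imported verbatim as Theorem~1 of the cited reference \cite{zubkov2019recent} and used as a black box (a ``Fact''), so there is no internal proof to compare your attempt against. Judged on its own, your outline follows the standard route to such results. The first two reductions are correct and genuinely routine: sufficiency of the Hamming weight gives $d_{TV}(p^{\otimes T},q^{\otimes T})=d_{TV}(\mathrm{Bin}(T,a),\mathrm{Bin}(T,b))$, and the monotone likelihood ratio identifies the optimal distinguishing event as an upper tail, so the distance is a difference of two binomial tail probabilities at a single threshold.

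The gap is that everything quantitative is then delegated to the ``Zubkov--Serov universal inequalities,'' and the passage from those inequalities to the specific prefactor $\frac{e^{-4/(Tv(1-v))}}{2(1+2/T)}\sqrt{\frac{v}{1-v}}$ and to the argument $2\epsilon\sqrt{T}$ inside $\Phi$ is only gestured at (``balancing the tails,'' ``absorbing the discretization step''). Those constants are exactly where the content of the theorem lies, and your sketch does not verify them; for instance, the mean-value-theorem step gives a separation of order $\epsilon\sqrt{T}/\sqrt{v(1-v)}$ in the variance-stabilized coordinate, and turning that into precisely $\Phi(2\epsilon\sqrt{T})-\tfrac12$ times the stated multiplicative factor requires the careful case analysis carried out in the source. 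Since the paper itself treats the result as an external citation, the appropriate resolution is the same here: either cite the reference for the full derivation or carry out the tail-balancing computation explicitly. As a minor point, note also that the statement as transcribed mixes $n$ and $T$ (in the hypothesis $v\in[\frac{2}{n},\frac{n-1}{2n})$ and in the exponent $e^{-4/nv(1-v)}$); in the original theorem these all refer to the same sample size, which in this paper's application is the number of queries $T$.
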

Then we prove the following lemma:
\begin{lemma}\label{lemma:kl_class_lower_bound}
If $T = O\inparen{\frac{(n-k)l}{(l-k)^2}}$, the classical $T$-bias of $\kl$ satisfies that
\[\delta_C\inparen{\kl, T} =O\inparen{\min \inbrace{\IAdvf \cdot\sqrt{T}+\frac{T}{n}, \diff \cdot T}}.\]
If $T = \Omega\inparen{\frac{(n-k)l}{(l-k)^2}}$, then $\delta_C\inparen{\kl, T}= \Omega(1)$.
\end{lemma}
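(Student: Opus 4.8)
The lemma comprises two independent claims. The first, an upper bound $\delta_C(\kl,T)=O\big(\min\{\IAdvf\cdot\sqrt{T}+\tfrac Tn,\ \diff\cdot T\}\big)$ valid when $T=O\big(\tfrac{(n-k)l}{(l-k)^2}\big)$, is word-for-word the conclusion of \Cref{lemma:kl_class_upper_bound}, which was already proved for \emph{all} $T>0$ without any restriction; so I would dispatch this part by a one-line appeal to \Cref{lemma:kl_class_upper_bound}. The hypothesis $T=O\big(\tfrac{(n-k)l}{(l-k)^2}\big)$ merely marks the non-vacuous regime, since once $T=\Omega\big(\tfrac{(n-k)l}{(l-k)^2}\big)$ the $\min$-expression is $\Omega(1)$, consistent with the second claim below and with the $\Theta(1)$ clause of \Cref{th:kl_class}. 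The substance is the second claim: for $T=\Omega\big(\tfrac{(n-k)l}{(l-k)^2}\big)$, I must produce a $T$-query randomized algorithm for $\kl$ with bias at least a universal constant.

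Write $T^\star=\tfrac{(n-k)l}{(l-k)^2}$. As the constant hidden in ``$\Omega$'' is ours to choose, I will prove the claim for every $T\ge 64\,T^\star$. If $T\ge n$, the algorithm queries all $n$ bits and returns $\kl(x)$ exactly, for bias $1/2$. Otherwise $T<n$, so $64\,T^\star\le T<n$ forces $T^\star<n/64$; in this case the algorithm queries $T_1:=\min\{T,\lfloor n/2\rfloor\}$ distinct indices chosen uniformly at random, records the number $S$ of observed $1$'s, and outputs $0$ if $S\le\tfrac{(k+l)T_1}{2n}$ and $1$ otherwise. Conditioned on $|x|=m\in\{k,l\}$, the count $S$ is hypergeometric $H(n,m,T_1)$, hence $\E[S]=\tfrac{mT_1}{n}$ and $\mathrm{Var}(S)=T_1\cdot\tfrac mn\big(1-\tfrac mn\big)\cdot\tfrac{n-T_1}{n-1}\le T_1\cdot\tfrac{m(n-m)}{n^2}$.

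Now I would run Chebyshev's inequality. The threshold $\tfrac{(k+l)T_1}{2n}$ is the midpoint of the two conditional means $\tfrac{kT_1}{n}$ and $\tfrac{lT_1}{n}$, at distance $\tfrac{(l-k)T_1}{2n}$ from each, so on an input of weight $m$ the algorithm errs with probability at most $\dfrac{T_1 m(n-m)/n^2}{\big((l-k)T_1/(2n)\big)^2}=\dfrac{4\,m(n-m)}{(l-k)^2T_1}$. Since $k(n-k)\le(n-k)l$ and $l(n-l)\le(n-k)l$ (both equivalent to $k\le l$), this is at most $\dfrac{4(n-k)l}{(l-k)^2T_1}=\dfrac{4T^\star}{T_1}$. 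It remains to note $T_1\ge 16T^\star$: if $T\le\lfloor n/2\rfloor$ then $T_1=T\ge 64T^\star$; and if $T>\lfloor n/2\rfloor$ then $T_1=\lfloor n/2\rfloor\ge n/4$ while $16T^\star<16\cdot\tfrac n{64}=\tfrac n4$, so again $T_1>16T^\star$. Hence the error probability is at most $1/4$ for every $x$ with $|x|\in\{k,l\}$, so the algorithm has bias at least $1/4$, i.e.\ $\delta_C(\kl,T)=\Omega(1)$, proving the second claim.

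I do not expect a serious obstacle. The only places demanding care are (i) fixing the constant in ``$\Omega$'' large enough to absorb the cap $T_1\le\lfloor n/2\rfloor$ (the value $64$ works, as checked above) and (ii) quoting the hypergeometric variance correctly, in particular the finite-population factor $\tfrac{n-T_1}{n-1}\le 1$. It is worth noting that the $\Omega(1)$ clause uses none of the Hellinger/total-variation toolkit (\Cref{fact:dH_dTV_1,fact:dtv_BH}, \Cref{lemma:Hellinger_kl,lemma:onequery}) or Zubkov's estimate (\Cref{prop:zubkov}) that power the matching two-sided bounds in \Cref{th:kl_class}: a single thresholded sample-mean test is enough once only a constant bias is demanded.
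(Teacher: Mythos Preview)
Your proof of the statement \emph{as literally written} is correct, but you should be aware that the first clause of this lemma is almost certainly a typo in the paper. The lemma is labelled \texttt{kl\_class\_lower\_bound}, its proof in the paper consists entirely of \emph{algorithms} (i.e.\ lower bounds on the bias), and \Cref{th:kl_class}---which the paper says is obtained by combining this lemma with \Cref{lemma:kl_class_upper_bound}---has the lower bound $\delta_C(\kl,T)=\Omega\big(\max\{\IAdvfsqua\cdot T,\ \diff\cdot\sqrt{T}\}\big)$ in the small-$T$ regime. So what the paper actually proves here is that lower bound, not a second copy of the upper bound. Your one-line citation of \Cref{lemma:kl_class_upper_bound} disposes of the first clause as stated, but it does not supply what the paper needs from this lemma.

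For the second clause, your argument is correct and takes a genuinely different route from the paper. The paper obtains $\delta_C(\kl,T)=\Omega(1)$ for $T=\Omega(T^\star)$ as a corollary of its small-$T$ lower bound: having shown via i.i.d.\ sampling and \Cref{lemma:Hellinger_kl} that $\delta_C(\kl,T)=\Omega\big(\IAdvfsqua\cdot T\big)$ for $T=O(T^\star)$, it simply evaluates this at $T=cT^\star$ to get a constant. You instead give a direct, self-contained argument: sample $T_1=\min\{T,\lfloor n/2\rfloor\}$ coordinates without replacement, threshold the empirical count at the midpoint of the two means, and bound the error by Chebyshev using the hypergeometric variance and the observation $\max\{k(n-k),l(n-l)\}\le(n-k)l$. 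This is more elementary---it uses no distance machinery, no \Cref{prop:zubkov}, and no padding trick---and the constant-tracking (choosing $64$ so that the cap $T_1\le\lfloor n/2\rfloor$ still leaves $T_1\ge16T^\star$) is done cleanly. The paper's approach buys the full two-term lower bound in the small-$T$ regime, from which the $\Omega(1)$ clause falls out for free; your approach buys only the $\Omega(1)$ clause but with less overhead.
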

\begin{proof}
To compute $f_n^{k,l}(x)$ $(k < l)$ is equivalent to determine whether $|x| = k$ or $l$. We give two different algorithms to distinguish $|x| = k$ from $|x| = l$ and analyze the bias of the algorithms. The first algorithm is as follows: We pad $2n$ 0's and $n$ 1's such that the problem is reduced to distinguish $|x'| = n+k$ or $n+l$ for $x' \in \inbrace{0,1}^{4n}$. Then we make $T$ queries to $x'$ uniformly and independently to determine whether the distribution is $p^{\otimes T}$ or $q^{\otimes T}$, where $p = B\left(\frac{n+k}{4n}\right)$ and $q = B\left(\frac{n+l}{4n}\right)$. Let $n' = 4n$. Regarding in Fact \ref{prop:zubkov}, we have $\epsilon = \frac{l-k}{4n}, v = \frac{n+k}{4n}$. Then we have 
\[
\begin{aligned}
\frac{1}{4} &\le v < \frac{2n-1}{4n} < \frac{n'-1}{2n'}, \\
\sqrt{\frac{v}{1-v}} &\ge \sqrt{1/3}, \\
n'v\left(1-v\right) &\ge \frac{3}{16}n' \ge \frac{3}{4}.
\end{aligned}
\]
Thus, $e^{-4/n'v\left(1-v\right)} \ge e^{-\frac{16}{3}}$. Since $2\left(1+2/T\right) \le 6$, we have
\begin{equation}\label{eq:Omega1}
    \frac{e^{-4/nv\left(1-v\right)}}{2\left(1+2/T\right)}\sqrt{\frac{v}{1-v}} \ge \frac{\sqrt{3}e^{-\frac{16}{3}}}{18} = \Omega\left(1\right).
\end{equation}
For $0 < x \le 1$, 
\begin{equation}\label{eq:Omega2}
\Phi\left(x\right) - \frac{1}{2} = \frac{1}{\sqrt{2\pi}}\int_{0}^xe^{-u^2/2}du \ge \frac{1}{\sqrt{2\pi}}\int_{0}^x e^{-1/2}du \ge \Omega\left(x\right). 
\end{equation}
Thus, if $T \le \frac{1}{4\epsilon^2} = \frac{4n^2}{\left(l-k\right)^2}$, substituting \Cref{eq:Omega1,eq:Omega2} into \Cref{prop:zubkov}, 
we have 
\begin{equation}\label{eq:kl_class_lower_1}
d_{TV}\left(p^{\otimes T},q^{\otimes T}\right) \ge \Omega\left(\epsilon \sqrt{T}\right) = \Omega\left(\frac{l-k}{n}\sqrt{T}\right).
\end{equation}
By \Cref{lemma:onequery} and Equation (\ref{eq:kl_class_lower_1}), the bias $\beta$ of the algorithm satisfies that
\begin{equation}\label{eq:lowerbound1}
    \beta = \Omega\inparen{d_{TV}\left(p^{\otimes T},q ^{\otimes T}\right)} =  \Omega\left(\frac{l-k}{n}\sqrt{T}\right).
\end{equation}
The second algorithm is as follows: we make $T$ queries uniformly and independently to determine whether the distribution is $p^{\otimes T}$ or $q^{\otimes T}$, where $p = B\left(\frac{k}{n}\right)$ and $q = B\left(\frac{l}{n}\right)$. By \Cref{fact:dH_dTV_1}, \Cref{claim_1epT} and \Cref{lemma:Hellinger_kl}, if $T \le \frac{1}{d_H^2\left(p^{\otimes T},q^{\otimes T}\right)} = O\inparen{ \frac{\left(n-k\right)l}{\left(l-k\right)^2}}$, then
\begin{equation*}
\begin{aligned}
    d_{TV}\left(p^{\otimes T},q ^{\otimes T}\right) 
    &\ge 1-\left(1-d_H^2\left(p,q\right)\right)^T \\
    &=\Omega\inparen{\frac{1}{e}d_H^2\left(p,q\right) \cdot T} \\
    &=\Omega\inparen{\IAdvfsqua T}.
\end{aligned}
\end{equation*}
Thus the bias $\beta$ of the second algorithm satisfies that
\begin{equation}\label{eq:lowerbound2}
    \beta = \Omega\inparen{d_{TV}\left(p^{\otimes T},q ^{\otimes T}\right)} =  \Omega\inparen{\IAdvfsqua T}.
\end{equation}
Combining \Cref{eq:lowerbound1,eq:lowerbound2}, the case for $T = O\inparen{ \frac{\left(n-k\right)l}{\left(l-k\right)^2}}$ is proven. If $T = \Omega\inparen{ \frac{\left(n-k\right)l}{\left(l-k\right)^2}}$, then there exists some constant $c > 0$ such that $T \ge c  \cdot\frac{\left(n-k\right)l}{\left(l-k\right)^2}$ and \[\delta_C\inparen{\kl, T} \ge  \delta_C\left(\kl,c \cdot \frac{(n-k)l}{(l-k)^2}\right) = \Omega(1).\]
\end{proof}
Finally, Theorem \ref{th:kl_class} is implied by Lemmas \ref{lemma:kl_class_upper_bound} and \ref{lemma:kl_class_lower_bound}, which is restated as follows:
\classOptKl*
Combining \Cref{th:kl_quantum,th:kl_class}, we recall and prove \Cref{cor:kl} as follows:
\corKl*
\begin{proof}
By \Cref{th:kl_quantum}, if a quantum algorithm using $T$ queries to compute $\kl$, then the success probability is at most $1/2+\beta$, where 
\[
\beta = \Theta\inparen{\min \inbrace{\frac{l-k}{\sqrt{(n-k)l}}\cdot T, \frac{l-k}{n} \cdot T^2}}.
\]
By \Cref{th:kl_class}, there exist classical algorithms using $T^2$ queries with success probability 
\[
    \frac{1}{2}+\Omega\inparen{\max\inbrace{\IAdvfsqua \cdot T^2, \diff \cdot T}} \ge
\frac{1}{2}+\Omega\inparen{\IAdvfsqua \cdot T^2} = \frac{1}{2}+ \Omega\inparen{\beta^2}.
\]
And there exist classical algorithms using $T^4$ queries with success probability
\[
    \frac{1}{2}+\Omega\inparen{\max\inbrace{\IAdvfsqua \cdot T^4, \diff \cdot T^2}} \ge
\frac{1}{2}+\Omega\inparen{\diff \cdot T^2} = \frac{1}{2}+ \Omega\inparen{\beta}.
\]
\end{proof}

\section{The Relation Between Quantum and Randomized Algorithms of Symmetric Boolean Functions for Arbitrarily Small Bias}
\label{sec:relationship_Q_R}
In this section, we prove Theorem \ref{th:relation_Q_R} and Corollary \ref{AA_conjecture}.
First we state \Cref{krav_transform}, which is needed to prove Theorem \ref{th:relation_Q_R}.
In \Cref{krav_transform}, since $g$ is a symmetric function, we let $\hat{g}(l) = \hat{g}(S)$ for any $|S| = l$ with a slight abuse of notation, where $\hat{g}(S)$ is the Fourier coefficients of $g$ for $S \subseteq [n]$. Moreover, $K_l(t,T)$ is the Kravchuk polynomial as \Cref{lemma:krav}.
\begin{restatable}{lemma}{kravtransform}
\label{krav_transform}
    Given a symmetric function $g:\inbrace{-1,1}^n \rightarrow [-1,1]$ such that $\deg(g) = d$, for any $d \le T \le n$ and $x \in \inbrace{-1,1}^n$, we have
    \[
g(x) = \mathbb{E}_{t\sim H(n,|x|,T)} \sum_{l=0}^d \hat{g}(l)\binom{n}{l}\frac{K_l(t,T)}{\binom{T}{l}},
    \]
    and
    \[
    \mathbb{E}_{t\sim B(T,\frac{1}{2})} \inparen{\sum_{l=0}^d \hat{g}(l)\binom{n}{l} \frac{K_l(t,T)}{\binom{T}{l}}}^2 \le 2.
    \]
\end{restatable}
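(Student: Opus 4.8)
\emph{The plan} is to establish the two displayed identities separately. The first is a moment identity for symmetric Fourier expansions; the second is a Parseval-type estimate whose whole content is the numerical constant. Throughout write $e_l(x)=\sum_{|S|=l}x_S$ and recall the combinatorial meaning of the Kravchuk polynomial: if $z\in\{-1,1\}^m$ has Hamming weight $t$, then $\sum_{|S|=l,\,S\subseteq[m]}z_S=K_l(t,m)$, which is immediate from $K_l(t,m)=\sum_{i=0}^{l}\binom{t}{i}\binom{m-t}{l-i}(-1)^i$ by splitting an $l$-subset according to how many of its indices lie among the $t$ coordinates where $z=-1$. In particular $e_l(x)=K_l(|x|,n)$.

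\emph{First identity.} Since $g$ is symmetric of degree $d$, $g(x)=\sum_{l=0}^{d}\hat g(l)\,e_l(x)$. Fix $x$, let $R$ be a uniformly random $T$-subset of $[n]$, and let $t$ be the Hamming weight of $x$ restricted to $R$, so that $t\sim H(n,|x|,T)$ and $\sum_{|S|=l,\,S\subseteq R}x_S=K_l(t,T)$. Interchanging sum and expectation and using the standard identity $\Pr[S\subseteq R]=\binom{n-l}{T-l}/\binom{n}{T}=\binom{T}{l}/\binom{n}{l}$ for a fixed $l$-set $S$,
\[
\mathbb{E}_{R}\big[K_l(t,T)\big]=\sum_{|S|=l}\Pr[S\subseteq R]\,x_S=\frac{\binom{T}{l}}{\binom{n}{l}}\,e_l(x).
\]
Multiplying by $\binom{n}{l}/\binom{T}{l}$ (well defined since $T\ge d\ge l$), weighting by $\hat g(l)$, summing over $0\le l\le d$, and pulling the finite sum back inside the expectation gives $g(x)=\mathbb{E}_{t\sim H(n,|x|,T)}\sum_{l=0}^{d}\hat g(l)\binom{n}{l}K_l(t,T)/\binom{T}{l}$.

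\emph{Second identity.} Put $c_l=\hat g(l)\binom{n}{l}/\binom{T}{l}$ and let $p:\{-1,1\}^{T}\to\mathbb{R}$ be $p(y)=\sum_{l=0}^{d}c_l\,K_l(|y|,T)$. By the combinatorial identity above, $p(y)=\sum_{l=0}^{d}c_l\sum_{|S|=l,\,S\subseteq[T]}y_S$ is multilinear with $\hat p(S)=c_{|S|}$ for $|S|\le d$ and $\hat p(S)=0$ otherwise, and since $|y|\sim B(T,\tfrac12)$ for uniform $y$, Parseval (equivalently, the Kravchuk orthogonality relation of \Cref{lemma:krav}) gives
\[
\mathbb{E}_{t\sim B(T,1/2)}\Big(\sum_{l=0}^{d}\hat g(l)\binom{n}{l}\tfrac{K_l(t,T)}{\binom{T}{l}}\Big)^{2}=\mathbb{E}_{y}\big[p(y)^{2}\big]=\sum_{l=0}^{d}\binom{T}{l}c_l^{2}=\sum_{l=0}^{d}\frac{\big(\binom{n}{l}\hat g(l)\big)^{2}}{\binom{T}{l}}.
\]
The $l=0$ term is $\hat g(0)^{2}=(\mathbb{E}_{x}g(x))^{2}\le\mathbb{E}_{x}g(x)^{2}\le 1$. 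For $l\ge 1$ I would bound the level-$l$ Fourier mass by the Fourier-growth bound for symmetric functions (\Cref{fact:fourier_growth}): $\binom{n}{l}|\hat g(l)|=\sum_{|S|=l}|\hat g(S)|\le d^{l}/l!$, so the $l$-th term is at most $d^{2l}/\big((l!)^{2}\binom{T}{l}\big)$; using $\binom{T}{l}\ge(T/l)^{l}$ and $l!\ge(l/e)^{l}$ this is at most $(e^{2}d^{2}/T)^{l}$, and summing the geometric tail yields $\mathbb{E}_{y}[p(y)^{2}]\le 1+\sum_{l\ge1}(e^{2}d^{2}/T)^{l}\le 2$ once $T$ exceeds a fixed constant multiple of $d^{2}$.

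\emph{Main obstacle.} Everything except the constant in the second identity is routine (the combinatorial reading of $K_l$, the subset-containment probability, Parseval/Kravchuk orthogonality, and $\hat g(0)^2\le1$). The delicate point is that the per-level Fourier-growth estimate alone is exponential in $d$ when $T$ is only linear in $d$, so the bound genuinely needs $T$ at least a constant times $d^{2}$ — exactly the regime in which \Cref{krav_transform} is applied, since the acceptance polynomial of a $T'$-query quantum algorithm has degree $O(T')$ while the simulating randomized algorithm is allowed $\Theta(T'^{2})$ queries. If one wants the estimate to stay tight for small $T$ as well, the extra mileage comes from the fact that an acceptance probability lies in $[0,1]$, which halves each $\hat g(l)$ with $l\ge1$ relative to the generic $[-1,1]$ case.
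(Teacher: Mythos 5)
Your proof is correct and follows essentially the same route as the paper's: the first identity by averaging $x_S$ over a uniform random $T$-subset (the paper writes this as the algebraic identity $\sum_{|S|=l}x_S=\binom{n}{l}\binom{n}{T}^{-1}\binom{T}{l}^{-1}\sum_{|U|=T}\sum_{S\subseteq U}x_S$, which is your $\Pr[S\subseteq R]$ computation), and the second via Kravchuk orthogonality to get $\sum_l c_l^2/\binom{T}{l}$ with $c_l=\hat g(l)\binom{n}{l}$, followed by the symmetric Fourier-growth bound $|c_l|\le d^l/l!$ and a geometric series. Your caveat about needing $T=\Omega(d^2)$ is accurate and applies equally to the paper's own proof (its step $\prod_{i=0}^{l-1}\tfrac{d^2}{T-i}\le 2^{-l}$ requires $T\ge 2d^2+d-1$, so the stated range $d\le T\le n$ is broader than either argument justifies), but this is harmless since the lemma is invoked with $T=2d^2+d$.
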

\begin{proof}
Since $g:\inbrace{-1,1}^n \rightarrow [-1,1]$ is a symmetric function and $\deg(g) = d$,
 for any $d \le T \le n$ and $x \in \inbrace{-1,1}^n$, we have 
\begin{equation*}
\begin{aligned}
g(x) &= \sum_{S \subseteq [n]:|S| \le d}\hat{g}(S)x_S \\
&= \sum_{l=0}^d \hat{g}(l)\sum_{S \subseteq [n]:|S| = l}x_S \\
&= \sum_{l=0}^d \hat{g}(l)\frac{1}{\binom{n-l}{T-l}}\sum_{U \subseteq [n]:|U| = T}\sum_{S \subseteq U:|S| = l}x_S \\
&= \sum_{l=0}^d \hat{g}(l)\frac{\binom{n}{l}}{\binom{n}{T}\binom{T}{l}}\sum_{U \subseteq [n]:|U| = T}\sum_{S \subseteq U:|S| = l}x_S \\
&= \frac{1}{\binom{n}{T}}\sum_{U \subseteq [n]:|U| = T}\sum_{l=0}^d \hat{g}(l)\binom{n}{l}\frac{\sum_{S \subseteq U:|S| = l}x_S}{\binom{T}{l}}\\
&= \frac{1}{\binom{n}{T}}\sum_{t=0}^T\binom{|x|}{t}\binom{n-|x|}{T-t}\sum_{l=0}^d \hat{g}(l)\binom{n}{l}\frac{\sum_{i=0}^l\binom{t}{i}\binom{T-t}{l-i}(-1)^{i}}{\binom{T}{l}}\\
&= \mathbb{E}_{t\sim H(n,|x|,T)} \sum_{l=0}^d \hat{g}(l)\binom{n}{l}\frac{K_l(t,T)}{\binom{T}{l}}.\\
\end{aligned}
\end{equation*}
Let $c_{l} = \hat{g}\inparen{l}\binom{n}{l}$. By Fact \ref{fact:fourier_growth}, we have $|c_l| \le \frac{d^l}{l!}$. 
Then we have 
\begin{equation*}
\begin{aligned}
\mathbb{E}_{t\sim B(T,\frac{1}{2})} \inparen{\sum_{l=0}^d \hat{g}(l)\binom{n}{l} \frac{K_l(t,T)}{\binom{T}{l}}}^2 &=
\frac{1}{2^T}\sum_{t = 0}^T \binom{T}{t}\inparen{\sum_{l=0}^d c_l \frac{K_l(t,T)}{\binom{T}{l}}}^2  \\
&= \sum_{l=0}^d \frac{c_l^2}{\binom{T}{l}} \\
&\le \sum_{l=0}^d \frac{d^l}{l!}\cdot \frac{d^l}{l!} \cdot \frac{l!}{\prod_{i=0}^{l-1}T-i} \\
&\le \sum_{l=0}^d \frac{1}{l!}\cdot\frac{d^{2l}}{\prod_{i=0}^{l-1}T-i}\\
&\le \sum_{l=0}^d \prod_{i=0}^{l-1}\frac{d^2}{T-i}\\
&\le \sum_{l=0}^d \inparen{\frac{1}{2}}^l \\
&\le 2,
\end{aligned}
\end{equation*}
where the second equality comes from the orthogonality property of the Kravchuk polynomial (\Cref{lemma:krav}).
\end{proof}
Then we restate and prove Theorem \ref{th:relation_Q_R} as follows.
\relationQR*
\begin{proof}[Proof of \Cref{th:relation_Q_R}]
For any total symmetric Boolean function $f$ and $0 < \beta < 1/2,0 < \delta < 1$, let $\epsilon = 1/2-\beta$. Suppose there exists a quantum algorithm using $T$ queries to compute $f$ with success probability $1/2+\beta$. By \Cref{beals}, we have $\adeg_{\epsilon}(f) \le 2T$. Let $d = \adeg_{\epsilon}(f)$. Next, it suffices to prove there exists a randomized algorithm using $O(d^2)$ queries to compute $f$ with success probability $1/2+\Omega\inparen{\delta\beta^2}$  on a $1-\delta$ fraction of inputs. 

Since $d = \adeg_{\epsilon}(f)$, there exists a degree-$d$ symmetric function $f':\inbrace{0,1}^n \rightarrow [0,1]$ satisfying if $f(x) = 0$, then $f'(x) \le 1/2-\beta$; if $f(x) = 1$, then $f'(x) \ge 1/2+\beta$. It means that 
$\inparen{1-2f(x)}\inparen{1-2f'(x)} \ge 2\beta$. 
Let $h:\inbrace{0,1}^n \rightarrow [-1,1]$ be  defined as $h(x) = 1-2f'(x)$ and $g:\inbrace{-1,1}^n \rightarrow [-1,1]$ defined as $g(1-2x) = h(x)$ for any $x\in \B^n$.
Then $g$ is also a degree-$d$ symmetric function. 
By Lemma \ref{krav_transform}, for any $d \le T \le n$ and $x \in \inbrace{-1,1}^n$, we have 
\begin{equation*}
g(x) = \mathbb{E}_{t\sim H(n,|x|,T)} \sum_{l=0}^d \hat{g}(l)\binom{n}{l}\frac{K_l(t,T)}{\binom{T}{l}}.
\end{equation*}
Let 
\begin{equation}\label{eq:At}
 A_t = \sum_{l=0}^d \hat{g}\inparen{l}\binom{n}{l} \frac{K_l(t,T)}{\binom{T}{l}}.   
\end{equation}
Then for $x \in \B^n$, we have
$h(x) = \mathbb{E}_{t\sim H(n,|x|,T)}A_t$, 
where $|x|$ is the number of 1's in $x$.
For any $0 \le t \le T$,
let
\begin{equation}\label{eq:A_prime_t}
    A'_{t} = 
    \begin{cases}
    \min\inbrace{A_t, \frac{16}{\delta\beta}}, &\text{ if } A_t \ge 0,\\
    \max\inbrace{A_t, -\frac{16}{\delta\beta}}, &\text{ if } A_t < 0.\\
    \end{cases}
\end{equation}
Suppose $x$ follows the uniform distribution of $\inbrace{0,1}^n$. 
We give Algorithm \ref{random_algorithm} to compute $f(x)$ using $T = 2d^2+d$ queries.
\begin{algorithm}
\caption{A $T$-query quantum algorithm to compute $f(x)$.}
\label{random_algorithm}
Query $T$ distinct bits in $x$ uniformly and denote the number of 1's by $t$.

Compute the value of $A_t$ as \Cref{eq:At}.

Output $0$ with the probability $\frac{1}{2}(1+\frac{\delta\beta}{16}A'_{t})$ and output $1$ with the probability $\frac{1}{2}(1-\frac{\delta\beta}{16}A'_{t})$, where $A'_t$ is defined as \Cref{eq:A_prime_t}.
\end{algorithm}
The error analysis of Algorithm \ref{random_algorithm} is as follows. Given $x \in \{0,1\}^n$, let $h'(x) = \mathbb{E}_{t \sim H(n,|x|,T)}A'_t$. Then 
the probability that the algorithm outputs 0 is 
$\frac{1}{2}\inparen{1+\frac{\delta\beta}{16}h'(x)}$ and the probability that the algorithm outputs 1 is 
$\frac{1}{2}\inparen{1-\frac{\delta\beta}{16}h'(x)}$.
By Lemma \ref{krav_transform}, we have 
$\mathbb{E}_{t\sim B(T,\frac{1}{2})} A_t^2 \le 2$.
Since
\begin{equation}\label{eq:expt} 
\left|\mathbb{E}_{t\sim B(T,\frac{1}{2})} A_t\right| = \left|\frac{1}{2^n}\sum_{x \in \inbrace{0,1}^n}\mathbb{E}_{t\sim H\inparen{n,|x|,T}} A_t\right| = \left|\frac{1}{2^n}\sum_{x \in \inbrace{0,1}^n}h\inparen{x}\right| \le 1, \end{equation}
we have $\sigma^2\inparen{A_t} = \mathbb{E}A_t^2-  \inparen{\mathbb{E} A_t}^2  \le 2$ when $t$ follows the binomial distribution $B(T,\frac{1}{2})$. 
By Chebyshev's inequality,
we have
\begin{equation}\label{eq:cheb}
P\inparen{|A_{t}-\mathbb{E}A_t| \ge 2\delta} \le \frac{1}{\delta^2}.
\end{equation}
By \Cref{eq:A_prime_t}, we have
\begin{equation*}
    A'_{t} = 
    \begin{cases}
        A_{t},&\text{ if } |A_{t}| \le \frac{16}{\delta\beta}.\\
        -\frac{16}{\delta\beta},&\text{ if }A_{t} < -\frac{16}{\delta\beta}.\\
        \frac{16}{\delta\beta},& \text{ if }A_{t}> \frac{16}{\delta\beta}.
    \end{cases}
\end{equation*}
Thus if $|A_t| \le \frac{16}{\delta\beta}$, then $|A_t -A'_t| = 0$; if $|A_t| > \frac{16}{\delta\beta}$, then $|A_t-A'_t| = |A_t| - \frac{16}{\delta\beta}$.
Then we have
\begin{equation}\label{eq:Et1}
\mathbb{E}_t|A_t-A'_t|
= \mathbb{E}_{t:|A_t| \ge \frac{16}{\delta\beta}}\inparen{|A_t|-\frac{16}{\delta\beta}}.
\end{equation}
By \Cref{eq:expt}, we have $|\mathbb{E}A_t| \le 1$. Since $0 < \delta < 1, 0 < \beta < 1/2$, if $|A_t| > \frac{16}{\delta\beta}$, then $|A_t-\mathbb{E}A_t| \ge |A_t|-1 \ge \frac{15}{\delta\beta}$. Thus, we have
\begin{equation}\label{eq:Et2}
\begin{aligned}
\mathbb{E}_{t:|A_t| \ge \frac{16}{\delta\beta}}\inparen{|A_t|-\frac{16}{\delta\beta}} 
&\le \mathbb{E}_{t:|A_t-\mathbb{E}A_t| \ge \frac{15}{\delta\beta}}\inparen{|A_t|-\frac{16}{\delta\beta}} \\
&= \sum_{a = 0}^{\infty}\mathbb{E}_{t:\frac{15\cdot2^a}{\delta\beta} \le |A_t-\mathbb{E}A_t| \le \frac{15\cdot 2^{a+1}}{\delta\beta}}\inparen{|A_t|-\frac{16}{\delta\beta}} \\
&\le \sum_{a = 0}^{\infty}\mathbb{E}_{t:|A_t-\mathbb{E}A_t| \ge \frac{15\cdot 2^a}{\delta\beta}} \inparen{\frac{15\cdot 2^{a+1}}{\delta\beta}+1- \frac{16}{\delta\beta}} \\
&\le \sum_{a = 0}^{\infty}\mathbb{E}_{t:|A_t-\mathbb{E}A_t| \ge \frac{15 \cdot 2^a}{\delta\beta}} \frac{16\inparen{2^{a+1}-1}}{\delta\beta} \\
&\le \sum_{a=0}^{\infty}\inparen{\frac{\delta\beta}{15\cdot 2^{a-1}}}^2 \cdot \frac{16\inparen{2^{a+1}-1}}{\delta\beta}\\
    &= \frac{64}{225}\delta\beta \cdot \sum_{a = 0}^{\infty} \frac{2^{a+1}-1}{4^a}\\ 
    &\le \delta\beta,
\end{aligned}
\end{equation}
where the fourth inequality comes from \Cref{eq:cheb}.
Combining \Cref{eq:Et1,eq:Et2}, we have
\begin{equation}\label{eq:krav_expectation}
    \mathbb{E}_{t \sim B(T, \frac{1}{2})}|A_t-A_t'| 
    \le \delta\beta.
\end{equation}
For $x\in \B^n$, we have $h(x) = \mathbb{E}_{t \sim H(n,|x|,T)}A_t$ and $h'(x) = \mathbb{E}_{t \sim H(n,|x|,T)}A'_t$. 
Then we have
\[
\frac{1}{2^n}\sum_{x \in \inbrace{0,1}^n}|h(x)-h'(x)| = \frac{1}{2^n}\sum_{x \in \inbrace{0,1}^n}\mathbb{E}_{t \sim H(n,|x|,T)}|A_t-A_t'| = \mathbb{E}_{t \sim B(T, \frac{1}{2})}|A_t-A_t'| 
    \le \delta\beta.
\]
by Equation \ref{eq:krav_expectation}.
Thus, there are at least $1-\delta$ fractions of inputs $x$ such that  $|h(x)-h'(x)| \le \beta$. 
For such $x$, since $h(x)(1-2f(x)) \ge 2\beta$, we have $h'(x)(1-2f(x)) \ge \beta$. Therefore, if $f(x) = 0$, then $h'(x) \ge \beta$; if $f(x) = 1$, then $h'(x) \le -\beta$. Thus, for at least $1-\beta$ fractions of inputs, the bias of the algorithm is at least $\beta \cdot \delta\beta/16 = \delta\beta^2/16$.
\end{proof}

Before proving \Cref{AA_conjecture}, we give the following lemma:
\begin{lemma}[Amplification Lemma]\label{lemma:amplification}
For a Boolean function $f$, suppose there exists an algorithm $\mathcal{A}$ to compute $f$ such that for any $x$, $P\inparen{\mathcal{A}(x) = f(x)} = 1/2+\epsilon$. Then by running $T$ times the algorithm $\mathcal{A}$ independently, where $T \le \frac{1}{4\epsilon^2}$, we can amplify the success probability of the algorithm to $1/2+\Omega\inparen{\epsilon \sqrt{T}}$.
\end{lemma}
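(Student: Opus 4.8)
The plan is to run $\mathcal A$ independently $T$ times on the input and return the majority vote of the $T$ answers. We may assume $T$ is odd (otherwise run $T-1$ times; $T=1$ is trivial) and $\epsilon\le\frac14$ (if $\epsilon>\frac14$ then $T\le\frac1{4\epsilon^2}<4$, so a single run already has bias $\epsilon=\Omega(\epsilon\sqrt T)$). Put $p=B(\frac12-\epsilon)$ and $q=B(\frac12+\epsilon)$. Since $P(\mathcal A(x)=f(x))=\frac12+\epsilon$, the $T$ output bits are i.i.d.\ $q$ when $f(x)=1$ and i.i.d.\ $p$ when $f(x)=0$, and the two error probabilities of the majority rule coincide (because $B(T,\frac12-\epsilon)$ has the same law as $T-B(T,\frac12+\epsilon)$); moreover the majority event is exactly $\{q^{\otimes T}>p^{\otimes T}\}$, so the worst-case success probability equals $\frac12+\frac12\,d_{TV}(p^{\otimes T},q^{\otimes T})$ (one could also invoke Claim \ref{lemma:onequery} on the length-$T$ output string and lose only a factor $2$). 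It therefore suffices to show $d_{TV}(p^{\otimes T},q^{\otimes T})=\Omega(\epsilon\sqrt T)$, or equivalently $P\bigl(B(T,\frac12+\epsilon)\ge\frac{T+1}2\bigr)\ge\frac12+\Omega(\epsilon\sqrt T)$, whenever $T\le\frac1{4\epsilon^2}$.

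I would split on the size of $\epsilon T$. If $T\le7$ (forcing $\epsilon T=O(1)$), just keep one run: its bias $\epsilon$ is $\Omega(\epsilon\sqrt T)$. If $\epsilon T\le\frac12$, hence $\epsilon(T+1)\le1$, I argue by a coupling: realise the $i$-th run's correctness indicator as $\mathbf{1}[U_i\le\frac12+\epsilon]$ for i.i.d.\ uniform $U_i$, set $N=\#\{i:U_i\le\frac12\}\sim B(T,\frac12)$, and observe that on $\{N\ge\frac{T+1}2\}$ (probability $\frac12$ as $T$ is odd) the majority is already correct, while conditioned on $\{N=\frac{T-1}2\}$ the number of the remaining $U_i$'s lying in $(\frac12,\frac12+\epsilon]$ is $B(\frac{T+1}2,2\epsilon)$; summing just these two contributions gives
\[
P(\text{majority correct})\ \ge\ \tfrac12+\binom{T}{(T-1)/2}2^{-T}\Bigl(1-(1-2\epsilon)^{(T+1)/2}\Bigr)\ \ge\ \tfrac12+\frac1{2\sqrt T}\cdot\frac{\epsilon(T+1)}{e}\ \ge\ \tfrac12+\frac{\epsilon\sqrt T}{2e},
\]
using the standard central-binomial estimate $\binom{T}{(T-1)/2}2^{-T}\ge\frac1{2\sqrt T}$ and Claim \ref{claim_1epT} (valid since $2\epsilon\cdot\frac{T+1}2=\epsilon(T+1)\le1$). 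Finally, if $\epsilon T>\frac12$ and $T\ge8$, then $v:=\min\{\frac12-\epsilon,\frac12+\epsilon\}=\frac12-\epsilon\in[\frac14,\frac12)$ lies in the admissible range of Fact \ref{prop:zubkov}; applying it with gap $2\epsilon$, the prefactor $\frac{e^{-4/(Tv(1-v))}}{2(1+2/T)}\sqrt{v/(1-v)}$ is $\Omega(1)$ on this range, and $T\le\frac1{4\epsilon^2}$ forces $4\epsilon\sqrt T\le2$, so $\Phi(4\epsilon\sqrt T)-\frac12=\Omega(\epsilon\sqrt T)$ and hence $d_{TV}(p^{\otimes T},q^{\otimes T})=\Omega(\epsilon\sqrt T)$. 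Taking the best of the three sub-algorithms proves the lemma.

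The one delicate point is the last range. The elementary Hellinger bound $d_{TV}(p^{\otimes T},q^{\otimes T})\ge1-(1-d_H^2(p,q))^T$ with $d_H^2(p,q)=(\sqrt{\tfrac12+\epsilon}-\sqrt{\tfrac12-\epsilon})^2=\Theta(\epsilon^2)$ only yields $\Omega(\epsilon^2 T)$, which is strictly weaker than $\Omega(\epsilon\sqrt T)$ once $\epsilon\sqrt T\to0$; for Bernoulli laws centred near $\frac12$ the total variation actually grows on the Gaussian scale $\epsilon\sqrt T$, and capturing that is exactly what Fact \ref{prop:zubkov} does. (One could alternatively avoid that fact by retaining the first $\Theta(\epsilon T)$ summands $N=\frac{T-1}2,\frac{T-3}2,\dots$ in the coupling identity above, each central binomial coefficient being $\Omega(1/\sqrt T)$ and each conditional probability $\Omega(1)$ by a Chernoff tail, so that the sum is again $\Omega(\epsilon\sqrt T)$ — at the cost of a more careful handling of ratios of binomial coefficients.) Everything else is routine bookkeeping.
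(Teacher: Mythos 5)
Your proof is correct, and in its main regime it follows the same route as the paper: reduce the majority/distinguishing bias to a lower bound on $d_{TV}\bigl(B(\tfrac12+\epsilon)^{\otimes T},B(\tfrac12-\epsilon)^{\otimes T}\bigr)$ and obtain $\Omega(\epsilon\sqrt T)$ from Fact~\ref{prop:zubkov} together with $\Phi(x)-\tfrac12=\Omega(x)$ on a bounded range. The difference is that the paper simply cites Equation~(\ref{eq:kl_class_lower_1}) — which was derived for the padded distributions $B(\tfrac{n+k}{4n}),B(\tfrac{n+l}{4n})$ with $v\ge\tfrac14$ automatically in Zubkov's admissible window — whereas for $p=B(\tfrac12-\epsilon)$ one has $v=\tfrac12-\epsilon$, and the condition $v<\tfrac{T-1}{2T}$ forces $\epsilon>\tfrac1{2T}$ (and $v\ge\tfrac2T$ forces $T\ge4$). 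Your three-way split is exactly what is needed to cover the cases the paper's one-line invocation skips: the constant-$T$ case is trivial, and the $\epsilon T\le\tfrac12$ case is handled by your coupling through $N\sim B(T,\tfrac12)$ plus Claim~\ref{claim_1epT} and the central-binomial estimate, which is an elementary and self-contained substitute for the normal-approximation bound in that regime. Your observation that the symmetric majority rule attains bias exactly $\tfrac12 d_{TV}$ (rather than the $\tfrac14 d_{TV}$ from \Cref{lemma:onequery}) is also a small improvement, though only in the constant. In short: same core mechanism, but your version is more careful about the range of validity of Fact~\ref{prop:zubkov} and supplies the missing argument for small $\epsilon T$.
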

\begin{proof}
Let $p = B(1/2+\epsilon)$, $q = B(1/2-\epsilon)$. Each time we run Algorithm $\mathcal{A}$, if $f(x) = 0$, $\mathcal{A}(x)$ is a sample from distribution $q$; if $f(x) = 1$, $\mathcal{A}(x)$ is a sample from distribution $p$. 
Since $(1/2+\epsilon)-(1/2-\epsilon) = 2\epsilon$, we have $d_{TV}(p^{\otimes T},q^{\otimes T}) = \Omega\inparen{\epsilon \sqrt{T}}$ by Equation (\ref{eq:kl_class_lower_1}). Thus, after running $T$ time the algorithm $\mathcal{A}$ independently, we can distinguish $p$ from $q$ with bias $\Omega\inparen{\epsilon \sqrt{T}}$ by \Cref{lemma:onequery}. Equivalently, we can compute $f(x)$ with the success probability $1/2+\Omega\inparen{\epsilon \sqrt{T}}$.
\end{proof}
Then we restate and prove \Cref{AA_conjecture} as follows.
\AA*
\begin{proof}
Suppose there exists a quantum algorithm using $T$ queries to compute $f$ with bias $\beta$. By \Cref{th:relation_Q_R}, there exists a randomized algorithm $\mathcal{A}$ using $O(T^2)$ queries to compute $f$ with success probability at least $1/2+c_1\cdot\delta \beta^2$ on a $1-\delta$ fraction of inputs for some constant $c_2 > 0$. By \Cref{lemma:amplification}, after running $N$ times the algorithm $\mathcal{A}$ independently, we can amplify the success probability of the algorithm to $1/2+ c_1c_2\cdot\delta \beta^2 \sqrt{N}$ for some constant $c_2$. When 
\[
N = \frac{1}{c_1^2c_2^2} \cdot \inparen{\frac{\beta-\epsilon}{\delta\beta^2}}^2 = O\inparen{\frac{1}{\delta^2\beta^2}} = O\inparen{\frac{1}{\epsilon^2\delta^2}},
\]
we can obtain the desired success probability. Thus, the total number of queries is $O(T^2 \cdot N) = O\inparen{T^2/(\epsilon^2\delta^2)}$.
\end{proof}

\section{The Relation Between Complexity Measures of Partial Symmetric Boolean Functions}
\label{sec:relationship}
This section investigates the relation between several fundamental complexity measures of partial symmetric Boolean functions. Section \ref{subsec:qf_adeg} gives the relation between the quantum
query complexity and the approximate degree for errors close to $1/2$. Section \ref{subsec:qef_degree} shows exact quantum query complexity is at most quadratic of the degree. Section \ref{subsec:tight_complexity_measure} exhibits tight characterizations of several complexity measures and shows the relation between these complexity measures.  Section \ref{subsec:exp_gap} demonstrates an exponential gap between exact quantum query complexity and randomized query complexity.

\subsection{The Relation Between Quantum Query Complexity and Approximate Degree for Arbitrarily Small Bias}\label{subsec:qf_adeg}
In this section, we restate and 
give the proof of \Cref{th:poly_Q_adeg}. 
\adegQ*
\begin{proof}
Given a (possibly partial) $n$-bit symmetric Boolean function $f:D \rightarrow \B$, where $D \subseteq \B^n$ and $f(x) = f(n-x)$ for any $x \in D$. For $0 < \epsilon < 1/2$, let $T = \deg_{\epsilon}(f)$ and $\beta = 1/2-\epsilon$. Same as the proof of \Cref{th:relation_Q_R}, for any function $f'$ that approximates $f$ with error $\epsilon$, we have
$(1-2f(x))(1-2f'(x)) \ge 2\beta$.

Let $g: [-1,1] \rightarrow [-1,1]$ be defined as $g\left(1-2|x|/n\right) = 1-2f(x)$ for any $x \in D$. Since $f(x) = f(n-x)$, $g$ is an even function. Assume function $h:[-1,1] \rightarrow [-1,1]$ is the optimal approximation polynomial of $g$ with degree $T$. Then $g(x)h(x) \ge 2\beta$ and $h$ is also an even function. Thus, $h(x)$ can be expressed as  $\sum_{i=0}^{\lceil T/2 \rceil} a_i T_{2i}(x)$ for any $x \in [-1,1]$, where $T_{2i}(x)$ is the Chebyshev polynomial of degree $2i$ and $T_{2i}(\cos \eta) = \cos 2i\eta$ for any $\eta \in [0,\pi]$. Furthermore, we have $h(\cos\eta) = \sum_{i=0}^{\lceil T/2 \rceil} a_{i} \cos 2i\eta$ for any $\eta \in [0,\pi]$. Let $\cos \eta_x = 1-2|x|/n$. Then $\left(1-2f(x)\right) h\left(\cos \eta_x\right) \ge 2\beta$ and 
\begin{equation}\label{eq:hDM}
\begin{aligned}
h(\cos \eta_x) &= \sum_{i=0}^{\lceil T/2 \rceil}a_{i} \cos 2i\eta_x \\
&= \sum_{i:a_{i} \ge 0}a_{i} \left(2\cos^2 i\eta_x-1\right) + \sum_{ i:a_{i} < 0}a_{i} \left(1-2\sin^2 i\eta_x\right) \\
&= \left(\sum_{i:a_{i} \ge 0}2a_{i} \cos^2 i\eta_x - \sum_{ i:a_{i} < 0}2a_{i} \sin^2 i\eta_x\right)+\left(\sum_{i:a_{i} < 0}a_{i}-\sum_{i:a_{i} \ge 0}a_{i}\right) \\
&= \Delta_x-M,
\end{aligned}
\end{equation}
where 
$
\Delta_x = 2\inparen{\sum_{ a_{i} \ge 0}a_{i} \cos^2 i\eta_x - \sum_{ a_{i} < 0}a_{i} \sin^2 i\eta_x}$
and $M =  \sum_{i=0}^{\lceil T/2 \rceil} |a_i|$. By \Cref{lemma:parseval}, 
$\sum_{i=0}^{\lceil T/2 \rceil} a_i^2 \le \frac{2}{\pi}
\int_{-1}^1 \frac{1}{\sqrt{1-x^2}} dx = 2.$
Thus, 
\begin{equation}\label{eq:M}
M \le \sqrt{2\lceil T/2 \rceil+1} \le \sqrt{2(T+1)}.
\end{equation}
Let $\ket{\psi} = \frac{1}{\sqrt{n}}\sum_{i\in [n]} \ket{i}\ket{-}$.
Then 
$\langle \psi|O_x | \psi\rangle  =  1- 2|x|/n = \cos \eta_x$.   
As a result, there exists a state $\ket{\psi^{\bot}}$ such that $\langle \psi | \psi^{\bot} \rangle = 0$ and 
$O_x \ket{\psi} = \cos \eta_x \ket{\psi}+\sin \eta_x \ket{\psi^{\bot}}$.
For the following reflection operation
\begin{equation}\label{ref_S0S1}
\begin{aligned}
S_{0} &= 2\ket{\psi}\bra{\psi}-I,   
S_{1} = 2O_x\ket{\psi}\bra{\psi}O_x-I = O_xS_0O_x, \\
\end{aligned}
\end{equation}
we have 
\begin{equation}\label{eq:S_1S_0_r}
\begin{aligned}
S_1S_0\ket{\psi} &= \cos 2\eta_x\ket{\psi}+\sin 2\eta_x\ket{\psi^{\bot}},\\
S_1S_0\ket{\psi^{\bot}} &= -\sin 2\eta_x\ket{\psi}+\cos 2\eta_x\ket{\psi^{\bot}}.
\end{aligned}
\end{equation}
Let $R_0$ be the corresponding controlled  operation of $S_0$, i.e., for any $\ket{\phi}$,
\begin{equation}\label{eq:reflection_quantum_adeg}
    R_0\ket{\phi}\ket{+} = \ket{\phi}\ket{+}, 
    R_0\ket{\phi}\ket{-} = (S_0\ket{\phi})\ket{-}. 
\end{equation}
Let $\ket{\pm_i} = \underbrace{\ket{-}\cdots \ket{-}}_{i}\underbrace{\ket{+}\cdots \ket{+}}_{\lceil T/2 \rceil-i}$.
If $a_i \ge 0$, let 
\begin{equation*}
\begin{aligned}
P_i^+ &= (\ket{\psi}\bra{\psi}) \otimes (\ket{\pm_i}\bra{\pm_i}),
P_i^- = (I-\ket{\psi}\bra{\psi})\otimes (\ket{\pm_i}\bra{\pm_i}). 
\end{aligned}
\end{equation*}
If $a_i < 0$, let 
\begin{equation*}
\begin{aligned}
P_i^- &= (\ket{\psi}\bra{\psi}) \otimes (\ket{\pm_i}\bra{\pm_i}),
P_i^+ = (I-\ket{\psi}\bra{\psi})\otimes (\ket{\pm_i}\bra{\pm_i}). 
\end{aligned}
\end{equation*}
Let 
 $   P_0 = \sum_i P_i^+,
    P_1 = \sum_i P_i^-. $
Then $P_0 + P_1 = I$. 
Let
$\alpha_i = \sqrt{\frac{|a_i|}{M}}.$
Then $\sum_i \alpha_i^2 = 1$. We give Algorithm \ref{algorithm_small_bias} to compute $f(x)$ and analyze the success probability of the algorithm as follows.

\begin{algorithm}
\label{algorithm_small_bias}
\caption{A $T$-query quantum algorithm to compute $f(x)$.}
Prepare the initial state $\sum_{i=0}^{\lceil T/2 \rceil} \alpha_i \ket{\psi}\ket{\pm_i}$, which consists of the first qudit and $\lceil T/2 \rceil$ ancillary qubits, where $\alpha_i,\ket{\psi},\ket{\pm_i}$ are defined on Page 11. 

For $i = 1$ to $\lceil T/2 \rceil$, we perform unitary operation $(O_x \otimes I)R_0$ in the first qudit and the $i$-th ancillary qubit, where $R_0$ is given in \Cref{eq:reflection_quantum_adeg}.

Perform the project measurement $\inbrace{P_0,P_1}$ defined on Page 11 to the final state and output the measurement result.
\end{algorithm}

Since $R_0$ is the corresponding controlled reflection operation of $S_0$, the final state after performing Step 2 of Algorithm \ref{algorithm_small_bias} is \[
\sum_{i=0}^{\lceil T/2 \rceil} \alpha_i\left(\underbrace{(O_xS_0)\cdots (O_xS_0)}_{i \text{ times}}\ket{\psi}\right)\ket{\pm_i}.\]
If $i$ is even, then 
\begin{equation*}
\begin{aligned}
\left(\underbrace{(O_xS_0)\cdots (O_xS_0)}_{i \text{ times}}\ket{\psi}\right)\ket{\pm_i} 
&= \left(\underbrace{(O_xS_0O_xS_0)\cdots (O_xS_0O_xS_0)}_{i/2 \text{ times}}\ket{\psi}\right)\ket{\pm_i} \\
&= \left(\underbrace{(S_1S_0)\cdots (S_1S_0)}_{i/2 \text{ times}}\ket{\psi}\right)\ket{\pm_i} \\
&= \inparen{\cos i\eta_x \ket{\psi}+\sin i\eta_x \ket{\psi^{\bot}}}\ket{\pm_i}, 
\end{aligned}
\end{equation*}
where the second equality comes from $S_1 = O_xS_0O_x$ and the third equality comes from \Cref{eq:S_1S_0_r}.
Similarly, if $i$ is odd, by \Cref{eq:S_1S_0_r} and $S_1 = O_xS_0O_x$, we have 
\begin{equation*}
\begin{aligned}
\left(\underbrace{(O_xS_0)\cdots (O_xS_0)}_{i \text{ times}}\ket{\psi}\right)\ket{\pm_i}
&= \inparen{\underbrace{(O_xS_0)\cdots (O_xS_0)}_{i-1 \text{ times}}\inparen{\cos \eta_x \ket{\psi}+\sin \eta_x\ket{\psi^{\bot}}}}\ket{\pm_i} \\
&=\inparen{\underbrace{(S_1S_0)\cdots (S_1S_0)}_{(i-1)/2} \inparen{\cos \eta_x \ket{\psi}+\sin \eta_x\ket{\psi^{\bot}}}}\ket{\pm_i} \\
&= \inparen{\cos i\eta_x \ket{\psi}+\sin i\eta_x\ket{\psi^{\bot}}}\ket{\pm_i}. 
\end{aligned}
\end{equation*}
Thus, after performing Step 2 of Algorithm \ref{algorithm_small_bias}, the final state is 
\begin{equation*}
\sum_{i=0}^{\lceil T/2 \rceil} \alpha_i \left(\cos i\eta_x \ket{\psi}+ \sin i\eta_x\ket{\psi^{\bot}}\right) \ket{\pm_i}.
\end{equation*}
By \Cref{eq:hDM}, the probability that the measurement result is $0$ is 
\begin{equation*}
\begin{aligned}
p_x &= \sum_{i:a_i \ge 0} \alpha_i^2 \cos^2 i\eta_x + \sum_{i:a_i < 0} \alpha_i^2 \sin^2 i\eta_x \\
&= \frac{1}{M}\inparen{\sum_{i:a_i \ge 0} a_i \cos^2 i\eta_x - \sum_{i:a_i < 0} a_i \sin^2 i\eta_x} \\
&=\frac{\Delta_x}{2M} \\
&= \frac{1}{2}+ \frac{h(\cos \eta_x)}{2M},
\end{aligned}
\end{equation*}
and the probability that the algorithm outputs 1 is $1/2 - h(\cos \eta_x)/(2M)$. Since $(1-2f(x))h(\cos \eta_x) 
 \ge 2\beta$, the probability that the algorithm outputs $f(x)$ is at least $1/2+\beta/M$. By \Cref{eq:M}, 
 the bias of the algorithm is at least 
  $\beta/M \ge \beta/\sqrt{2T+2}$.
Then we can amplify the success probability to $1/2+\beta$ by running $O(T)$ times Algorithm \ref{algorithm_small_bias}  repetitively (See \Cref{lemma:amplification}). Thus, there exists a quantum algorithm using $O(T^2)$ queries to with success probability $1-\epsilon$, which implies 
 $   Q_{\epsilon}(f) = O\inparen{\deg_{\epsilon}(f)^2}$.
\end{proof}
\begin{remark}
We conjecture that $f(x)=f(n-x)$ is not a necessary condition. If an $n$-bit (possibly partial) symmetric Boolean function $f$ satisfies that $f(x) \neq f(n-x)$ for some $x \in D$, we can define a new $2n$-bit Boolean function $f^*$ such that 
\[
f^*(x) =\begin{cases}
f(x),&\text{ if }|x| \le n,\\
f(2n-x),&\text{ if }|x| > n. \\
\end{cases}
\]
Then $f^*$ satisfies that $f^*(x) = f^*(2n-x)$. Although we can run Algorithm \ref{algorithm_small_bias} to $f^*$, we do not know how to relate $\adeg_{\epsilon}(f^*)$ and $\adeg_{\epsilon}(f)$ for any $\epsilon$ arbitrarily close to 1/2. Thus, the query complexity of the algorithm is not promised. We leave this case as an open problem.  
\end{remark}

\subsection{The Relation Between Exact Quantum Query Complexity and Degree}\label{subsec:qef_degree}

In this section, we prove Theorem \ref{th:polyrelation1}, i.e., for any partial symmetric Boolean function $f$, we have $Q_E(f) = O(\deg(f)^2)$.
\begin{proof}[Proof of Theorem \ref{th:polyrelation1}]
Without loss of generality, any $n$-bit partial symmetric Boolean function $f:D \rightarrow \B$ can be defined as
\begin{equation*}
f(x) = 
\begin{cases}
0 & \text{ if } |x| \in S_0, \\
1 & \text{ if } |x| \in S_1,
\end{cases}
\end{equation*}
where $S_0, S_1 \subseteq [n]$.
By Fact \ref{symmetrization}, there
exists a single-variate polynomial $F:[0,n] \rightarrow \mathbb{R}$ such that $F(|x|) = f(x)$ and $\deg(F) \le \deg(f)$. Since $F$ has at least $|S_0|$ zeros and $1-F$ has at least $|S_1|$ zeros, we have $\deg(f) \ge \deg(F) \ge \max\inbrace{|S_0|, |S_1|}$. By  \Cref{beals}, we have 
\begin{equation}\label{eq:general_deg_lower_bound_1}
    Q_E(f) \ge \frac{1}{2}\deg(f ) \ge \frac{1}{2}\max\inbrace{|S_0|, |S_1|}.
\end{equation}
By Lemma \ref{lemma:deg_lowerbound}, if $f(k) \neq f(l)$ and $k < l$, then 
$\deg_{\epsilon}(f) = \Omega\inparen{\frac{\sqrt{\beta(n-k)l}}{l-k}}$, where $\beta = \frac{1}{2}-\epsilon$. Thus,
\begin{equation}\label{eq:general_deg_lower_bound_2}
\deg(f) = \deg_{0}(f) = \Omega\inparen{\frac{\sqrt{(n-k)l}}{l-k}}.
\end{equation}
By Fact \ref{fact:kl_exact_upper_bound}, given an $n$-bit Boolean string $x$ and promising that $|x| = k$ or $l$, there exists a quantum algorithm $\mathcal{A}_{k,l}$ to distinguish $|x| = k$ from $|x| = l$ with $O\left(\frac{\sqrt{(n-k)l}}{l-k}\right)$ queries exactly. 
Similar to the idea of the exact quantum algorithm to compute $\fnk$
in Section
\ref{subsubsec:fnk_quantumlowerbound}, we give the following algorithm to compute $f(x)$ for $x \in D$ exactly. Each time we choose some $k \in S_0,l \in S_1$ and run $\mathcal{A}_{k,l}$. If the output of $\mathcal{A}_{k,l}$ is $|x| = l$, then it means $|x| \neq k$; if the output of $\mathcal{A}_{k,l}$ is $|x| = k$, then it means $|x| \neq l$. In whichever case, we can exclude one possible value of $|x|$. Then we repeat the above procedure until there is only one possible value of $|x|$.
After at most $|S_0|+|S_1|-1$ iterations, we determine the value of $|x|$ with certainty. By \Cref{eq:general_deg_lower_bound_1,eq:general_deg_lower_bound_2}, the total number of queries is at most 
\begin{equation*}
\begin{aligned}
\left(|S_0|+|S_1|-1\right)\cdot O\inparen{\frac{\sqrt{(n-k)l}}{l-k}} &\le 2\max\inbrace{|S_0|, |S_1|}\cdot O\inparen{\frac{\sqrt{(n-k)l}}{l-k}} \\
&= O\inparen{\deg(f)^2}.
\end{aligned}
\end{equation*}
\end{proof}

\subsection{The Relation Between Quantum Query Complexity, Block Sensitivity and Fractional Block Sensitivity}\label{subsec:tight_complexity_measure}
In this section, we first consider fractional block sensitivity and block sensitivity of partial symmetric Boolean functions. We give the following lemma:
\begin{lemma}\label{lemma:bs_fbs}
For any partial symmetric Boolean function $f$, we have 
\begin{equation*}
\bs(f) = \Theta\left(\fbs(f)\right) = \Theta\inparen{\max\limits_{k<l:f(k)\neq f(l)} \frac{n}{l-k}}.
\end{equation*}
\end{lemma}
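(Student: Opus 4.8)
The plan is to set $M:=\max_{k<l:\,f(k)\neq f(l)}\frac{n}{l-k}$ (assuming $f$ is non-constant, so that this maximum is over a non-empty set; if $f$ is constant then $\bs(f)=\fbs(f)=0$ and there is nothing to prove) and to establish $\Omega(M)\le\bs(f)\le\fbs(f)\le M$. The middle inequality is exactly Fact~\ref{bs_fbs_FC}, so it suffices to prove the lower bound $\bs(f)=\Omega(M)$ and the upper bound $\fbs(f)\le M$; together these pin down all three quantities as $\Theta(M)$.

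For the lower bound I would apply Lemma~\ref{lemma:bs} to a pair $a<b$ with $f(a)\neq f(b)$ and $\frac{n}{b-a}=M$, which gives $\bs(f)\ge\lfloor\frac{n}{2(b-a)}\rfloor=\lfloor M/2\rfloor$; combined with the trivial bound $\bs(f)\ge1$ (valid for non-constant $f$) this yields $\bs(f)\ge M/4=\Omega(M)$, and hence also $\fbs(f)\ge\bs(f)=\Omega(M)$.

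For the upper bound, fix any $x\in D$ and any feasible solution $\{w_B\}$ of the fractional block sensitivity LP at $x$. The key step is to show that every block $B$ appearing in the objective---in particular every $B$ with $w_B>0$---satisfies $|B|\ge n/M$. Indeed, such a $B$ has $x,x^B\in D$ and $f(x)\neq f(x^B)$, so by symmetry $|x^B|\neq|x|$; writing $p$ and $q$ for the numbers of $1$-entries and $0$-entries of $x$ inside $B$, we get $|x^B|=|x|-p+q$ and hence $|B|=p+q\ge|q-p|=\Abs{|x^B|-|x|}$, while the definition of $M$ applied to the sorted pair $\{|x|,|x^B|\}$ (a legitimate pair since $f$ differs on those two weights) gives $\Abs{|x^B|-|x|}\ge n/M$. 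Now summing the LP constraints $\sum_{B\ni i}w_B\le1$ over all $i\in[n]$ yields $\frac{n}{M}\sum_B w_B\le\sum_B|B|\,w_B\le n$, so $\sum_B w_B\le M$, i.e. $\fbs(f,x)\le M$; taking the maximum over $x$ gives $\fbs(f)\le M$. Combining the two bounds proves $\bs(f)=\Theta(\fbs(f))=\Theta\big(\max_{k<l:\,f(k)\neq f(l)}\frac{n}{l-k}\big)$. The only substantive ingredient is the block-size bound $|B|\ge n/M$: this is where symmetry is used, converting ``$B$ is a sensitive block'' into ``$B$ shifts the Hamming weight by an amount $f$ can detect, hence by at least $n/M$''. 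After that, the upper bound on $\fbs$ is just a routine summation of the LP constraints, so I do not expect any real obstacle here.
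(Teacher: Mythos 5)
Your proof is correct and follows essentially the same route as the paper: the lower bound is Lemma~\ref{lemma:bs} applied to the extremal pair, and the upper bound hinges on the same key observation that every sensitive block has size at least $\min_{k<l:f(k)\neq f(l)}(l-k)=n/M$. The only (cosmetic) difference is that you bound the $\fbs$ primal directly by summing the packing constraints, whereas the paper exhibits the uniform weights $w_i=1/\min(l-k)$ as a feasible fractional certificate and invokes $\fbs(f)=\FC(f)$ from Fact~\ref{bs_fbs_FC} --- the two are the same weak-duality argument read in opposite directions.
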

\begin{proof}
Recall that
\begin{align*}
\FC(f,x) = \min \sum_{i \in [n]} w_i \hspace{1cm} 
\text{subject to: } &\forall B \text{ s.t. } x,x^B \in D, f(x) \neq f(x^B): \sum_{i \in B} w_i \geq 1, \\
&\forall i \in [n]: 0 \leq w_i \leq 1.
\end{align*}
And $\fbs(f)=\FC(f)=\max_x\FC(f,x)$ by Fact~\ref{bs_fbs_FC}. For any $i \in [n]$, let $w_i = \max\limits_{k<l:f(k)\neq f(l)}\frac{1}{l-k}$. Notice that for any block $B$ sensitive in $x$, it satisfies that $|B|\geq\min_{k<l:f(k)\neq f(l)}(l-k)$. Then $\{w_i\}$ satisfies the constraint in the definition of $\FC(f,x)$ for any $x \in D$. Thus, for any $x\in D$, $\FC(f,x) \le \max\limits_{k<l:f(k)\neq f(l)}\frac{n}{l-k}$, which implies  
\begin{equation}\label{eq:FC_upperbound}
\FC(f) = \max_x \FC(f,x) \le \max\limits_{k<l:f(k)\neq f(l)}\frac{n}{l-k}.
\end{equation}
Combining $\bs(f) \le \fbs(f) = \FC(f)$ (\Cref{bs_fbs_FC}) and Lemma \ref{lemma:bs}, we have 
\[
\max\limits_{k<l:f(k)\neq f(l)}\left\lfloor\frac{n}{2(l-k)}\right\rfloor \le \bs(f) \le \fbs(f) = \max\limits_{k<l:f(k)\neq f(l)}\frac{n}{l-k}.
\]
Thus $\bs(f) = \Theta(\fbs(f)) = \Theta\inparen{\max\limits_{k<l:f(k)\neq f(l)} \frac{n}{l-k}}$.
\end{proof}

Next, we give a tight bound of quantum query complexity and approximate degree for any partial symmetric Boolean function by the following lemma:
\begin{lemma}\label{lemma:Q_tight_bound}
For any partial symmetric Boolean function $f$, we have
\[
Q(f) = \Theta\inparen{\adeg(f)} = \Theta\inparen{\max\limits_{k < l:f(k)\neq f(l)}\frac{\sqrt{(n-k)l}}{l-k}}.
\]
\end{lemma}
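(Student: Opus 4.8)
The plan is to prove the two-sided bound by combining the lower bound on approximate degree already established in \Cref{lemma:deg_lowerbound} with a matching quantum upper bound coming from approximate counting. Write $V(f) := \max\limits_{k<l:f(k)\neq f(l)} \frac{\sqrt{(n-k)l}}{l-k}$. For the lower bound, fix a pair $k<l$ with $f(k)\neq f(l)$ achieving the maximum. Since the subfunction $f_n^{k,l}$ (distinguishing $|x|=k$ from $|x|=l$) is a restriction of $f$, any bounded-error algorithm for $f$ also solves $f_n^{k,l}$, so $Q(f)\ge Q(f_n^{k,l})$ and $\adeg(f)\ge\adeg(f_n^{k,l})$. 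Applying \Cref{beals} together with \Cref{lemma:deg_lowerbound} at constant error (say $\epsilon=1/3$, so $\beta=1/6$ is an absolute constant) gives $Q(f)\ge \tfrac12\adeg(f)\ge\Omega\!\inparen{\frac{\sqrt{(n-k)l}}{l-k}}=\Omega(V(f))$. This handles ``$\gtrsim V(f)$'' for both $Q(f)$ and $\adeg(f)$.

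For the upper bound, I would design an $O(V(f))$-query bounded-error quantum algorithm using the quantum approximate counting routine of \Cref{amplitude_estimation}. Run that algorithm with $t = c\cdot V(f)$ queries for a suitable constant $c$, obtaining an estimate $\tilde a$ of $a=|x|/n$ with $|\tilde a - a|\le 2\pi\frac{\sqrt{a(1-a)}}{t}+\frac{\pi^2}{t^2}$ with probability $\ge 8/\pi^2$. The key point is that for \emph{every} pair $k<l$ with $f(k)\neq f(l)$, the choice $t\ge c\cdot\frac{\sqrt{(n-k)l}}{l-k}$ makes this additive error smaller than, say, $\frac{l-k}{4n}$ — this is exactly the content of Claim~\ref{lemma:Hellinger_kl}-style estimates, since $\sqrt{a(1-a)}\le\sqrt{\min\{k,n-k\}\cdot\min\{l,n-l\}}/n$ is controlled by $\sqrt{(n-k)l}/n$ in the relevant regime (with the usual $k\le n/2$, $l\le n/2$ reduction by padding, as in the proof of \Cref{lemma:lowerboundKL}). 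Since consecutive ``boundary'' Hamming weights of $f$ differ by at least $\min_{k<l:f(k)\neq f(l)}(l-k)$, the estimate $\tilde a$ lands close enough to the true $a$ to decide which side of every relevant threshold $|x|$ lies on, hence to output $f(x)$ correctly; amplifying the $8/\pi^2$ success probability by $O(1)$ repetitions gives bounded error with $O(V(f))$ queries, so $Q(f)=O(V(f))$. Combined with $\adeg(f)\le 2Q(f)$ from \Cref{beals} (a bounded-error quantum algorithm yields an approximating polynomial of degree $2Q(f)$), we get $\adeg(f)=O(V(f))$ as well, closing the chain $\Omega(V(f))\le\tfrac12\adeg(f)\le Q(f)\le O(V(f))$.

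The main obstacle I anticipate is the approximate-counting precision bookkeeping: one must verify that a \emph{single} value of $t$, namely $t=\Theta(V(f))=\Theta(\max_{k<l:f(k)\neq f(l)}\frac{\sqrt{(n-k)l}}{l-k})$, simultaneously resolves \emph{all} the thresholds of $f$, i.e.\ that $\frac{\sqrt{a(1-a)}}{t}$ at the true weight $|x|$ is always $o$ of the gap to the nearest decision boundary. This requires the elementary but slightly delicate estimate $\sqrt{\frac{|x|}{n}\inparen{1-\frac{|x|}{n}}}\le \frac{\sqrt{(n-k)l}}{n}$ whenever $|x|$ lies between consecutive boundary weights $k<l$ (up to the $k\le n/2$ normalization), together with $\frac{\pi^2}{t^2}$ being lower-order; I would isolate this as a short sub-claim mirroring Inequalities~(\ref{inequality_kl_lower}) and~(\ref{inequality_kl_upper}). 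Everything else — the reduction to $k,l\le n/2$ by padding, the amplification, and the invocation of \Cref{beals} in both directions — is routine.
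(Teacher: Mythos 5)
Your proposal matches the paper's proof: the lower bound via restriction to $f_n^{k,l}$ together with \Cref{beals} and \Cref{lemma:deg_lowerbound}, and the upper bound via the approximate-counting routine of \Cref{amplitude_estimation} run with $\Theta\inparen{\max_{k<l:f(k)\neq f(l)}\sqrt{(n-k)l}/(l-k)}$ queries followed by rounding to the nearest admissible Hamming weight. The precision sub-claim you flag is exactly what the paper verifies (using that the max defining the query count already includes the pairs involving the true weight $t$, so the additive error is at most $\tfrac{7}{18}\min_{k:f(k)\neq f(t)}|t-k|/n$), and no padding is actually needed there.
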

\begin{proof}
We first prove the upper bound. Given a partial symmetric Boolean function $f: D \rightarrow \B$, where $D \subseteq \B^n$ and an input $x \in D$, we give the following algorithm to compute $f(x)$: First, 
we use quantum approximate counting algorithm in Fact \ref{amplitude_estimation} with $6\pi \cdot\max\limits_{k,l:f(k)\neq f(l)}\frac{\sqrt{(n-k)l}}{|l-k|}$ queries to obtain the estimation value of $|x|$, denoted by $\Tilde{t}$. Let $W = \inbrace{|y|:y\in D}$. Then we select $w \in W$ such that $|w-\Tilde{t}|$ is minimal and output $f(w)$. 

The error analysis of the algorithm is as follows. Let $S_x = \inbrace{|k|:k \in W, f(k) = f(x)}$ and $\overline{S_{x}} = W\setminus S_x$.
%\inbrace{|k|:k \in S, f(k) \neq f(x)}$. 
Suppose $|x| = t$. By Fact \ref{amplitude_estimation}, with probability at least $\frac{8}{\pi^2}$, we have
\begin{equation*}
\begin{aligned}
    |\Tilde{t} - t| &\le \frac{\sqrt{\frac{t}{n}\inparen{1-\frac{t}{n}}}}{3\max\limits_{k, l:f(k)\neq f(l)}\frac{\sqrt{(n-k)l}}{|l-k|}}+\frac{1}{36\max\limits_{k,l:f(k)\neq f(l)}\frac{(n-k)l}{(l-k)^2}}. \\
\end{aligned}
\end{equation*}
Since 
\begin{equation*}
\begin{aligned}
    \frac{\sqrt{\frac{t}{n}\inparen{1-\frac{t}{n}}}}{\max\limits_{k,l:f(k)\neq f(l)}\frac{\sqrt{(n-k)l}}{|l-k|}} &= \min\limits_{k,l:f(k)\neq f(l)} \frac{|l-k|}{n}\frac{\sqrt{(n-t)t}}{\sqrt{(n-k)l}} \\
    &\le \min\inbrace{\min\limits_{k<t:f(k)\neq f(t)} \frac{t-k}{n}\frac{\sqrt{(n-t)t}}{\sqrt{(n-k)t}}, \min\limits_{t<l:f(t)\neq f(l)} \frac{l-t}{n}\frac{\sqrt{(n-t)t}}{\sqrt{(n-t)l}}} \\
    &\le \min\inbrace{\min\limits_{k<t:f(k)\neq f(t)} \frac{t-k}{n}, \min\limits_{t<l:f(t)\neq f(l)} \frac{l-t}{n}} \\
    &= \min\limits_{k:f(k)\neq f(t)} \frac{|k-t|}{n},
\end{aligned}
\end{equation*}
and
\begin{equation*}
\begin{aligned}
    \frac{1}{\max\limits_{k,l:f(k)\neq f(l)}\frac{(n-k)l}{(l-k)^2}} &=  \min\limits_{k,l:f(k)\neq f(l)}\frac{(l-k)^2}{(n-k)l} \\
    &= \min\limits_{k<l:f(k)\neq f(l)}\frac{(l-k)^2}{(n-k)l}\\
    &= \min\limits_{k<l:f(k)\neq f(l)}\frac{l-k}{\max\inbrace{n-k,l}}\frac{l-k}{\min\inbrace{n-k,l}} \\
    &\le \min\limits_{k<l:f(k)\neq f(l)}\frac{l-k}{\frac{1}{2}n} \\
    &\le \min\inbrace{ \min\limits_{k<t:f(k)\neq f(t)}\frac{2(t-k)}{n},   \min\limits_{t<l:f(t)\neq f(l)}\frac{2(l-t)}{n}    } \\
    &= \min\limits_{k:f(k)\neq f(t)}\frac{2|t-k|}{n},
\end{aligned}
\end{equation*}
we have
\begin{equation*}
    |\Tilde{t}-t|\le \min\limits_{k:f(k)\neq f(t)}\frac{1}{3}\frac{|t-k|}{n}+\frac{1}{18}\frac{|t-k|}{n} = \min\limits_{k:f(k)\neq f(t)}\frac{7}{18}\frac{|t-k|}{n}.
\end{equation*}
As a result, with the probability at least $\frac{8}{\pi^2}$, for any $k \in \overline{S_{x}}$ we have $|t-\Tilde{t}| \le \frac{7}{18}|t-k|$, which implies $|w-\Tilde{t}| < |k-\Tilde{t}|$. Thus $w \notin \overline{S_{x}}$ and $f(w) = f(x)$. Therefore, the success probability of the algorithm is at least $\frac{8}{\pi^2}$, which implies 
\begin{equation}\label{eq:Q_upper_bound}
    Q(f) = O\inparen{\max\limits_{k,l:f(k)\neq f(l)}\frac{\sqrt{(n-k)l}}{|l-k|}} = O\inparen{\max\limits_{k < l:f(k)\neq f(l)}\frac{\sqrt{(n-k)l}}{l-k}}.
\end{equation}
Moreover, for the lower bound, for any $k < l $ and $f(k) \neq f(l)$, we have 
\begin{equation}\label{eq:Q_lower_bound}
Q(f) \ge \frac{1}{2}\adeg(f) \ge \frac{1}{2}\adeg(\kl) = \Omega\inparen{\frac{\sqrt{(n-k)l}}{l-k}}
\end{equation}
by Lemma \ref{lemma:deg_lowerbound}. Note that this lower bound on the quantum query complexity was also obtained by \cite{Choi12, AA14} using the quantum adversary method. Combining \Cref{eq:Q_upper_bound,eq:Q_lower_bound}, we have 
$Q(f) = \Theta\inparen{\adeg(f)} = \Theta\inparen{\max\limits_{k < l:f(k)\neq f(l)}\frac{\sqrt{(n-k)l}}{l-k}}$.
\end{proof}
Finally, \Cref{th:polyrelation2} consists of \Cref{lemma:bs_fbs,lemma:Q_tight_bound}. Since $\sqrt{(n-k)l} \le n$, \Cref{th:polyrelation2} implies \Cref{th:poly_Q_bs}. We recall  \Cref{th:polyrelation2,th:poly_Q_bs} as follows:
\bsfbs*
\Qbs*

\subsection{Exponential Gap Between Exact Quantum Query Complexity and Randomized Query Complexity}\label{subsec:exp_gap}
In this section, we prove Theorem \ref{th:exprelation}, i.e., there exists a partial symmetric Boolean function $f$ such that $Q_E(f) = \Omega(n)$ and $R(f) = O(1)$.
\begin{proof}[Proof of Theorem \ref{th:exprelation}]
Let 
\begin{equation*} f(x) = 
\begin{cases}
0 & \text{ if } |x| \le n/2, \\
1 & \text{ if } |x| = n.
\end{cases}
\end{equation*}
Then $\deg(f) \ge n/2+1$ and $Q_E(f) \ge \deg(f)/2 = n/4+1/2$.
On the other hand, by \Cref{eq:Q_upper_bound}, we have $Q(f) = O\inparen{\max\limits_{k<l:f(k)\neq f(l)}\frac{\sqrt{(n-k)l}}{l-k}} =
O\inparen{\frac{\sqrt{(n-n/2)n}}{n-n/2}} = O(1)$.
Since $R(f) = O\left(Q(f)^2\right)$ \cite{AA14}, we have $R(f) =O(1)$.
\end{proof}

\section{Conclusion}\label{sec:conclusion}
This paper analyzes the quantum advantage of computing two fundamental partial symmetric Boolean functions by studying the optimal success probability of $T$-query quantum and randomized algorithms. Moreover, we analyze the relation between the number of queries and the bias of quantum and randomized algorithms to compute total symmetric Boolean functions when the bias of the algorithms can be arbitrarily small.
Furthermore, we show the relation of several fundamental complexity measures of partial symmetric Boolean functions. 
We leave the fine-grained Watrous conjecture as an open problem for further study.

\newpage

\bibliography{references}
\end{document}